\newlist{abbrv}{itemize}{1}
\setlist[abbrv,1]{label=,labelwidth=1.2in,align=parleft,itemsep=0.1\baselineskip,leftmargin=!}
\numberwithin{equation}{section}
\newcommand{\tr}{\text{Tr}}
\newcommand{\id}{\mathbbm{1}}
\newcommand{\vA}{\boldsymbol{A}}
\newcommand{\vB}{\boldsymbol{B}}
\newcommand{\vE}{\boldsymbol{E}}
\newcommand{\vF}{\boldsymbol{F}}
\newcommand{\vJ}{\boldsymbol{J}}
\newcommand{\vJt}{\widetilde{\boldsymbol{J}}}
\newcommand{\vAk}{\boldsymbol{A}_{\kappa}}
\newcommand{\vAkt}{\widetilde{\boldsymbol{A}}_{\kappa}}
\newcommand{\vep}{\boldsymbol{\epsilon}}
\newcommand{\scp}[2]{\left\langle #1 , #2 \right\rangle}
\newcommand{\abs}[1]{\left| #1 \right|}
\newcommand{\bra}[1]{\langle #1 |}
\newcommand{\ket}[1]{| #1 \rangle}
\newcommand{\norm}[1]{\left\| #1 \right\| }
\DeclareMathOperator*{\esssup}{ess\,sup}
\DeclareMathOperator\supp{supp}
\newtheorem{theorem}{Theorem}[section]
\newtheorem{lemma}{Lemma}[section]
\newtheorem{def:lemma}{Definition and Lemma}[section]
\newtheorem{proposition}{Proposition}[section]
\newtheorem{assumption}{Assumption}[section]
\theoremstyle{definition}
\theoremstyle{remark}
\newtheorem{remark}{Remark}[section]
\begin{document}

\title{%Derivation of the Vlasov-Maxwell equations from the Maxwell-Schr\"odinger equations
%Classical limit to the Vlasov--Maxwell system from the Maxwell--Schr\"odinger equations with extended charges  
Derivation of the Vlasov--Maxwell system from the Maxwell--Schr\"odinger equations with extended charges
}

\author{Nikolai Leopold\footnote{nikolai.leopold@unibas.ch} \ and Chiara Saffirio\footnote{chiara.saffirio@unibas.ch}
\\
University of Basel\footnote{Department of Mathematics and Computer Science, Spiegelgasse 1, 4051 Basel, Switzerland}
}

\maketitle

\frenchspacing

\begin{abstract}
We consider the Maxwell--Schr\"odinger equations in the Coulomb gauge describing the interaction of extended fermions with their self-generated electromagnetic field. They heuristically emerge as mean-field equations from non-relativistic quantum electrodynamics in a mean-field limit of many fermions. In the semiclassical regime, we establish the convergence of the Maxwell--Schr\"{o}dinger equations for extended charges towards the non-relativistic Vlasov--Maxwell dynamics and provide explicit estimates on the accuracy of the approximation. To this end, we build a well-posedness and regularity theory for the Maxwell--Schr\"odinger equations and for the Vlasov--Maxwell system for extended charges.
\end{abstract}

%%%%%%%%%%%%%%%%%%%%%%%%%%%%%%%%%%%%%%%%%%%%%%%%%%%%%%%%%%%%%%%%%%%%%%%%%%%%%%%%%%%%%%%%%%%%%%%%

% \listoftodos

\section{Introduction}

The 3D non relativistic Vlasov--Maxwell system is an important model in kinetic theory used to describe the time evolution of the phase space distribution function $f: \mathbb{R}_{+} \times \mathbb{R}^3 \times \mathbb{R}^3 \rightarrow \mathbb{R}$ of a plasma in interaction with its self-generated electromagnetic field $(\vA, \vB): \mathbb{R}_{+} \times \mathbb{R}^3 \rightarrow \mathbb{R}^3 \times \mathbb{R}^3$. The Cauchy problem associated with the Vlasov--Maxwell system is given by
\begin{align}
\label{eq: Vlasov-Maxwell literature form}
\begin{cases}
\partial_t f + v \cdot \nabla_x f - \frac{e}{m} \left( \vE + \frac{v}{c} \times \vB \right) \cdot \nabla_v f = 0 ,
\\
\partial_t \vE  - c \nabla \times \vB = 4 \pi e \int v f dv ,
\\
\partial_t \vB + c \nabla \times \vE = 0
\end{cases}
\end{align}
with initial conditions $(f_0, \vE_0, \vB_0)$ satisfying
\begin{align}
\label{eq: Vlasov-Maxwell literature form initial conditions}
\nabla \cdot \vE_0 = - 4 \pi \int f_0\,dv 
\quad \text{and} \quad
\nabla \cdot \vB_0 = 0 .
\end{align}
Here, $e$ and $m$ are the charge and mass of an electron and $c \geq 1$ is the velocity of light.
The local-in-time well-posedness of \eqref{eq: Vlasov-Maxwell literature form} was proven in \cite{A1986,D1986, W1982,W1984,W1987}. The global existence of weak solutions with large data was shown in \cite{DL1989}.
Furthermore, \cite{AU1986, D1986, BHK2022} show that for an infinite light velocity, i.e.~$c\uparrow\infty$, the solutions of \eqref{eq: Vlasov-Maxwell literature form} converge to solutions of the Vlasov--Poisson systems. 

\noindent In this paper we study a regularized version of \eqref{eq: Vlasov-Maxwell literature form} and prove its emergence in the semiclassical limit from a regularized (fermionic) Maxwell--Schr\"odinger system in the Coulomb gauge, as given in \eqref{eq:Maxwell-Schroedinger equations}. The regularization we adopt can be viewed as a finite-size requirement on the particles, aligning with the assumptions typically employed for the underlying many-body model of non-relativistic quantum electrodynamics, specifically the Pauli--Fierz Hamiltonian (see \cite{S2004}, Section \ref{subsection:heuristic discussion} and Remark~\ref{remark:properties of the cutoff function} below). Our result provides convergence in strong topology and explicit control on the classical limit in terms of the semiclassical parameter $\varepsilon$, that plays the role of the Planck constant $\hbar$. 

\subsection{Heuristic discussion}
\label{subsection:heuristic discussion}
In order to better understand the connection between the Vlasov-Maxwell and the the Maxwell-Schr\"odinger equations and to see how both systems effectively emerge from non-relativistic quantum electrodynamics let us consider the evolution of $N$ electrons (with non-relativistic dispersion and without spin) in interaction with the quantized electromagnetic field in the Coulomb gauge. The functional setting for this system is the Hilbert space 
$L^2_{\rm{as}}(\mathbb{R}^{3N}) \otimes \bigoplus_{n \geq 0} \mathfrak{h}^{\otimes_s^n}$
where ``as'' indicates antisymmetry under exchange of variables, i.e. taking the fermionic nature of the particles into account, and $\mathfrak{h} = L^2(\mathbb{R}^3) \otimes \mathbb{C}^2$ consists of photon states $f(k,\lambda)$ with wave number $k \in \mathbb{R}^3$ and helicity $\lambda = 1,2$. Elements of the Hilbert space evolve in accordance to the Schr\"odinger equation
\begin{align}
\label{eq:Schroedinger equation Pauli-Fierz}
i \hbar \partial_t \Psi_{N,t} = H_N^{\rm PF} \Psi_{N,t}
\end{align}
with 
\begin{align}
H_N^{\rm PF} &= \sum_{j=1}^N \frac{1}{2m} \left( - i \hbar \nabla_j - c^{-1} e \kappa * \widehat{\vA}(x_j) \right)^2
+ \frac{e^2}{8\pi} \sum_{\substack{i,j=1\\ j\neq i}}^N  \kappa * \kappa * | \cdot |^{-1} (x_j - x_k)
+ H_f 
\end{align} 
being the spinless Pauli--Fierz Hamiltonian.
Here,  $\omega(k) = c \abs{k}$ is the dispersion of the photons, $H_f = \sum_{\lambda =1,2} \int_{\mathbb{R}^3} \hbar\,\omega(k) a_{k,\lambda}^* a_{k,\lambda}\,dk$ denotes the energy of the electromagnetic field and 
\begin{align}
\widehat{\vA}(x)
&=  \sum_{\lambda=1,2}  \int_{\mathbb{R}^3} 
  c\,\sqrt{\hbar/2 \omega(k)}\,\vep_{\lambda}(k) 
(2 \pi)^{-3/2} \left(  e^{ikx} a_{k,\lambda}  + e^{- ik x} a^*_{k,\lambda} \right)\,dk
\end{align}
is the quantized transverse vector potential. There are two real polarization vectors $\big\{ \vep_{\lambda}(k) \big\}_{\lambda = 1,2}$ which implement the Coulomb gauge $\nabla\cdot \widehat{\vA}=0$ by satisfying
\begin{align}
\vep_{\lambda}(k) \cdot \vep_{\lambda'}(k) = \delta_{\lambda, \lambda'}
\quad \text{and} \quad
k \cdot \vep_{\lambda}(k) = 0 .
\end{align}
The pointwise annihilation and creation operators $a_{k,\lambda}$ and $a^*_{k,\lambda}$ satisfy the canonical commutation relations
\begin{align}
\left[ a_{k,\lambda} , a^*_{k',\lambda'} \right] = \delta_{\lambda, \lambda'} \delta(k - k') ,
\quad 
\left[ a_{k,\lambda} , a_{k',\lambda'} \right] 
= \left[ a^*_{k,\lambda} , a^*_{k',\lambda'} \right] 
= 0,
\end{align}
where $[A,B]:=AB-BA$ is the standard commutator of the operators $A$ and $B$.\\
The real function $\kappa$ describes the density of the electrons, which are consequently not regarded as point particles but as finite  size particles with charge distribution $e \kappa(x)$ and mass distribution $m \kappa(x)$. The Pauli--Fierz Hamiltonian is obtained by canonical quantization of the Abraham model and known as a mathematical  model of non-relativistic quantum electrodynamics. If the Fourier transform of the charge distribution\footnote{Throughout this article we use the notation $\mathcal{F}[f](k) = (2 \pi)^{-3/2} \int_{\mathbb{R}^3}  e^{-ikx} f(x)\,dx$ to denote the Fourier transform of a function $f$.} satisfies $\big( \abs{\cdot}^{-1} + \abs{\cdot}^{1/2} \big) \mathcal{F}[\kappa] \in L^2(\mathbb{R}^3)$ (an assumption we will also make later for the effective models, see Remark~\ref{remark:properties of the cutoff function}) it holds that the Hamiltonian is self-adjoint on the domain $\mathcal{D} \big( H_{N}^{\rm{PF}} \big) = \mathcal{D} \big( - \sum_{j=1}^N  \Delta_{x_j} + H_f \big)$ \cite{H2002,M2017,S2004}.
For a detailed introduction to the model we refer to \cite{S2004}. Let us consider equation \eqref{eq:Schroedinger equation Pauli-Fierz} in the semiclassical mean-field regime by setting $e = N^{-1/2}$ and $\hbar = N^{-1/3}$, and choose $c=1$ and $m=2$ for notational convenience. We are interested in many-body wave functions of the form $\Psi_N = \bigwedge_{j=1}^N \varphi_j \otimes W(N^{2/3} \alpha) \Omega$ because they describe electron-photon configurations with little correlations. Here, $\bigwedge_{j=1}^N \varphi_j$ is a Slater determinant of $N$ orthonormal one-particle wave functions $( \varphi_j )_{j=1}^N$ and $W(N^{2/3} \alpha) \Omega$ is a coherent state of photons with mean photon number $N^{4/3} \norm{\alpha}_{\mathfrak{h}}^2$, defined by $W(N^{2/3} \alpha) = \exp \Big( N^{2/3} \sum_{\lambda=1,2} \int_{\mathbb{R}^3} \alpha(k,\lambda) a_{k,\lambda}^* - \overline{\alpha(k,\lambda)} a_{k,\lambda}dk \Big)$ and the vacuum $\Omega$ of the bosonic Fock space over $\mathfrak{h}$.  
It is expected that the structure of such states is preserved during the evolution \eqref{eq:Schroedinger equation Pauli-Fierz} and that the time evolved state at time $t$ is in the limit $N \rightarrow \infty$ approximated by
\begin{align}
\label{eq:propagation of chaos introduction Pauli-Fierz}
\Psi_{N,t} \approx \bigwedge_{j=1}^N \varphi_{j,t} \otimes W(N^{2/3} \alpha_t) \Omega.
\end{align} 
Note that this should be understood as a propagation of chaos assumption and that Slater determinants are completely characterized by their one-electron reduced density matrix $\omega_{N,t} = \sum_{j=1}^N \ket{\varphi_{j,t}} \bra{\varphi_{j,t}}$. If one uses that the action of a quantum field on a coherent state is approximately given by a classical field and that the interaction between electrons in a Slater determinant can be approximated by its mean-field potential plus an exchange term one sees that the electron wave function and the mode function of the electromagnetic field in \eqref{eq:propagation of chaos introduction Pauli-Fierz} should be a solution of the regularized (fermionic) Maxwell--Schr\"odinger system in the Coulomb gauge
%\tdn{I would rather use the notation $V$ instead of $K$? However, this might be slightly confusing in the section of the characteristics.}
\begin{align}
\label{eq:Maxwell-Schroedinger equations}
\begin{cases}
i \varepsilon \partial_t \omega_{N,t} &= \left[ \left( - i \varepsilon \nabla - \kappa * \vA_{\alpha_t} \right)^2  + K * \rho_{\omega_t} - X_{\omega_t} ,  \omega_{N,t} \right]  ,
\\
i \partial_t \alpha_t(k,\lambda) &= \abs{k} \alpha_t(k,\lambda) - \sqrt{\frac{4 \pi^3}{\abs{k}}} \mathcal{F}[\kappa](k) \vep_{\lambda}(k) \cdot \mathcal{F}[\vJ_{\omega_t, \alpha_t}](k)
\end{cases}
\end{align}
with
\begin{subequations}
\begin{align}
\label{eq:Coulomb potential with cutoff}
K &= \frac{1}{8\pi}\, \kappa * \kappa * | \cdot |^{-1} ,
\\
\label{eq:definition vector potential}
\vA_{\alpha}(x) &= (2 \pi )^{- 3/2} \sum_{\lambda = 1,2} \int \frac{1}{\sqrt{2 \abs{k}}} \vep_{\lambda}(k) \left( e^{i k x} \alpha(k,\lambda) + e^{- i k x} \overline{\alpha(k, \lambda)} \right)dk, 
\\
\label{eq:definition density}
\rho_{\omega}(x) &= N^{-1} \omega(x;x) ,
\\
\label{eq:definition charge current}
\vJ_{\omega, \alpha}(x) &= - N^{-1} \left\{ i \varepsilon \nabla, \omega \right\}(x;x) - 2 \rho_{\omega}(x) \kappa * \vA_{\alpha}(x)  ,
\\
\label{eq:definition exchange term}
X_{\omega}(x;y) &= N^{-1} K(x-y) \omega_{N}(x;y) ,
\end{align}
\end{subequations}
semiclassical parameter $\varepsilon = N^{-1/3}$
and initial condition $(\omega_{N,t}, \alpha_t) \big|_{t=0} = (\sum_{j=1}^N \ket{\varphi_{j}} \bra{\varphi_{j}}, \alpha)$. In \eqref{eq:definition charge current} $\{A,B\}:=AB+BA$ stands for the anticommutator of the operators $A$ and $B$, and $\{A,B\}(x;y)$ denotes the kernel of the anticommutator.\\
Within this work we will consider a larger class of initial data, corresponding to mixed states of fermionic systems at positive temperature.

In the mathematical literature the Maxwell--Schr\"odinger system usually appears for $\varepsilon=1$, one single particle ($N=1$) and without regularization, i.e. $\kappa(x) = \delta(x)$. The first equation of \eqref{eq:Maxwell-Schroedinger equations} is often expressed as a Schr\"odinger equation for the one-particle wave function and the second equation of \eqref{eq:Maxwell-Schroedinger equations} is usually written as Maxwell's equations in the Coulomb gauge
\begin{equation}
\nabla \cdot \vA_{\alpha_t} = 0,\quad\quad
\left(\partial_t^2 - \Delta\right)\vA_{\alpha_t} =  - \left( 1 - \nabla \text{div} \Delta^{-1} \right) \kappa * \vJ_{\omega_t, \alpha_t} .
\end{equation}
For the global well-posedness of \eqref{eq:Maxwell-Schroedinger equations} with $N = 1$, $\varepsilon = 1$, $\kappa(x) = \delta(x)$ we refer to \cite{B2009, NM2007} and references therein. Throughout this article we will rely on Proposition \ref{proposition:well-posedness of Maxwell-Schroedinger} for the regularized system under consideration. 

Notice that the rigorous derivation of \eqref{eq:Maxwell-Schroedinger equations} is an open problem which will be addressed in a separate work. With this regard we would like to mention that a rigorous derivation of the Maxwell--Schr\"odinger system from the Pauli--Fierz Hamiltonian in a mean-field limit of many charged bosons was obtained in \cite{LP2020} (see also \cite{FL2022}) and that the fermionic mean-field limit from above was studied in case of the regularized Nelson model \cite{LP2019}, leading to an effective equation in which the electrons linearly couple to a classical Klein--Gordon field. Moreover, let us point out the works \cite{K2009,AFH2022} in which the regularized Newton--Maxwell equations are derived from the Pauli--Fierz Hamiltonian in a classical limit. In \cite{BFP2022} and \cite{CFO2020} the existence of a ground state of the Maxwell--Schr\"odinger energy functional is proven and it's derivation from the Pauli--Fierz Hamiltonian in the quasi-classical regime is obtained.  The scenario where magnetic forces are disregarded, leading to the absence of Maxwell equations in the system, was addressed in \cite{BBPPT2016, BJPSS2016, BPS2014, CLS2021, PP2015, PRSS2017}, where the Hartree-Fock equation was rigorously obtained in the mean-field regime from the many-body Schr\"{o}dinger equation.  

Note that \eqref{eq:Maxwell-Schroedinger equations} depends on $N$ because of the semiclassical parameter $\varepsilon = N^{-1/3}$. In order to see the emergence of the Vlasov--Maxwell equations we introduce the Wigner transform of the one-particle reduced density $\omega_{N,t}$ defined as
\begin{align}
\label{eq:definition Wigner transform}
W_{N,t}(x,v) = \left( \frac{\varepsilon}{2 \pi} \right)^3 \int_{\mathbb{R}^3}  \omega_{N,t} \left( x + \frac{\varepsilon y}{2} ; x - \frac{\varepsilon y}{2} \right) e^{- i v \cdot y}dy 
\end{align}
and its inverse, called the Weyl quantization,
\begin{align}
\label{eq:definition Weyl quantization}
\omega_{N,t}(x;y) = N \int_{\mathbb{R}^3}  W_{N,t} \left( \frac{x+y}{2} , v \right) e^{i v \cdot \frac{x-y}{\varepsilon}}dv.
\end{align}
Using \eqref{eq:Maxwell-Schroedinger equations} and a similar reasoning as in \cite[p.7]{BPSS2016} we obtain %\tdc{I think here it should be either $i\varepsilon\partial_t$ or the error has to be only $O(\varepsilon)$}
\begin{align}
\partial_t W_{N,t}(x,v)
&= \left( \frac{\varepsilon}{2 \pi} \right)^3 \int_{\mathbb{R}^3}     \partial_t \omega_{N,t} \left( x + \frac{\varepsilon y}{2} ; x - \frac{\varepsilon y}{2} \right) e^{- i v \cdot y} dy
\nonumber \\
&= \bigg[   \Big(  \nabla K * \rho_{\omega_t}(x) + 2 \sum_{j=1}^3 (\nabla \kappa * \vA_{\alpha_t}^{j})(x) \big( \kappa * \vA_{\alpha_t}^{j}(x) - v_j   \Big)
\cdot \nabla_v
\nonumber \\
&\qquad + 2  \Big( \kappa * \vA_{\alpha_t}(x) - v \Big) \cdot \nabla_x \bigg] W_{N,t}(x,v) + \mathcal{O}(\varepsilon).
\end{align}
Together with $N^{-1} \omega_t(x;x) = \int_{\mathbb{R}^3}  W_{N,t}(x,v)\,dv$ and $\left\{ i \varepsilon \nabla, \omega_{N,t} \right\}(x;x) = - 2 N \int_{\mathbb{R}^3} \, v W_{N,t}(x,v)\,dv$ this suggests that the Wigner transform of $\omega_{N,t}$ satisfies in the limit $N \rightarrow \infty$, i.e. $\varepsilon \rightarrow 0$, the following transport equation\footnote{In this case $f$ has to be initially the Wigner transform of $\omega_{N,t} \big|_{t=0}$.}
\begin{align}
\label{eq: Vlasov-Maxwell different style of writing}
\begin{cases}
\partial_t f_t  &= - 2 \left( v - \kappa * \vA_{\alpha_t} \right) \cdot \nabla_x f_t + \vF_{f_t, \alpha_t} \cdot \nabla_v f_t , 
\\
i \partial_t \alpha_t(k,\lambda) &= \abs{k} \alpha_t(k,\lambda) - \sqrt{\frac{4 \pi^3}{\abs{k}}} \mathcal{F}[\kappa](k) \vep_{\lambda}(k) \cdot \mathcal{F}[\vJt_{f_t,\alpha_t}](k)
\end{cases}
\end{align}
with $A_{\alpha_t}$ being defined as in \eqref{eq:definition vector potential} and
\begin{subequations}
\begin{align}
\label{eq:density of Vlasov equation}
\widetilde{\rho}_{f}(x) &= \int f(x,v) \, dv ,
\\
\label{eq:F-field of Vlasov equation}
\vF_{f, \alpha}(x,v) &= \nabla_x \Big[ K  * \widetilde{\rho}_{f}(x) + \left( \kappa * \vA_{\alpha}(x) \right)^2  - 2 v \cdot \kappa * \vA_{\alpha}(x) \Big] ,
\\
\label{eq:current density of Vlasov equation}
\vJt_{f,\alpha}(x) &= 2 \int \left( v - \kappa * \vA_{\alpha} \right) f(x,v) \, dv 
\end{align}
\end{subequations}
and initial datum $(f, \alpha) \in L^1 \left( \mathbb{R}^3 \times \mathbb{R}^3 \right) \times \left( L^2(\mathbb{R}^3) \otimes \mathbb{C}^2 \right)$.
Note that \eqref{eq: Vlasov-Maxwell different style of writing} with $\kappa(x) = -\delta(x)/4$  is formally equivalent to \eqref{eq: Vlasov-Maxwell literature form} in the Coulomb gauge with $c= e= 1$ and $m=2$, as shown in Appendix \ref{section:comparison of the different Vlasov-Maxwell equations}. For this reason hereafter we refer to \eqref{eq: Vlasov-Maxwell different style of writing} as the (regularized) Vlasov--Maxwell system.

\subsection{Notation}

For $\sigma \in \mathbb{N}_0$, $k \geq 0$ and $1 \leq p \leq \infty$ let $W_{k}^{\sigma,p} \left( \mathbb{R}^{d} \right)$ be the Sobolev space equipped with the norm 
\begin{align}
\label{eq:definition LP-spaces}
\norm{f}_{W_{k}^{\sigma, p} \left( \mathbb{R}^{d} \right)} 
&=
\begin{cases}
\left( \sum_{\abs{\alpha} \leq \sigma} \norm{\left< \cdot \right>^{k} D^{\alpha} f}_{L^p\left( \mathbb{R}^{d} \right)}^p \right)^{\frac{1}{p}}
\quad 1 \leq &p < \infty ,
\\
\max_{\abs{\alpha} \leq \sigma}  \norm{\left< \cdot \right>^{k} D^{\alpha} f}_{L^{\infty}\left( \mathbb{R}^{d} \right)}
\quad &p = \infty,
\end{cases}
\end{align}
where $\left< x \right>^2 = 1 + \abs{x}^2$.
In the cases $\sigma=0$ or $p=2$ we use the shorthand notation
$L_k^{p} \left( \mathbb{R}^{d} \right) \coloneqq W_{k}^{(0,p)} \left( \mathbb{R}^{d} \right)$
and  $H_k^{\sigma} \left( \mathbb{R}^{d} \right) \coloneqq W_{k}^{\sigma, 2} \left( \mathbb{R}^{d} \right)$.  Vectors in $\mathbb{R}^{6}$ are written as
 $z = (x,v)$ so that $\left< z \right>^2 = 1 + \abs{x}^2 + \abs{v}^2$ and $D_z^{\alpha} = \left( \partial/ \partial x_1 \right)^{\beta_1} \left( \partial/ \partial v_1 \right)^{\gamma_1} \cdots \left( \partial/ \partial x_3 \right)^{\beta_3} \left( \partial/ \partial v_3 \right)^{\gamma_3}$ with
$\beta = (\beta_i)_{i \in \llbracket 1, 3 \rrbracket} \in \mathbb{N}_0^3$ and $\gamma = (\gamma_i)_{i \in \llbracket 1, 3 \rrbracket} \in \mathbb{N}_0^3$ 
such that $\abs{\alpha} = \sum_{i=1}^3 \beta_i + \gamma_i$. 
For two Banach spaces $A$ and $B$ we denote by $A \cap B$ the Banach space of vectors $f \in A \cap B$ with norm $\norm{f}_{A \cap B} = \norm{f}_A + \norm{f}_B$. For $a \in \mathbb{R}$ we define the weighted $L^2$-space
 $\dot{\mathfrak{h}}_{a} = L^2(\mathbb{R}^3, \abs{k}^{2a} dk) \otimes \mathbb{C}^2$ with norm 
\begin{align}
\norm{\alpha}_{\dot{\mathfrak{h}}_{a}} &= \bigg( \sum_{\lambda=1,2} \int_{\mathbb{R}^3} 
\abs{k}^{2a} \abs{\alpha(k,\lambda)}^2dk \bigg)^{1/2} = \norm{ |\cdot |^{a} \alpha}_{\mathfrak{h}}  .
\end{align}
Note that the notation $\mathfrak{h}$ will be used to denote the Hilbert space $\mathfrak{h}_0$ with scalar product
\begin{align}
\scp{\alpha}{\beta} = \sum_{\lambda=1,2} \int_{\mathbb{R}^3} 
\overline{\alpha(k,\lambda)} \beta(k,\lambda) dk.
\end{align}
It is, moreover, convenient to define for $a \geq 0$ the Banach spaces
\begin{align}
\mathfrak{h}_a &= \left\{ \alpha \in \mathfrak{h} : (1 + \abs{k}^2)^{a/2} \alpha \in \mathfrak{h}  \right\}
\end{align}
with 
\begin{align}
\norm{\alpha}_{\mathfrak{h}_a} &= \bigg( \sum_{\lambda=1,2} \int_{\mathbb{R}^3} 
\left( 1 + \abs{k}^2 \right)^a \abs{\alpha(k,\lambda)}^2dk \bigg)^{1/2}  .
\end{align}
For a reflexive Banach space $(X, \norm{\cdot}_{X})$ and $p \in [1,\infty]$ we denote by $L^p(0,T; X)$ the space of (equivalence classes of) strongly Lebesgue-measurable functions $\alpha: [0,T] \rightarrow X$ with the property that
\begin{align}
\norm{\alpha}_{L_T^p X} 
&= 
\begin{cases}
\left( \int_0^T  \norm{\alpha_t}_X^p dt \right)^{1/p}
\quad \text{if} \; 1 \leq p < \infty \\
\esssup_{t \in [0,T]} \norm{\alpha_t}_{X} \quad \text{if} \; p= \infty
\end{cases}
\end{align}
is finite. Note that $(L^p(0,T; X), \norm{\cdot}_{L_T^p X})$ is a Banach space. The Sobolev space $W^{1,p}(0,T; X)$ consists of all functions $\alpha \in L^p(0,T; X)$ such that $\partial_t \alpha_t$ exists in weak sense and belongs to $L^p(0,T; X)$. Furthermore,
\begin{align}
\norm{\alpha}_{W^{1,p}(0,T;X)}
&= 
\begin{cases}
\left( \int_0^T \norm{\alpha_t}_{X}^p + \norm{\partial_t \alpha_t}_{X}^p dt \right)^{1/p}
\quad \text{if} \; 1 \leq p < \infty \\
\esssup_{t \in [0,T]}
\left( \norm{\alpha_s}_{X} + \norm{\partial_t \alpha_t}_{X} \right)
\quad \text{if} \; p= \infty .
\end{cases}
\end{align}

Let $\textfrak{S}^{\infty} \left( L^2(\mathbb{R}^3) \right)$ be the set of all bounded operators on $L^2 (\mathbb{R}^3)$ and $\textfrak{S}^1 \left( L^2(\mathbb{R}^3) \right)$ be the set of trace class operators on $L^2(\mathbb{R}^3)$. More generally, for $p\in[1,\infty)$, we denote by $\textfrak{S}^p(\mathbb{R}^3)$ the $p$-Schatten space equipped with the norm $\norm{A}_{\textfrak{S}^p}=(\tr \abs{A}^p)^\frac{1}{p}$, where $A$ is an operator, $A^*$ its adjoint and $\abs{A}=\sqrt{A^*A}$.
For $a \geq 0$ let
\begin{align}
\textfrak{S}^{a,1}(L^2(\mathbb{R}^3))
&= \left\{  \omega: \omega \in \textfrak{S}^{\infty}(L^2(\mathbb{R}^3)), \omega^* = \omega 
\; \text{and} \; \left( 1 - \Delta \right)^{a/2} \omega\left( 1 - \Delta \right)^{a/2} \in \textfrak{S}^{1}(L^2(\mathbb{R}^3))
\right\}
\end{align}
with 
\begin{align}
\norm{\omega}_{\textfrak{S}^{a,1}}
&= \norm{\left( 1 - \Delta \right)^{a/2} \omega \left( 1 - \Delta \right)^{a/2}}_{\textfrak{S}^1}.
\end{align}
The positive cone of the latter one is defined as
\begin{align}
\textfrak{S}_{+}^{a,1}(L^2(\mathbb{R}^3)) 
&= \left\{ \omega \in \textfrak{S}^{a,1}(L^2(\mathbb{R}^3)) : \omega \geq 0
\right\} .
\end{align}
We use the symbol $C$ to represent a general positive constant that could vary from one line to another, and it may depend on the parameters $k$ and $\sigma$ in \eqref{eq:definition LP-spaces}, as well as on the cutoff function outlined in \eqref{eq:assumption on the cutoff parameter 3} below. Positive constants depending on $\varepsilon$ are denoted by $C_{\varepsilon}$. Additionally, we adopt the notation $C(f,g)$ for positive constants depending on the enclosed quantities (specifically $f$ and $g$ in this context). 

\subsection{Organisation of the paper}

The paper is organised as follows. In Section~\ref{section:main-results} we present our main result concerning the semiclassical approximation of the Maxwell--Schr\"{o}dinger dynamics with a solution to the Vlasov--Maxwell equation (see Theorem~\ref{theorem:main theorem}). This result builds upon the well-posedness and regularity theory for both Maxwell--Schr\"{o}dinger equations and for the Vlasov--Maxwell system, which are given in Proposition~\ref{proposition:well-posedness of Maxwell-Schroedinger} and Proposition~\ref{lemma:global solutions VM}, respectively.
We also discuss the physical relevance of the assumptions made on the cut-off parameter (see Remark~\ref{remark:properties of the cutoff function} and Remark~\ref{remark:examples of cut-off}) and compare our results with the existing literature. Section~\ref{section:preliminaries} collects preliminary estimates that will be used throughout the text. Sections~\ref{section:proof of main theorem}, \ref{section:Well-posedness of the Maxwell--Schroedinger equations} and \ref{section:Properties of the solutions of the Vlasov-Maxwell equations} are devoted to the proofs of the Theorem~\ref{theorem:main theorem}, Proposition~\ref{proposition:well-posedness of Maxwell-Schroedinger}  and Proposition~\ref{lemma:global solutions VM}, respectively. 
The paper includes two appendices: in Appendix~\ref{section:comparison of the different Vlasov-Maxwell equations}, we establish the equivalence between two distinct formulations of the Vlasov--Maxwell system, while Appendix~\ref{section:auxiliary estimates} shows that certain Schatten norms of commutators between the Weyl quantization of a particle distribution and position or momentum operators can be related to weighted Sobolev norms of the particle distribution.

\section{Main results}\label{section:main-results}

Throughout the paper we assume the cut-off parameter to satisfy the following assumptions.

\begin{assumption}
\label{assumption:cutoff function}
Let $\kappa: \mathbb{R}^3 \rightarrow \mathbb{R}$ be a real and even charge distribution such that 
\begin{align}
\label{eq:assumption on the cutoff parameter 3}
\kappa \in L^1(\mathbb{R}^3) 
\quad \text{and} \quad
\left( - \Delta \right)^{1/2} \kappa \in L^2(\mathbb{R}^3).
\end{align}
\end{assumption}

\begin{remark}
\label{remark:properties of the cutoff function}
We will not study the explicit dependence of the estimates on the charge distribution and for this reason estimate the respective norms \eqref{eq:assumption on the cutoff parameter 3} by a generic not specified constant.
The first condition in \eqref{eq:assumption on the cutoff parameter 3} requires that the negative and positive parts of the charge distribution are summable. If the charge distribution is purely negative or positive it only assumes that the total charge is finite. The second condition requires a sufficient decay of the Fourier modes with high momenta.
Note that our assumptions imply $\mathcal{F}[\kappa] \in L^{\infty}(\mathbb{R}^3)$ and 
\begin{align}
\label{eq:assumption on the cutoff parameter}
\left( \abs{\cdot}^{-1} + \abs{\cdot}^{1/2} \right) \mathcal{F}[\kappa] \in L^2(\mathbb{R}^3) ,
\end{align}
where the latter one is usually required to prove the self-adjointness of the Pauli--Fierz Hamiltonian (see Subsection \ref{subsection:heuristic discussion}).
Since we require $\kappa$ to be even and summable, our assumptions are slightly more restrictive than the assumptions required for the self-adjointness of the Pauli--Fierz Hamiltonian because we exclude non-even charge distributions and those with finite total charge whose positive and negative part are not summable. 
Note that 
\begin{align}
\label{eq:assumption on the cutoff parameter 2}
\mathcal{F}[\kappa](k) = \mathcal{F}[\kappa](-k)
\quad \text{and} \quad
\mathcal{F}[\kappa](k) \in \mathbb{R}
\quad  \forall k \in \mathbb{R}^3
\end{align}
because $\kappa$ is real and even.
\end{remark}

\begin{remark}\label{remark:examples of cut-off}
Typical examples of cut-off parameters which satisfy our assumptions are 
\begin{align}
 e \frac{\mathcal{F}[\id_{\abs{\cdot} \leq \Lambda}]}{(2 \pi)^{3/2}}
\quad \text{and} \quad 
 \frac{e}{\sigma^3 (2 \pi)^{3/2}} e^{- \frac{x^2}{2 \sigma^2}}
\quad \text{with} \quad
\sigma > 0 ,
\end{align}
describing extended particles with total charge $e \in \mathbb{R}$ 
distributed by means of a sharp ultraviolet momentum cutoff 
and in a Gaussian fashion, respectively.
\end{remark}

%\textcolor{blue}{Remark for proofreading: Note that the Fourier transform of the first example does not look like a real even function. But this is the case because
%\begin{align}
%\mathcal{F}[\id_{\abs{\cdot} \leq \Lambda}](x)
%&= (2 \pi)^{-3/2} \int_{\abs{k} \leq \Lambda} d^3 k \, e^{-ikx} 
%= (2 \pi)^{-3/2} \int_{\abs{k} \leq \Lambda} d^3 k \, \cos ( kx)
%\end{align}
%because
%\begin{align}
%- i \int_{\abs{k} \leq \Lambda} d^3 k \, \sin(- kx)
%&=  i \int_{\abs{k} \leq \Lambda} d^3 k \, \sin( kx)
%=  i \int_{\abs{k} \leq \Lambda} d^3 k \, \sin( - kx)
%\end{align}
%where the second equality uses that the sinus is an odd function and the latter equality is obtained by the substitution $k \mapsto - k$. We, consequently, have $\overline{\mathcal{F}[\id_{\abs{\cdot} \leq \Lambda}](x)} = \mathcal{F}[\id_{\abs{\cdot} \leq \Lambda}](x)$ and
%\begin{align}
%\mathcal{F}[\id_{\abs{\cdot} \leq \Lambda}](- x)
%&=  (2 \pi)^{-3/2} \int_{\abs{k} \leq \Lambda} d^3 k \, \cos (- kx)
%=  (2 \pi)^{-3/2} \int_{\abs{k} \leq \Lambda} d^3 k \, \cos ( kx)
%= \mathcal{F}[\id_{\abs{\cdot} \leq \Lambda}](x) .
%\end{align}
%}

In this paper we rely on the following well-posedness result for the Maxwell--Schr\"odinger system, which is proved in Section \ref{section:Well-posedness of the Maxwell--Schroedinger equations}.

\begin{proposition}
\label{proposition:well-posedness of Maxwell-Schroedinger}
Let $\kappa$ satisfy Assumption \ref{assumption:cutoff function}.
For all $(\omega_0, \alpha_0) \in \textfrak{S}_{+}^{2,1} (L^2(\mathbb{R}^3)) \times \mathfrak{h}_{1/2} \cap \dot{\mathfrak{h}}_{-1/2} $ the Cauchy problem for the Maxwell--Schr\"odinger system \eqref{eq:Maxwell-Schroedinger equations} associated with $(\omega_0, \alpha_0)$   has a unique  $
 C \big( \mathbb{R}_{+} ; \textfrak{S}_{+}^{2,1} ( L^2(\mathbb{R}^3) ) \big) \cap C^1 \big( \mathbb{R}_{+} ; \textfrak{S}^{1} ( L^2(\mathbb{R}^3) )  \big)   \times  C \big( \mathbb{R}_{+} ; \mathfrak{h}_{1/2} \cap \dot{\mathfrak{h}}_{-1/2} \big) \cap C^1 \big( \mathbb{R}_{+} ; \dot{\mathfrak{h}}_{-1/2} \big) 
$ solution. 
The mass and energy 
\begin{align}
\label{eq:Maxwell-Schroedinger with exchange term energy definition}
\mathcal{E}^{\rm{MS}}[\omega, \alpha] &= 
\tr \left( \omega \left( - i \varepsilon \nabla - \kappa * \vA_{\alpha} \right)^2 \right) + \frac{1}{2} \tr \left( \left( \, K * \rho_{\omega} - X_{\omega} \right) \omega \right)
 + N \norm{\alpha}_{\dot{\mathfrak{h}}_{1/2}}^2
\end{align} 
of the system are conserved, i.e. 
\begin{align}
\tr  \left( \omega_t \right) = \tr  \left( \omega_0 \right)
\quad \text{and} \quad 
\mathcal{E}^{\rm{MS}}[\omega_t, \alpha_t] = \mathcal{E}^{\rm{MS}}[\omega_0, \alpha_0]
\quad \text{for all} \; t \in  \mathbb{R}_{+} .
\end{align}
\end{proposition}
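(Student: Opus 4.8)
The plan is to establish the well-posedness of the coupled system \eqref{eq:Maxwell-Schroedinger equations} via a fixed-point/contraction argument, and then to derive the conservation laws \emph{a posteriori} from the resulting sufficiently regular solution. Since the system couples an operator-valued Liouville--von Neumann equation for $\omega_{N,t}$ with a Schr\"odinger-type equation for the photon mode $\alpha_t$, the natural strategy is a Picard iteration on the Duhamel (mild) formulation. First I would rewrite each equation in integral form: the $\omega$-equation as $\omega_{N,t} = U(t,0)\,\omega_0\,U(t,0)^*$ where $U(t,s)$ is the two-parameter unitary propagator generated by the time-dependent self-adjoint Hamiltonian $h_t = \left( - i \varepsilon \nabla - \kappa * \vA_{\alpha_t} \right)^2 + K * \rho_{\omega_t} - X_{\omega_t}$, and the $\alpha$-equation via Duhamel against the free evolution $e^{-it|k|}$, namely $\alpha_t(k,\lambda) = e^{-it|k|}\alpha_0(k,\lambda) + i\sqrt{4\pi^3/|k|}\,\mathcal{F}[\kappa](k)\,\vep_\lambda(k)\cdot \int_0^t e^{-i(t-s)|k|}\mathcal{F}[\vJ_{\omega_s,\alpha_s}](k)\,ds$. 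One then sets up the map $\Phi:(\omega,\alpha)\mapsto(\omega',\alpha')$ on a ball in $C\big([0,T];\textfrak{S}_{+}^{2,1}\big)\times C\big([0,T];\mathfrak{h}_{1/2}\cap\dot{\mathfrak{h}}_{-1/2}\big)$ and shows it is a contraction for small $T$.

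The central analytic work is to control the nonlinear coefficients in the chosen norms. I would need: (i) that $h_t$ generates a strongly continuous unitary propagator preserving $\textfrak{S}^{2,1}$, which requires bounding $\kappa * \vA_{\alpha_t}$ and its derivatives in suitable operator norms using the $\mathfrak{h}_{1/2}$ regularity of $\alpha_t$ together with Assumption~\ref{assumption:cutoff function} (the condition $(-\Delta)^{1/2}\kappa\in L^2$ and $\mathcal{F}[\kappa]\in L^\infty$ are exactly what make $\kappa*\vA_\alpha$ and its gradient bounded); (ii) that the mean-field terms $K*\rho_\omega$ and the exchange operator $X_\omega$ are Lipschitz in $\omega$ on $\textfrak{S}^{2,1}$, which follows from the smoothing of $K$ in \eqref{eq:Coulomb potential with cutoff} and Schatten-H\"older estimates; and (iii) that the source term $\mathcal{F}[\vJ_{\omega_s,\alpha_s}]$ lands in $\dot{\mathfrak{h}}_{-1/2}$ after integration against $e^{-i(t-s)|k|}$, so that $\alpha_t$ stays in $\mathfrak{h}_{1/2}\cap\dot{\mathfrak{h}}_{-1/2}$. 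The weight $|k|^{-1/2}$ appearing in the current coupling, combined with the $|k|^{-1/2}$ from the factor $\sqrt{4\pi^3/|k|}$, is precisely why the space $\dot{\mathfrak{h}}_{-1/2}$ enters. The auxiliary commutator estimates announced for Appendix~\ref{section:auxiliary estimates} are what convert the anticommutator structure in $\vJ_{\omega,\alpha}$ into bounds controllable by $\norm{\omega}_{\textfrak{S}^{2,1}}$.

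Once local existence and uniqueness are obtained, I would prove the conservation of mass $\tr(\omega_t)$ and of the energy $\mathcal{E}^{\rm MS}$, and then use these to extend the solution globally. Mass conservation is immediate from the unitary conjugation structure of the $\omega$-equation, since $\tr\big(U(t,0)\omega_0 U(t,0)^*\big)=\tr(\omega_0)$. For the energy, I would differentiate $\mathcal{E}^{\rm MS}[\omega_t,\alpha_t]$ in time along the flow; the $C^1$ regularity of the solution in $\textfrak{S}^1$ and $\dot{\mathfrak{h}}_{-1/2}$ stated in the proposition justifies this differentiation. The kinetic and potential contributions from $\partial_t\omega_t=-i\varepsilon^{-1}[h_t,\omega_t]$ produce commutator traces that, after cyclicity, cancel against the field-energy variation $N\,\partial_t\norm{\alpha_t}_{\dot{\mathfrak{h}}_{1/2}}^2$ precisely because the coupling between the two equations is dictated by the \emph{same} current $\vJ_{\omega_t,\alpha_t}$ — this is the standard Hamiltonian cancellation, and verifying it amounts to matching the $\vA$-dependent terms in the kinetic energy against the source term in the $\alpha$-equation. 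The global-in-time statement then follows because the conserved energy provides an a priori bound controlling $\norm{\alpha_t}_{\dot{\mathfrak{h}}_{1/2}}$ and the kinetic energy, which feeds back to prevent blow-up of the $\textfrak{S}^{2,1}$ and $\mathfrak{h}_{1/2}\cap\dot{\mathfrak{h}}_{-1/2}$ norms, allowing the local solution to be iterated on successive intervals.

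I expect the main obstacle to be step (i)--(iii) of the contraction, specifically closing the estimate for $\alpha_t$ in the intersection space $\mathfrak{h}_{1/2}\cap\dot{\mathfrak{h}}_{-1/2}$: the low-frequency weight $|k|^{-1/2}$ makes the source integrable only if $\mathcal{F}[\vJ]$ does not blow up at $k=0$, which must be extracted from the structure of the current together with the smoothing by $\mathcal{F}[\kappa]$, while the high-frequency weight $|k|^{1/2}$ demands that the same source be controlled without the regularizing oscillation of $e^{-i(t-s)|k|}$ providing decay. Balancing these two competing frequency regimes under the single Assumption~\ref{assumption:cutoff function} is the delicate point, and it is where the precise form of the cutoff hypotheses is used most sharply.
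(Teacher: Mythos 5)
Your overall architecture (Duhamel/fixed point for local existence, conservation laws, a priori bounds plus a blow-up alternative for globality) matches the paper's, but two steps at the heart of the contraction argument would fail as you have set them up. First, your iteration space $C\big([0,T];\textfrak{S}_{+}^{2,1}\big)\times C\big([0,T];\mathfrak{h}_{1/2}\cap\dot{\mathfrak{h}}_{-1/2}\big)$ carries no control on $\partial_t\alpha_t$. The representation $\omega_{N,t}=U(t,0)\,\omega_0\,U(t,0)^*$ with $U$ generated by the magnetic Hamiltonian is only usable if the propagator exists on $H^2(\mathbb{R}^3)$ with quantitative bounds, and the construction of such a propagator (the paper uses the Nakamura--Wada estimates \cite{NM2005}, see Lemma~\ref{lemma:evolution operator for magnetic laplacian}) requires \emph{time regularity} of the vector potential, concretely $\partial_t(\kappa*\vA_{\alpha})\in L^1_T L^3(\mathbb{R}^3)$, i.e.\ $\partial_t\alpha\in L^1(0,T;\dot{\mathfrak{h}}_{-1/2})$. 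This is exactly why the paper's Banach space \eqref{eq:local solutions MS equations definition Banach space 1} includes the component $W^{1,2}(0,T;\dot{\mathfrak{h}}_{-1/2})$, with the fixed-point map shown to propagate this bound through the explicit Duhamel formula for $\xi_t$; the paper flags this choice of space as one of its two crucial ingredients, and your space cannot run the argument without it.

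Second, a contraction directly in the strong norm on $\textfrak{S}^{2,1}$ does not close, because the difference estimate loses a derivative: comparing the flows for two inputs $(\omega,\alpha)$ and $(\omega',\alpha')$ produces the commutator
\begin{align}
\left[\,(\kappa*\vA_{\alpha_s})^2-(\kappa*\vA_{\alpha'_s})^2 + 2\,\kappa*\vA_{\alpha_s-\alpha'_s}\cdot i\varepsilon\nabla\,,\ \Gamma'_s\,\right],
\end{align}
whose $\textfrak{S}^{2,1}$-norm is controlled only by $\norm{\Gamma'_s}_{\textfrak{S}^{3,1}}$-type quantities (the term $\kappa*\vA_{\alpha-\alpha'}\cdot\nabla\Gamma'$ costs one derivative). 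The paper circumvents this by the standard weak-norm trick: the ball constraints are imposed in $\textfrak{S}^{2,1}$ and $\mathfrak{h}_{1/2}\cap\dot{\mathfrak{h}}_{-1/2}$, but the contraction metric $d_T$ is taken in the \emph{weaker} norm $\textfrak{S}^{1,1}$ (see the proof of Lemma~\ref{lemma:MS system local well-posedness result stated by use of negative weighted Lp-spaces}), so that the extra derivative is absorbed by the $\textfrak{S}^{2,1}$ ball bound of $\Gamma'$. Relatedly, your claim that energy conservation ``feeds back to prevent blow-up of the $\textfrak{S}^{2,1}$ norm'' is too quick: the conserved energy controls only kinetic-energy ($\textfrak{S}^{1,1}$-type) quantities and $\norm{\alpha_t}_{\dot{\mathfrak{h}}_{1/2}}$. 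The $\textfrak{S}^{2,1}$ norm is bounded in the paper by a separate Gr\"onwall argument through the propagator constant $K_{\varepsilon,\alpha,t}$ — itself controlled via the linear-in-time bounds \eqref{eq:estimate for global existence 1} and \eqref{eq:propagation of h -1-2 norm of derivative of alpha during Maxwell-Schroedinger} on $\norm{\alpha_t}_{\mathfrak{h}_{1/2}\cap\dot{\mathfrak{h}}_{-1/2}}$ and $\norm{\partial_t\alpha_t}_{\dot{\mathfrak{h}}_{-1/2}}$ — yielding the double-exponential estimate \eqref{eq:estimate for global existence 2}; without this chain your global-extension step is unsupported.
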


We will also make use of the following result concerning the solution and regularity theory for the Vlasov--Maxwell system. Its proof is provided in Section~\ref{section:Properties of the solutions of the Vlasov-Maxwell equations}.

\begin{proposition}
\label{lemma:global solutions VM}
Let $R>0$ and $a, b \in \mathbb{N}$ satisfying $a \geq 5$ and $b \geq 3$. For all $(f_0, \alpha_0) \in H_a^{b}(\mathbb{R}^6) \times \mathfrak{h}_b \cap \dot{\mathfrak{h}}_{-1/2}$ such that $\supp f_0 \subset A_R$ with $A_R = \{ (x,v) \in \mathbb{R}^6 , \abs{v} \leq R \}$ system
\eqref{eq: Vlasov-Maxwell different style of writing} has a unique
$L^{\infty} \big( \mathbb{R}_{+}; H_{a}^{b}(\mathbb{R}^6) \big) \cap C \big( \mathbb{R}_{+}; H_{a}^{b-1}(\mathbb{R}^6) \big)  \cap C^1 (\mathbb{R}_{+};  H_{a}^{b-2}(\mathbb{R}^6)) \times  C \big( \mathbb{R}_{+} ; \mathfrak{h}_{b} \cap \dot{\mathfrak{h}}_{-1/2} \big) \cap C^1 \big( \mathbb{R}_{+} ; \mathfrak{h}_{b-1} \cap \dot{\mathfrak{h}}_{-1/2} \big)$ solution.
The $L^p$--norms of the particle distribution (with $p \geq 1$) and the energy
\begin{align}
\label{eq:Vlasov-Maxwell energy definition}
\mathcal{E}^{\rm{VM}}[f, \alpha] &=  
\int_{\mathbb{R}^6} dx \, dv \, f(x,v) \left( v - \kappa * \vA_{\alpha}(x) \right)^2 
+  \frac{1}{2} \int_{\mathbb{R}^6} dx \, dv \,
f(x,v) K * \widetilde{\rho}_{f}(x) 
+  \norm{\alpha}_{\dot{\mathfrak{h}}_{1/2}}^2 
\end{align}
are conserved, i.e. $\norm{f_t}_{L^p(\mathbb{R}^6)} = \norm{f_0}_{L^p(\mathbb{R}^6)}$ and $\mathcal{E}^{\rm{VM}}[f_t, \alpha_t] =  \mathcal{E}^{\rm{VM}}[f_0, \alpha_0]$ for all $t \in \mathbb{R}_{+}$.
\end{proposition}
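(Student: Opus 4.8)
The plan is to view \eqref{eq: Vlasov-Maxwell different style of writing} as a coupling between a linear transport equation for $f_t$, whose coefficients are slaved to $\alpha_t$ and to the self-consistent density $\widetilde{\rho}_{f_t}$, and a linear dispersive equation for $\alpha_t$, whose source is slaved to $f_t$ and $\alpha_t$. Local-in-time existence I would obtain by a contraction argument: on a short interval $[0,T]$ I would set up the map sending $(f,\alpha)$ to the pair obtained by solving the transport equation with the given field and the field equation with the given current, and show it is a contraction on a ball of $C([0,T];H_a^{b-1}\times(\mathfrak{h}_b\cap\dot{\mathfrak{h}}_{-1/2}))$, with $v$-support confined to a slightly enlarged $A_{R'}$, the distance being measured in the lower-order norm $L^2\times(\mathfrak{h}\cap\dot{\mathfrak{h}}_{-1/2})$. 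The decisive simplification compared with the physical Vlasov--Maxwell system is that the charge distribution removes every derivative loss: since $\mathcal{F}[\kappa]$ decays under Assumption~\ref{assumption:cutoff function}, the fields $\kappa*\vA_{\alpha}$, $K*\widetilde{\rho}_f$ and all their $x$-derivatives are bounded in $L^\infty$ by suitable norms of $\alpha$ and by low moments of $f$; these are precisely the continuity estimates I take from Section~\ref{section:preliminaries}.

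For the transport part I would integrate along the characteristic flow $(X,V)(t)$ solving $\dot X=2(V-\kappa*\vA_{\alpha_t}(X))$ and $\dot V=-\vF_{f_t,\alpha_t}(X,V)$, along which $f_t$ is constant. Two structural facts drive the analysis. First, the phase-space velocity field is divergence free: $\nabla_x\cdot(\kappa*\vA_{\alpha_t})=0$ by the Coulomb gauge, and a short computation gives $\nabla_v\cdot\vF_{f_t,\alpha_t}=-2\,\nabla_x\cdot(\kappa*\vA_{\alpha_t})=0$; hence the flow is measure preserving and $\|f_t\|_{L^p}=\|f_0\|_{L^p}$ for every $p\in[1,\infty]$. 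Second, $\vF_{f_t,\alpha_t}$ is bounded with at most linear growth in $v$ (through the term $-2v\cdot\kappa*\vA_{\alpha_t}$), so a Grönwall estimate on $|V(t)|$ confines the velocity support to $A_{R(t)}$ with growth rate governed by $\|\nabla(\kappa*\vA_{\alpha_t})\|_{L^\infty}$; the compact $v$-support in turn renders $\widetilde{\rho}_{f_t}$ and $\vJt_{f_t,\alpha_t}$ well defined and reduces their control to the $H_a^b$-norm of $f$. Propagation of the weighted Sobolev regularity then follows by differentiating the transport equation, commuting the derivatives and the weight $\langle z\rangle^a$ through it, and bounding the commutator terms by $\|f_t\|_{H_a^b}$ times $L^\infty$-norms of derivatives of the coefficients, which closes a Grönwall inequality for $\|f_t\|_{H_a^b}$ and yields the $L^\infty$-in-time top-order bound; continuity in time at one order below and the $C^1$-statement two orders below then follow from the equation in the usual way.

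For the field equation I would use the Duhamel representation $\alpha_t=e^{-i|k|t}\alpha_0-i\int_0^t e^{-i|k|(t-s)}S(f_s,\alpha_s)\,ds$, with $S$ the source in \eqref{eq: Vlasov-Maxwell different style of writing}. Because $e^{-i|k|t}$ is unitary on every $\mathfrak{h}_c$ and $\dot{\mathfrak{h}}_c$, it remains to estimate the source in $\mathfrak{h}_b\cap\dot{\mathfrak{h}}_{-1/2}$: the prefactor $|k|^{-1/2}\mathcal{F}[\kappa]$ paired with $\vJt_{f_s,\alpha_s}$ yields the $\dot{\mathfrak{h}}_{-1/2}$ bound, while the decay of $\mathcal{F}[\kappa]$ supplies the $\mathfrak{h}_b$ bound, the current being controlled by the first velocity moment of $f$ together with $\widetilde{\rho}_{f_s}\,\kappa*\vA_{\alpha_s}$. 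Energy conservation I would prove by differentiating $\mathcal{E}^{\rm{VM}}[f_t,\alpha_t]$ along the regular solution, inserting both evolution equations and checking, after an integration by parts that uses the gauge condition and the skew-adjointness of $-i|k|$, that the kinetic, interaction and field contributions cancel; the $C^1$-in-time regularity from the local theory justifies these manipulations.

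Global existence follows by continuation. The conserved energy bounds $\|\alpha_t\|_{\dot{\mathfrak{h}}_{1/2}}$ uniformly in time, and since $\|\nabla(\kappa*\vA_{\alpha_t})\|_{L^\infty}\lesssim\|\mathcal{F}[\kappa]\|_{L^2}\,\|\alpha_t\|_{\dot{\mathfrak{h}}_{1/2}}$ by Cauchy--Schwarz (with $\mathcal{F}[\kappa]\in L^2$ a consequence of Assumption~\ref{assumption:cutoff function}), the growth rate of the velocity support and of the first-order coefficients is controlled by the energy alone; together with $L^p$-conservation this keeps $R(t)$ finite on every bounded interval. Feeding these bounds into the two Grönwall inequalities above shows that $\|f_t\|_{H_a^b}$ and $\|\alpha_t\|_{\mathfrak{h}_b}$ cannot blow up in finite time, so the local solution extends to all of $\mathbb{R}_+$; uniqueness comes from the same low-order Grönwall estimate used in the contraction. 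The main obstacle I anticipate is closing the coupled high-order estimates with time-controlled constants: the conserved energy controls only the $\dot{\mathfrak{h}}_{1/2}$-norm of $\alpha$, not the top-order norm $\mathfrak{h}_b$, nor the weighted norm $H_a^b$ of $f$, and these are linked through the force term $-2v\cdot\kappa*\vA_{\alpha_t}$; propagating them simultaneously without losing derivatives is possible only because the regularization by $\kappa$ makes every coefficient and source as smooth as needed, so that the top-order norms obey a genuine, rather than merely formal, Grönwall system.
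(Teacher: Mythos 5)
Your proposal is correct and follows essentially the same route as the paper: a contraction argument splitting the system into a linearized transport equation and a linear field equation (with the contraction metric taken at lower order than the norm in which the ball is defined), propagation of the compact velocity support along the measure-preserving characteristic flow (which also yields $L^p$-conservation), a Duhamel representation with source estimates for $\alpha$, energy conservation by direct differentiation, and globalization via a blow-up alternative in which the conserved energy and $L^1$-norm control the force and hence the support radius $R(t)$, after which the weighted high-order norms are propagated by Gr\"onwall. The only implementation differences are minor: the paper establishes the top-order $W_a^{b,2}$-bound for the linearized transport equation by an energy method with a regularized Laplacian $(1-\Delta_z)_{\leq n}$ and commutator estimates rather than by differentiating the characteristic flow, contracts at the $W_a^{b-1,2}\times\mathfrak{h}_b$ level rather than in $L^2\times\mathfrak{h}$, and makes explicit the triangular structure you allude to at the end by iterating upward in the regularity index (the $\mathfrak{h}_b$-bound for $\alpha_t$ needs only $W_5^{b-1,2}$ of $f$ and $\mathfrak{h}_{b-2}$ of $\alpha$, so the coupled estimates close level by level).
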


Our main result is the following

\begin{theorem}
\label{theorem:main theorem}
Let $\kappa$ satisfy Assumption \ref{assumption:cutoff function}, $\varepsilon = N^{-1/3}$, $\alpha_0 \in \mathfrak{h}_{1/2} \cap \dot{\mathfrak{h}}_{-1/2}$ and $\omega_{N,0} \in \textfrak{S}^{2,1} \big( L^2(\mathbb{R}^3) \big)$ be a sequence of reduced density matrices on $L^2(\mathbb{R}^3)$ satisfying $\tr \left( \omega_{N,0} \right) = N$ and $0 \leq \omega_{N,0} \leq 1$. Let $(\omega_{N,t}, \alpha_t)$ be the unique solution of \eqref{eq:Maxwell-Schroedinger equations} with initial datum $(\omega_{N,0},\alpha_0)$. Let $\big( \widetilde{W}_{N,t}, \widetilde{\alpha}_t \big)$ be the unique solution to \eqref{eq: Vlasov-Maxwell different style of writing} with initial datum $\big( \widetilde{W}_{N,0}, \widetilde{\alpha}_0 \big)$, such that $\widetilde{W}_{N,0} \geq 0$, $\widetilde{W}_{N,0} \in W_2^{0,1}(\mathbb{R}^6)$ and verifying
\begin{align}
\label{eq:main theorem condition on the Vlasov-Maxwell solution}
\big( \widetilde{W}_{N}, \widetilde{\alpha} \, \big) \in L_{\rm{loc}}^{\infty} \big( \mathbb{R}_{+}, H_7^8(\mathbb{R}^6) \big) \times L_{\rm{loc}}^{\infty} \big( \mathfrak{h}_5 \cap \dot{\mathfrak{h}}_{- 1/2} \big) . 
\end{align}
Moreover, let $\widetilde{\omega}_{N,t}$ be the Weyl quantization of $\widetilde{W}_{N,t}$ and
\begin{align}
\label{eq:main theorem definition of the norm under consideration}
\Xi (t)
&= N^{-1} \norm{\sqrt{1 - \varepsilon^2 \Delta} \left( \omega_{N,t} - \widetilde{\omega}_{N,t} \right) \sqrt{1 - \varepsilon^2 \Delta}}_{\textfrak{S}^{1}}
+ \norm{\alpha_t - \widetilde{\alpha}_t}_{\dot{\mathfrak{h}}_{-1/2} \, \cap \, \dot{\mathfrak{h}}_{1/2}} .
\end{align}
Then, 
\begin{align}
\Xi (t) &\leq \left( \Xi(0) + \varepsilon \widetilde{C}(t) \right)
e^{C(t)} .
\end{align}
Here,
\begin{align}
\widetilde{C}(t) &\leq 
\int_0^t ds \,
\sum_{j=0}^{6} \varepsilon^j \norm{\widetilde{W}_{N,s}}_{H_{7}^{j+2}}
\bigg( 1 +   \sum_{k=0}^3 \varepsilon^{2 + k} \norm{\widetilde{W}_{N,s}}_{H_{4}^{k+1}} 
+ \sum_{k=0}^4 \varepsilon^k  \norm{\widetilde{\alpha}_s}_{\mathfrak{h}_{k+1}}^2
\bigg) ,
\\
C(t) &= C \left( 1 + N^{-1} \mathcal{E}^{\rm{MS}}[\omega_{N,0}, \alpha_0] + \mathcal{E}^{\rm{VM}}[\widetilde{W}_{N,0}, \widetilde{\alpha}_0] + 
\norm{\widetilde{W}_{N,0}}_{L^1(\mathbb{R}^6)} \right)^2
\nonumber \\
&\quad \times
\bigg( \left< t \right> + \int_0^t ds \, \sum_{j=0}^{7} \varepsilon^j \norm{\widetilde{W}_{N,s}}_{H_{7}^{j+1}} \bigg) 
\end{align}
and $C$ is a numerical constant which depends on the specific choice of the cutoff function $\kappa$.
\end{theorem}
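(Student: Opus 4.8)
The plan is to establish a coupled Grönwall inequality for the functional $\Xi(t)$ of \eqref{eq:main theorem definition of the norm under consideration}. Write $\Xi = \Xi_1 + \Xi_2$, where $\Xi_1(t) = N^{-1}\norm{\sqrt{1-\varepsilon^2\Delta}\,(\omega_{N,t}-\widetilde\omega_{N,t})\,\sqrt{1-\varepsilon^2\Delta}}_{\mathfrak S^1}$ is the density-matrix contribution and $\Xi_2(t) = \norm{\alpha_t-\widetilde\alpha_t}_{\dot{\mathfrak h}_{-1/2}\cap\dot{\mathfrak h}_{1/2}}$ the field contribution. The goal is to show $\tfrac{d}{dt}\Xi(t)\le C(t)\,\Xi(t) + \varepsilon\,\dot{\widetilde C}(t)$ and integrate. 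The decisive ingredient is a \emph{consistency estimate}: the Weyl quantization $\widetilde\omega_{N,t}$ of the Vlasov--Maxwell solution $\widetilde W_{N,t}$ solves the Maxwell--Schr\"odinger equation \eqref{eq:Maxwell-Schroedinger equations} only up to an $O(\varepsilon)$ error. Concretely, I would compute $i\varepsilon\partial_t\widetilde\omega_{N,t} - [(-i\varepsilon\nabla-\kappa*\vA_{\widetilde\alpha_t})^2 + K*\rho_{\widetilde\omega_t} - X_{\widetilde\omega_t},\,\widetilde\omega_{N,t}]$ by passing through the Wigner transform and exploiting \eqref{eq: Vlasov-Maxwell different style of writing} together with the symbolic expansion sketched in the heuristic derivation. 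The remainder, measured in the weighted trace norm, is controlled by high weighted Sobolev norms of $\widetilde W_{N,t}$ via the correspondence between Schatten norms of commutators with a Weyl quantization and weighted Sobolev norms established in Appendix~\ref{section:auxiliary estimates}; this is exactly the mechanism that produces the factors $\sum_j \varepsilon^j\norm{\widetilde W_{N,s}}_{H_7^{j+2}}$ in $\widetilde C(t)$ and forces the regularity assumption \eqref{eq:main theorem condition on the Vlasov-Maxwell solution}.

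For $\Xi_1$ I would subtract the exact equation for $\omega_{N,t}$ from the approximate equation for $\widetilde\omega_{N,t}$ and telescope the nonlinearity, writing the difference of Hamiltonian commutators as $[H(\alpha_t,\omega_{N,t}),\,\omega_{N,t}-\widetilde\omega_{N,t}] + [H(\alpha_t,\omega_{N,t})-H(\widetilde\alpha_t,\widetilde\omega_{N,t}),\,\widetilde\omega_{N,t}]$, where $H(\alpha,\omega)=(-i\varepsilon\nabla-\kappa*\vA_\alpha)^2 + K*\rho_\omega - X_\omega$. The first commutator generates a unitary conjugation that leaves the unweighted trace norm invariant; conjugating instead the weighted difference produces residual commutators $[\sqrt{1-\varepsilon^2\Delta},H]$, which I would bound in operator norm using Assumption~\ref{assumption:cutoff function} and the a priori energy and kinetic bounds from Proposition~\ref{proposition:well-posedness of Maxwell-Schroedinger}. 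The second commutator is linear in the differences $\rho_{\omega_t}-\rho_{\widetilde\omega_t}$, $X_{\omega_t}-X_{\widetilde\omega_t}$ and $\vA_{\alpha_t}-\vA_{\widetilde\alpha_t}$, hence estimable by $\Xi_1(t)$ and $\Xi_2(t)$ after invoking the convolution and Schatten-norm estimates of Section~\ref{section:preliminaries}.

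For the field I would subtract the two Maxwell equations; the free dispersion $\abs{k}$ is a real phase and thus preserves both the $\dot{\mathfrak h}_{-1/2}$ and $\dot{\mathfrak h}_{1/2}$ norms, so the growth of $\Xi_2(t)$ is driven solely by the current difference $\vJ_{\omega_t,\alpha_t}-\vJt_{\widetilde W_t,\widetilde\alpha_t}$. Using $-N^{-1}\{i\varepsilon\nabla,\omega\}(x;x)=2\int v\,W_{N,t}(x,v)\,dv$ and $\rho_\omega(x)=\int W_{N,t}(x,v)\,dv$, the currents \eqref{eq:definition charge current} and \eqref{eq:current density of Vlasov equation} coincide under the Wigner correspondence, so their difference splits into a velocity-moment piece controlled by $\Xi_1(t)$ and pieces of the form $(\rho_{\omega_t}-\rho_{\widetilde\omega_t})\,\kappa*\vA_{\alpha_t}$ and $\rho_{\widetilde\omega_t}\,\kappa*(\vA_{\alpha_t}-\vA_{\widetilde\alpha_t})$, controlled by $\Xi_1(t)$ and $\Xi_2(t)$ respectively, with the coupling weight $\mathcal F[\kappa]\abs{k}^{\pm 1/2}$ absorbed through \eqref{eq:assumption on the cutoff parameter}. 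Collecting the three steps yields the differential inequality, and Grönwall's lemma closes the argument, the prefactors $N^{-1}\mathcal E^{\rm MS}$, $\mathcal E^{\rm VM}$ and $\norm{\widetilde W_{N,0}}_{L^1}$ entering $C(t)$ through the conserved quantities of Propositions~\ref{proposition:well-posedness of Maxwell-Schroedinger} and~\ref{lemma:global solutions VM}.

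The main obstacle I expect lies in the density-matrix step, and within it two points in particular: first, establishing the $O(\varepsilon)$ consistency remainder in the \emph{weighted} trace norm, since $\sqrt{1-\varepsilon^2\Delta}$ interacts with the magnetic term and one must track every commutator generated by the semiclassical symbol expansion and absorb it into weighted Sobolev norms of $\widetilde W$ of order up to eight; second, controlling $[\sqrt{1-\varepsilon^2\Delta},(-i\varepsilon\nabla-\kappa*\vA_\alpha)^2]$ uniformly in $\varepsilon$, which ultimately fixes the power counting in $\widetilde C(t)$ and $C(t)$ and relies on the mild regularity $(-\Delta)^{1/2}\kappa\in L^2$ of the charge distribution.
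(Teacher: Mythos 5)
Your proposal follows essentially the same route as the paper's proof: the exact ``quantized'' Vlasov--Maxwell equation with semiclassical remainders $B_t$, $C_t$ that are $O(\varepsilon)$ in the weighted trace norm via the Wigner-transform correspondence of Appendix~\ref{section:auxiliary estimates}, conjugation by the propagator generated by the magnetic--Coulomb Hamiltonian with the residual commutators $\left[ \sqrt{1-\varepsilon^2\Delta}, h(t) \right]$ bounded in operator norm, a telescoped nonlinearity linear in $\rho_{\omega_t}-\rho_{\widetilde\omega_t}$, $X_{\omega_t}-X_{\widetilde\omega_t}$, $\vA_{\alpha_t}-\vA_{\widetilde\alpha_t}$, Duhamel for the field driven by the current difference, and Gr\"onwall closed by the conserved quantities. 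The only deviations are cosmetic: the paper works with a Duhamel integral formulation and a regularized $D_{\leq n}$ instead of differentiating $\Xi(t)$ directly, and it keeps the exchange commutator $\left[ X_{\omega_{N,s}}, \omega_{N,s} \right]$ as a separate $O(\varepsilon)$ error rather than inside the generator -- both choices you could adopt without changing your argument.
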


\begin{remark}
Notice that \eqref{eq:main theorem condition on the Vlasov-Maxwell solution} is satisfied if $\big( \widetilde{W}_{N,0}, \widetilde{\alpha}_0 \big) \in  H_7^{8}(\mathbb{R}^6) \times \mathfrak{h}_8 \cap \dot{\mathfrak{h}}_{-1/2}$ such that $\widetilde{W}_{N,0} \geq 0$, $\widetilde{W}_{N,0} \in W_2^{0,1}(\mathbb{R}^6)$ and $\supp \widetilde{W}_{N,0} \subset A_R$ with $A_R = \{ (x,v) \in \mathbb{R}^6 , \abs{v} \leq R \}$ for some $R>0$ because of Proposition \ref{lemma:global solutions VM}. If in addition $N^{-1} \mathcal{E}^{\rm{MS}}[\omega_{N,0}, \alpha_0] \leq C$ holds, $\widetilde{C}(t)$ and $C(t)$ can be bounded uniformly in $N$ by a constant depending on time but not on the number of particles.
Furthermore, the solution of \eqref{eq:Maxwell-Schroedinger equations} exists and is unique thanks to Proposition \ref{proposition:well-posedness of Maxwell-Schroedinger}.
\end{remark}

\subsection{Comparison with the literature}

%\todo{We still need to add the reference of Ben-Porat}

To our knowledge the literature on the subject is rather limited. In \cite{MMY2023} M\"{o}ller, Mauser and Yang considered the classical limit of the Pauli--Poisswell system, in which the coupling with the Maxwell equations is replaced by a simplified equation. For monokinetic initial data and by means of WKB methods, they show that as $\hbar\to 0$ the system is approximated by the Euler-Poisswell equations. See also the related works \cite{MS2007,MM2023,M2023}. In \cite{P2023}  the linear problem with external magnetic field has been considered. More precisely, the magnetic Liouville equation is obtained as classical limit of the Heisenberg equation with non constant magnetic field, whose vector potential is regular and given. \\
%\tdn{I have changed a bit the order of the sentence}
To the best of our knowledge, our result is the first the derivation of the Vlasov--Maxwell system from the quantum dynamics described by the Maxwell--Schr\"{o}dinger equations for extended charges, dealing with a self-consistent electromagnetic field which satisfies Maxwell's equations.
Our contribution can be viewed as a step in the derivation of the Vlasov--Maxwell system from non-relativistic quantum electrodynamics. The Vlasov--Maxwell system is indeed expected to be a good mean-field and semiclassical approximation of the spinless Pauli--Fierz Hamiltonian. As outlined in Section~\ref{subsection:heuristic discussion}, the Maxwell--Schr\"{o}dinger equations can be heuristically obtained as mean-field limit of the spinless Pauli--Fierz Hamiltonian (and will be addressed rigorously in a forthcoming work), whereas the semiclassical approximation is the focus of the present article.\\ 
In this spirit, notice that a regularized variant of the relativistic Vlasov--Maxwell system as the mean-field limit of a system of many classical particles was derived in \cite{G2011}. In the context of classical mechanics, we further refer to the works \cite{L2016,CLPY2019}, where the relativistic Vlasov--Maxwell system is obtained in a combined mean-field and point-particle limit of a system of $N$ rigid charges with $N$-dependent radius.\\
Concerning the propagation of moments for the Vlasov--Maxwell system, we mention the works \cite{R2021,R2022}, dealing with a magnetized Vlasov--Poisson equation where the magnetic field is external and uniformly bounded. \\
We would also like to point out the works \cite{S1981,LP1993,APPP2011,AKN2013,GMP2016,BPSS2016,S2019,L2019,Lewin-Sabin_2020,LS2020,CLL2021,CLS2022} in which the related problem of the semiclassical approximation of the Hartree equation with the Vlasov--Poisson system is addressed. 

\subsection{Strategy of the proof}

In order to prove Theorem~\ref{theorem:main theorem} we adopt the approach of \cite{BPSS2016} and compare the solutions of the Maxwell--Schr\"odinger equations~\eqref{eq:Maxwell-Schroedinger equations} with the Weyl quantization \eqref{eq:definition Weyl quantization} of the solutions of the Vlasov--Maxwell system \eqref{eq: Vlasov-Maxwell different style of writing} by means of a Gr\"{o}nwall estimate. The main novelty in comparison to \cite{BPSS2016} is to deal with the additional difficulties arising from the coupling of the electrons to their self-induced electromagnetic field.\\ On the one hand a control on the difference of the vector potentials is required. This is achieved most efficiently by the mode functions $\alpha_t$ and $\widetilde{\alpha}_t$ of the electromagnetic fields, and gives the rationale for writing Maxwell's equations in \eqref{eq:Maxwell-Schroedinger equations} and \eqref{eq: Vlasov-Maxwell different style of writing} in terms of  $\alpha_t$ and $\widetilde{\alpha}_t$. \\
On the other hand the fact that the vector potentials of the respective magnetic fields couple to the charge currents of the electrons forces us to measure the distance of the electron states in a Sobolev trace norm in which the Laplacian is weighted with the semiclassical parameter (see \eqref{eq:main theorem definition of the norm under consideration}). \\
Beside the semiclassical analysis, we deal with the well-posedness of the Maxwell--Schr\"odinger equations and the Vlasov--Maxwell equations and prove that there exist initial data whose evolution leads to the integrability and regularity conditions required by Theorem~\ref{theorem:main theorem}. This is shown in Proposition \ref{proposition:well-posedness of Maxwell-Schroedinger} and Proposition \ref{lemma:global solutions VM} by means of fixed point arguments -- leading to local solutions -- and suitable propagation estimates. Crucial technical ingredients of our approach to obtain the result for the Maxwell--Schr\"odinger equations are the use of propagation estimates for the time evolution of the magnetic Laplacian from \cite{NM2005} and the specific choice of the Banach space \eqref{eq:local solutions MS equations definition Banach space 1}. In the case of the Vlasov--Maxwell system the compact support in the velocity of the initial data is the key assumption to overcome the difficulties arising from the interaction between the  electromagnetic field and the non-relativistic electrons with non-finite speed of propagation.  For point particles this assumption has already been used to prove well-posedness in different functional frameworks, for instance in \cite{D1986}.

\section{Preliminaries}\label{section:preliminaries}
Within this section we collect important estimates that will often appear in the proofs of our results.

\begin{lemma}
\label{lemma:estimates for the vector potential and interaction}
Let the Assumption \ref{assumption:cutoff function} hold. Then
\begin{subequations}
\begin{align}
\label{eq:vector potential L-infty estimate}
\norm{\kappa * \vA_{\alpha}}_{L^{\infty}(\mathbb{R}^3)} 
&\leq C \min \left\{  \norm{\alpha}_{\dot{\mathfrak{h}}_{-1}}, \norm{\alpha}_{\dot{\mathfrak{h}}_{-1/2}},  \norm{\alpha}_{\mathfrak{h}} ,  \norm{\alpha}_{\dot{\mathfrak{h}}_{1/2}}
\right\} ,
\\
\label{eq:vector potential L-2 estimate}
\norm{\kappa * \vA_{\alpha}}_{L^{2}(\mathbb{R}^3)}  
&\leq C \norm{\alpha}_{\dot{\mathfrak{h}}_{-1/2}} .
\end{align}
\end{subequations}
For $\sigma \in \mathbb{N}$ we, moreover, have
\begin{subequations}
\begin{align}
\label{eq:estimates for the vector potential and interaction 0} 
\norm{\kappa * \vA_{\alpha}}_{W_0^{1, \infty}(\mathbb{R}^3)} 
&\leq C \norm{\alpha}_{\dot{\mathfrak{h}}_{1/2}} ,
\\
\label{eq:estimates for the vector potential and interaction 1} 
\norm{\kappa * \vA_{\alpha}}_{W_0^{\sigma, \infty}(\mathbb{R}^3)} 
&\leq C \norm{\alpha}_{\mathfrak{h}_{\sigma -1}}
\end{align}
\end{subequations}
\end{lemma}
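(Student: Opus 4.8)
The plan is to reduce all four bounds to a single Fourier--space computation. Convolving the plane waves $e^{\pm ikx}$ appearing in \eqref{eq:definition vector potential} with $\kappa$ multiplies them by $(2\pi)^{3/2}\mathcal{F}[\kappa](\pm k)$, and since $\kappa$ is even this factor equals $(2\pi)^{3/2}\mathcal{F}[\kappa](k)$ in both cases (recall \eqref{eq:assumption on the cutoff parameter 2}); this cancels the prefactor $(2\pi)^{-3/2}$ and yields the representation
\begin{align}
(\kappa * \vA_{\alpha})(x) = \sum_{\lambda=1,2}\int_{\mathbb{R}^3}\frac{\mathcal{F}[\kappa](k)}{\sqrt{2\abs{k}}}\,\vep_{\lambda}(k)\left(e^{ikx}\alpha(k,\lambda) + e^{-ikx}\overline{\alpha(k,\lambda)}\right)dk .
\end{align}
Because $\abs{\vep_\lambda(k)}=1$, the integrand is pointwise dominated by $\sqrt{2}\,\sum_\lambda \abs{k}^{-1/2}\abs{\mathcal{F}[\kappa](k)}\,\abs{\alpha(k,\lambda)}$, and from here every estimate follows by Cauchy--Schwarz (for the $L^\infty$ and $W^{\sigma,\infty}$ norms) or Plancherel (for the $L^2$ norm) after splitting off the appropriate weight carried by $\alpha$.

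For \eqref{eq:vector potential L-infty estimate} I would, for each target space $\dot{\mathfrak{h}}_a$ with $a\in\{-1,-\tfrac12,0,\tfrac12\}$ (the value $a=0$ giving the $\mathfrak{h}$-norm), insert $1=\abs{k}^{-a}\abs{k}^{a}$ and apply Cauchy--Schwarz to separate a purely $\kappa$-dependent factor $\big(\int \abs{k}^{-1-2a}\abs{\mathcal{F}[\kappa]}^2\,dk\big)^{1/2}$ from $\norm{\alpha}_{\dot{\mathfrak{h}}_a}$; taking the minimum over the four admissible $a$ then yields the stated bound. For \eqref{eq:vector potential L-2 estimate} I would observe that $\kappa*\vA_\alpha$ is, up to the constant $(2\pi)^{3/2}$, a sum over $\lambda$ of the inverse Fourier transform of $g_\lambda(k)=\tfrac{\mathcal{F}[\kappa](k)}{\sqrt{2\abs{k}}}\vep_\lambda(k)\alpha(k,\lambda)$ and its conjugate, so Plancherel bounds $\norm{\kappa*\vA_\alpha}_{L^2}$ by $C\sum_\lambda\norm{g_\lambda}_{L^2}$; pulling $\norm{\mathcal{F}[\kappa]}_{L^\infty}$ out of $\int \tfrac{1}{2\abs{k}}\abs{\mathcal{F}[\kappa]}^2\abs{\alpha}^2\,dk$ leaves exactly $\norm{\alpha}_{\dot{\mathfrak{h}}_{-1/2}}^2$.

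For the derivative estimates \eqref{eq:estimates for the vector potential and interaction 0}--\eqref{eq:estimates for the vector potential and interaction 1} I would differentiate under the integral: each $D^\beta$ with $\abs{\beta}\le\sigma$ brings down a factor $(\pm ik)^\beta$, so that $\abs{D^\beta(\kappa*\vA_\alpha)(x)}\le C\sum_\lambda\int \abs{k}^{\abs{\beta}-1/2}\abs{\mathcal{F}[\kappa]}\,\abs{\alpha}\,dk$. Applying Cauchy--Schwarz with the weight $(1+\abs{k}^2)^{(\sigma-1)/2}$ (respectively $\abs{k}^{1/2}$ when $\sigma=1$) splits off $\norm{\alpha}_{\mathfrak{h}_{\sigma-1}}$ (respectively $\norm{\alpha}_{\dot{\mathfrak{h}}_{1/2}}$), reducing matters to the finiteness of $\int \abs{k}^{2\abs{\beta}-1}(1+\abs{k}^2)^{-(\sigma-1)}\abs{\mathcal{F}[\kappa]}^2\,dk$ for every $\abs{\beta}\le\sigma$.

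The only substantive point is to verify that all these $\kappa$-dependent prefactor integrals are finite, which is where Assumption~\ref{assumption:cutoff function} enters through its consequences $\mathcal{F}[\kappa]\in L^\infty$ and $(\abs{\cdot}^{-1}+\abs{\cdot}^{1/2})\mathcal{F}[\kappa]\in L^2$ recorded in Remark~\ref{remark:properties of the cutoff function} (together with the induced $\kappa\in L^2$). Each integrand behaves near $k=0$ like a power of $\abs{k}$ that is integrable in three dimensions once $\mathcal{F}[\kappa]$ is bounded, while at infinity it is controlled by either $\abs{k}^{-2}\abs{\mathcal{F}[\kappa]}^2$ or $\abs{k}\,\abs{\mathcal{F}[\kappa]}^2$, both integrable by \eqref{eq:assumption on the cutoff parameter}. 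I expect the only step requiring care to be this bookkeeping of matching, for each norm and each derivative order $\abs{\beta}$, the correct power of $\abs{k}$ against the available integrability of $\mathcal{F}[\kappa]$ both at the origin and at infinity; no analytic difficulty beyond Cauchy--Schwarz and Plancherel arises.
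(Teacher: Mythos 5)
Your proposal is correct and follows essentially the same route as the paper: both proofs reduce everything to the pointwise bound $\abs{k}^{-1/2}\abs{\mathcal{F}[\kappa](k)}\,\abs{\alpha(k,\lambda)}$ on the Fourier integrand and then apply Cauchy--Schwarz with the weight $\abs{k}^{a}$ (respectively $\left<k\right>^{\sigma-1}$) matched to the target norm of $\alpha$, with finiteness of the $\kappa$-dependent prefactors guaranteed by \eqref{eq:assumption on the cutoff parameter}. Your only deviation is cosmetic: for \eqref{eq:vector potential L-2 estimate} you use Plancherel together with $\norm{\mathcal{F}[\kappa]}_{L^{\infty}(\mathbb{R}^3)} \leq C \norm{\kappa}_{L^1(\mathbb{R}^3)}$, whereas the paper invokes Young's inequality $\norm{\kappa * \vA_{\alpha}}_{L^2} \leq \norm{\kappa}_{L^1} \norm{\vA_{\alpha}}_{L^2}$ directly --- these are the same estimate.
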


\begin{proof}
Note that
\begin{align}
\norm{\kappa * \vA}_{L^{\infty}(\mathbb{R}^3)}
&\leq 2 \sum_{\lambda = 1,2} \int  \abs{k}^{-1/2}  \abs{\mathcal{F}[\kappa](k)} \abs{\alpha(k,\lambda)}dk
\nonumber \\
&\leq C \min \Big\{
\norm{\abs{\cdot}^{1/2} \mathcal{F}[\kappa]}_{L^2(\mathbb{R}^3)}
\norm{\abs{\cdot}^{-1} \alpha}_{\mathfrak{h}}  ,
\norm{ \mathcal{F}[\kappa]}_{L^2(\mathbb{R}^3)}
\norm{\abs{\cdot}^{-1/2} \alpha}_{\mathfrak{h}} ,
\nonumber \\
&\qquad \qquad \qquad 
\norm{\abs{\cdot}^{-1/2} \mathcal{F}[\kappa]}_{L^2(\mathbb{R}^3)}
\norm{ \alpha}_{\mathfrak{h}} ,
\norm{\abs{\cdot}^{-1} \mathcal{F}[\kappa]}_{L^2(\mathbb{R}^3)}
\norm{ \abs{\cdot}^{1/2} \alpha}_{\mathfrak{h}}
\Big\} .
\end{align}
Together with \eqref{eq:assumption on the cutoff parameter} this proves the first inequality.
Similarly, 
\begin{align}
\norm{\kappa * \vA}_{W_0^{1,\infty}(\mathbb{R}^3)}
&\leq \norm{\left( 1 - \Delta \right)^{1/2} \kappa * \vA}_{L^{\infty}(\mathbb{R}^3)}
\nonumber \\
&\leq 2 \sum_{\lambda = 1,2} \int \abs{k}^{-1/2} \left< k \right> \abs{\mathcal{F}[\kappa](k)} \abs{\alpha(k,\lambda)}dk
\nonumber \\
&\leq C \norm{\abs{\cdot}^{-1} \left< \cdot \right> \mathcal{F}[\kappa]}_{L^2(\mathbb{R}^3)}
\norm{\abs{\cdot}^{1/2} \alpha}_{\mathfrak{h}} 
\end{align}
implies  \eqref{eq:estimates for the vector potential and interaction 0} and from
\begin{align}
\norm{\kappa * \vA}_{W_0^{\sigma,\infty}(\mathbb{R}^3)}
&\leq \norm{\left( 1 - \Delta \right)^{\sigma/2} \kappa * \vA}_{L^{\infty}(\mathbb{R}^3)}
\nonumber \\
&\leq 2 \sum_{\lambda = 1,2} \int  \abs{k}^{-1/2} \left< k \right>^{\sigma} \abs{\mathcal{F}[\kappa](k)} \abs{\alpha(k,\lambda)}dk
\nonumber \\
&\leq C \norm{\abs{\cdot}^{-1/2} \left< \cdot \right> \mathcal{F}[\kappa]}_{L^2(\mathbb{R}^3)}
\norm{ \alpha}_{\mathfrak{h}_{\sigma - 1}} 
\end{align}
we get \eqref{eq:estimates for the vector potential and interaction 1}. By means of
\begin{align}
\vA_{\alpha}(x) = (2 \pi )^{- 3/2} \sum_{\lambda = 1,2} \int \frac{1}{\sqrt{2 \abs{k}}} \vep_{\lambda}(k) \left( e^{i k x} \alpha(k,\lambda) + e^{- i k x} \overline{\alpha(k, \lambda)} \right)\,dk\,,
\end{align}
Young's inequality and \eqref{eq:assumption on the cutoff parameter 3}  we have
\begin{align}
\norm{\kappa * \vA_{\alpha}}_{L^{2}(\mathbb{R}^3)} 
&\leq \norm{\kappa}_{L^1(\mathbb{R}^3)} \norm{\vA_{\alpha}}_{L^2(\mathbb{R}^3)} 
\leq C \norm{\alpha}_{\dot{\mathfrak{h}}_{-1/2}} .
\end{align}
\end{proof}

Moreover, the following estimates will be useful
\begin{lemma}
\label{lemma:general estimates for the interaction}
Let $s \in \{ 0,1,2 \}$ and $\sigma \in \mathbb{N}$. Then, 
\begin{subequations}
\begin{align}
\label{eq:W-0-2-infty estimate for the direct interaction potential}
\norm{K * \rho_{\omega}}_{W_0^{3, \infty}(\mathbb{R}^3)}
&\leq C N^{-1} \norm{\omega}_{\textfrak{S}^1} 
\quad \text{for} \quad  \omega \in \textfrak{S}^1(L^2(\mathbb{R}^3)) ,
\\
\label{eq:estimates for the vector potential and interaction 2} 
\norm{K * \widetilde{\rho}_{f}}_{W_0^{\sigma ,\infty}(\mathbb{R}^3)}
&\leq  C
\begin{cases}
\min \left\{ \norm{f}_{W_4^{0,2}(\mathbb{R}^6)} , \norm{f}_{L^1(\mathbb{R}^6)} \right\}
&\text{if} \quad  \sigma \leq 3 , \\ 
\norm{f}_{W_4^{\sigma - 3,2}(\mathbb{R}^6)} &\text{if}   \quad \abs{\sigma} > 3 
\end{cases}  
\end{align}
\end{subequations}
and
\begin{subequations}
\begin{align}
\label{eq:estimate exchange term in operator norm}
\norm{X_{\omega} }_{\textfrak{S}^{\infty}(L^2(\mathbb{R}^3))} 
&\leq C N^{-1} \norm{\omega}_{\textfrak{S}^1}
\quad \text{for} \quad  \omega \in \textfrak{S}^1(L^2(\mathbb{R}^3)) ,
\\
\label{eq:estimate commutator with exchange term in L-s,1 norm}
\norm{\left[ X_{\omega} , \Gamma \right]}_{\textfrak{S}^{s,1}} 
&\leq  C N^{-1} \norm{\omega}_{\textfrak{S}^{s,1}} \norm{\Gamma}_{\textfrak{S}^{s,1}}
\quad \text{for} \quad   \omega, \Gamma \in \textfrak{S}^{s,1}(L^2(\mathbb{R}^3)).
\end{align}
\end{subequations}
\end{lemma}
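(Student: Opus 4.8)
The plan is to reduce all four estimates to two Fourier-analytic facts about the regularized kernel $K=\tfrac{1}{8\pi}\kappa*\kappa*|\cdot|^{-1}$, and then to read off the Schatten/Sobolev structure by distributing derivatives and inserting resolvent weights $(1-\Delta)^{\pm s/2}$ before applying Young's and Hölder-type inequalities. I first record the two facts. Since $\mathcal{F}[K](\xi)=c\,\mathcal{F}[\kappa](\xi)^2\,|\xi|^{-2}$ for a numerical constant $c$, the elementary bound $\norm{g}_{L^\infty}\leq C\norm{\mathcal{F}[g]}_{L^1}$ applied to $g=D^\alpha K$ (so $\mathcal{F}[g](\xi)=(i\xi)^\alpha\mathcal{F}[K](\xi)$) gives $\norm{D^\alpha K}_{L^\infty(\mathbb{R}^3)}\leq C\int_{\mathbb{R}^3}|\xi|^{|\alpha|-2}|\mathcal{F}[\kappa](\xi)|^2\,d\xi$, and likewise $\int_{\mathbb{R}^3}|\mathcal{F}[K](\xi)|\,\langle\xi\rangle^{2s}\,d\xi\leq C\int_{\mathbb{R}^3}|\xi|^{-2}\langle\xi\rangle^{2s}|\mathcal{F}[\kappa](\xi)|^2\,d\xi$. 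Splitting each integral at $|\xi|=1$ and using $\mathcal{F}[\kappa]\in L^\infty$ near the origin together with $|\cdot|\,\mathcal{F}[\kappa]\in L^2$ (that is, $(-\Delta)^{1/2}\kappa\in L^2$) at infinity, both right-hand sides are finite provided $|\alpha|\leq 3$, respectively $s\leq 2$. Hence $D^\alpha K\in L^\infty$ for $|\alpha|\leq 3$ and $\langle\cdot\rangle^{2s}\mathcal{F}[K]\in L^1$ for $s\in\{0,1,2\}$; these are the only points at which Assumption~\ref{assumption:cutoff function} enters, and the endpoint $s=2$ is exactly what forces the hypothesis $(-\Delta)^{1/2}\kappa\in L^2$.

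The three estimates \eqref{eq:W-0-2-infty estimate for the direct interaction potential}, \eqref{eq:estimates for the vector potential and interaction 2} and \eqref{eq:estimate exchange term in operator norm} are then routine. For \eqref{eq:W-0-2-infty estimate for the direct interaction potential} I write $D^\alpha(K*\rho_\omega)=(D^\alpha K)*\rho_\omega$ with $|\alpha|\leq 3$ and combine Young's inequality with the standard density bound $\norm{\rho_\omega}_{L^1}\leq N^{-1}\norm{\omega}_{\textfrak{S}^1}$. For \eqref{eq:estimates for the vector potential and interaction 2} the $L^1$-bound follows from $\norm{\widetilde{\rho}_f}_{L^1}\leq\norm{f}_{L^1(\mathbb{R}^6)}$, while the $W_4^{0,2}$-bound comes from the weighted Cauchy--Schwarz inequality $\norm{f}_{L^1(\mathbb{R}^6)}\leq\norm{\langle\cdot\rangle^{-4}}_{L^2(\mathbb{R}^6)}\norm{f}_{W_4^{0,2}}$, valid since $\langle\cdot\rangle^{-4}\in L^2(\mathbb{R}^6)$; for $\sigma>3$ I split $D^\alpha=D^\beta D^\gamma$ with $|\beta|=3$, $|\gamma|=\sigma-3$, move $D^\gamma$ onto $f$, and bound $\norm{(D^\beta K)*(D^\gamma\widetilde{\rho}_f)}_{L^\infty}\leq\norm{D^\beta K}_{L^\infty}\norm{D_x^\gamma f}_{L^1(\mathbb{R}^6)}\leq C\norm{f}_{W_4^{\sigma-3,2}}$. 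Finally, since $X_\omega$ has integral kernel $N^{-1}K(x-y)\omega(x;y)$ and $K\in L^\infty$, the Hilbert--Schmidt norm satisfies $\norm{X_\omega}_{\textfrak{S}^2}\leq N^{-1}\norm{K}_{L^\infty}\norm{\omega}_{\textfrak{S}^2}$, and \eqref{eq:estimate exchange term in operator norm} follows from $\norm{X_\omega}_{\textfrak{S}^\infty}\leq\norm{X_\omega}_{\textfrak{S}^2}$ and $\norm{\omega}_{\textfrak{S}^2}\leq\norm{\omega}_{\textfrak{S}^1}$.

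The commutator estimate \eqref{eq:estimate commutator with exchange term in L-s,1 norm} is the main step, since the weight $(1-\Delta)^{s/2}$ must be propagated across the nonlocal exchange kernel. The case $s=0$ is immediate from \eqref{eq:estimate exchange term in operator norm} together with $\norm{AB}_{\textfrak{S}^1}\leq\norm{A}_{\textfrak{S}^\infty}\norm{B}_{\textfrak{S}^1}$. For $s\in\{1,2\}$ I invert the Fourier transform of $K(x-y)$ to represent $X_\omega=c'N^{-1}\int_{\mathbb{R}^3}\mathcal{F}[K](\xi)\,M_\xi\,\omega\,M_\xi^*\,d\xi$, where $M_\xi$ denotes multiplication by $e^{i\xi\cdot x}$. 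Inserting $(1-\Delta)^{-s/2}(1-\Delta)^{s/2}=\mathrm{id}$ on either side of $\omega$ factorizes $(1-\Delta)^{s/2}M_\xi\omega M_\xi^*(1-\Delta)^{s/2}$ into the product of $(1-\Delta)^{s/2}\omega(1-\Delta)^{s/2}$ with the two conjugated operators $(1-\Delta)^{s/2}M_\xi(1-\Delta)^{-s/2}$ and $(1-\Delta)^{-s/2}M_\xi^*(1-\Delta)^{s/2}$. Each of the latter is, in Fourier variables, a translation composed with multiplication by a symbol of modulus $\langle p\rangle^s\langle p-\xi\rangle^{-s}$, hence has operator norm $\sup_p\langle p\rangle^s\langle p-\xi\rangle^{-s}\leq 2^{s/2}\langle\xi\rangle^s$ by Peetre's inequality $\langle p\rangle^s\leq 2^{s/2}\langle\xi\rangle^s\langle p-\xi\rangle^s$. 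Taking $\textfrak{S}^\infty$-norms, bounding the middle factor by $\norm{\omega}_{\textfrak{S}^{s,1}}$, and integrating against $\mathcal{F}[K]$ yields $\norm{(1-\Delta)^{s/2}X_\omega(1-\Delta)^{s/2}}_{\textfrak{S}^\infty}\leq CN^{-1}\big(\int|\mathcal{F}[K](\xi)|\langle\xi\rangle^{2s}\,d\xi\big)\norm{\omega}_{\textfrak{S}^{s,1}}\leq CN^{-1}\norm{\omega}_{\textfrak{S}^{s,1}}$ by Step~0.

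To conclude, I expand $(1-\Delta)^{s/2}[X_\omega,\Gamma](1-\Delta)^{s/2}$ as the difference of $(1-\Delta)^{s/2}X_\omega\Gamma(1-\Delta)^{s/2}$ and $(1-\Delta)^{s/2}\Gamma X_\omega(1-\Delta)^{s/2}$, and in each term I insert $(1-\Delta)^{-s/2}(1-\Delta)^{s/2}$ between $X_\omega$ and $\Gamma$ so as to pair the factor $(1-\Delta)^{s/2}X_\omega(1-\Delta)^{s/2}$ (bounded in $\textfrak{S}^\infty$ by the previous display) with a factor of the form $(1-\Delta)^{\pm s/2}\Gamma(1-\Delta)^{\mp s/2}=(1-\Delta)^{s/2}\Gamma(1-\Delta)^{s/2}\,(1-\Delta)^{-s}$ (up to ordering), whose $\textfrak{S}^1$-norm is at most $\norm{\Gamma}_{\textfrak{S}^{s,1}}$ because $\norm{(1-\Delta)^{-s}}_{\textfrak{S}^\infty}\leq 1$. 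Hölder's inequality for Schatten norms then gives \eqref{eq:estimate commutator with exchange term in L-s,1 norm}. The only genuinely delicate point is this propagation of the Sobolev weight through $K(x-y)\omega(x;y)$ without destroying the trace-class structure of $\omega$: the Fourier representation converts it into a uniform-in-$\xi$ multiplier bound (Peetre) weighted by the integrable function $\langle\cdot\rangle^{2s}\mathcal{F}[K]$, and it is exactly the integrability of $|\xi|^{-2}\langle\xi\rangle^{4}|\mathcal{F}[\kappa]|^2$ (the case $s=2$) that pins down Assumption~\ref{assumption:cutoff function}.
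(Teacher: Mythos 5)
Your proposal is correct. For \eqref{eq:W-0-2-infty estimate for the direct interaction potential}, \eqref{eq:estimates for the vector potential and interaction 2} and \eqref{eq:estimate exchange term in operator norm} it follows essentially the paper's route (your only variation is bounding $\norm{X_\omega}_{\textfrak{S}^\infty}$ through the Hilbert--Schmidt norm of the kernel $N^{-1}K(x-y)\omega(x;y)$ rather than through the spectral decomposition $\omega=\sum_j\lambda_j\ketbr{\varphi_j}$ and Young's inequality; both rest on $K\in L^\infty$, and your splitting $D^\alpha=D^\beta D^\gamma$ for $\sigma>3$ is the classical-derivative analogue of the paper's $(1-\Delta)^{(\sigma-3)/2}$ manipulation). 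The genuinely different part is the commutator estimate \eqref{eq:estimate commutator with exchange term in L-s,1 norm}. The paper proves the asymmetric bound $\norm{(1-\Delta)^{s/2}X_\omega(1-\Delta)^{-s/2}}_{\textfrak{S}^\infty}\leq CN^{-1}\norm{\omega}_{\textfrak{S}^{s,1}}$, and to do so it must first split $(1-\Delta)^{s/2}\omega(1-\Delta)^{s/2}$ into its positive and negative parts as in \eqref{eq:splitting of an operator to obtain Sobolev trace estimates}, because its argument hinges on the identity $\norm{\omega_+}_{\textfrak{S}^{s,1}}=\sum_j\lambda_j\norm{\varphi_j}_{H^s(\mathbb{R}^3)}^2$, valid only for sign-definite operators, combined with the product rule applied to $\varphi_j\,K*\{\overline{\varphi_j}\chi\}$. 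You instead use the representation $X_\omega=c\,N^{-1}\int\mathcal{F}[K](\xi)\,e^{i\xi\hat x}\,\omega\,e^{-i\xi\hat x}\,d\xi$ --- which, incidentally, the paper itself deploys later, in the treatment of \eqref{eq: Sobolev norm estimate omega 5} --- and control the conjugated weights $(1-\Delta)^{s/2}e^{i\xi\hat x}(1-\Delta)^{-s/2}$ by Peetre's inequality, yielding the uniform multiplier bound $2^{s/2}\langle\xi\rangle^{s}$. This bypasses the positive/negative splitting and all spectral decompositions, works directly for non-sign-definite $\omega$, and the same factorization even gives the stronger symmetric trace-norm bound $\norm{(1-\Delta)^{s/2}X_\omega(1-\Delta)^{s/2}}_{\textfrak{S}^{1}}\leq CN^{-1}\norm{\omega}_{\textfrak{S}^{s,1}}$ if one keeps the middle factor in $\textfrak{S}^1$; your concluding H\"older pairing with $(1-\Delta)^{\mp s/2}\Gamma(1-\Delta)^{\pm s/2}$, whose trace norm is controlled via $\norm{(1-\Delta)^{-s}}_{\textfrak{S}^\infty}\leq 1$, then matches the paper's. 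The price of the symmetric formulation is the stronger integrability requirement $\langle\cdot\rangle^{2s}\mathcal{F}[K]\in L^1(\mathbb{R}^3)$ (so $\langle\cdot\rangle^{4}\mathcal{F}[K]\in L^1$ at $s=2$) instead of the paper's $\langle\cdot\rangle^{2}\mathcal{F}[K]\in L^1$; since $\mathcal{F}[K]=C\mathcal{F}[\kappa]^2\abs{\cdot}^{-2}$, both reduce at infinity to $\int\abs{\xi}^2\abs{\mathcal{F}[\kappa](\xi)}^2\,d\xi<\infty$, i.e.\ to $(-\Delta)^{1/2}\kappa\in L^2(\mathbb{R}^3)$, so Assumption~\ref{assumption:cutoff function} covers your variant as well --- just note that your closing remark that the $s=2$ case ``pins down'' the assumption is specific to your route, as the paper's asymmetric version gets by with slightly less.
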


\begin{proof}
Due to \eqref{eq:Coulomb potential with cutoff} we have
\begin{align}
\label{eq:Fourier transform of the potential}
\mathcal{F}[K](k) 
&=  \mathcal{F}[\kappa * \kappa * \abs{\cdot}^{-1}](k)
= C  \mathcal{F}[\kappa](k)^2   \abs{k}^{-2} .
\end{align}
Together with Young's inequality and \eqref{eq:assumption on the cutoff parameter} this implies
\begin{align}
\label{eq:preliminary estimate to bound the direct interaction term}
\norm{K * \rho_{\omega}}_{W_0^{\sigma, \infty}(\mathbb{R}^3)}
&\leq  \norm{K}_{W_0^{3, \infty}(\mathbb{R}^3)}
\norm{\rho_{\omega}}_{W_0^{\sigma-3, 1}(\mathbb{R}^3)}
\nonumber \\
&\leq  \norm{\left< \cdot \right>^3 \mathcal{F}[K]}_{L^1(\mathbb{R}^3)}
\norm{(1 - \Delta)^{(\sigma - 3)/2} \rho_{\omega}}_{L^1(\mathbb{R}^3)}
\nonumber \\
&\leq  C \norm{\left< \cdot \right>^{3/2} \abs{\cdot}^{-1} \mathcal{F}[\kappa]}_{L^2(\mathbb{R}^3)}^2
\norm{(1 - \Delta)^{(\sigma - 3)/2} \rho_{\omega}}_{L^1(\mathbb{R}^3)} 
\nonumber \\
&\leq  C 
\norm{(1 - \Delta)^{(\sigma - 3)/2} \rho_{\omega}}_{L^1(\mathbb{R}^3)} .
\end{align}
Inequality  \eqref{eq:W-0-2-infty estimate for the direct interaction potential} is then obtained by setting $\sigma = 3$ and
\begin{align}
\label{eq: estimate to bound the density by the L1-norm of omega}
\norm{\rho_{\omega}}_{L^1(\mathbb{R}^3)}
&= \sup_{O \in L^{\infty}(\mathbb{R}^3), \norm{O}_{L^{\infty}(\mathbb{R}^3)} \leq 1} \abs{\int dx \, O(x) \rho_{\omega}(x)} \leq N^{-1} \norm{\omega}_{\textfrak{S}^1},
\end{align}
which holds because the space of bounded operators is the dual space of trace-class operators and every function $x \rightarrow O(x)$ defines a multiplication operator. Next, we replace $\rho_{\omega}$  in \eqref{eq:preliminary estimate to bound the direct interaction term} by $\widetilde{\rho}_f$. The case $\sigma \leq 3$ in \eqref{eq:W-0-2-infty estimate for the direct interaction potential} is an immediate consequence of \eqref{eq:density of Vlasov equation}. In order to obtain the estimate for $\sigma > 3$ we bound the last line of \eqref{eq:preliminary estimate to bound the direct interaction term} with  $\rho_{\omega} = \widetilde{\rho}_f$ by
\begin{align}
\norm{(1 - \Delta)^{(\sigma - 3)/2} \widetilde{\rho}_{f}}_{L^{1}(\mathbb{R}^3)} 
&\leq  \int_{\mathbb{R}^6} dz \,  \left< z \right>^{-4} \left< z \right>^4 \abs{(1 - \Delta)^{(\sigma - 3)/2} f(z)}
\nonumber \\
&\leq C \norm{\left< \cdot \right>^4 (1 - \Delta)^{(\sigma - 3)/2} f}_{L^2(\mathbb{R}^6)} 
\nonumber \\
&\leq \norm{f}_{W_4^{\sigma-3,2}(\mathbb{R}^6)} .
\end{align}

Now, let $\left\{ \lambda_j, \varphi_j \right\}_{j \in \mathbb{N}}$ be the spectral decomposition of $\omega$. If we use Young's inequality, the Cauchy Schwarz inequality, $\norm{\mathcal{F}[K]}_{L^1(\mathbb{R}^3)} \leq C \norm{\abs{\cdot}^{-1} \mathcal{F}[K]}_{L^2(\mathbb{R}^3)}^2 \leq C$ (which follows from \eqref{eq:Fourier transform of the potential} and \eqref{eq:assumption on the cutoff parameter}) and \eqref{eq:assumption on the cutoff parameter} we get
\begin{align}
\norm{X_{\omega} }_{\textfrak{S}^{\infty}(L^2(\mathbb{R}^3))}
&= \sup_{\psi \in L^2(\mathbb{R}^3), \norm{\psi}_{L^2(\mathbb{R}^3)} = 1} \norm{X_{\omega} \psi}_{L^2(\mathbb{R}^3)}
\nonumber \\
&\leq \sup_{\psi \in L^2(\mathbb{R}^3), \norm{\psi}_{L^2(\mathbb{R}^3)} = 1} N^{-1} \sum_{j \in \mathbb{N}} \abs{\lambda_j}
\norm{\varphi_j K * \{\overline{\varphi_j} \psi\}}_{L^2(\mathbb{R}^3)}
\nonumber \\
&\leq \sup_{\psi \in L^2(\mathbb{R}^3), \norm{\psi}_{L^2(\mathbb{R}^3)} = 1} N^{-1} \sum_{j \in \mathbb{N}} \abs{\lambda_j} \norm{\varphi}_{L^2(\mathbb{R}^3)}
\norm{K}_{L^{\infty}(\mathbb{R}^3)}
\norm{{\overline{\varphi_j} \psi}}_{L^1(\mathbb{R}^3)}
\nonumber \\
&\leq \sup_{\psi \in L^2(\mathbb{R}^3), \norm{\psi}_{L^2(\mathbb{R}^3)} = 1} N^{-1} \sum_{j \in \mathbb{N}} \abs{\lambda_j} \norm{\varphi}_{L^2(\mathbb{R}^3)}^2 \norm{\psi}_{L^2(\mathbb{R}^3)}
\norm{\mathcal{F}[K]}_{L^1(\mathbb{R}^3)}
\nonumber \\
&\leq C N^{-1} \sum_{j \in \mathbb{N}} \abs{\lambda_j}
\nonumber \\
&\leq C N^{-1} \norm{\omega}_{\textfrak{S}^1} 
\end{align}
and
\begin{align}
\label{eq:estimate commutator with exchange term in L-1 norm}
\norm{\left[ X_{\omega} , \Gamma \right]}_{\textfrak{S}^1}
&\leq 2 \norm{X_{\omega} }_{\textfrak{S}^{\infty}(L^2(\mathbb{R}^3))} \norm{\Gamma}_{\textfrak{S}^1}
\leq  C N^{-1} \norm{\omega}_{\textfrak{S}^1} \norm{\Gamma}_{\textfrak{S}^1} .
\end{align}
In the following let $s \in \{1,2 \}$.  If $\omega \in \textfrak{S}^{s,1}(L^2(\mathbb{R}^3))$ we can split the compact and self-adjoint operator $(1 - \Delta)^{s/2} \omega (1 - \Delta)^{s/2} = \left(  (1 - \Delta)^{s/2} \omega (1 - \Delta)^{s/2} \right)_{+} - \left(  (1 - \Delta)^{s/2} \omega (1 - \Delta)^{s/2} \right)_{-}$ into its positive and negative part. Then 
\begin{align}
\left(  (1 - \Delta)^{s/2} \omega (1 - \Delta)^{s/2} \right)_{+} \left(  (1 - \Delta)^{s/2} \omega (1 - \Delta)^{s/2} \right)_{-} &= 0
\end{align}
and 
\begin{align}
\abs{(1 - \Delta)^{s/2} \omega (1 - \Delta)^{s/2}}
&= \left(  (1 - \Delta)^{s/2} \omega (1 - \Delta)^{s/2} \right)_{+} + \left(  (1 - \Delta)^{s/2} \omega (1 - \Delta)^{s/2} \right)_{-}
\end{align}
We define the positive operators 
\begin{align}
\label{eq:splitting of an operator to obtain Sobolev trace estimates}
\omega_{+} &= (1 - \Delta)^{- s/2} \left(  (1 - \Delta)^{s/2} \omega (1 - \Delta)^{s/2} \right)_{+} (1 - \Delta)^{-s/2} ,
\nonumber \\
\omega_{-} &= (1 - \Delta)^{-s/2} \left(  (1 - \Delta)^{s/2} \omega (1 - \Delta)^{s/2} \right)_{-} (1 - \Delta)^{-s/2}
\end{align}
which satisfy
\begin{align}
\omega_{+} - \omega_{-} = \omega
\quad \text{and} \quad
\norm{\omega}_{\textfrak{S}^{s,1}}
= \norm{(1 - \Delta)^{s/2} \omega (1 - \Delta)^{s/2} }_{\textfrak{S}^{1}}
= \norm{\omega_{+} }_{\textfrak{S}^{s,1}} 
+ \norm{\omega_{-} }_{\textfrak{S}^{s,1}} .
\end{align}
Note that $\omega_{+/-} \in \textfrak{S}^{s,1}(L^2(\mathbb{R}^3))$ because $\omega \in \textfrak{S}^{s,1}(L^2(\mathbb{R}^3))$ and
\begin{align}
\norm{\omega_{+/-}}_{\textfrak{S}^{s,1}} 
&= \norm{\left(  (1 - \Delta)^{s/2} \omega (1 - \Delta)^{s/2} \right)_{+/-}}_{\textfrak{S}^1}
\leq \norm{(1 - \Delta)^{s/2} \omega (1 - \Delta)^{s/2}}_{\textfrak{S}^1} 
= \norm{\omega}_{\textfrak{S}^{s,1}} .
\end{align}
This splitting and the linearity of the mapping $\omega \mapsto X_{\omega}$ yield
\begin{align}
\norm{\left[ X_{\omega} , \Gamma \right]}_{\textfrak{S}^{s,1}}
&\leq 2 \norm{\left( 1 - \Delta \right)^{s/2} X_{\omega_+} (1 - \Delta)^{-s/2}}_{\textfrak{S}^{\infty}(L^2(\mathbb{R}^3))} 
\norm{\Gamma}_{\textfrak{S}^{s,1}}
\nonumber \\
&\quad + 2 \norm{\left( 1 - \Delta \right)^{s/2} X_{\omega_-} (1 - \Delta)^{-s/2}}_{\textfrak{S}^{\infty}(L^2(\mathbb{R}^3))} 
\norm{\Gamma}_{\textfrak{S}^{s,1}} .
\end{align}
Since $\omega_{+} \in \textfrak{S}^{s,1}(L^2(\mathbb{R}^3))$ there exists a spectral set $\{ \lambda_j, \varphi_j \}_{j \in \mathbb{N}}$ with $\lambda_j > 0$ 
for all $j \in \mathbb{N}$ such that $\omega_{+} = \sum_{j \in \mathbb{N}} \lambda_j \ket{\varphi_j} \bra{\varphi_j}$. Because $(1 - \Delta)^{s/2} \varphi_j = \lambda_j^{-1} (1 - \Delta)^{s/2} \omega_{+} \varphi_j$ we get
\begin{align}
\norm{\varphi_j}_{H^s(\mathbb{R}^3)} 
&\leq  \lambda_j^{-1} 
\norm{(1 - \Delta)^{s/2}  \omega_{+} (1 - \Delta)^{s/2}}_{\textfrak{S}^{\infty}(L^2(\mathbb{R}^3))}
\norm{(1 - \Delta)^{-s/2} \varphi_j}_{L^2(\mathbb{R}^3)}
\nonumber \\
&\leq \lambda_j^{-1}  \norm{(1 - \Delta)^{s/2}  \omega_{+} (1 - \Delta)^{s/2}}_{\textfrak{S}^1(L^2(\mathbb{R}^3))} \norm{\varphi_j}_{L^2(\mathbb{R}^3)}.
\end{align}
We consequently have $\varphi_j \in H^s(\mathbb{R}^3)$,
\begin{align}
\left( 1 - \Delta \right)^{s/2} \omega_{+} \left( 1 - \Delta \right)^{s/2} = \sum_{j \in \mathbb{N}} \lambda_j \ket{\left( 1 - \Delta \right)^{s/2} \varphi_j} \bra{\left( 1 - \Delta \right)^{s/2} \varphi_j}
\end{align}
and
\begin{align}
\norm{\omega_{+}}_{\textfrak{S}^{s,1}}
&= \tr \left(  \left( 1 - \Delta \right)^{s/2} \omega_{+} \left( 1 - \Delta \right)^{s/2} \right)
= \sum_{j \in \mathbb{N}} \lambda_j \norm{\varphi_j}_{H^s(\mathbb{R}^3)}^2  .
\end{align}
Using this spectral decomposition, the product rule of differentiation, Young's inequality and the Cauchy--Schwarz inequality, we obtain ($\chi = (1 - \Delta)^{- s/2} \psi$)
\begin{align}
\norm{\left( 1 - \Delta \right)^{s/2} X_{\omega_{+}} \left( 1 - \Delta \right)^{-s/2} \psi}_{L^2(\mathbb{R}^3)}
&\quad = N^{-1} \Big\| \sum_{j \in \mathbb{N}} \lambda_j \varphi_j K * \{ \overline{\varphi_j} \chi \} \Big\|_{H^s(\mathbb{R}^3)}
\nonumber \\
&\leq C N^{-1} \sum_{j \in \mathbb{N}} \lambda_j \norm{\varphi_j}_{H^s(\mathbb{R}^3)} \norm{\overline{\varphi_j} \chi}_{L^1(\mathbb{R}^3)}
\norm{K  }_{W_0^{s,\infty}(\mathbb{R}^3)} 
\nonumber \\
&\leq C N^{-1} \norm{\chi}_{L^2(\mathbb{R}^3)} 
\norm{\left< \cdot \right>^s \mathcal{F}[K]  }_{L^1(\mathbb{R}^3)}
\sum_{j \in \mathbb{N}} \lambda_j \norm{\varphi_j}_{H^s(\mathbb{R}^3)}^2 
\nonumber \\
&\leq C N^{-1} \norm{\psi}_{L^2(\mathbb{R}^3)} 
\norm{\left< \cdot \right>^s \mathcal{F}[K]  }_{L^1(\mathbb{R}^3)} 
\norm{\omega_{+}}_{\textfrak{S}^{s,1}} .
\end{align}
Similarly as above, \eqref{eq:Fourier transform of the potential} and \eqref{eq:assumption on the cutoff parameter} yield $\norm{\left< \cdot \right>^s \mathcal{F}[K]  }_{L^1(\mathbb{R}^3)} \leq  \norm{\left< \cdot \right>^2 \mathcal{F}[K]  }_{L^1(\mathbb{R}^3)} \leq C$. This shows
\begin{align}
&\norm{\left( 1 - \Delta \right)^{s/2} X_{\omega_+} (1 - \Delta)^{- s/2}}_{\textfrak{S}^{\infty}(L^2(\mathbb{R}^3))} 
\nonumber \\
&\quad = \sup_{\psi \in L^2(\mathbb{R}^3), \norm{\psi}_{L^2(\mathbb{R}^3)} = 1} 
\norm{\left( 1 - \Delta \right)^{s/2} X_{\omega_{+}} \left( 1 - \Delta \right)^{- s/2} \psi}_{L^2(\mathbb{R}^3)}
\nonumber \\
&\quad \leq C N^{-1} \norm{\omega_{+}}_{\textfrak{S}^{s,1}} .
\end{align}
Analogously we obtain the same estimate for $X_{\omega_{-}}$, i.e.
\begin{align}
\norm{\left( 1 - \Delta \right)^{s/2} X_{\omega_{-}} (1 - \Delta)^{- s/2}}_{\textfrak{S}^{\infty}(L^2(\mathbb{R}^3))} 
&\leq C N^{-1} \norm{\omega_{-}}_{\textfrak{S}^{s,1}} .
\end{align} 
In total, this shows $\norm{\left[ X_{\omega}, \Gamma \right]}_{\textfrak{S}^{s,1}} \leq C N^{-1}  \norm{\omega}_{\textfrak{S}^{s,1}}
\norm{\Gamma}_{\textfrak{S}^{s,1}} $ for $s \in \{ 1,2 \}$.

\end{proof}

\begin{lemma}
\label{lemma:Charge current L-infty estimate of the difference}
Let $\vJ$ be defined as in \eqref{eq:definition charge current}
and $\norm{\omega}_{\textfrak{S}_{\varepsilon}^{1,1}} = \norm{(1 - \varepsilon^2 \Delta)^{1/2} \omega  (1 - \varepsilon^2 \Delta)^{1/2}}_{\textfrak{S}^1}$. Then,
\begin{align}
\label{eq:Charge current L-infty estimate of the difference}
&\norm{\mathcal{F}[\vJ_{\omega,\alpha}] - \mathcal{F}[\vJ_{\omega',\alpha'}]}_{L^{\infty}(\mathbb{R}^3)}
\nonumber \\
&\quad\leq C  \left( 1 + \min \left\{ \norm{\alpha}_{\dot{\mathfrak{h}}_{-1/2}} , \norm{\alpha}_{\mathfrak{h}}, \norm{\alpha}_{\dot{\mathfrak{h}}_{1/2}}  \right\} \right) 
N^{-1} \norm{\omega - \omega'}_{\textfrak{S}_{\varepsilon}^{1,1}}
\nonumber \\
&\qquad + C N^{-1} \norm{\omega'}_{\textfrak{S}^1}
\min \left\{ \norm{\alpha - \alpha'}_{\dot{\mathfrak{h}}_{-1/2}} , \norm{\alpha - \alpha'}_{\mathfrak{h}}, \norm{\alpha - \alpha'}_{\dot{\mathfrak{h}}_{1/2}}  \right\} .
\end{align}
For $\mathcal{E}^{\rm{MS}}$ being defined as in \eqref{eq:Maxwell-Schroedinger with exchange term energy definition} and $(\omega,\alpha) \in \textfrak{S}_{+}^{1,1} \times \dot{\mathfrak{h}}_{1/2}$ we have
\begin{subequations}
\begin{align}
\label{eq:estimate for the MS energy functional 1}
\norm{\alpha}_{\dot{\mathfrak{h}}_{1/2}}^2
&\leq N^{-1}  \mathcal{E}^{\rm{MS}}[\omega, \alpha] 
+ C N^{-2} \norm{\omega}_{\textfrak{S}^1}^2 ,
\\
\label{eq:estimate for the MS energy functional 2}
\tr \left( - \varepsilon^2 \Delta \omega \right)
&\leq  C \left( \mathcal{E}^{\rm{MS}}[\omega, \alpha] + N^{-1} \norm{\omega}_{\textfrak{S}^1}^2 \right)
\left( 1 +  N^{-1} \norm{\omega}_{\textfrak{S}^1} \right) ,
\\
\label{eq:estimate for the MS energy functional 3}
\mathcal{E}^{\rm{MS}}[\omega, \alpha] 
&\leq \tr \left( - \varepsilon^2 \Delta \omega \right)
+ C N^{-1} \norm{\omega}_{\textfrak{S}^1}^2
+ \norm{\alpha}_{\dot{\mathfrak{h}}_{1/2}}^2 \left( C \norm{\omega}_{\textfrak{S}^1} + N \right) .
\end{align}
\end{subequations}
\end{lemma}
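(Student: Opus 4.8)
The statement splits into two independent tasks: the Lipschitz-type bound \eqref{eq:Charge current L-infty estimate of the difference} for the charge current, and the three energy comparisons \eqref{eq:estimate for the MS energy functional 1}--\eqref{eq:estimate for the MS energy functional 3}. For the current, the plan is to pass to physical space via $\norm{\mathcal{F}[g]}_{L^{\infty}} \leq (2\pi)^{-3/2}\norm{g}_{L^1}$ and to exploit the linearity of $\omega \mapsto \rho_{\omega}$, of $\omega \mapsto \{i\varepsilon\nabla,\omega\}$ and of $\alpha \mapsto \vA_{\alpha}$ in order to write
\[
\vJ_{\omega,\alpha} - \vJ_{\omega',\alpha'} = -N^{-1}\{i\varepsilon\nabla, \omega-\omega'\}(x;x) - 2\rho_{\omega-\omega'}\,\kappa*\vA_{\alpha} - 2\rho_{\omega'}\,\kappa*\vA_{\alpha-\alpha'} .
\]
The two field contributions are handled by H\"older's inequality, $\norm{\rho_{\omega-\omega'}\,\kappa*\vA_{\alpha}}_{L^1} \leq \norm{\rho_{\omega-\omega'}}_{L^1}\norm{\kappa*\vA_{\alpha}}_{L^{\infty}}$, after which \eqref{eq: estimate to bound the density by the L1-norm of omega} bounds the density by $N^{-1}\norm{\omega-\omega'}_{\textfrak{S}^1}$ and \eqref{eq:vector potential L-infty estimate} bounds $\norm{\kappa*\vA_{\alpha}}_{L^{\infty}}$ by the displayed minimum of field norms; the third contribution, being linear in $\alpha-\alpha'$, is treated identically and yields the last line of \eqref{eq:Charge current L-infty estimate of the difference}.

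The kinetic contribution reduces to proving $\norm{\{i\varepsilon\nabla, \sigma\}(x;x)}_{L^1} \leq C\norm{\sigma}_{\textfrak{S}_{\varepsilon}^{1,1}}$ for the self-adjoint $\sigma = \omega-\omega'$. I would argue exactly as in the proof of Lemma~\ref{lemma:general estimates for the interaction}, decomposing $\sigma=\sigma_+-\sigma_-$ into positive and negative parts: with $\sigma_{\pm} = \sum_j \mu_j^{\pm}\ketbr{u_j^{\pm}}$, the diagonal of the anticommutator equals $2\,\mathrm{Re}\sum_j \mu_j^{\pm}(i\varepsilon\nabla u_j^{\pm})\,\overline{u_j^{\pm}}$, whose $L^1$-norm is controlled by Cauchy--Schwarz through $\sum_j \mu_j^{\pm}\norm{i\varepsilon\nabla u_j^{\pm}}_{L^2}\norm{u_j^{\pm}}_{L^2} \leq \sum_j \mu_j^{\pm}\norm{(1-\varepsilon^2\Delta)^{1/2}u_j^{\pm}}_{L^2}^2 = \norm{\sigma_{\pm}}_{\textfrak{S}_{\varepsilon}^{1,1}}$, the last identity holding because $(1-\varepsilon^2\Delta)^{1/2}\sigma_{\pm}(1-\varepsilon^2\Delta)^{1/2}$ is positive. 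Combining this with $\norm{\cdot}_{\textfrak{S}^1}\leq\norm{\cdot}_{\textfrak{S}_{\varepsilon}^{1,1}}$ and the two field terms produces the first line and completes \eqref{eq:Charge current L-infty estimate of the difference}.

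For the energy bounds I write $\mathcal{E}^{\rm{MS}} = T + I + F$ with $T = \tr(\omega(-i\varepsilon\nabla - \kappa*\vA_{\alpha})^2)$, $I = \tfrac12\tr((K*\rho_{\omega} - X_{\omega})\omega)$ and $F = N\norm{\alpha}_{\dot{\mathfrak{h}}_{1/2}}^2$. The structural facts are: $T\geq 0$ since $\omega\geq 0$; the direct term $\tfrac12\tr(K*\rho_{\omega}\,\omega) = \tfrac{N}{2}\int (K*\rho_{\omega})\rho_{\omega}\geq 0$ because $\mathcal{F}[K]\geq 0$ by \eqref{eq:Fourier transform of the potential}; and $\abs{\tr(X_{\omega}\omega)}\leq\norm{X_{\omega}}_{\textfrak{S}^{\infty}}\norm{\omega}_{\textfrak{S}^1}\leq CN^{-1}\norm{\omega}_{\textfrak{S}^1}^2$ by \eqref{eq:estimate exchange term in operator norm}. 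Then \eqref{eq:estimate for the MS energy functional 1} follows by dropping the nonnegative $T$ and direct terms and dividing by $N$. For \eqref{eq:estimate for the MS energy functional 2} I expand $(-i\varepsilon\nabla-\kappa*\vA_{\alpha})^2 = (-i\varepsilon\nabla)^2 - \{-i\varepsilon\nabla,\kappa*\vA_{\alpha}\} + (\kappa*\vA_{\alpha})^2$, dominate the indefinite cross term by the operator Cauchy--Schwarz inequality $\pm\{-i\varepsilon\nabla,a\}\leq \delta(-i\varepsilon\nabla)^2 + \delta^{-1}a^2$ (choosing $\delta=\tfrac12$) to absorb half of the free kinetic energy, bound $\tr(\omega(\kappa*\vA_{\alpha})^2)\leq\norm{\kappa*\vA_{\alpha}}_{L^{\infty}}^2\norm{\omega}_{\textfrak{S}^1}\leq C\norm{\alpha}_{\dot{\mathfrak{h}}_{1/2}}^2\norm{\omega}_{\textfrak{S}^1}$ via \eqref{eq:vector potential L-infty estimate}, and finally insert \eqref{eq:estimate for the MS energy functional 1} to replace $\norm{\alpha}_{\dot{\mathfrak{h}}_{1/2}}^2$ by $N^{-1}\mathcal{E}^{\rm{MS}}+CN^{-2}\norm{\omega}_{\textfrak{S}^1}^2$; collecting terms reproduces the product structure on the right of \eqref{eq:estimate for the MS energy functional 2}. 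Estimate \eqref{eq:estimate for the MS energy functional 3} is the reverse inequality, obtained by bounding each summand of $\mathcal{E}^{\rm{MS}}$ from above: $I\leq CN^{-1}\norm{\omega}_{\textfrak{S}^1}^2$ using \eqref{eq:W-0-2-infty estimate for the direct interaction potential}, \eqref{eq: estimate to bound the density by the L1-norm of omega} together with the exchange bound; $F$ supplying the $N\norm{\alpha}_{\dot{\mathfrak{h}}_{1/2}}^2$ term; and $T$ once more through the same expansion and Cauchy--Schwarz.

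The main obstacle throughout is the indefinite magnetic cross term $\{-i\varepsilon\nabla,\kappa*\vA_{\alpha}\}$, which is not sign-definite and couples the kinetic energy to the field. The Coulomb-gauge identity $\nabla\cdot(\kappa*\vA_{\alpha})=0$ reduces it to $-2i\varepsilon\,(\kappa*\vA_{\alpha})\cdot\nabla$, but this still must be dominated by $\delta(-i\varepsilon\nabla)^2 + \delta^{-1}(\kappa*\vA_{\alpha})^2$, so that the $\delta$-loss is absorbed partly into the free kinetic energy and partly into the term $\norm{\alpha}_{\dot{\mathfrak{h}}_{1/2}}^2\norm{\omega}_{\textfrak{S}^1}$. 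The second delicate point is the diagonal $L^1$-estimate of the momentum anticommutator, where the positive/negative splitting is essential to identify the Cauchy--Schwarz bound with the weighted trace norm $\norm{\cdot}_{\textfrak{S}_{\varepsilon}^{1,1}}$ rather than with a strictly larger quantity, exactly as in the splitting used in Lemma~\ref{lemma:general estimates for the interaction}.
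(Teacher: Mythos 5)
Your proposal is correct, and for the field contributions to \eqref{eq:Charge current L-infty estimate of the difference} and for the energy bounds it matches the paper's argument in substance; the genuine divergence is in how the kinetic term $N^{-1}\{i\varepsilon\nabla,\omega-\omega'\}(x;x)$ is controlled. You pass to physical space via $\norm{\mathcal{F}[g]}_{L^\infty}\leq C\norm{g}_{L^1}$, split $\sigma=\omega-\omega'$ into positive and negative parts as in \eqref{eq:splitting of an operator to obtain Sobolev trace estimates}, and estimate the diagonal of the anticommutator through the spectral decomposition of $\sigma_\pm$ — and you are right that the splitting is essential on this route, since $\tr\big(D\abs{\sigma}D\big)$ is in general not dominated by $\norm{D\sigma D}_{\textfrak{S}^1}$. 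The paper instead stays on the operator side: it writes $\norm{\mathcal{F}[\{i\varepsilon\nabla,\sigma\}(\cdot;\cdot)]}_{L^\infty}\leq\sup_k\abs{\tr\left(e^{-ik\hat{x}}\{i\varepsilon\nabla,\sigma\}\right)}\leq\norm{i\varepsilon\nabla\,\sigma}_{\textfrak{S}^1}+\norm{\sigma\,i\varepsilon\nabla}_{\textfrak{S}^1}\leq 2\norm{\sigma}_{\textfrak{S}_{\varepsilon}^{1,1}}$, using only trace-norm duality against the unitary $e^{-ik\hat{x}}$ and $\norm{i\varepsilon\nabla(1-\varepsilon^2\Delta)^{-1/2}}_{\textfrak{S}^\infty}\leq 1$; this avoids the spectral decomposition and the positive/negative splitting altogether, which is the shorter argument, while yours is more self-contained in that it reuses the machinery already developed for the exchange term. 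In the energy estimates your variations are harmless and arguably cleaner: you drop the direct term by positivity ($\mathcal{F}[K]\geq 0$ from \eqref{eq:Fourier transform of the potential} and \eqref{eq:assumption on the cutoff parameter 2}) where the paper bounds its magnitude by $CN^{-1}\norm{\omega}_{\textfrak{S}^1}^2$ via Lemma~\ref{lemma:general estimates for the interaction}, and for \eqref{eq:estimate for the MS energy functional 2} you use the operator-level inequality $\mp\{-i\varepsilon\nabla,a\}\leq\delta(-i\varepsilon\nabla)^2+\delta^{-1}a^2$ where the paper uses the trace-level Cauchy--Schwarz $\norm{i\varepsilon\nabla\omega}_{\textfrak{S}^1}^2\leq\tr(-\varepsilon^2\Delta\omega)\norm{\omega}_{\textfrak{S}^1}$ for $\omega\geq 0$ — both absorb half the free kinetic energy and yield the same product structure. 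One cosmetic remark: your route to \eqref{eq:estimate for the MS energy functional 3} (like any absorption of the cross term) produces a coefficient $1+\delta$, in practice $2$, in front of $\tr(-\varepsilon^2\Delta\omega)$, as in the analogous Vlasov--Maxwell bound \eqref{eq:estimate for the VM energy functional 3}, rather than the coefficient $1$ displayed in the statement; since the inequality is only used qualitatively for global existence, this is immaterial, but you should state the constant you actually obtain.
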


\begin{proof}
We write the difference of the currents as 
\begin{align}
\vJ_{\omega, \alpha}(x) - \vJ_{\omega', \alpha'}(x)
&= - N^{-1} \left\{ i \varepsilon \nabla, \left(  \omega - \omega' \right) \right\}(x;x)  - 2 \rho_{\omega - \omega'}(x)   \kappa * \vA_{\alpha}(x)
\nonumber \\
&\quad 
- 2 \rho_{\omega'}(x) \kappa * \vA_{\alpha - \alpha'}(x)  .
\end{align}
The Fourier transform of $N$ times the first expression on the right hand side can be estimated by
\begin{align}
\norm{\mathcal{F} [\left\{ i \varepsilon \nabla, \left(  \omega_{N,s} - \widetilde{\omega}_{N,s} \right) \right\}(\cdot; \cdot)] }_{L^{\infty}(\mathbb{R}^3)}
&\leq \sup_{k \in \mathbb{R}^3}  \abs{\int dx \, e^{- i k x} \left\{ i \varepsilon \nabla, \left(  \omega_{N,s} - \widetilde{\omega}_{N,s} \right) \right\}(x;x) }
\nonumber \\
&= \sup_{k \in \mathbb{R}^3} \abs{\tr \left(
e^{- i k \hat{x}} \left\{ i \varepsilon \nabla, \left(  \omega_{N,s} - \widetilde{\omega}_{N,s} \right) \right\}
\right) }
\nonumber \\
&\leq \norm{i \varepsilon \nabla \left(  \omega_{N,s} - \widetilde{\omega}_{N,s} \right)}_{\textfrak{S}^1}
+
\norm{\left(  \omega_{N,s} - \widetilde{\omega}_{N,s} \right) i \varepsilon \nabla }_{\textfrak{S}^1} 
\nonumber \\
&\leq  2 \norm{\left( \omega_{N,s} - \widetilde{\omega}_{N,s} \right)}_{\textfrak{S}_{\varepsilon}^{1,1}} .
\end{align}
To obtain the ultimate expression we have used that $\norm{i \varepsilon \nabla \left( 1 - \varepsilon \Delta \right)^{-1/2}}_{\textfrak{S}^{\infty}} \leq 1$. Using \eqref{eq:vector potential L-infty estimate} and \eqref{eq: estimate to bound the density by the L1-norm of omega} we bound the second and third term by
\begin{align}
2 \norm{\mathcal{F}[ \rho_{\omega - \omega'}  \, \kappa * \vA_{\alpha} ]}_{L^{\infty}(\mathbb{R}^3)}
&\leq  \norm{\rho_{\omega - \omega'} \,  \kappa * \vA_{\alpha} }_{L^1(\mathbb{R}^3)}
\nonumber \\
&\leq \norm{\kappa * \vA_{\alpha} }_{L^{\infty}(\mathbb{R}^3)} \norm{\rho_{\omega - \omega'}}_{L^1(\mathbb{R}^3)}  
\nonumber \\
&\leq C N^{-1} \min \left\{ \norm{\alpha}_{\dot{\mathfrak{h}}_{-1/2}} , \norm{\alpha}_{\mathfrak{h}}, \norm{\alpha}_{\dot{\mathfrak{h}}_{1/2}}  \right\} \norm{\omega - \omega'}_{\textfrak{S}^1} .
\end{align}
and
\begin{align}
2 \norm{\mathcal{F}[ \rho_{\omega'} \, \kappa * \vA_{\alpha - \alpha'}]}_{L^{\infty}(\mathbb{R}^3)}
&\leq C N^{-1}  \min \left\{ \norm{\alpha - \alpha' }_{\dot{\mathfrak{h}}_{-1/2}} , \norm{\alpha - \alpha' }_{\mathfrak{h}}, \norm{\alpha - \alpha'}_{\dot{\mathfrak{h}}_{1/2}}  \right\} \norm{\omega' }_{\textfrak{S}^1}.
\end{align}
Collecting the estimates proves \eqref{eq:Charge current L-infty estimate of the difference}.
Inequality \eqref{eq:estimate for the MS energy functional 1} follows from the positivity of the magnetic Laplacian and Lemma~\ref{lemma:general estimates for the interaction}. Using $(- i \varepsilon \nabla - \kappa * \vA_{\alpha})^2 \geq - \varepsilon^2 \Delta + 2 i \varepsilon \nabla \cdot \kappa * \vA_{\alpha}$, $\norm{i \varepsilon \nabla \omega}_{\textfrak{S}^1}^2 \leq \tr \left( - \varepsilon^2 \Delta \omega \right) \norm{\omega}_{\textfrak{S}^1}$ if $\omega \geq 0$, \eqref{eq:vector potential L-infty estimate} and Lemma~\ref{lemma:general estimates for the interaction}
we get
\begin{align}
\tr \left( - \varepsilon^2 \Delta \omega \right)
&\leq 2 \tr \left( (- i \varepsilon \nabla - \kappa * \vA_{\alpha})^2 \omega \right) + C \norm{\omega}_{\textfrak{S}^1} \norm{\alpha}_{\dot{\mathfrak{h}}_{1/2}}^2 
\nonumber \\
&\leq 2 \mathcal{E}^{\rm{MS}}[\omega, \alpha] + C  \norm{\omega}_{\textfrak{S}^1} \left( \norm{\alpha}_{\dot{\mathfrak{h}}_{1/2}}^2 + N^{-1} \norm{\omega}_{\textfrak{S}^1} \right) .
\end{align}
Together with \eqref{eq:estimate for the MS energy functional 1} this leads to \eqref{eq:estimate for the MS energy functional 2}. By similar estimates we obtain inequality \eqref{eq:estimate for the MS energy functional 3} concluding the proof.
\end{proof}

\begin{lemma}
\label{lemma:energy Vlasov-Maxwell estimate}
Let $(f, \alpha) \in W_{2}^{0,1}(\mathbb{R}^6) \times \dot{\mathfrak{h}}_{1/2}$ such that $f \geq 0$. For $\mathcal{E}^{\rm{VM}}$, $\vF_{f,\alpha}$ and $\vJ_{f, \alpha}$ defined as in \eqref{eq:Vlasov-Maxwell energy definition}, \eqref{eq:F-field of Vlasov equation}  and \eqref{eq:current density of Vlasov equation} we have
\begin{subequations}
\begin{align}
\label{eq:estimate for the VM energy functional 1}
\norm{\alpha}_{\dot{\mathfrak{h}}_{1/2}}^2 
&\leq \mathcal{E}^{\rm{VM}}[f,\alpha] + C \norm{f}_{L^1(\mathbb{R}^6)}^2 ,
\\
\label{eq:estimate for the VM energy functional 2}
\int_{\mathbb{R}^6}  f(x,v) v^2dx\,dv
&\leq \mathcal{E}^{\rm{VM}}[f,\alpha] 
+ C  \norm{f}_{L^1(\mathbb{R}^6)} \left( \norm{f}_{L^1(\mathbb{R}^6)} + 
\norm{\alpha}_{\dot{\mathfrak{h}}_{1/2}}^2 \right) ,
\\
\label{eq:estimate for the VM energy functional 3}
\mathcal{E}^{\rm{VM}}[f,\alpha]  
&\leq 2  \int_{\mathbb{R}^6}  f(x,v) v^2dx\,dv
+ C \norm{f}_{L^1(\mathbb{R}^6)}^2
+ \norm{\alpha}_{\dot{\mathfrak{h}}_{1/2}}^2 \left( C \norm{f}_{L^1(\mathbb{R}^6)} + 1 \right) 
\end{align}
\end{subequations}
and
\begin{subequations}
\begin{align}
\label{eq:estimate F function supremum in x} 
\sup_{x \in \mathbb{R}^3}  \abs{\vF_{f, \alpha}(x,v)}
&\leq C \left< v \right> \left( 1 + \mathcal{E}^{\rm{VM}}[f, \alpha] 
+  C \norm{f}_{L^1(\mathbb{R}^6)}^2  \right)   ,
\\
\label{eq:estimate current density Vlasov Maxwell L-1 norm}
\norm{\widetilde{\vJ}_{f,\alpha}}_{L^1(\mathbb{R}^3)}
&\leq C \left( 1 + \mathcal{E}^{\rm{VM}}[f, \alpha] 
+  C \norm{f}_{L^1(\mathbb{R}^6)}^2  \right)   ,
\\
\label{eq:estimate current density Vlasov Maxwell W-0-sigma-1 norm}
\norm{\widetilde{\vJ}_{f,\alpha}}_{W_0^{\sigma,1}(\mathbb{R}^3)}
&\leq C \left( 1 + \norm{\alpha}_{\mathfrak{h}_{\sigma -1}} \right) \norm{f}_{W_5^{\sigma,2}(\mathbb{R}^6)}
\quad \text{with} \; \sigma \in \mathbb{N}_0 .
\end{align}
\end{subequations}

\end{lemma}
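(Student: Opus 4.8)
The plan is to treat the six estimates as bookkeeping built on two elementary facts. First, the interaction kernel is bounded, $\norm{K}_{L^{\infty}(\mathbb{R}^3)} \leq \norm{\mathcal{F}[K]}_{L^1(\mathbb{R}^3)} \leq C$ (from \eqref{eq:Fourier transform of the potential} and \eqref{eq:assumption on the cutoff parameter}), so that Young's inequality gives $\norm{K * \widetilde{\rho}_f}_{L^{\infty}(\mathbb{R}^3)} \leq C \norm{\widetilde{\rho}_f}_{L^1(\mathbb{R}^3)} = C \norm{f}_{L^1(\mathbb{R}^6)}$. Second, Lemma~\ref{lemma:estimates for the vector potential and interaction} provides the pointwise bounds $\norm{\kappa * \vA_{\alpha}}_{L^{\infty}(\mathbb{R}^3)} \leq C \norm{\alpha}_{\dot{\mathfrak{h}}_{1/2}}$ and $\norm{\kappa * \vA_{\alpha}}_{W_0^{1,\infty}(\mathbb{R}^3)} \leq C \norm{\alpha}_{\dot{\mathfrak{h}}_{1/2}}$. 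Throughout I abbreviate $a := \kappa * \vA_{\alpha}$ and complete the square in the kinetic density, $(v - a)^2 = v^2 - 2 v \cdot a + a^2$, so that the $v$-integrals are read off against $\int f (v-a)^2$.

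For \eqref{eq:estimate for the VM energy functional 1} I would solve \eqref{eq:Vlasov-Maxwell energy definition} for $\norm{\alpha}_{\dot{\mathfrak{h}}_{1/2}}^2$, discard the nonnegative kinetic term $\int f (v-a)^2 \geq 0$ (using $f \geq 0$), and bound the interaction contribution in absolute value by $\tfrac{1}{2}\norm{K * \widetilde{\rho}_f}_{L^{\infty}}\norm{f}_{L^1} \leq C \norm{f}_{L^1(\mathbb{R}^6)}^2$. Estimates \eqref{eq:estimate for the VM energy functional 2} and \eqref{eq:estimate for the VM energy functional 3} use the same square completion: expanding $\int f v^2$ and $\mathcal{E}^{\rm{VM}}$ in terms of $\int f(v-a)^2$, the cross term $\int f\, v \cdot a$ is controlled by the Cauchy--Schwarz inequality in the measure $f\,dx\,dv$ together with Young's inequality, the field contribution by $\int f a^2 \leq \norm{a}_{L^{\infty}}^2 \norm{f}_{L^1} \leq C \norm{\alpha}_{\dot{\mathfrak{h}}_{1/2}}^2 \norm{f}_{L^1}$, and the interaction term again by $C\norm{f}_{L^1}^2$. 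The only point where signs matter is that, to estimate $\int f (v-a)^2 \leq \mathcal{E}^{\rm{VM}}$ in \eqref{eq:estimate for the VM energy functional 2}, one discards $\tfrac{1}{2}\int \widetilde{\rho}_f\, K * \widetilde{\rho}_f \geq 0$, which is nonnegative because $\mathcal{F}[K] = C\, \mathcal{F}[\kappa]^2 \abs{\cdot}^{-2} \geq 0$ by \eqref{eq:Fourier transform of the potential}.

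For \eqref{eq:estimate F function supremum in x} I would differentiate \eqref{eq:F-field of Vlasov equation}, producing three groups: $\nabla_x(K * \widetilde{\rho}_f)$, a quadratic term of the form $(\kappa * \vA_{\alpha}) \cdot \nabla_x(\kappa * \vA_{\alpha})$, and a term linear in $v$ of the form $v \cdot \nabla_x(\kappa * \vA_{\alpha})$. These are bounded respectively by $\norm{K * \widetilde{\rho}_f}_{W_0^{1,\infty}(\mathbb{R}^3)} \leq C \norm{f}_{L^1}$ (Lemma~\ref{lemma:general estimates for the interaction} with $\sigma=1$), by $\norm{\kappa * \vA_{\alpha}}_{L^{\infty}}\norm{\kappa * \vA_{\alpha}}_{W_0^{1,\infty}} \leq C \norm{\alpha}_{\dot{\mathfrak{h}}_{1/2}}^2$, and by $C \abs{v}\norm{\alpha}_{\dot{\mathfrak{h}}_{1/2}}$; converting $\norm{\alpha}_{\dot{\mathfrak{h}}_{1/2}}^2$ into $\mathcal{E}^{\rm{VM}}$ via \eqref{eq:estimate for the VM energy functional 1} and factoring out $\left< v \right>$ (through $\abs{v} \leq \left< v \right>$ and $\abs{v}\norm{\alpha}_{\dot{\mathfrak{h}}_{1/2}} \leq \left< v \right>(1 + \norm{\alpha}_{\dot{\mathfrak{h}}_{1/2}}^2)$) gives the claim. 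For \eqref{eq:estimate current density Vlasov Maxwell L-1 norm} I would use $\norm{\widetilde{\vJ}_{f,\alpha}}_{L^1} \leq 2 \int f \abs{v - a} \leq 2 \norm{f}_{L^1}^{1/2}\big(\int f (v-a)^2\big)^{1/2} \leq 2\norm{f}_{L^1}^{1/2}\,(\mathcal{E}^{\rm{VM}})^{1/2}$ by Cauchy--Schwarz, followed by Young's inequality and $\int f(v-a)^2 \leq \mathcal{E}^{\rm{VM}}$.

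The last bound \eqref{eq:estimate current density Vlasov Maxwell W-0-sigma-1 norm} is the most technical step and the main obstacle, since it requires passing from a phase-space $L^1$ quantity to weighted $L^2$ Sobolev norms. Writing $\widetilde{\vJ}_{f,\alpha} = 2\int v\, f\, dv - 2 (\kappa * \vA_{\alpha})\, \widetilde{\rho}_f$ and differentiating up to order $\sigma$ in $x$, I would treat the first term by the weighted Cauchy--Schwarz bound $\int_{\mathbb{R}^6} \abs{v}\, \abs{D_x^{\beta} f}\, dz \leq \norm{\left< \cdot \right>^{-4}}_{L^2(\mathbb{R}^6)} \norm{\left< \cdot \right>^5 D_x^{\beta} f}_{L^2(\mathbb{R}^6)} \leq C \norm{f}_{W_5^{\sigma,2}(\mathbb{R}^6)}$, where $\left< \cdot \right>^{-4} \in L^2(\mathbb{R}^6)$ because $8 > 6$; and the product term by the Leibniz rule together with $\norm{D^{\beta_1}(\kappa * \vA_{\alpha})}_{L^{\infty}} \leq \norm{\kappa * \vA_{\alpha}}_{W_0^{\sigma,\infty}} \leq C \norm{\alpha}_{\mathfrak{h}_{\sigma-1}}$ (from \eqref{eq:estimates for the vector potential and interaction 1}) and the density bound $\norm{D_x^{\beta_2}\widetilde{\rho}_f}_{L^1} \leq C\norm{f}_{W_5^{\sigma,2}}$, proved by the same weight argument. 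The crux is choosing the polynomial weight large enough so that $\left< \cdot \right>^{-4}$ is square-integrable over $\mathbb{R}^6$ while still absorbing the factor $\abs{v}$ and the $v$-integration (this is what forces the exponent $5$ in $W_5^{\sigma,2}$), and simultaneously matching the derivative counts in the Leibniz expansion so that the field dependence lands precisely in $\mathfrak{h}_{\sigma-1}$; once the two elementary bounds of the first paragraph are in place, everything else is energy bookkeeping.
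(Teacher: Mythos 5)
Your proposal is correct and follows essentially the same route as the paper: the energy estimates come from completing the square combined with $\norm{\kappa * \vA_{\alpha}}_{L^{\infty}} \leq C \norm{\alpha}_{\dot{\mathfrak{h}}_{1/2}}$ and $\norm{K * \widetilde{\rho}_{f}}_{W_0^{\sigma,\infty}} \leq C \norm{f}_{L^1}$ from Lemma~\ref{lemma:estimates for the vector potential and interaction} and Lemma~\ref{lemma:general estimates for the interaction}, and the Sobolev bound on the current uses the same factorization $\norm{\widetilde{\vJ}_{f,\alpha}}_{W_0^{\sigma,1}} \leq 2 \big( 1 + \norm{\kappa * \vA_{\alpha}}_{W_0^{\sigma,\infty}} \big) \norm{f}_{W_1^{\sigma,1}}$ followed by exactly the paper's weighted Cauchy--Schwarz step with $\left< z \right>^{-4} \in L^2(\mathbb{R}^6)$, which is what forces the weight $5$. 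Your only deviations are cosmetic and harmless: you discard the interaction term in \eqref{eq:estimate for the VM energy functional 2} using $\mathcal{F}[K] \geq 0$ where the paper bounds it in absolute value by $C \norm{f}_{L^1}^2$, and you bound $\norm{\widetilde{\vJ}_{f,\alpha}}_{L^1}$ via $\int f (v - \kappa * \vA_{\alpha})^2 \leq \mathcal{E}^{\rm{VM}}$ directly instead of passing through \eqref{eq:estimate for the VM energy functional 2}.
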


\begin{proof}
Note that
\begin{align}
\abs{\int_{\mathbb{R}^6} f(x,v) \left( \kappa * \vA_{\alpha}(x) \right)^2dx\,dv}
&\leq C \norm{f}_{L^1(\mathbb{R}^6)} \norm{\alpha}_{\dot{\mathfrak{h}}_{1/2}}^2
\end{align}
and
\begin{align}
\abs{\int_{\mathbb{R}^6} f(x,v) K * \widetilde{\rho}_f(x)\,dx\,dv}
&\leq C \norm{f}_{L^1(\mathbb{R}^6)}^2
\end{align}
because of \eqref{eq:vector potential L-infty estimate} and \eqref{eq:estimates for the vector potential and interaction 2}. The inequalities \eqref{eq:estimate for the VM energy functional 1}--\eqref{eq:estimate for the VM energy functional 3} then follow from the positivity of $f$, $(v - \kappa * \vA_{\alpha}(x))^2 \geq v^2 - 2 v \cdot \kappa \vA_{\alpha}(x)$ and the Cauchy--Schwarz inequality. By means of  \eqref{eq:estimates for the vector potential and interaction 0}  and \eqref{eq:estimates for the vector potential and interaction 2} and the Cauchy--Schwarz inequality we obtain
\begin{align}
\sup_{x \in \mathbb{R}^3}  \abs{\vF_{f, \alpha}(x,v)}
&\leq C \left< v \right> \left( \norm{f}_{L^1(\mathbb{R}^6)} + \norm{\alpha}_{\dot{\mathfrak{h}}_{1/2}}^2 \right) 
\end{align}
and
\begin{align}
\norm{\widetilde{\vJ}_{f,\alpha}}_{L^1(\mathbb{R}^3)}
&\leq  C \norm{\alpha}_{\dot{\mathfrak{h}}_{1/2}}  \norm{f}_{L^1(\mathbb{R}^6)}
+ 2 \left( \norm{f}_{L^1(\mathbb{R}^6)} \int_{\mathbb{R}^6}  f(x,v) v^2dx\,dv \right)^{1/2}
\end{align}
Together with \eqref{eq:estimate for the VM energy functional 1} and \eqref{eq:estimate for the VM energy functional 2} this leads to 
\eqref{eq:estimate F function supremum in x} and \eqref{eq:estimate current density Vlasov Maxwell L-1 norm}. Note that
\begin{align}
\norm{\widetilde{\vJ}_{f,\alpha}}_{W_0^{\sigma,1}(\mathbb{R}^3)}
&\leq 2 \left( 1 + \norm{\kappa * \vA_{\alpha}}_{W_0^{\sigma, \infty}(\mathbb{R}^3)} \right) \norm{f}_{W_1^{\sigma,1}} .
\end{align}
By applying \eqref{eq:estimates for the vector potential and interaction 1} and the Cauchy-Schwarz inequality we then obtain \eqref{eq:estimate current density Vlasov Maxwell W-0-sigma-1 norm}.
\end{proof}

\begin{lemma}
Let $A$ be a trace class operator, $V: \mathbb{R}^3 \rightarrow \mathbb{C}^3$ be a regular enough function and $D_{\leq n} = \frac{\left( 1 - \varepsilon^2 \Delta \right)^{1/2}}{\left( 1 - \varepsilon^2 \Delta / n^2 \right)^{1/2}} $ with $n \in \mathbb{N}$
as defined in \eqref{eq:regularized fractional Laplacian}. Then,
\begin{align}
\label{estimate:commutator function with operator in trace norm}
\norm{\left[ V, A \right]}_{\textfrak{S}^1}
&\leq  \norm{\left[ \hat{x}, A \right]}_{\textfrak{S}^1}
\int \abs{k}  \abs{\widehat{V}(k)}\,dk  ,
\\
\label{estimate:commutator fractional laplacian with potential}
\norm{\left[ \left( 1 - \varepsilon^2 \Delta \right)^{1/2} , V \right]}_{\textfrak{S}^{\infty}}
&\leq 2 \varepsilon \int  \abs{k} \abs{\widehat{V}(k)}\,dk  ,
\\
\label{estimate:commutator regularized fractional laplacian with potential}
\norm{\left[ D_{\leq n} , V \right]}_{\textfrak{S}^{\infty}}
&\leq 2 \varepsilon \int \abs{k} \abs{\widehat{V}(k)}\,dk
\end{align}

\end{lemma}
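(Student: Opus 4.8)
The plan is to handle all three bounds with a single device: Fourier-decompose the multiplication operator $V$ into plane waves and reduce each commutator to the corresponding commutator with a single exponential $e^{ik\cdot\hat x}$. Writing (in the Fourier convention for which the inversion formula carries no prefactor) $V(x)=\int\widehat V(k)\,e^{ik\cdot x}\,dk$, so that as a multiplication operator $V=\int\widehat V(k)\,e^{ik\cdot\hat x}\,dk$, linearity of the commutator together with Minkowski's inequality for the relevant (Bochner) integral gives
\begin{align}
\norm{[V,A]}_{\textfrak{S}^1}\le\int\abs{\widehat V(k)}\,\norm{[e^{ik\cdot\hat x},A]}_{\textfrak{S}^1}\,dk ,
\end{align}
and analogously with $A$ replaced by $(1-\varepsilon^2\Delta)^{1/2}$ or $D_{\leq n}$ and $\textfrak{S}^1$ by $\textfrak{S}^{\infty}$. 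Since $V$, and hence $\widehat V$, are $\mathbb{C}^3$-valued, each estimate is read componentwise and recombined in $\ell^2$; here $\abs{\widehat V(k)}$ denotes the Euclidean norm and the step above is Minkowski in $\ell^2(\textfrak{S}^p)$. Everything thus reduces to controlling $\norm{[e^{ik\cdot\hat x},\,\cdot\,]}$, and the factor $\int\abs{k}\abs{\widehat V(k)}\,dk$ will appear once a factor $\abs{k}$ is extracted from each single-mode commutator.

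For \eqref{estimate:commutator function with operator in trace norm} I would run a fundamental-theorem-of-calculus (Duhamel) argument along the unitary group $s\mapsto e^{isk\cdot\hat x}$. Setting $B(s)=e^{-isk\cdot\hat x}A\,e^{isk\cdot\hat x}$ one computes $B'(s)=-i\,e^{-isk\cdot\hat x}\big(k\cdot[\hat x,A]\big)e^{isk\cdot\hat x}$, whence
\begin{align}
[e^{ik\cdot\hat x},A]=e^{ik\cdot\hat x}\big(A-B(1)\big)=i\,e^{ik\cdot\hat x}\int_0^1 e^{-isk\cdot\hat x}\big(k\cdot[\hat x,A]\big)e^{isk\cdot\hat x}\,ds .
\end{align}
Because $\textfrak{S}^1$ is unitarily invariant and $e^{\pm isk\cdot\hat x}$, $e^{ik\cdot\hat x}$ are unitary, the integrand has constant $\textfrak{S}^1$-norm $\norm{k\cdot[\hat x,A]}_{\textfrak{S}^1}\le\abs{k}\,\norm{[\hat x,A]}_{\textfrak{S}^1}$ (Cauchy--Schwarz over the three components). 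Inserting this into the reduction above yields exactly \eqref{estimate:commutator function with operator in trace norm}.

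For \eqref{estimate:commutator fractional laplacian with potential} and \eqref{estimate:commutator regularized fractional laplacian with potential} I would instead use that conjugation by $e^{ik\cdot\hat x}$ shifts momentum: with $\hat p=-i\nabla$ one has $e^{ik\cdot\hat x}\hat p\,e^{-ik\cdot\hat x}=\hat p-k$, so for any symbol $g$,
\begin{align}
[g(\hat p),e^{ik\cdot\hat x}]=e^{ik\cdot\hat x}\big(g(\hat p+k)-g(\hat p)\big) .
\end{align}
Taking $g(p)=(1+\varepsilon^2\abs{p}^2)^{1/2}$ (the symbol of $(1-\varepsilon^2\Delta)^{1/2}$) and $G(p)=(1+\varepsilon^2\abs{p}^2)^{1/2}(1+\varepsilon^2\abs{p}^2/n^2)^{-1/2}$ (the symbol of $D_{\leq n}$), the commutator becomes a Fourier multiplier whose operator norm equals $\sup_p\abs{g(p+k)-g(p)}$, resp.\ $\sup_p\abs{G(p+k)-G(p)}$. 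These are bounded by the Lipschitz constants of the symbols times $\abs{k}$: a direct computation gives $\abs{\nabla g}=\varepsilon^2\abs{p}(1+\varepsilon^2\abs{p}^2)^{-1/2}\le\varepsilon$, and for the quotient, using the quotient rule together with $\abs{\nabla g}\le\varepsilon$, $\abs{\nabla (1+\varepsilon^2\abs{p}^2/n^2)^{1/2}}\le\varepsilon/n$, the denominator $\ge1$, and the uniform bound $1\le G\le n$, one finds $\abs{\nabla G}\le 2\varepsilon$. Hence $\sup_p\abs{g(p+k)-g(p)}\le\varepsilon\abs{k}\le2\varepsilon\abs{k}$ and $\sup_p\abs{G(p+k)-G(p)}\le2\varepsilon\abs{k}$, and the reduction of the first paragraph gives \eqref{estimate:commutator fractional laplacian with potential} and \eqref{estimate:commutator regularized fractional laplacian with potential}.

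The routine parts are the operator identities, which I would justify first on a dense domain (Schwartz functions), where all manipulations are legitimate, and then extend by density and closedness; here ``$V$ regular enough'' means precisely $\int\abs{k}\abs{\widehat V(k)}\,dk<\infty$, so that the plane-wave superposition converges in $\textfrak{S}^1$ (resp.\ $\textfrak{S}^{\infty}$) and the Minkowski/Bochner step is valid. The only genuinely quantitative point, and thus the main thing to get right, is the Lipschitz bound $\abs{\nabla G}\le2\varepsilon$ for the regularized symbol, since it hinges on the cancellation encoded in $1\le G\le n$; this is also where the constant $2$ in the last two estimates originates, the unregularized symbol already admitting the sharper constant $\varepsilon$.
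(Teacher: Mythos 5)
Your proposal is correct. For \eqref{estimate:commutator function with operator in trace norm} your fundamental-theorem-of-calculus expansion along $s \mapsto e^{isk\cdot \hat{x}}$ is, up to notation, exactly the identity \eqref{eq:commutator of e-i-k-x and operator Duhamel expansion} that the paper uses, followed by the same Fourier superposition of $V$ and unitary invariance of the trace norm. For \eqref{estimate:commutator fractional laplacian with potential} and \eqref{estimate:commutator regularized fractional laplacian with potential}, however, you take a genuinely different route: the paper also conjugates by plane waves, but then rationalizes the difference of square roots, writing $X^{1/2} - Y^{1/2} = (X-Y)\big(X^{1/2}+Y^{1/2}\big)^{-1}$ with $X = 1 + \varepsilon^2 (i\nabla + k)^2$ and $Y = 1 - \varepsilon^2\Delta$, and extracts the factor $2\varepsilon\abs{k}$ from the operator inequalities $\varepsilon (i\nabla + k) \leq X^{1/2}$ and $\varepsilon\, i\nabla \leq Y^{1/2}$; you instead note that $[g(\hat{p}), e^{ik\cdot\hat{x}}] = e^{ik\cdot\hat{x}}\big(g(\hat{p}+k) - g(\hat{p})\big)$ is (up to a unitary) a Fourier multiplier and bound its symbol by the Lipschitz constant of $g$. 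Your key computation checks out: with $a(p) = (1+\varepsilon^2\abs{p}^2)^{1/2}$, $b(p) = (1+\varepsilon^2\abs{p}^2/n^2)^{1/2}$ and $G = a/b$, the quotient rule gives $\abs{\nabla G} \leq \abs{\nabla a}/b + G\,\abs{\nabla b}/b \leq \varepsilon + n\cdot(\varepsilon/n) = 2\varepsilon$, using $b \geq 1$ and $1 \leq G \leq n$. This makes your argument in one respect more complete than the paper's, which only asserts that \eqref{estimate:commutator regularized fractional laplacian with potential} ``can be proven in a similar way''; the rationalization trick applied to the quotient symbol of $D_{\leq n}$ would be noticeably messier (the paper's own displayed formula for it, an integral representation in $r$, is already more involved), whereas your Lipschitz bound handles it transparently and also exposes where the constant $2$ comes from. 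As a small bonus, your method yields the sharper constant $\varepsilon$ in place of $2\varepsilon$ in \eqref{estimate:commutator fractional laplacian with potential}. The remaining loose ends in your write-up --- the $(2\pi)$-normalization of the Fourier inversion and the precise meaning of the vector norm $\norm{[\hat{x},A]}_{\textfrak{S}^1}$ for $\mathbb{C}^3$-valued $V$ --- are treated at the same level of informality as in the paper itself, so no substantive gap remains.
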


\begin{proof}
Using the identity 
\begin{align}
\label{eq:commutator of e-i-k-x and operator Duhamel expansion}
\left[ e^{i k \hat{x}} , A \right]
&= \int_0^1  \, e^{i \lambda k \hat{x}}
\left[ i k \cdot \hat{x} , A \right] e^{i (1 - \lambda) k \hat{x}}\,d \lambda,
\end{align} 
we estimate
\begin{align}
\norm{\left[ V, A \right]}_{\textfrak{S}^1}
&\leq \int \abs{\widehat{V}(k)} \norm{\left[ e^{ik \hat{x}} , A \right]}_{\textfrak{S}^1}dk
\nonumber \\
&\leq \int\abs{k} \abs{\widehat{V}(k)} \norm{\left[ \hat{x} , A \right]}_{\textfrak{S}^1}dk .
\end{align}
Note that
\begin{align}
\left[ \left( 1 - \varepsilon^2 \Delta \right)^{1/2} , V(x) \right]
&= \int \mathcal{F}[V](k)  \left[ \left( 1 - \varepsilon^2 \Delta \right)^{1/2} , e^{-ikx} \right]\,dk
\nonumber \\
&= \int  \mathcal{F}[V](k)  e^{-ikx}
\left( \left( 1 + \varepsilon^2 (i \nabla + k)^2 \right)^{1/2}
- \left( 1 - \varepsilon^2 \Delta \right)^{1/2} \right)\,dk
\nonumber \\
&= \int  \mathcal{F}[V](k)  e^{-ikx}
\frac{1 + \varepsilon^2 (i \nabla + k)^2 -  ( 1 - \varepsilon^2 \Delta)}{\left( \left( 1 + \varepsilon^2 (i \nabla + k)^2 \right)^{1/2}
+ \left( 1 - \varepsilon^2 \Delta \right)^{1/2} \right)}\,dk
\nonumber \\
&= \int \mathcal{F}[V](k)  e^{-ikx}
\frac{\varepsilon^2 k \cdot i \nabla + \varepsilon^2 k \cdot (i \nabla + k)}{\left( \left( 1 + \varepsilon^2 (i \nabla + k)^2 \right)^{1/2}
+ \left( 1 - \varepsilon^2 \Delta \right)^{1/2} \right)}\,dk
\end{align}
By means of the operator inequalities $\varepsilon (i \nabla + k) \leq \left( 1 + \varepsilon^2 (i \nabla + k)^2 \right)^{1/2}$ and $\varepsilon i \nabla \leq \left( 1 - \varepsilon^2 \Delta \right)^{1/2}$, which hold due to the spectral theorem, we obtain
\begin{align}
\norm{\left[ \left( 1 - \varepsilon^2 \Delta \right)^{1/2} , V(x) \right]}_{\textfrak{S}^{\infty}}
&\leq 2 \varepsilon \int   \abs{k} \abs{\mathcal{F}[V](k)}\,dk .
\end{align}
Inequality \eqref{estimate:commutator regularized fractional laplacian with potential} can be proven in a similar way. 
%\textcolor{blue}{The proof of \eqref{estimate:commutator regularized fractional laplacian with potential} is given in Appendix \ref{section:additional details} and will probably not be given in the final version.}

\end{proof}

\section{Proof of Theorem \ref{theorem:main theorem}}\label{section:proof of main theorem}

\subsection{Quantization of the Vlasov--Maxwell equations}

\begin{lemma}
Let $(\widetilde{W}_{N,t}, \widetilde{\alpha}_t )$ be the solution of \eqref{eq: Vlasov-Maxwell different style of writing} and $\widetilde{\omega}_{N,t}$ be the Weyl quantization of $\widetilde{W}_{N,t}$. Then $(\widetilde{\omega}_{N,t} , \alpha_t )$ satisfies
\begin{align}
\label{eq:Vlasov-Maxwell quantum version}
\begin{cases}
i \varepsilon \partial_t \widetilde{\omega}_{N,t}  &= \left[ - \varepsilon^2 \Delta, \widetilde{\omega}_{N,t} \right] + B_{t} + C_t, 
\\
\partial_t \widetilde{\alpha}_t(k,\lambda) &= \abs{k} \alpha_t(k,\lambda) - \sqrt{\frac{4 \pi^3}{\abs{k}}} \mathcal{F}[\kappa](k) \vep_{\lambda}(k) \cdot \mathcal{F}[\vJ_{\widetilde{\omega}_t, \widetilde{\alpha}_t}](k) .
\end{cases}
\end{align}
with
\begin{align}
B_t(x;y) &=  \nabla (K * \rho_{\widetilde{\omega}_{N,t}}) \left( \frac{x + y}{2} \right) \cdot (x - y) \, \widetilde{\omega}_{N,t}(x;y) ,
\\
\label{eq:definition C-t}
C_t(x;y) &= 2 \kappa *  \vA_{\widetilde{\alpha}_t} \left( \frac{x + y}{2} \right) \cdot \left[ i \varepsilon \nabla, \widetilde{\omega}_{N,t} \right](x;y)
+ \nabla (\kappa * \vA_{\widetilde{\alpha}_t})^2 \left( \frac{x+y}{2} \right) \cdot (x - y) \, \widetilde{\omega}_{N,t}(x;y)
\nonumber \\
&\qquad + \sum_{j=1}^3 \nabla (\kappa * \vA_{\widetilde{\alpha}_t}^{(j)}) \left( \frac{x + y}{2} \right) \cdot (x - y) \left\{ i \varepsilon \nabla^{(j)} , \widetilde{\omega}_{N,t} \right\}(x;y) .
\end{align}
Here, $K$, $\vA$, $\rho$ and $\vJ$ are defined as in \eqref{eq:Coulomb potential with cutoff}--\eqref{eq:definition charge current}.
\end{lemma}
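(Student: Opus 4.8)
The plan is to differentiate the Weyl quantization \eqref{eq:definition Weyl quantization} of $\widetilde{W}_{N,t}$ in time, insert the Vlasov equation (the first line of \eqref{eq: Vlasov-Maxwell different style of writing}) for $\partial_t \widetilde{W}_{N,t}$, and then recognize the Weyl quantization of each resulting term as one of the operator expressions on the right-hand side of \eqref{eq:Vlasov-Maxwell quantum version}. It is convenient to work with center-of-mass and relative coordinates, i.e. to keep in mind that $\widetilde{\omega}_{N,t}(x;y) = N \int \widetilde{W}_{N,t}\big(\tfrac{x+y}{2},v\big)\,e^{i v \cdot \frac{x-y}{\varepsilon}}\,dv$, and to record the elementary kernel identities
\[
\left[ -\varepsilon^2 \Delta, \widetilde{\omega}_{N,t} \right](x;y) = -\varepsilon^2 (\Delta_x - \Delta_y) \widetilde{\omega}_{N,t}(x;y), \qquad \left[ i\varepsilon \nabla, \widetilde{\omega}_{N,t} \right](x;y) = i\varepsilon (\nabla_x + \nabla_y) \widetilde{\omega}_{N,t}(x;y),
\]
\[
\left\{ i\varepsilon \nabla^{(j)}, \widetilde{\omega}_{N,t} \right\}(x;y) = i\varepsilon (\partial_{x_j} - \partial_{y_j}) \widetilde{\omega}_{N,t}(x;y).
\]
The point is that $\nabla_x + \nabla_y$ differentiates only the center-of-mass argument $\tfrac{x+y}{2}$ of $\widetilde{W}_{N,t}$, whereas $\nabla_x - \nabla_y = 2\nabla_{x-y}$ brings down a factor $\tfrac{2 i v}{\varepsilon}$ under the $v$-integral.

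For the free transport term $-2 v \cdot \nabla_x \widetilde{W}_{N,t}$, multiplying by $i\varepsilon$ and quantizing gives $-2 i \varepsilon N \int v \cdot (\nabla_{\frac{x+y}{2}} \widetilde{W}_{N,t}) e^{i v \cdot \frac{x-y}{\varepsilon}}\,dv$, which by the first identity above is exactly $[-\varepsilon^2 \Delta, \widetilde{\omega}_{N,t}]$. Since $\kappa * \vA_{\widetilde{\alpha}_t}$ is $v$-independent, the advection term $2\,\kappa * \vA_{\widetilde{\alpha}_t} \cdot \nabla_x \widetilde{W}_{N,t}$ quantizes to $2\,\kappa * \vA_{\widetilde{\alpha}_t}(\tfrac{x+y}{2}) \cdot [i\varepsilon \nabla, \widetilde{\omega}_{N,t}]$, the first term of $C_t$. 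For the two velocity-independent parts of the force \eqref{eq:F-field of Vlasov equation}, namely $\nabla_x(K * \widetilde{\rho}_{\widetilde{W}_t})$ and $\nabla_x(\kappa * \vA_{\widetilde{\alpha}_t})^2$, I integrate by parts in $v$: the gradient $\nabla_v$ hits the exponential and produces a factor $\tfrac{i}{\varepsilon}(x-y)$ that cancels the prefactor $i\varepsilon$, leaving $\nabla(K * \rho_{\widetilde{\omega}_{N,t}})(\tfrac{x+y}{2}) \cdot (x-y)\,\widetilde{\omega}_{N,t}(x;y)$ and $\nabla(\kappa * \vA_{\widetilde{\alpha}_t})^2(\tfrac{x+y}{2}) \cdot (x-y)\,\widetilde{\omega}_{N,t}(x;y)$, i.e. precisely $B_t$ and the second term of $C_t$ (here $\rho_{\widetilde{\omega}_{N,t}} = \widetilde{\rho}_{\widetilde{W}_{N,t}}$, which is immediate from the quantization formula evaluated on the diagonal).

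The only delicate contribution is the velocity-dependent part of the force, $-2 \sum_j v_j \nabla_x(\kappa * \vA_{\widetilde{\alpha}_t}^{(j)}) \cdot \nabla_v \widetilde{W}_{N,t}$. Integrating by parts in $v$ now differentiates both the exponential and the explicit factor $v_j$. The derivative falling on the exponential reproduces, through $\{ i\varepsilon \nabla^{(j)}, \widetilde{\omega}_{N,t} \}$, exactly the third term of $C_t$; the derivative falling on $v_j$ yields a leftover term proportional to $\sum_j \partial_{x_j}(\kappa * \vA_{\widetilde{\alpha}_t}^{(j)})(\tfrac{x+y}{2}) = \kappa * (\nabla \cdot \vA_{\widetilde{\alpha}_t})(\tfrac{x+y}{2})$, which vanishes identically by the Coulomb gauge condition $\nabla \cdot \vA_{\widetilde{\alpha}_t} = 0$. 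This cancellation is the crux of the computation. Collecting the five contributions gives the first line of \eqref{eq:Vlasov-Maxwell quantum version}.

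The field equation follows directly from the second line of \eqref{eq: Vlasov-Maxwell different style of writing} once one checks that the kinetic current coincides with the quantum current of the Weyl quantization, $\vJt_{\widetilde{W}_t,\widetilde{\alpha}_t} = \vJ_{\widetilde{\omega}_t,\widetilde{\alpha}_t}$; indeed, evaluating $\{ i\varepsilon \nabla, \widetilde{\omega}_{N,t} \}(x;x) = -2 N \int v\, \widetilde{W}_{N,t}(x,v)\,dv$ and $\rho_{\widetilde{\omega}_{N,t}}(x) = \int \widetilde{W}_{N,t}(x,v)\,dv$ in \eqref{eq:definition charge current} reproduces \eqref{eq:current density of Vlasov equation}. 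I expect the main obstacle to be bookkeeping rather than conceptual: one must justify differentiation under the integral sign, the $v$-integrations by parts (the boundary terms vanish thanks to the decay of $\widetilde{W}_{N,t}$ in the weighted spaces of \eqref{eq:main theorem condition on the Vlasov-Maxwell solution}, guaranteed by Proposition~\ref{lemma:global solutions VM}), and carefully distinguish which gradient acts on the center-of-mass versus the relative argument — the decisive observation being that the Coulomb gauge is exactly what removes the otherwise surviving divergence term.
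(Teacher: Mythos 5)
Your proposal is correct and follows essentially the same route as the paper: differentiate the Weyl quantization \eqref{eq:definition Weyl quantization}, insert the transport equation from \eqref{eq: Vlasov-Maxwell different style of writing}, and re-express each term via the kernel identities for $\left[-\varepsilon^2\Delta,\widetilde{\omega}_{N,t}\right]$, $\left[i\varepsilon\nabla,\widetilde{\omega}_{N,t}\right]$ and $\left\{i\varepsilon\nabla^{(j)},\widetilde{\omega}_{N,t}\right\}$, then identify the currents through $\rho_{\widetilde{\omega}_{N,t}}=\widetilde{\rho}_{\widetilde{W}_{N,t}}$ and $\left\{i\varepsilon\nabla,\widetilde{\omega}_{N,t}\right\}(x;x)=-2N\int v\,\widetilde{W}_{N,t}(x,v)\,dv$. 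The only (welcome) difference is presentational: you spell out the $v$-integration by parts and the Coulomb-gauge cancellation $\sum_j\partial_j(\kappa*\vA^{(j)}_{\widetilde{\alpha}_t})=0$ in the velocity-dependent force term, which the paper subsumes under ``straightforward manipulations.''
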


\begin{proof}
To simplify the notation we refrain from writing $\sim$ on top of the symbols and write $W_{N,t}, \omega_{N,t}, \alpha_t, \ldots$ instead of $\widetilde{W}_{N,t}, \widetilde{\omega}_{N,t}, \widetilde{\alpha}_t,  \ldots$. Using that $W_{N}(t)$ satisfies the first equation in \eqref{eq: Vlasov-Maxwell different style of writing} and 
\begin{align}
\label{eq:relation between density for Wigner transform and its Weyl quantization}
\tilde{\rho}_{W_{N,t}}(x) =  \int  W_{N,t}(x,v)\,dv =  N^{-1} \omega_{N,t}(x,x) = \rho_{\omega_{N,t}}(x) .
\end{align}
we obtain 
\begin{subequations}
\begin{align}
&\partial_t \omega_{N,t}(x;y)
\nonumber \\
\label{eq:Weyl quantization time derivative derivation 1}
&\quad = - N \int e^{i v \cdot \frac{x-y}{\varepsilon}} 
2 \left( v - \kappa * \vA_{\alpha_t} \left( \frac{x + y}{2} \right) \right) \cdot \nabla_2 W_{N,t} \left( \frac{x+y}{2}, v \right)\,dv
\\
\label{eq:Weyl quantization time derivative derivation 2}
&\qquad + 
N \int  e^{i v \cdot \frac{x-y}{\varepsilon}} \,
(\nabla K * \rho_{\omega_t}) \left( \frac{x+y}{2} \right) \cdot \nabla_v 
W_{N,t} \left(\frac{x+y}{2}, v \right)\,dv
\\
\label{eq:Weyl quantization time derivative derivation 3}
&\qquad + 
N \int  e^{i v \cdot \frac{x-y}{\varepsilon}} 
\left( \nabla ( \kappa * \vA_{\alpha_t})^2 \right) \left( \frac{x + y}{2} \right)  \cdot \nabla_v 
W_{N,t} \left( \frac{x+y}{2}, v \right)\,dv
\\
\label{eq:Weyl quantization time derivative derivation 4}
&\qquad - 2 \sum_{j=1}^3
N \int  e^{i v \cdot \frac{x-y}{\varepsilon}} 
  v^{(j)}  \left( \nabla \kappa * \vA_{\alpha_t}^{(j)} \right) \left( \frac{x+y}{2} \right)  \cdot \nabla_v 
W_{N,t} \left( \frac{x+y}{2}, v \right)\,dv .
\end{align}
\end{subequations}
By straightforward manipulations the right hand side can be written in terms of the Weyl quantization of $W_{N,t}$. More explicitly, we obtain
\begin{align}
\eqref{eq:Weyl quantization time derivative derivation 1}
&= - i \varepsilon^{-1} 
\left(  \left[ - \varepsilon^2 \Delta, \omega_{N,t} \right](x;y) 
+ 2  \kappa * \vA_{\alpha_t} \left( \frac{x + y}{2} \right)  \cdot \left[ i \varepsilon \nabla,  \omega_{N,t} \right](x;y)
\right) ,
\nonumber \\
\eqref{eq:Weyl quantization time derivative derivation 2}
& = - i \varepsilon^{-1}  (\nabla K * \rho_{\omega_t}) \left( \frac{x+y}{2} \right) \cdot (x-y) \,  \omega_{N,t}(x;y),
\nonumber \\
\eqref{eq:Weyl quantization time derivative derivation 3}
& = - i \varepsilon^{-1}  \left( \nabla ( \kappa * \vA_{\alpha_t})^2 \right) \left( \frac{x + y}{2} \right)  \cdot (x-y) \, \omega_{N,t}(x;y) ,
\nonumber \\
\eqref{eq:Weyl quantization time derivative derivation 4}
&= - i \sum_{j=1}^3 \varepsilon^{-1} \left(  \nabla  \kappa * \vA_{\alpha_t}^{(j)} \right) \left( \frac{x+y}{2} \right)  (x - y)  \{ i \varepsilon \nabla^{(j)}, \omega_{N,t} \}(x;y) .
\end{align}
Plugging these expressions into the equality from above and multiplying by $i \varepsilon$ lead to the first equation in \eqref{eq:Vlasov-Maxwell quantum version}. 
Because of
\begin{align}
\{ i \varepsilon \nabla , \omega_{N,t} \}(x;y)
&= - 2 N  \int v W_{N,t} \left( \frac{x+y}{2} , v \right)  e^{i v \cdot \frac{x-y}{\varepsilon}}\,dv 
\end{align} 
we get
\begin{align}
\{ i \varepsilon \nabla , \omega_{N,t} \}(x;x)
&= - 2 N  \int v W_{N,t} \left(  x , v \right)\,dv.
\end{align} 
Together with \eqref{eq:relation between density for Wigner transform and its Weyl quantization} this enables us to write the charge current in the Vlasov--Maxwell equations \eqref{eq: Vlasov-Maxwell different style of writing} as
\begin{align}
\widetilde{\vJ}_{W_{N,t}, \alpha_t}(x) &= 2 \int  \left( v - (\kappa * \vA_{\alpha_t})(x) \right) W_{N,t}(x,v)\,dv
\nonumber \\
&= - N^{-1}  \{ i \varepsilon \nabla , \omega_{N,t} \}(x;x) - 2 \rho_{\omega_t}(x) (\kappa * \vA_{\alpha_t})(x)  ,
\end{align}
leading to the second equation in \eqref{eq:Vlasov-Maxwell quantum version}.

\end{proof}

 \subsection{Gr\"onwall estimate}

In this section, we prove Theorem \ref{theorem:main theorem}. This will be obtained by means of a Gr\"{o}nwall type estimate. Throughout this section, let $(\omega_{N,t}, \alpha_t)$ and $(\widetilde{\omega}_{N,t}, \widetilde{\alpha}_t)$ denote the solutions of \eqref{eq:Maxwell-Schroedinger equations} and \eqref{eq:Vlasov-Maxwell quantum version}
with initial data $(\omega_N, \alpha)$ and $(\widetilde{\omega}_{N}, \widetilde{\alpha})$, respectively. 

Using Duhamel's formula we write the difference between the mode functions of \eqref{eq:Maxwell-Schroedinger equations} and \eqref{eq:Vlasov-Maxwell quantum version} as
\begin{align}
\label{eq:difference alpha and alpha-tilde Duhamel expansion}
\alpha_t(k,\lambda) - \widetilde{\alpha}_t(k,\lambda)
&= e^{- i \abs{k} t} 
\left(  \alpha(k,\lambda)
- \widetilde{\alpha}(k,\lambda)
\right) 
\nonumber \\
&\quad 
+ i \int_0^t  e^{- i \abs{k} (t-s)}
\sqrt{\frac{4 \pi^3}{\abs{k}}} \mathcal{F}[\kappa](k) \vep_{\lambda}(k) \cdot 
\left( \mathcal{F}[\vJ_{\omega_{N,s}, \alpha_s}](k) - \mathcal{F}[\vJ_{\widetilde{\omega}_{N,s}, \widetilde{\alpha}_s}](k) \right)ds.
\end{align}
By means of \eqref{eq:assumption on the cutoff parameter} and \eqref{eq:Charge current L-infty estimate of the difference} with $(\omega, \alpha) = (\widetilde{\omega}_{N,s}, \widetilde{\alpha}_s)$ and $(\omega' , \alpha') = (\omega_{N,s}, \alpha_s)$ we estimate
\begin{align}
\label{eq:proof main theorem estimate for the difference of the fields}
&\norm{\alpha_t - \widetilde{\alpha}_t}_{\dot{\mathfrak{h}}_{-1/2} \, \cap \, \dot{\mathfrak{h}}_{1/2}}
-  \norm{\alpha - \widetilde{\alpha}}_{\dot{\mathfrak{h}}_{-1/2} \, \cap \, \dot{\mathfrak{h}}_{1/2}}
\nonumber \\
&\quad \leq C \int_0^t 
\norm{ \left( \abs{\cdot}^{-1} + 1 \right) \mathcal{F}[\kappa](k) \left(  \mathcal{F}[\vJ_{\omega_{N,s}, \alpha_s}] - \mathcal{F}[\vJ_{\widetilde{\omega}_{N,s}, \widetilde{\alpha}_s}] \right) }_{L^2(\mathbb{R}^3)} ds
\nonumber \\
&\quad \leq C \int_0^t 
\norm{\left( \abs{\cdot}^{-1} + 1 \right)  \mathcal{F}[\kappa]}_{L^2(\mathbb{R}^3)}
\norm{\left(  \mathcal{F}[\vJ_{\omega_{N,s}, \alpha_s}] - \mathcal{F}[\vJ_{\widetilde{\omega}_{N,s}, \widetilde{\alpha}_s}] \right) }_{L^{\infty}(\mathbb{R}^3)}ds
\nonumber \\
&\quad \leq C  N^{-1} \int_0^t  \Big( \big( 1 +  \norm{\widetilde{\alpha}_s}_{\dot{\mathfrak{h}}_{1/2}} \big) 
 \norm{\omega_{N,s} - \widetilde{\omega}_{N,s}}_{\textfrak{S}_{\varepsilon}^{1,1}}
+ \norm{\omega_{N,s}}_{\textfrak{S}^1}
\norm{\alpha_s - \widetilde{\alpha}_s}_{\dot{\mathfrak{h}}_{-1/2} \, \cap \, \dot{\mathfrak{h}}_{1/2}} \Big)ds.
\end{align}

Next, we are going to estimate the difference between particle operators $\omega_{N,t}$ and $\widetilde{\omega}_{N,t}$. To this end, it is convenient to define the operator 
\begin{align}
D &= \left( 1 - \varepsilon^2 \Delta \right)^{1/2} 
\end{align}
and the $\varepsilon$-dependent Sobolev norm
\begin{align}
\norm{\omega}_{\mathfrak{S}_{\varepsilon}^{1,1}} 
&= \norm{\left( 1 - \varepsilon^2 \Delta \right)^{1/2} \omega \left( 1 - \varepsilon^2 \Delta \right)^{1/2}}_{\mathfrak{S}^1}
= \norm{D \omega D}_{\mathfrak{S}^1} .
\end{align}
Note that for notational convenience we refrain from indicating the dependence of the operator on $\varepsilon$. For a rigorous treatment of the Duhamel formula below we introduce a regularized version of $D$ which is defined by
\begin{align}
\label{eq:regularized fractional Laplacian}
D_{\leq n} &= \frac{\left( 1 - \varepsilon^2 \Delta \right)^{1/2}}{\left( 1 - \varepsilon^2 \Delta / n^2 \right)^{1/2}}
\quad \text{with} \quad n \in \mathbb{N} .
\end{align}
Note that $D_{\leq n}$ is a bounded operator with operator norm $\norm{D_{\leq n}}_{\textfrak{S}^{\infty}} \leq n$
%\textcolor{blue}{(because $\frac{1}{\left( 1 - \varepsilon^2 \Delta / n^2 \right)^{1/2}} = \frac{1}{n^{-1} \left( n^2 - \varepsilon^2 \Delta  \right)^{1/2}} \leq n \left( 1 - \varepsilon^2 \Delta \right)^{-1/2}$)}
which allows us to write the $\varepsilon$-dependent Sobolev norm of $\omega \in \textfrak{S}^{1,1}$ as
\begin{align}
\label{eq:Sobolev tace norm as regularized limit}
\norm{\omega}_{\mathfrak{S}_{\varepsilon}^{1,1}} 
&= \lim_{n \rightarrow \infty} \norm{D_{\leq n} \omega D_{\leq n}}_{\textfrak{S}^{1}} .
\end{align}
%\textcolor{blue}{The proof of \eqref{eq:Sobolev tace norm as regularized limit} is given in Section \eqref{section:additional details} and will not be given in the final version.}
In addition, we define the time-dependent Hamiltonian
\begin{align}
\label{eq:definition Hamiltonian with vector potential and interaction potential}
h(t) = \left( - i \varepsilon \nabla - \kappa * \vA_{\alpha_t} \right)^2 + K * \rho_{\omega_{N,t}} 
\end{align}
and the two parameter group $U(t;s)$ satisfying
\begin{align}
\label{eq:definition unitary with vector potential and interaction potential}
i \varepsilon \partial_t U(t;s) = h(t) U(t;s)
\quad \text{and} \quad
U(s;s) = 1 .
\end{align}
Moreover, let $U(t) = U(t;0)$. Using \eqref{eq:Maxwell-Schroedinger equations},  \eqref{eq:Vlasov-Maxwell quantum version} and the linearity of the mappings $\alpha \mapsto \vA_{\alpha}$ and $\omega \mapsto \rho_{\omega}$
we compute 
\begin{align}
&i \varepsilon \partial_t 
\left( U^*(t) D_{\leq n} \left( \omega_{N,t} - \widetilde{\omega}_{N,t} \right) D_{\leq n} U(t) \right)
\nonumber \\
&\quad = - U^*(t) \left[ h(t) , D_{\leq n} \left( \omega_{N,t} - \widetilde{\omega}_{N,t} \right) D_{\leq n} \right] U(t)
\nonumber \\
&\qquad +  U^*(t) D_{\leq n} \left( i \varepsilon \partial_t  \left( \omega_{N,t} - \widetilde{\omega}_{N,t} \right) \right) D_{\leq n} U(t)
\nonumber \\
&\quad = U^*(t) D_{\leq n} \left[ \left( (\kappa * \vA_{\alpha_t})^2 -  (\kappa * \vA_{\widetilde{\alpha}_t})^2 + 2 \kappa * \vA_{\alpha_t - \widetilde{\alpha}_t} \cdot i \varepsilon \nabla \right) , \widetilde{\omega}_{N,t} \right]  D_{\leq n} U(t)
\nonumber \\
&\qquad
+ U^*(t) D_{\leq n} \left[ K * \rho_{\omega_{N,t} - \widetilde{\omega}_{N,t}} , \widetilde{\omega}_{N,t} \right] D_{\leq n} U(t)
\nonumber \\
&\qquad
- U^*(t) D_{\leq n} \left[ X_{\omega_{N,t}} , \omega_{N,t} \right] D_{\leq n} U(t)
\nonumber \\
&\qquad 
+ U^*(t) D_{\leq n} \left(  \left[ K* \rho_{\widetilde{\omega}_{N,t}}, \widetilde{\omega}_{N,t} \right] - B_t \right) D_{\leq n} U(t)
\nonumber \\
&\qquad 
+ U^*(t) D_{\leq n} \left(  \left[ (\kappa * \vA_{\widetilde{\alpha}_t})^2 + 2 \kappa * \vA_{\widetilde{\alpha}_t} \cdot i \varepsilon \nabla , \widetilde{\omega}_{N,t} \right]  - C_t \right) D_{\leq n} U(t)
\nonumber \\
&\qquad - U^*(t) D_{\leq n} \left( \omega_{N,t} - \widetilde{\omega}_{N,t} \right) \left[ h(t) , D_{\leq n} \right] U(t)
\nonumber \\
&\qquad - U^*(t) \left[ h(t) , D_{\leq n} \right] \left( \omega_{N,t} - \widetilde{\omega}_{N,t} \right) D_{\leq n}  U(t) .
\end{align}
If we then apply Duhamel's formula, take the trace norm, take the limit $n \rightarrow \infty$ and use \eqref{eq:Sobolev tace norm as regularized limit}, as well as $\norm{D_{\leq n} D^{-1}}_{\textfrak{S}^{\infty}} \leq 1$, we obtain
\begin{subequations}
\begin{align}
\label{eq: Sobolev norm estimate omega 1}
\norm{\omega_{N,t} - \widetilde{\omega}_{N,t} }_{\textfrak{S}_{\varepsilon}^{1,1}}
&\leq  \norm{\omega_{N} - \widetilde{\omega}_{N} }_{\textfrak{S}_{\varepsilon}^{1,1}}
\\
\label{eq: Sobolev norm estimate omega 2}
&\quad + \varepsilon^{-1} \int_0^t  
\norm{\left[ \left(  (\kappa * \vA_{\alpha_s})^2 - (\kappa * \vA_{\widetilde{\alpha}_s})^2  \right) , \widetilde{\omega}_{N,s} \right]}_{\textfrak{S}_{\varepsilon}^{1,1}}ds
\\
\label{eq: Sobolev norm estimate omega 3}
&\quad + 2 \varepsilon^{-1} \int_0^t  
\norm{\left[  \kappa * \vA_{\alpha_s - \widetilde{\alpha}_s}  \cdot i \varepsilon \nabla  , \widetilde{\omega}_{N,s} \right]  }_{\textfrak{S}_{\varepsilon}^{1,1}}ds
\\
\label{eq: Sobolev norm estimate omega 4}
&\quad + \varepsilon^{-1} \int_0^t 
\norm{\left[ K * \rho_{\omega_{N,s} - \widetilde{\omega}_{N,s}} , \widetilde{\omega}_{N,s} \right]}_{\textfrak{S}_{\varepsilon}^{1,1}}ds
\\
\label{eq: Sobolev norm estimate omega 5}
&\quad + \varepsilon^{-1} \int_0^t  
\norm{\left[ X_{\omega_{N,s}} , \omega_{N,s} \right] }_{\textfrak{S}_{\varepsilon}^{1,1}}ds
\\
\label{eq: Sobolev norm estimate omega 6}
&\quad + \varepsilon^{-1} \int_0^t 
\norm{ \left[ K* \rho_{\widetilde{\omega}_{N,s}}, \widetilde{\omega}_{N,s} \right] - B_s }_{\textfrak{S}_{\varepsilon}^{1,1}}ds
\\
\label{eq: Sobolev norm estimate omega 7}
&\quad + \varepsilon^{-1} \int_0^t 
\norm{\left[ (\kappa * \vA_{\widetilde{\alpha}_s})^2 + 2 \kappa * \vA_{\widetilde{\alpha}_s} \cdot i \varepsilon \nabla , \widetilde{\omega}_{N,s} \right]  - C_s }_{\textfrak{S}_{\varepsilon}^{1,1}}ds
\\
\label{eq: Sobolev norm estimate omega 8}
&\quad + \varepsilon^{-1} \lim_{n \rightarrow \infty} \int_0^t  
\norm{D \left( \omega_{N,s} - \widetilde{\omega}_{N,s} \right) \left[ h(s) , D_{\leq n} \right] }_{\textfrak{S}^1}ds
\\
\label{eq: Sobolev norm estimate omega 9}
&\quad + \varepsilon^{-1} \lim_{n \rightarrow \infty} \int_0^t 
\norm{\left[ h(s) , D_{\leq n} \right] \left( \omega_{N,s} - \widetilde{\omega}_{N,s} \right) D}_{\textfrak{S}^1}ds .
\end{align}
\end{subequations}
In the following, we estimate each line separately.

\paragraph{The term \eqref{eq: Sobolev norm estimate omega 2}:}
Using
\begin{align}
\label{eq:commutator between frational Laplacians}
D [A, B] D &= [A, DBD] +  [D,A] BD + DB[D, A] 
\end{align}
and writing the difference of the vector potentials as 
\begin{align}
(\kappa * \vA_{\alpha_s})^2 - (\kappa * \vA_{\widetilde{\alpha}_s})^2 
&= \kappa * \vA_{\alpha_s + \widetilde{\alpha}_s} \cdot
\kappa * \vA_{\alpha_s - \widetilde{\alpha}_s}
\end{align}
we obtain
\begin{subequations}
\begin{align}
&\norm{\left[ \left( (\kappa * \vA_{\alpha_s})^2 - (\kappa * \vA_{\widetilde{\alpha}_s})^2  \right) , \widetilde{\omega}_{N,s} \right]}_{\textfrak{S}_{\varepsilon}^{1,1}}
\nonumber \\
&\quad \leq 
\norm{\left[ \left( (\kappa * \vA_{\alpha_s})^2 - (\kappa * \vA_{\widetilde{\alpha}_s})^2  \right) , D \widetilde{\omega}_{N,s} D \right]}_{\textfrak{S}^{1}}
\nonumber \\
&\qquad +
\norm{\left[  D, \left( (\kappa * \vA_{\alpha_s})^2 - (\kappa * \vA_{\widetilde{\alpha}_s})^2  \right) \right]  \widetilde{\omega}_{N,s} D }_{\textfrak{S}^{1}}
\nonumber \\
&\qquad +
\norm{D \widetilde{\omega}_{N,s} \left[  D, \left( (\kappa * \vA_{\alpha_s})^2 - (\kappa * \vA_{\widetilde{\alpha}_s})^2  \right) \right] }_{\textfrak{S}^{1}}
\nonumber \\
\label{eq: Sobolev norm estimate omega 2a}
&\quad \leq
\norm{\kappa * \vA_{\alpha_s + \widetilde{\alpha}_s}}_{\textfrak{S}^{\infty}}
\norm{\left[ \kappa * \vA_{\alpha_s - \widetilde{\alpha}_s} , D \widetilde{\omega}_{N,s} D \right]}_{\textfrak{S}^{1}}
\\
\label{eq: Sobolev norm estimate omega 2b}
&\qquad +
\norm{\kappa * \vA_{\alpha_s - \widetilde{\alpha}_s}}_{\textfrak{S}^{\infty}}
\norm{\left[ \kappa * \vA_{\alpha_s + \widetilde{\alpha}_s} , D \widetilde{\omega}_{N,s} D \right]}_{\textfrak{S}^{1}}
\\
\label{eq: Sobolev norm estimate omega 2c}
&\qquad +
\norm{\left[  D, \left( (\kappa * \vA_{\alpha_s})^2 - (\kappa * \vA_{\widetilde{\alpha}_s})^2 \right) \right] }_{\textfrak{S}^{\infty}}
\left(
\norm{\widetilde{\omega}_{N,s} D}_{\textfrak{S}^{1}}
+
\norm{D \widetilde{\omega}_{N,s} }_{\textfrak{S}^{1}}
\right) .
\end{align}
\end{subequations}
Using the Fourier expansion of the regularized vector potential
\begin{align}
\label{eq:Fourier expansion regularized vector potential}
\kappa * \vA_{\alpha}(x) =  2 \text{Re} \sum_{\lambda = 1,2} \int  \frac{\mathcal{F}[\kappa](k)}{\sqrt{2 \abs{k}}} \vep_{\lambda}(k) e^{i k x} \alpha(k,\lambda)\,dk,
\end{align}
\eqref{estimate:commutator function with operator in trace norm},
the Cauchy-Schwarz inequality and \eqref{eq:assumption on the cutoff parameter} we estimate
\begin{align}
\label{eq:Sobolev trace norm estimate commuator vector potential with omega}
\norm{ \left[ \kappa * \vA_{\alpha} , D \widetilde{\omega}_{N,s} D \right] }_{\textfrak{S}^1}
&\leq 2
\sum_{\lambda = 1,2} \int  \abs{k}^{1/2} \abs{\mathcal{F}[\kappa](k)} \abs{\alpha(k,\lambda)}   
\norm{  \left[ \hat{x} , D \widetilde{\omega}_{N,s} D \right] }_{\textfrak{S}^1}dk
\nonumber \\
&\leq  C  \norm{\mathcal{F}[\kappa](k)}_{L^2(\mathbb{R}^3)} \min \left\{ \norm{\alpha}_{\mathfrak{h}} , \norm{\alpha}_{\dot{\mathfrak{h}}_{1/2}} \right\}
\norm{  \left[ \hat{x} , D \widetilde{\omega}_{N,s} D \right] }_{\textfrak{S}^1}
\nonumber \\
&\leq  C  
\min \left\{ \norm{\alpha}_{\mathfrak{h}} , \norm{\alpha}_{\dot{\mathfrak{h}}_{1/2}} \right\}
\norm{  \left[ \hat{x} , D \widetilde{\omega}_{N,s} D \right] }_{\textfrak{S}^1} .
\end{align}
Together with \eqref{eq:vector potential L-infty estimate} this leads to
\begin{align}
\eqref{eq: Sobolev norm estimate omega 2a} +   \eqref{eq: Sobolev norm estimate omega 2b}
&\leq  C  \left( \norm{\alpha_s}_{\dot{\mathfrak{h}}_{1/2}} + \norm{\widetilde{\alpha}_s}_{\dot{\mathfrak{h}}_{1/2}} \right) \norm{\alpha_s - \widetilde{\alpha}_s}_{\mathfrak{h}}
\norm{  \left[ \hat{x} , D \widetilde{\omega}_{N,s} D \right] }_{\textfrak{S}^1}
\end{align}
By means of \eqref{eq:Fourier expansion regularized vector potential}, \eqref{estimate:commutator fractional laplacian with potential}, \eqref{eq:vector potential L-infty estimate} and \eqref{eq:assumption on the cutoff parameter} we estimate 
\begin{align}
&\norm{\left[  D, \left( (\kappa * \vA_{\alpha_s})^2 - (\kappa * \vA_{\widetilde{\alpha}_s})^2 \right) \right] }_{\textfrak{S}^{\infty}}
\nonumber \\
&\quad \leq 
\norm{\kappa * \vA_{\alpha_s+ \tilde{\alpha}_s}}_{\textfrak{S}^{\infty}}
\norm{\left[  D, \kappa * \vA_{\alpha_s - \tilde{\alpha}_s} \right] }_{\textfrak{S}^{\infty}} 
+
\norm{\kappa * \vA_{\alpha_s - \tilde{\alpha}_s} }_{\textfrak{S}^{\infty}}
\norm{\left[  D, \kappa * \vA_{\alpha_s + \tilde{\alpha}_s} \right] }_{\textfrak{S}^{\infty}}
\nonumber \\
&\quad \leq C \varepsilon \norm{\alpha_s + \widetilde{\alpha}_s}_{\mathfrak{h}} \sum_{\lambda = 1,2} \int  \abs{k}^{1/2} \abs{\mathcal{F}[\kappa](k)} \abs{\alpha_s(k,\lambda) - \widetilde{\alpha}_s(k,\lambda)}\,dk
\nonumber \\
&\qquad + C \varepsilon \norm{\alpha_s - \widetilde{\alpha}_s}_{\mathfrak{h}} \sum_{\lambda = 1,2} \int  \abs{k}^{1/2} \abs{\mathcal{F}[\kappa](k)} \abs{\alpha_s(k,\lambda) + \widetilde{\alpha}_s(k,\lambda)}\,dk
\nonumber \\
&\quad \leq  C \varepsilon \left(  \norm{\alpha_s}_{\dot{\mathfrak{h}}_{1/2}}
+ \norm{\widetilde{\alpha}_s}_{\dot{\mathfrak{h}}_{1/2}} \right) \norm{\alpha_s - \widetilde{\alpha}_s}_{\mathfrak{h}} .
\end{align}
In total, we get
\begin{align}
\norm{\left[ \left( \vAk^2(s) - \vAkt^2(s) \right) , \widetilde{\omega}_{N,s} \right]}_{\textfrak{S}_{\varepsilon}^{1,1}}
&\leq  C  \left( \norm{\alpha_s}_{\dot{\mathfrak{h}}_{1/2}} + \norm{\widetilde{\alpha}_s}_{\dot{\mathfrak{h}}_{1/2}} \right) \norm{\alpha_s - \widetilde{\alpha}_s}_{\mathfrak{h}}
\bigg(
\norm{  \left[ \hat{x} , D \widetilde{\omega}_{N,s} D \right] }_{\textfrak{S}^1}
\nonumber \\
&\qquad +  \varepsilon \norm{\widetilde{\omega}_{N,s} D}_{\textfrak{S}^{1}}
+ \varepsilon \norm{D \widetilde{\omega}_{N,s} }_{\textfrak{S}^{1}}
\bigg) .
\end{align}
Using Lemma \ref{lemma:trace norm estimates with gradients and commutators} we, finally, get
\begin{align}
\eqref{eq: Sobolev norm estimate omega 2}
&\leq 
C N \int_0^t  \sum_{j=0}^6 \varepsilon^{j} \norm{\widetilde{W}_{N,s}}_{H_6^{j+1}}  \left( \norm{\alpha_s}_{\dot{\mathfrak{h}}_{1/2}} + \norm{\widetilde{\alpha}_s}_{\dot{\mathfrak{h}}_{1/2}} \right) \norm{\alpha_s - \widetilde{\alpha}_s}_{\mathfrak{h}}ds .
\end{align}

\paragraph{The term \eqref{eq: Sobolev norm estimate omega 3}:}
Using \eqref{eq:commutator between frational Laplacians} and $\left( \nabla \cdot  \kappa * \vA_{\alpha_s - \widetilde{\alpha}_s} \right) = 0$ we obtain
\begin{align}
D \left[  \kappa * \vA_{\alpha_s - \widetilde{\alpha}_s} \cdot i \varepsilon \nabla  , \widetilde{\omega}_{N,s} \right]  D
&=
\left[  \kappa * \vA_{\alpha_s - \widetilde{\alpha}_s}  , D \widetilde{\omega}_{N,s} D \cdot i \varepsilon \nabla\right]   
\nonumber \\
&\quad +
\kappa * \vA_{\alpha_s - \widetilde{\alpha}_s} \cdot 
\left[  i \varepsilon \nabla  , D \widetilde{\omega}_{N,s} D \right]  
\nonumber \\
&\quad +
\left[ D , \kappa * \vA_{\alpha_s - \widetilde{\alpha}_s} \right]
\cdot i \varepsilon \nabla  \widetilde{\omega}_{N,s}  D
\nonumber \\
&\quad + 
D \widetilde{\omega}_{N,s}  i \varepsilon \nabla \cdot \left[ D , \kappa * \vA_{\alpha_s - \widetilde{\alpha}_s} \right]
\end{align}
and
\begin{align}
\norm{\left[  \kappa * \vA_{\alpha_s - \widetilde{\alpha}_s} \cdot i \varepsilon \nabla  , \widetilde{\omega}_{N,s} \right]  }_{\textfrak{S}_{\varepsilon}^{1,1}}
&\leq \varepsilon \norm{\left[ D , \kappa * \vA_{\alpha_s - \widetilde{\alpha}_s} \right]}_{\textfrak{S}^{\infty}}
\left(
\norm{ \nabla  \widetilde{\omega}_{N,s}  D}_{\textfrak{S}^1}
+ \norm{D \widetilde{\omega}_{N,s}  \nabla}_{\textfrak{S}^1}
\right)
\nonumber \\
&\quad + \varepsilon
\norm{\kappa * \vA_{\alpha_s - \widetilde{\alpha}_s}}_{\textfrak{S}^{\infty}}
\norm{\left[  \nabla  , D \widetilde{\omega}_{N,s} D \right]  }_{\textfrak{S}^1}
\nonumber \\
&\quad + \varepsilon
\norm{\left[  \kappa * \vA_{\alpha_s - \widetilde{\alpha}_s}  , D \widetilde{\omega}_{N,s} D  \nabla \right] }_{\textfrak{S}^1} .
\end{align}
By \eqref{estimate:commutator fractional laplacian with potential}, \eqref{eq:estimates for the vector potential and interaction 0} and \eqref{eq:Sobolev trace norm estimate commuator vector potential with omega} with $D \widetilde{\omega}_{N,s} D$ being replaced by $D\widetilde{\omega}_{N,s} D \nabla$ we obtain in analogy to the estimates above
\begin{align}
\norm{\left[  \kappa * \vA_{\alpha_s - \widetilde{\alpha}_s} \cdot i \varepsilon \nabla  , \widetilde{\omega}_{N,s} \right]  }_{\textfrak{S}_{\varepsilon}^{1,1}}
&\leq C \varepsilon \norm{\alpha_s - \widetilde{\alpha}_s}_{\mathfrak{h}} 
\Big(
\norm{ \varepsilon \nabla  \widetilde{\omega}_{N,s}  D}_{\textfrak{S}^1}
+ \norm{D \widetilde{\omega}_{N,s}  \varepsilon \nabla}_{\textfrak{S}^1}
\nonumber \\
&\qquad \quad 
+ \norm{\left[  \nabla  , D \widetilde{\omega}_{N,s} D \right]  }_{\textfrak{S}^1}
+ \norm{  \left[ \hat{x} , D \widetilde{\omega}_{N,s} D   \nabla \right] }_{\textfrak{S}^1}
\Big) .
\end{align}
Together with Lemma \ref{lemma:trace norm estimates with gradients and commutators} this leads to
\begin{align}
\eqref{eq: Sobolev norm estimate omega 3}
&\leq 
C  N \int_0^t 
\sum_{j=0}^7 \varepsilon^{j} \norm{\widetilde{W}_{N,s}}_{H_7^{j+1}}
\norm{\alpha_s - \widetilde{\alpha}_s}_{\mathfrak{h}} ds .
\end{align}

\paragraph{The term \eqref{eq: Sobolev norm estimate omega 4}:}

Using \eqref{eq:commutator between frational Laplacians}
we obtain
\begin{align}
D  \left[ K * \rho_{\omega_{N,s} - \widetilde{\omega}_{N,s}} , \widetilde{\omega}_{N,s} \right]  D
&= 
D \widetilde{\omega}_{N,s} 
\left[ D , K * \rho_{\omega_{N,s} - \widetilde{\omega}_{N,s}} \right] 
\nonumber \\
&\quad +
\left[ D , K * \rho_{\omega_{N,s} - \widetilde{\omega}_{N,s}} \right] \widetilde{\omega}_{N,s} D
\nonumber \\
&\quad +
\left[ K * \rho_{\omega_{N,s} - \widetilde{\omega}_{N,s}} , D \, \widetilde{\omega}_{N,s} D \right]  
\end{align}
and estimate 
\begin{align}
\eqref{eq: Sobolev norm estimate omega 4}
&\leq  \varepsilon^{-1} \int_0^t 
\norm{\left[ D , K * \rho_{\omega_{N,s} - \widetilde{\omega}_{N,s}} \right]}_{\textfrak{S}^{\infty}}
\left( \norm{D \widetilde{\omega}_{N,s}}_{\textfrak{S}^1}
+
\norm{\widetilde{\omega}_{N,s} D}_{\textfrak{S}^1}
\right)\,ds
\nonumber \\
&\quad +  \varepsilon^{-1} \int_0^t 
\norm{\left[ K * \rho_{\omega_{N,s} - \widetilde{\omega}_{N,s}} , D \, \widetilde{\omega}_{N,s} D \right]  }_{\textfrak{S}^1}ds .
\end{align}

By means of \eqref{estimate:commutator fractional laplacian with potential}, \eqref{eq:Fourier transform of the potential}, \eqref{eq: estimate to bound the density by the L1-norm of omega} and  \eqref{eq:assumption on the cutoff parameter}
we get 
\begin{align}
\norm{\left[ D , K * \rho_{\omega_{N,s} - \widetilde{\omega}_{N,s}} \right]}_{\textfrak{S}^{\infty}}
&\leq C \varepsilon \int  \abs{k}  \abs{\mathcal{F}[K](k)}
\abs{ \mathcal{F}[\rho_{\omega_{N,s} - \widetilde{\omega}_{N,s}}](k) }\,dk 
\nonumber \\
&\leq C \varepsilon \norm{\rho_{\omega_{N,s} - \widetilde{\omega}_{N,s}}}_{L^1(\mathbb{R}^3)}  \norm{\abs{\cdot}^{-1/2} \mathcal{F}[\kappa]}_{L^2(\mathbb{R}^3)}^2
\nonumber \\
&\leq C N^{-1} \varepsilon \norm{\omega_{N,s} - \widetilde{\omega}_{N,s}}_{\textfrak{S}^1}
\end{align}
Similarly, we get 
\begin{align}
\norm{\left[ K * \left( \rho_s - \widetilde{\rho}_s \right) , D \, \widetilde{\omega}_{N,s} D \right]  }_{\textfrak{S}^1}
&\leq C  \norm{\left[ \hat{x} , D \, \widetilde{\omega}_{N,s} D \right]  }_{\textfrak{S}^1}
\int  \abs{k}  \abs{\mathcal{F}[K](k)}
\abs{ \mathcal{F}[\rho_{\omega_{N,s} - \widetilde{\omega}_{N,s}}](k) }\,dk
\nonumber \\
&\leq C N^{-1} \varepsilon \norm{\omega_{N,s} - \widetilde{\omega}_{N,s}}_{\textfrak{S}^1}
\norm{\left[ \hat{x} , D \, \widetilde{\omega}_{N,s} D \right]  }_{\textfrak{S}^1}
\end{align}
by using \eqref{estimate:commutator function with operator in trace norm} and the Fourier decomposition
\begin{align}
\mathcal{F}[K *  \rho_{\omega_{N,s} - \widetilde{\omega}_{N,s}}](k)
&= (2 \pi)^{3/2} \mathcal{F}[K](k)
\mathcal{F}[\rho_{\omega_{N,s} - \widetilde{\omega}_{N,s}}](k) .
\end{align}
Together with Lemma \ref{lemma:trace norm estimates with gradients and commutators} this shows
\begin{align}
\eqref{eq: Sobolev norm estimate omega 4}
&\leq C N^{-1} \int_0^t  \norm{\omega_{N,s} - \widetilde{\omega}_{N,s}}_{\textfrak{S}^1}
\left( \norm{D \widetilde{\omega}_{N,s}}_{\textfrak{S}^1}
+
\norm{\widetilde{\omega}_{N,s} D}_{\textfrak{S}^1}
+ \varepsilon^{-1}
\norm{\left[ \hat{x} , D \, \widetilde{\omega}_{N,s} D \right]  }_{\textfrak{S}^1}
\right)\,ds
\nonumber \\
&\leq C  \int_0^t  \norm{\omega_{N,s} - \widetilde{\omega}_{N,s}}_{\textfrak{S}^1}
\sum_{j=0}^6 \varepsilon^j \norm{\widetilde{W}_{N,s}}_{H_6^{j+1}}ds .
\end{align}

\paragraph{The term \eqref{eq: Sobolev norm estimate omega 5}:}
Using the Fourier decomposition
\begin{align}
K(x-y) &= \frac{1}{(2 \pi)^{3/2}} \int e^{i k (x-y)} \widehat{K}(k)\,dk
\end{align}
we write the exchange term as 
\begin{align}
X_s &= \frac{1}{(2 \pi)^{3/2} N} \int  \widehat{K}(k) e^{i k \hat{x}} \omega_{N,s} e^{- i k \hat{x}}dk
\end{align}
and estimate
\begin{align}
\eqref{eq: Sobolev norm estimate omega 5}
&= \varepsilon^{-1} \int_0^t  
\norm{ D \left[ X_s , \omega_{N,s} \right] D}_{\textfrak{S}^{1}}ds
\nonumber \\
&\leq \varepsilon^{-1} N^{-1}
\int_0^t   \int  \abs{\widehat{K}(k)}
\norm{ D \left[ e^{i k \hat{x}} \omega_{N,s} e^{- i k \hat{x}} , \omega_{N,s} \right]  D}_{\textfrak{S}^{1}} dk\,ds
\nonumber \\
&\leq \varepsilon^{-1} N^{-1}
\int_0^t   \int  \abs{\widehat{K}(k)}
\bigg(
\norm{ D e^{i k \hat{x}} \omega_{N,s} e^{- i k \hat{x}}  \omega_{N,s}   D}_{\textfrak{S}^{1}}
\nonumber \\
&\qquad \qquad \qquad \qquad \qquad \qquad \qquad  +
\norm{ D \omega_{N,s}  e^{i k \hat{x}} \omega_{N,s} e^{- i k \hat{x}}   D}_{\textfrak{S}^{1}}
\bigg)\,dk\,ds
\nonumber \\
&\leq \varepsilon^{-1} N^{-1}
\int_0^t   \norm{\omega_{N,s}^{1/2}}_{\textfrak{S}^{\infty}}^2  \int  \abs{\widehat{K}(k)}
\bigg(
\norm{D e^{i k \hat{x}} \omega_{N,s}^{1/2}}_{\textfrak{S}^{2}}
\norm{\omega_{N,s}^{1/2} D}_{\textfrak{S}^{2}}
\nonumber \\
&\qquad \qquad \qquad \qquad \qquad \qquad \qquad  +
 \norm{D \omega_{N,s}^{1/2} }_{\textfrak{S}^{2}}
\norm{\omega_{N,s}^{1/2} e^{i k \hat{x}} D}_{\textfrak{S}^{2}}
\bigg)\,dk\,ds .
\end{align}
By means of $\norm{ D   \omega_{N,s}^{1/2} }_{\textfrak{S}^{2}}^2
= \tr \left( D^2 \omega_{N,s}  \right)$ and
\begin{align}
\norm{ D  e^{i k \hat{x}} \omega_{N,s}^{1/2} }_{\textfrak{S}^{2}}^2 
&=  \tr \left( \omega_{N,s} e^{- i k \hat{x} } D^2 e^{i k \hat{x}} \right) 
\nonumber \\
&=  \tr \left(  \left( 1 - \varepsilon^2 \Delta + \varepsilon^2 k^2 - 2 \varepsilon k \cdot i \varepsilon \nabla \right)  \omega_{N,s} \right)
\nonumber \\
&\leq 2  \tr \left(  \left( 1 - \varepsilon^2 \Delta + \varepsilon^2 k^2  \right)  \omega_{N,s} \right)
\nonumber \\
&= 2  \tr \left(  \left( D^2 + \varepsilon^2 k^2  \right)  \omega_{N,s} \right),
\end{align}
together with \eqref{eq:Fourier transform of the potential} and \eqref{eq:assumption on the cutoff parameter} we obtain
\begin{align}
\eqref{eq: Sobolev norm estimate omega 5}
&\leq C \varepsilon^{-1} N^{-1}
\int_0^t   \norm{\omega_{N,s}^{1/2}}_{\textfrak{S}^{\infty}}^2  \int  \abs{\widehat{K}(k)}
\sqrt{\tr \left(  \left( D^2 + \varepsilon^2 k^2  \right)  \omega_{N,s} \right)}
\sqrt{\tr \left( D^2 \omega_{N,s}  \right)}\,dk\,ds
\nonumber \\
&\leq C \varepsilon^{-1} N^{-1}
\int_0^t   \norm{\omega_{N,s}^{1/2}}_{\textfrak{S}^{\infty}}^2
\Big( \tr \left( \omega_{N,s} \right) + \tr \left( - \varepsilon^2 \Delta \omega_{N,s} \right) \Big)\,ds .
\end{align}
Note that $\omega_{N,s}$ is a trace class operator, ensuring the existence of a spectral set $\{ \lambda_j, \varphi_j \}_{j \in \mathbb{N}}$ with $\lambda_j \geq 0$ for all $j \in \mathbb{N}$ such that $\omega_{N,s} = \sum_{j \in \mathbb{N}} \ket{\varphi_j} \bra{\varphi_j}$. This allows us to estimate the operator norm of $\omega_{N,s}$ by
\begin{align}
\norm{\omega_{N,s}^{1/2}}_{\textfrak{S}^{\infty}}^2
&\leq \sup_{j \in \mathbb{N}} \{ \lambda_j  \}
\leq \Big( \sum_{j \in \mathbb{N}} \lambda_j^3 \Big)^{1/3}
= \norm{\omega_{N,s}}_{\textfrak{S}^3} 
\end{align}
and leads to
\begin{align}
\eqref{eq: Sobolev norm estimate omega 5}
&\leq  C \varepsilon^{-1} N^{-1}
\int_0^t   \norm{\omega_{N,s}}_{\textfrak{S}^3} 
\Big( \norm{\omega_{N,s}}_{\textfrak{S}^1}  + \tr \left( - \varepsilon^2 \Delta \omega_{N,s} \right) \Big)\,ds
\end{align}
With this regard note that $\norm{\omega_{N,s}}_{\textfrak{S}^3} 
= \norm{\omega_{N,0}}_{\textfrak{S}^3}$ because of 
\begin{align}
i \varepsilon \frac{d}{dt} \norm{\omega_{N,s}}_{\textfrak{S}^3}^3 &= 
 \tr \left( \left[ \left( - i \varepsilon \nabla - \kappa * \vA_{\alpha_s} \right)^2  + K * \rho_{\omega_s} - X_{\omega_s}  , \omega_{N,s}^3 \right] \right)
= 0
\end{align}
and that $\norm{\omega_{N,0}}_{\textfrak{S}^3}
= \tr \left( \omega_{N,0}^3 \right)^{1/3} \leq \norm{\omega_{N,0}}_{\textfrak{S}^{\infty}}^{2/3} \norm{\omega_{N,0}}_{\textfrak{S}^{1}}^{1/3}$. In combination with \eqref{eq:estimate for the MS energy functional 2}, the conservation of mass and energy, and the assumptions on the initial data we get
\begin{align}
\eqref{eq: Sobolev norm estimate omega 5}
&\leq  C \varepsilon^{-1} N^{-2/3} \left( \mathcal{E}^{\rm{MS}}[\omega_0,\alpha_0] + N \right) t .
\end{align}

\paragraph{The term \eqref{eq: Sobolev norm estimate omega 6}:}

%\tdn{The treatment of this term does not look optimal and we maybe should discuss how to improve it?}

The next term, will be estimated by similar means as in \cite[Chapter 3]{BPSS2016}. Using
\begin{align}
\norm{\left( 1 - \varepsilon^2 \Delta \right)^{-1} \left( 1 + x^2 \right)^{-1}}_{\textfrak{S}^2}
&\leq C \sqrt{N} 
\end{align}
we obtain
\begin{align}
\eqref{eq: Sobolev norm estimate omega 6}
&\leq C N^{1/2} \varepsilon^{-1} \int_0^t 
\norm{ \left( 1 + x^2 \right) D^3 \left( \left[ K* \rho_{\widetilde{\omega}_{N,s}}, \widetilde{\omega}_{N,s} \right] - B_s \right) D }_{\textfrak{S}^{2}}ds
\end{align}
Since
\begin{align}
\norm{AD}_{\textfrak{S}^2}^2 
&= \norm{A}_{\textfrak{S}^2}^2 
+ \norm{A i \varepsilon \nabla}_{\textfrak{S}^2}^2 ,
\end{align}
\begin{align}
\norm{DA}_{\textfrak{S}^2}^2 
&= \norm{A}_{\textfrak{S}^2}^2 
+ \norm{i \varepsilon \nabla A}_{\textfrak{S}^2}^2 ,
\end{align}
\begin{align}
\norm{\left[D , (1+ x^2) \right] f}_{L^2(\mathbb{R}^3)} &\leq C \norm{\left( 1 + x^2 \right) f}_{L^2(\mathbb{R}^3)}
\end{align}
and
\begin{align}
\norm{i \varepsilon \nabla \left( 1 + x^2 \right) f}_{L^2(\mathbb{R}^3)} 
&\leq \norm{\left( 1 + x^2 \right) f}_{L^2(\mathbb{R}^3)}
+ \norm{\left( 1 + x^2 \right) i \varepsilon \nabla f}_{L^2(\mathbb{R}^3)}
\end{align}
we have
\begin{align}
&\norm{ \left( 1 + x^2 \right) D^3 \left( \left[ K* \rho_{\widetilde{\omega}_{N,s}}, \widetilde{\omega}_{N,s} \right] - B_s \right) D }_{\textfrak{S}^{2}}
\nonumber \\
&\quad \leq 
C \norm{ \left( 1 + x^2 \right) D^2 \left( \left[ K* \rho_{\widetilde{\omega}_{N,s}}, \widetilde{\omega}_{N,s} \right] - B_s \right) }_{\textfrak{S}^{2}}
\nonumber \\
&\qquad + C
\norm{ \left( 1 + x^2 \right) i \varepsilon \nabla D^2 \left( \left[ K* \rho_{\widetilde{\omega}_{N,s}}, \widetilde{\omega}_{N,s} \right] - B_s \right) }_{\textfrak{S}^{2}}
\nonumber \\
&\qquad +
C \norm{ \left( 1 + x^2 \right) D^2 \left( \left[ K* \rho_{\widetilde{\omega}_{N,s}}, \widetilde{\omega}_{N,s} \right] - B_s \right) i \varepsilon \nabla }_{\textfrak{S}^{2}}
\nonumber \\
&\qquad + C
\norm{ \left( 1 + x^2 \right) i \varepsilon \nabla D^2 \left( \left[ K* \rho_{\widetilde{\omega}_{N,s}}, \widetilde{\omega}_{N,s} \right] - B_s \right) i \varepsilon \nabla}_{\textfrak{S}^{2}} .
\end{align}
If we use (see \cite[Chapter 3]{BPSS2016})
\begin{align}
\label{eq:second order Duhamel expansion of the potential}
&K * \rho_{\widetilde{\omega}_{N,s}}(x)  - K * \rho_{\widetilde{\omega}_{N,s}}(y)
- \nabla (K * \widetilde{\rho}_s) \left( \frac{x +y}{2} \right) \cdot (x-y)
\nonumber \\
&\quad =
\sum_{i,j= 1}^3 \int_0^1 \int_0^1  
\left( \partial_i \partial_j K * \rho_{\widetilde{\omega}_{N,s}} \right) \big( v_{\lambda,\mu}(x,y) \big) (x-y)_i (x-y)_j \left( \lambda - 1/2 \right)\,d\mu\,d\lambda  ,
\end{align}
where $v_{\lambda,\mu}(x,y) = \mu (\lambda x + (1 - \lambda) y) + (1 - \mu) (x+y)/2$, we get
\begin{align}
&\left( \left[ K* \rho_{\widetilde{\omega}_{N,s}}, \widetilde{\omega}_{N,s} \right] - B_s \right)(x; y)
\nonumber \\
&\quad =
\sum_{i,j= 1}^3 \int_0^1  \int_0^1   \left( \lambda - 1/2 \right)
\left( \partial_i \partial_j K * \rho_{\widetilde{\omega}_{N,s}} \right) \big( v_{\lambda,\mu}(x,y) \big) 
\left[ \hat{x}_i , \left[ \hat{x}_j , \widetilde{\omega}_{N,s} \right] \right](x;y)\,d\mu\,d\lambda .
\end{align}
Using $\norm{O}_{\textfrak{S}^2}^2 = \iint  \abs{O(x;y)}^2 dx\,dy$ for any Hilbert-Schmidt operator $O$, the product rule for the gradient and estimating the term involving the potential by the $L^{\infty}(\mathbb{R}^3)$-norm we get
\begin{align}
\eqref{eq: Sobolev norm estimate omega 6}
&\leq C N^{1/2} \varepsilon^{-1} \int_0^t 
\sum_{2 \leq \abs{\beta} \leq 6} \varepsilon^{\abs{\beta} - 2} \norm{\nabla^{\beta} K * \rho_{\widetilde{\omega}_{N,s}}}_{L^{\infty}(\mathbb{R}^3)}
\nonumber \\
&\qquad \times
\sum_{\abs{\gamma} \leq 3} \varepsilon^{\abs{\gamma}}
\left( 
\norm{ \left( 1 + x^2 \right) \nabla^{\gamma} \left[ \hat{x} , \left[ \hat{x} , \widetilde{\omega}_{N,s} \right] \right]}_{\textfrak{S}^{2}}
+
\norm{ \left( 1 + x^2 \right) \nabla^{\gamma} \left[ \hat{x} , \left[ \hat{x} , \widetilde{\omega}_{N,s} \right] \right] i \varepsilon \nabla}_{\textfrak{S}^{2}}
\right)\,ds .
\end{align}
Together with \eqref{eq:relation between density for Wigner transform and its Weyl quantization}, \eqref{eq:estimates for the vector potential and interaction 2} Lemma \ref{lemma:trace norm estimates with gradients and commutators} this leads to 
\begin{align}
\eqref{eq: Sobolev norm estimate omega 6}
&\leq C N \varepsilon^{-1} \int_0^t  
\sum_{2 \leq \abs{\beta} \leq 6} \varepsilon^{\abs{\beta} - 2} \norm{\nabla^{\beta} K * \widetilde{\rho}_{\widetilde{W}_{N,s}}}_{L^{\infty}(\mathbb{R}^3)}
\sum_{j=2}^{8} \varepsilon^j \norm{\widetilde{W}_{N,s}}_{H_{6}^{j}}ds 
\nonumber \\
&\leq C N \varepsilon \int_0^t 
\left( \left< \varepsilon \right> \norm{\widetilde{W}_{N,s}}_{L^1(\mathbb{R}^6)}
+ \sum_{i=0}^3 \varepsilon^{2 + i} \norm{\widetilde{W}_{N,s}}_{H_{4}^{j+1}} \right)
\sum_{j=0}^{6} \varepsilon^j \norm{\widetilde{W}_{N,s}}_{H_{6}^{j+2}}ds  .
\end{align}

\paragraph{The term \eqref{eq: Sobolev norm estimate omega 7}:}

In the following it will be convenient to split the operator $C_s$, defined in \eqref{eq:definition C-t}, into the following three parts
\begin{align}
C_{1,s}(x;y) &= \left(\nabla (\kappa * \vA_{\widetilde{\alpha}_s})^2 \right) \left(\frac{x+y}{2} \right) \cdot (x - y) \, \widetilde{\omega}_{N,s}(x;y) ,
\nonumber \\
C_{2,s}(x;y)
&= \sum_{j=1}^3 \left( \nabla^{(j)} \kappa * \vA_{\widetilde{\alpha}_s} \right) \left( \frac{x + y}{2} \right) \cdot (x - y)^{(j)} \left\{ i \varepsilon \nabla , \widetilde{\omega}_{N,s} \right\}(x;y) ,
\nonumber \\
C_{3,s}(x;y) &= 2 \kappa * \vA_{\widetilde{\alpha}_s} \left( \frac{x + y}{2} \right) \cdot \left[ i \varepsilon \nabla, \widetilde{\omega}_{N,s} \right](x;y) .
\end{align}
Moreover, note that
\begin{align}
\left[  2 \kappa * \vA_{\widetilde{\alpha}_s} \cdot i \varepsilon \nabla , \widetilde{\omega}_{N,s} \right]
&= \left[ \kappa * \vA_{\widetilde{\alpha}_s} , \left\{ i \varepsilon \nabla, \widetilde{\omega}_{N,s} \right\} \right]
+ \left\{ \kappa * \vA_{\widetilde{\alpha}_s} , \left[ i \varepsilon \nabla , \widetilde{\omega}_{N,s} \right]  \right\} 
\end{align}
leading to
\begin{subequations}
\begin{align}
\label{eq:Sobolev norm estimate omega 7a}
\eqref{eq: Sobolev norm estimate omega 7}
&\leq 
\varepsilon^{-1} \int_0^t 
\norm{\left[ (\kappa * \vA_{\widetilde{\alpha}_s})^2  , \widetilde{\omega}_{N,s} \right]  - C_{1,s} }_{\textfrak{S}_{\varepsilon}^{1,1}} ds
\\
\label{eq:Sobolev norm estimate omega 7b}
&\quad +
\varepsilon^{-1} \int_0^t 
\norm{\left[ \kappa * \vA_{\widetilde{\alpha}_s} , \left\{ i \varepsilon \nabla, \widetilde{\omega}_{N,s} \right\} \right]  - C_{2,s} }_{\textfrak{S}_{\varepsilon}^{1,1}} ds
\\
\label{eq:Sobolev norm estimate omega 7c}
&\quad +
\varepsilon^{-1} \int_0^t 
\norm{\left\{ \kappa * \vA_{\widetilde{\alpha}_s} , \left[ i \varepsilon \nabla , \widetilde{\omega}_{N,s} \right]  \right\}   - C_{3,s} }_{\textfrak{S}_{\varepsilon}^{1,1}} ds
\end{align}
\end{subequations}
The first term can be estimated in complete analogy to \eqref{eq: Sobolev norm estimate omega 6}. Replacing $K * \rho_{\widetilde{\omega}_{N,s}}$ by $(\kappa * \vA_{\widetilde{\alpha}_s})^2$ in \eqref{eq: Sobolev norm estimate omega 6}, performing exactly the same estimates and applying \eqref{eq:estimates for the vector potential and interaction 1}  lead to  
\begin{align}
\eqref{eq:Sobolev norm estimate omega 7a}
&\leq C N \varepsilon \int_0^t  
\sum_{2 \leq \abs{\beta} \leq 6} \varepsilon^{\abs{\beta} - 2} \norm{\nabla^{\beta} (\kappa * \vA_{\widetilde{\alpha}_s})^2}_{L^{\infty}(\mathbb{R}^3)}
\sum_{j=0}^{6} \varepsilon^j \norm{\widetilde{W}_{N,s}}_{H_{6}^{j+2}}ds
\nonumber \\
&\leq C N \varepsilon \int_0^t 
\sum_{i=0}^4 \varepsilon^i \norm{\widetilde{\alpha}_s}_{\mathfrak{h}_{i+1}}^2
\sum_{j=0}^{6} \varepsilon^j \norm{\widetilde{W}_{N,s}}_{H_{6}^{j+2}}ds
\end{align}
If we replace $K * \rho_{\widetilde{\omega}_{N,s}}$ by $\kappa * \vA_{\widetilde{\alpha}_s}$ in \eqref{eq:second order Duhamel expansion of the potential} we get
\begin{align}
&\left( \left[ \kappa * \vA_{\widetilde{\alpha}_s} , \left\{ i \varepsilon \nabla, \widetilde{\omega}_{N,s} \right\} \right]  - C_{2,s} \right)(x; y)
\nonumber \\
&\quad =
\sum_{i,j= 1}^3 \int_0^1  \int_0^1  \left( \lambda - \frac{1}{2} \right)
\left( \partial_i \partial_j \kappa * \vA_{\widetilde{\alpha}_s} \right) \big( v_{\lambda,\mu}(x,y) \big) 
\left[ \hat{x}_i , \left[ \hat{x}_j , \left\{ i \varepsilon \nabla , \widetilde{\omega}_{N,s} \right\} \right] \right](x;y)\,d\mu\,d\lambda,
\end{align}
with $v_{\lambda,\mu}(x,y) = \mu (\lambda x + (1 - \lambda) y) + (1 - \mu) (x+y)/2$.
By the same estimates as for \eqref{eq: Sobolev norm estimate omega 6}, Lemma \ref{lemma:trace norm estimates with gradients and commutators} and \eqref{eq:estimates for the vector potential and interaction 1} we conclude
\begin{align}
\eqref{eq:Sobolev norm estimate omega 7b}
&\leq C N^{1/2} \varepsilon^{-1} \int_0^t  
\sum_{2 \leq \abs{\beta} \leq 6} \varepsilon^{\abs{\beta} - 2} \norm{\nabla^{\beta} \kappa * \vA_{\widetilde{\alpha}_s}}_{L^{\infty}(\mathbb{R}^3)}
\nonumber \\
&\qquad \times
\sum_{\abs{\gamma} \leq 3} \varepsilon^{\abs{\gamma}}
\bigg( 
\norm{ \left( 1 + x^2 \right) \nabla^{\gamma} \left[ \hat{x} , \left[ \hat{x} , \left\{ i \varepsilon \nabla , \widetilde{\omega}_{N,s} \right\} \right] \right]}_{\textfrak{S}^{2}}
\nonumber \\
&\qquad \qquad \qquad \qquad
+
\norm{ \left( 1 + x^2 \right) \nabla^{\gamma} \left[ \hat{x} , \left[ \hat{x} , \left\{ i \varepsilon \nabla , \widetilde{\omega}_{N,s} \right\} \right] \right] i \varepsilon \nabla}_{\textfrak{S}^{2}}
\bigg) ds
\nonumber \\
&\leq C N \varepsilon \int_0^t  
\sum_{2 \leq \abs{\beta} \leq 6} \varepsilon^{\abs{\beta} - 2} \norm{\nabla^{\beta} \kappa * \vA_{\widetilde{\alpha}_s}}_{L^{\infty}(\mathbb{R}^3)}
\sum_{j=0}^{6} \varepsilon^j \norm{\widetilde{W}_{N,s}}_{H_{7}^{j+2}} ds 
\nonumber \\
&\leq C N \varepsilon \int_0^t  
\sum_{i=0}^4 \varepsilon^i \norm{\widetilde{\alpha}_s}_{\mathfrak{h}_{i+1}}
\sum_{j=0}^{6} \varepsilon^j \norm{\widetilde{W}_{N,s}}_{H_{7}^{j+2}}ds  .
\end{align}
Now let 
$u_{\lambda}(x;y) = \lambda x + (1 - \lambda) (x+y)/2$. Then,
\begin{align}
\kappa * \vA_{\widetilde{\alpha}_s}(x) - \kappa * \vA_{\widetilde{\alpha}_s} \left( \frac{x+y}{2} \right)
&= \frac{1}{2} \sum_{j=1}^3 (x-y)_j \int_0^1 \left( \partial_j \kappa * \vA_{\widetilde{\alpha}_s} \right) \big( u_{\lambda}(x;y) \big)d\lambda
\end{align}
and 
\begin{align}
&\left( \left\{ \kappa * \vA_{\widetilde{\alpha}_s} , \left[ i \varepsilon \nabla , \widetilde{\omega}_{N,s} \right]  \right\}   - C_{3,s}
\right)(x;y)
\nonumber \\
&\quad = \frac{1}{2} \sum_{j=1}^3 \int_0^1 
\left( \left( \partial_j \kappa * \vA_{\widetilde{\alpha}_s} \right) \big( u_{\lambda}(x;y) \big)
- \left( \partial_j \kappa * \vA_{\widetilde{\alpha}_s} \right) \big(u_{\lambda}(y;x) \big)
\right)
\left[ \hat{x}_j , \left[ i \varepsilon \nabla , \widetilde{\omega}_{N,s} \right] \right](x;y)\,d\lambda .
\end{align}
By the same arguments as before we get
\begin{align}
\eqref{eq:Sobolev norm estimate omega 7c}
&\leq C N^{1/2} \varepsilon^{-1} \int_0^t  
\sum_{1 \leq \abs{\beta} \leq 5} \varepsilon^{\abs{\beta} - 1} \norm{\nabla^{\beta} \kappa * \vA_{\widetilde{\alpha}_s}}_{L^{\infty}(\mathbb{R}^3)}
\nonumber \\
&\qquad \times
\sum_{\abs{\gamma} \leq 3} \varepsilon^{\abs{\gamma}}
\bigg( 
\norm{ \left( 1 + x^2 \right) \nabla^{\gamma} \left[ \hat{x} , \left[ i \varepsilon \nabla , \widetilde{\omega}_{N,s} \right] \right]}_{\textfrak{S}^{2}}
\nonumber \\
&\qquad \qquad \qquad \qquad
+
\norm{ \left( 1 + x^2 \right) \nabla^{\gamma} \left[ \hat{x} , \left[ i \varepsilon \nabla , \widetilde{\omega}_{N,s} \right] \right] i \varepsilon \nabla}_{\textfrak{S}^{2}}
\bigg) ds
\nonumber \\ 
&\leq C N \varepsilon \int_0^t 
\sum_{1 \leq \abs{\beta} \leq 5} \varepsilon^{\abs{\beta} - 1} \norm{\nabla^{\beta} \kappa * \vA_{\widetilde{\alpha}_s}}_{L^{\infty}(\mathbb{R}^3)}
\sum_{j=0}^{6} \varepsilon^j \norm{\widetilde{W}_{N,s}}_{H_{6}^{j+2}} ds
\nonumber \\ 
&\leq C N \varepsilon \int_0^t  
\sum_{i=0}^4 \varepsilon^i \norm{\widetilde{\alpha}_s}_{\mathfrak{h}_i}
\sum_{j=0}^{6} \varepsilon^j \norm{\widetilde{W}_{N,s}}_{H_{6}^{j+2}}ds .
\end{align}
In total this shows
\begin{align}
\eqref{eq: Sobolev norm estimate omega 7}
&\leq C N \varepsilon \int_0^t 
\sum_{i=0}^4 \varepsilon^i \norm{\widetilde{\alpha}_s}_{\mathfrak{h}_{i+1}} \left( 1 + \norm{\widetilde{\alpha}_s}_{\mathfrak{h}_{i+1}} \right)
\sum_{j=0}^{6} \varepsilon^j \norm{\widetilde{W}_{N,s}}_{H_{7}^{j+2}}ds .
\end{align}

\paragraph{The terms \eqref{eq: Sobolev norm estimate omega 8} and \eqref{eq: Sobolev norm estimate omega 9}:}

Using
\begin{align}
\left[ D_{\leq n}, h(s) \right]
&= 2 i \varepsilon \nabla \cdot \left[ D_{\leq n} , \kappa * \vA_{\alpha_s} \right]
+ \left[ D_{\leq n} , ( \kappa * \vA_{\alpha_s})^2 \right]
+ \left[ D_{\leq n}, K * \rho_{\omega_{N,s}} \right] ,
\end{align}
$\norm{D^{-1} i \varepsilon \nabla }_{\textfrak{S}^{\infty}}
= \norm{\left( 1 - \varepsilon^2 \Delta \right)^{-1/2} i \varepsilon \nabla }_{\textfrak{S}^{\infty}} \leq 1$ and $\norm{D^{-1}}_{\textfrak{S}^{\infty}} \leq 1$ 
we get the bound
\begin{align}
\eqref{eq: Sobolev norm estimate omega 8}
+ \eqref{eq: Sobolev norm estimate omega 9}
&\leq C \varepsilon^{-1} 
\lim_{n \rightarrow \infty}
\int_0^t 
\norm{\omega_{N,s} - \widetilde{\omega}_{N,s}}_{\textfrak{S}_{\varepsilon}^{1,1}}
\Big(  \norm{\left[ D_{\leq n} , \kappa * \vA_{\alpha_s} \right]}_{\textfrak{S}^{\infty}}
\nonumber \\
&\qquad  \qquad  
+ \norm{\left[ D_{\leq n} , (\kappa * \vA_{\alpha_s})^2 \right]}_{\textfrak{S}^{\infty}}
+ \norm{\left[ D_{\leq n} , K * \rho_{\omega_{N,s}} \right]}_{\textfrak{S}^{\infty}}
\Big)\,ds.
\end{align}
Thus if we use  \eqref{estimate:commutator regularized fractional laplacian with potential}, 
$\mathcal{F}[K *  \rho_{\omega_{N,s}}](k)= \mathcal{F}[K](k)
\mathcal{F}[\rho_{\omega_{N,s} }](k)$ and the
Fourier expansion of the regularized vector potential \eqref{eq:Fourier expansion regularized vector potential} we obtain
\begin{align}
\eqref{eq: Sobolev norm estimate omega 8}
+ \eqref{eq: Sobolev norm estimate omega 9}
&\leq C  \int_0^t 
\norm{\omega_{N,s} - \widetilde{\omega}_{N,s}}_{\textfrak{S}_{\varepsilon}^{1,1}}
\nonumber \\
&\qquad \times
\bigg[ \left( 1 +  \norm{\kappa * \vA_{\alpha_s}}_{L^{\infty}(\mathbb{R}^3)}
\right)  \sum_{\lambda=1,2} \int  \abs{k}^{1/2} \abs{\mathcal{F}[\kappa](k)} \abs{\alpha_s(k,\lambda)} dk
\nonumber \\
&\qquad \quad
+\int  \abs{k} \abs{\mathcal{F}[K](k)} \abs{\mathcal{F}[\rho_{\omega_{N,s} }](k)}
dk \bigg]\,ds.
\end{align}
By means of \eqref{eq:vector potential L-infty estimate}, \eqref{eq:Fourier transform of the potential}, \eqref{eq: estimate to bound the density by the L1-norm of omega}, and \eqref{eq:assumption on the cutoff parameter} the Cauchy--Schwarz inequality the right hand side can be bounded by
\begin{align}
\eqref{eq: Sobolev norm estimate omega 8}
+ \eqref{eq: Sobolev norm estimate omega 9}
&\leq C  \int_0^t 
\norm{\omega_{N,s} - \widetilde{\omega}_{N,s}}_{\textfrak{S}_{\varepsilon}^{1,1}}
\Big[ \left( 1 + \norm{\alpha_s}_{\dot{\mathfrak{h}}_{1/2}} \right) \norm{\alpha_s}_{\dot{\mathfrak{h}}_{1/2}} + N^{-1} \norm{\omega_{N,s}}_{\textfrak{S}^1} \Big]ds .
\end{align}
Collecting the estimates and combining them with \eqref{eq:proof main theorem estimate for the difference of the fields} lead to
\begin{align}
&N^{-1} \norm{\omega_{N,t} - \widetilde{\omega}_{N,t}}_{\textfrak{S}_{\varepsilon}^{1,1}}
+  \norm{\alpha_t - \widetilde{\alpha}_t}_{\dot{\mathfrak{h}}_{-1/2} \, \cap \, \dot{\mathfrak{h}}_{1/2}}
\nonumber \\
&\quad\leq C \int_0^t  C(s) \left( N^{-1} \norm{\omega_{N,s} - \widetilde{\omega}_{N,s}}_{\textfrak{S}_{\varepsilon}^{1,1}}
+ \norm{\alpha_s - \widetilde{\alpha}_s}_{\dot{\mathfrak{h}}_{-1/2} \, \cap \, \dot{\mathfrak{h}}_{1/2}} \right)ds
\nonumber \\
&\qquad + N^{-1}  \norm{\omega_{N,0} - \widetilde{\omega}_{N,0}}_{\textfrak{S}_{\varepsilon}^{1,1}} 
+ \norm{\alpha - \widetilde{\alpha}}_{\dot{\mathfrak{h}}_{-1/2} \, \cap \, \dot{\mathfrak{h}}_{1/2}}
+  C \varepsilon \int_0^t  \widetilde{C}(s)ds ,
\end{align}
with
\begin{align}
C(s) &=  \bigg( 1 + N^{-1} \norm{\omega_{N,s}}_{\textfrak{S}^1} 
+ \sum_{j=0}^{7} \varepsilon^j \norm{\widetilde{W}_{N,s}}_{H_{7}^{j+1}} \bigg) 
\left( 1 + \norm{\alpha_s}_{\dot{\mathfrak{h}}_{1/2}} +\norm{\widetilde{\alpha}_s}_{\dot{\mathfrak{h}}_{1/2}} \right)^2
\end{align}
and
\begin{align}
\widetilde{C}(s) &=  
N^{-2/3} \varepsilon^{-2} \left( 1 + N^{-1} \mathcal{E}^{\rm{MS}}[\omega_0, \alpha_0]  \right) +
\sum_{j=0}^{6} \varepsilon^j \norm{\widetilde{W}_{N,s}}_{H_{7}^{j+2}}
\bigg( \left< \varepsilon \right> \norm{\widetilde{W}_{N,s}}_{L^1(\mathbb{R}^6)}
\nonumber \\
&\quad + \sum_{k=0}^3 \varepsilon^{2 + k} \norm{\widetilde{W}_{N,s}}_{H_{4}^{k+1}} 
+ \sum_{k=0}^4 \varepsilon^k \left( 1 + \norm{\widetilde{\alpha}_s}_{\mathfrak{h}_{k+1}} \right)^2
\bigg) .
\end{align}
If we use $\varepsilon = N^{-1/3},$ $\norm{\omega_{N,0}}_{\textfrak{S}^1} = N$, \eqref{eq:estimate for the MS energy functional 1}, \eqref{eq:estimate for the VM energy functional 1} and the conservation of mass and energy both of the Maxwell--Schr\"odinger and the Vlasov--Maxwell system, we can bound the above quantities by
\begin{align}
C(s) &\leq  C \bigg( 1 + \sum_{j=0}^{7} \varepsilon^j \norm{\widetilde{W}_{N,s}}_{H_{7}^{j+1}} \bigg) 
\left( 1 + N^{-1} \mathcal{E}^{\rm{MS}}[\omega_{N,0}, \alpha_0] + \mathcal{E}^{\rm{VM}}[\widetilde{W}_{N,0}, \widetilde{\alpha}_0] + 
\norm{\widetilde{W}_{N,0}}_{L^1(\mathbb{R}^6)} \right)^2
\end{align}
and
\begin{align}
\widetilde{C}(s) &\leq 
1 + N^{-1} \mathcal{E}^{\rm{MS}}[\omega_0, \alpha_0]  +
\sum_{j=0}^{6} \varepsilon^j \norm{\widetilde{W}_{N,s}}_{H_{7}^{j+2}}
\bigg( \left< \varepsilon \right> \norm{\widetilde{W}_{N,0}}_{L^1(\mathbb{R}^6)}
\nonumber \\
&\quad + \sum_{k=0}^3 \varepsilon^{2 + k} \norm{\widetilde{W}_{N,s}}_{H_{4}^{k+1}} 
+ \sum_{k=0}^4 \varepsilon^k \left( 1 + \norm{\widetilde{\alpha}_s}_{\mathfrak{h}_{k+1}} \right)^2
\bigg) .
\end{align}

\section{Well-posedness of the Maxwell--Schr\"odinger equations}
\label{section:Well-posedness of the Maxwell--Schroedinger equations}

In this section we prove Proposition \ref{proposition:well-posedness of Maxwell-Schroedinger}. To this end, we recall propagation estimates for the time evolution of the magnetic Laplacian from \cite{NM2005}. We then consider a linearized version of \eqref{eq:Maxwell-Schroedinger equations}  and show the existence as well as regularity properties of the respective solutions. These will be used to prove the existence of local solutions to \eqref{eq:Maxwell-Schroedinger equations}.
Finally, we provide propagation estimates allowing us to conclude that under suitable initial conditions the solutions of the Maxwell--Schr\"odinger equations exist globally. The strategy of the proof is inspired by \cite{NM2005} and \cite{P2014,ADM2017,AMS2022}.

\begin{lemma}
\label{lemma:evolution operator for magnetic laplacian}
Let $\alpha \in L^{\infty}(0,T ; \mathfrak{h}_{1/2} \cap \dot{\mathfrak{h}}_{-1/2}) \cap W^{1,1}(0,T; \dot{\mathfrak{h}}_{-1/2} )$. The equation
\begin{align}
\label{eq:mangetic Schroediger equation}
\begin{cases}
i \varepsilon \partial_t \psi(t) &= \left( - i \varepsilon \nabla - \kappa * \vA_{\alpha} \right)^2 \psi(t) \\
\psi(0) = \psi_0
\end{cases}
\end{align}
has a unique $C(0,T; H^2(\mathbb{R}^3)) \cap C^1(0,T; L^2(\mathbb{R}^3))$ solution. There exists a constant $K_{\varepsilon}$ (depending on $\varepsilon$) and a strongly continuous two-parameter family $U_{\alpha}$ of operators on $H^s(\mathbb{R}^3)$, $0 \leq s \leq 2$, with 
\begin{align}
\label{eq:evolution operator for magnetic laplacian estimate}
K_{\varepsilon, \alpha,T} &= \sup_{t,\tau \in [0,T]} \norm{U_{\alpha}(t,\tau)}_{\textfrak{S}^{\infty}(H^2(\mathbb{R}^3))}
\leq K_{\varepsilon} \left( 1 + \norm{\alpha}_{L_T^{\infty} \mathfrak{h}_{1/2} \cap L_T^{\infty} \dot{\mathfrak{h}}_{-1/2}}^4 \right) e^{K_{\varepsilon} \norm{ \partial_t  \alpha}_{L_T^1 \dot{\mathfrak{h}}_{-1/2}}} 
\end{align}
and
$\sup_{t,\tau \in [0,T]} \norm{U_{\alpha}(t,\tau)}_{\textfrak{S}^{\infty}(H^s(\mathbb{R}^3))} \leq K_{\varepsilon, \alpha, T}^{s/2}$ .
In particular, $U_\alpha(t,\tau)$ is a unitary group on $L^2(\mathbb{R}^3)$ and $U_\alpha^*(t, \tau) = U_\alpha(\tau, t)$.
Moreover, for any $\psi_0 \in H^s(\mathbb{R}^3)$, $U_\alpha(t,t_0) \psi_0$ is a unique $H^s$-solution to \eqref{eq:mangetic Schroediger equation}.
\end{lemma}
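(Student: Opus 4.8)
The plan is to realize \eqref{eq:mangetic Schroediger equation} as a linear evolution problem $i\varepsilon\partial_t\psi = h(t)\psi$ driven by the self-adjoint family $h(t) = (-i\varepsilon\nabla - \kappa*\vA_{\alpha_t})^2$, construct the propagator $U_\alpha(t,\tau)$ by the abstract theory of \cite{NM2005} (in the spirit of Kato's theory of linear evolution equations), and then extract the quantitative bound \eqref{eq:evolution operator for magnetic laplacian estimate} by energy estimates for the magnetic quadratic forms. First I would expand, using the Coulomb gauge $\nabla\cdot(\kappa*\vA_{\alpha_t}) = 0$,
\[
h(t) = -\varepsilon^2\Delta + 2i\varepsilon\,\kappa*\vA_{\alpha_t}\cdot\nabla + |\kappa*\vA_{\alpha_t}|^2,
\]
so that the time dependence enters only through $a_t := \kappa*\vA_{\alpha_t}$. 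By \eqref{eq:vector potential L-infty estimate} and \eqref{eq:estimates for the vector potential and interaction 0} the field $a$ lies in $L^\infty(0,T;W_0^{1,\infty})$ with norm controlled by $\norm{\alpha}_{L^\infty_T(\mathfrak{h}_{1/2}\cap\dot{\mathfrak{h}}_{-1/2})}$, while $\partial_t a_t = \kappa*\vA_{\partial_t\alpha_t}$ obeys $\norm{\partial_t a_t}_{L^\infty}\le C\norm{\partial_t\alpha_t}_{\dot{\mathfrak{h}}_{-1/2}}$ and hence belongs to $L^1(0,T;L^\infty)$. Since $2i\varepsilon a_t\cdot\nabla$ is infinitesimally $\Delta$-bounded and $|a_t|^2$ is bounded, each $h(t)$ is self-adjoint on the $t$-independent domain $H^2(\mathbb{R}^3)$, uniformly in $t$.

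I would then invoke the construction of \cite{NM2005}: the family $\{h(t)\}$ is stable (common domain, uniform self-adjointness) and $t\mapsto h(t)\in\mathcal{B}(H^2,L^2)$ is strongly continuous and absolutely continuous with derivative $\dot h(t) = 2i\varepsilon\,\dot a_t\cdot\nabla + 2\,a_t\cdot\dot a_t$ in $L^1(0,T;\mathcal{B}(H^2,L^2))$. This produces a unique strongly continuous two-parameter family $U_\alpha(t,\tau)$ which is unitary on $L^2$ (because $h(t)=h(t)^*$), satisfies $U_\alpha(t,\tau)^* = U_\alpha(\tau,t)$ and the group law, maps $H^2$ into $H^2$, and for $\psi_0\in H^2$ yields the unique $C(0,T;H^2)\cap C^1(0,T;L^2)$ solution $\psi(t)=U_\alpha(t,0)\psi_0$.

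The quantitative $H^2$ bound is the heart of the argument and I would obtain it by differentiating the magnetic energies. Writing $\psi_t = U_\alpha(t,0)\psi_0$ and using self-adjointness of $h(t)$, the contribution of $h(t)\dot\psi_t = (i\varepsilon)^{-1}h(t)^2\psi_t$ is purely imaginary in the relevant inner products, so
\[
\frac{d}{dt}\norm{h(t)\psi_t}_{L^2} \le \norm{\dot h(t)\psi_t}_{L^2}, \qquad \frac{d}{dt}\norm{h(t)^{1/2}\psi_t}_{L^2}^2 \le \norm{\dot h(t)\psi_t}_{L^2}\norm{\psi_t}_{L^2}.
\]
I then bound $\norm{\dot h(t)\psi_t} \le 2\varepsilon\norm{\dot a_t}_{L^\infty}\norm{\nabla\psi_t} + 2\norm{a_t}_{L^\infty}\norm{\dot a_t}_{L^\infty}\norm{\psi_t}$, control $\norm{\varepsilon\nabla\psi_t} \le \norm{h(t)^{1/2}\psi_t} + \norm{a_t}_{L^\infty}\norm{\psi_t}$ via $\norm{h(t)^{1/2}\psi}^2 = \norm{(-i\varepsilon\nabla - a_t)\psi}^2$, and use $\norm{\psi_t}_{L^2} = \norm{\psi_0}_{L^2}$. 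Feeding in the $L^\infty$ bounds for $a$ and $\dot a$ from Lemma~\ref{lemma:estimates for the vector potential and interaction} closes a coupled Gr\"onwall inequality for the pair $(\norm{h(t)^{1/2}\psi_t},\norm{h(t)\psi_t})$ whose integrating factor is $\exp(K_\varepsilon\norm{\partial_t\alpha}_{L^1_T\dot{\mathfrak{h}}_{-1/2}})$. Finally the equivalence $\norm{\psi}_{H^2}\simeq_\varepsilon \norm{h(t)\psi} + \norm{\psi}$, which follows from the infinitesimal relative boundedness and costs a factor $1+\norm{a_t}_{L^\infty}^2$ at each of the initial and final times, converts the energy bound into \eqref{eq:evolution operator for magnetic laplacian estimate} with the polynomial factor $1 + \norm{\alpha}_{L^\infty_T(\mathfrak{h}_{1/2}\cap\dot{\mathfrak{h}}_{-1/2})}^4$ (since $\norm{a_t}_{L^\infty}^2\lesssim\norm{\alpha}_{\dot{\mathfrak{h}}_{-1/2}}^2$ and $(1+\norm{a}^2)^2\lesssim 1+\norm{\alpha}^4$).

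The remaining $H^s$ statement for $0\le s\le 2$ follows by complex interpolation between the unitary $L^2$ bound $\norm{U_\alpha(t,\tau)}_{\mathcal{B}(L^2)} = 1$ and the $H^2$ bound, giving $\norm{U_\alpha(t,\tau)}_{\mathcal{B}(H^s)}\le K_{\varepsilon,\alpha,T}^{s/2}$; uniqueness of the $H^s$-solution is inherited from the evolution family. I expect the main obstacle to be precisely the quantitative $H^2$ estimate: tracking the $\varepsilon$-dependence, extracting exactly the $\exp(K_\varepsilon\norm{\partial_t\alpha}_{L^1_T\dot{\mathfrak{h}}_{-1/2}})$ growth from the coupled energy inequalities, and verifying that the magnetic-norm versus $H^2$-norm comparison produces the quartic power of $\norm{\alpha}$. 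By contrast, the existence, unitarity and group/adjoint properties are essentially a citation of \cite{NM2005}.
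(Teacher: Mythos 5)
Your proposal is correct in substance, but it takes a more self-contained route than the paper for the quantitative part. The paper's proof is essentially a verification-and-citation: it checks Assumption (A1) of \cite{NM2005} for the field $\kappa * \vA_{\alpha}$ with $u=0$, namely $\norm{\kappa * \vA_{\alpha}}_{L_T^{\infty} H^1} \leq \norm{\alpha}_{L_T^{\infty} \mathfrak{h}_{1/2} \cap L_T^{\infty} \dot{\mathfrak{h}}_{-1/2}}$ (via \eqref{eq:vector potential L-2 estimate} applied to $\left<\cdot\right>\alpha$), the bounds $\norm{\kappa * \vA_{\alpha}}_{L^1_T L^3}, \norm{\partial_t \kappa * \vA_{\alpha}}_{L^1_T L^3} \leq C \norm{\cdot}_{L^1_T \dot{\mathfrak{h}}_{-1/2}}$ obtained by interpolating \eqref{eq:vector potential L-infty estimate} and \eqref{eq:vector potential L-2 estimate}, and the Coulomb gauge condition; it then imports existence, unitarity, the $H^s$ statements \emph{and} the estimate \eqref{eq:evolution operator for magnetic laplacian estimate} wholesale from Lemmas 3.1 and 3.2 of \cite{NM2005}. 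You instead cite the abstract construction only for the existence backbone (which is legitimate: your $L^\infty_x$ bounds on $a_t$ and $\dot a_t$ give stability, a common domain $H^2$ via Kato--Rellich, and $W^{1,1}$ regularity of $t \mapsto h(t)$ in $\mathcal{B}(H^2, L^2)$) and re-derive the propagation bound by hand through the coupled energy inequalities for $\norm{h(t)^{1/2}\psi_t}$ and $\norm{h(t)\psi_t}$; your accounting is right, since absorbing $\norm{h^{1/2}\psi} \leq \tfrac12(\norm{\psi} + \norm{h\psi})$ makes the Gr\"onwall rate on the top-order quantity exactly $\norm{\dot a_t}_{L^\infty} \lesssim \norm{\partial_t \alpha_t}_{\dot{\mathfrak{h}}_{-1/2}}$, and the two graph-norm conversions each cost $1 + \norm{a}_{L^\infty}^2 \lesssim 1 + \norm{\alpha}_{\dot{\mathfrak{h}}_{-1/2}}^2$, reproducing the quartic prefactor. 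What your route buys is transparency of the $\varepsilon$-dependence and of the exponent structure, and it avoids the $L^3$-coefficient (Strichartz-flavored) machinery internal to \cite{NM2005} by exploiting the stronger $L^\infty_x$ control available here; what it costs is two technical points you should make explicit if you write it up: the differentiation of $\norm{h(t)\psi_t}$ requires an a priori regularization (the computation needs $\psi_t \in \mathcal{D}(h(t)^2)$, handled by mollifying $a$ or the data and passing to the limit), and uniqueness in the $H^s$ class for $0 \leq s < 2$ should be reduced to $L^2$-uniqueness via unitarity rather than inherited formally.
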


\begin{proof}
Using \eqref{eq:vector potential L-2 estimate} we estimate
\begin{align}
\norm{\kappa * \vA_{\alpha}}_{L_T^{\infty} H^1(\mathbb{R}^3)} 
&= \norm{\kappa * \vA_{\left< \cdot \right> \alpha}}_{L_T^{\infty} L^2(\mathbb{R}^3)}
\leq \norm{\alpha}_{L_T^{\infty} \mathfrak{h}_{1/2} \cap L_T^{\infty} \dot{\mathfrak{h}}_{-1/2}} 
\end{align}
By interpolation we get  $\norm{\kappa * \vA_{\alpha_t}}_{L^3(\mathbb{R}^3)} \leq \norm{\alpha_t}_{\dot{\mathfrak{h}}_{-1/2}}$ from \eqref{eq:vector potential L-infty estimate} and \eqref{eq:vector potential L-2 estimate}, leading to
\begin{align}
\norm{ \kappa * \vA_{\alpha}}_{L^1_T L^3(\mathbb{R}^3)} \leq C \norm{ \alpha}_{L_T^1 \dot{\mathfrak{h}}_{-1/2}}
\quad \text{and} \quad 
\norm{\partial_t \kappa * \vA_{\alpha}}_{L^1_T L^3(\mathbb{R}^3)} \leq C \norm{\partial_t  \alpha}_{L_T^1 \dot{\mathfrak{h}}_{-1/2}} .
\end{align}
Since the vector potential is divergence free by construction, this allows us to conclude that Assumption (A1) on \cite[p. 574]{NM2005} is satisfied for $\kappa * \vA_{\alpha}$ and $u =0$. Application of \cite[Lemma 3.1 and Lemma 3.2]{NM2005} then shows the claim.
\end{proof}

\subsection{Linear equations}

\begin{lemma}
\label{lemma:MS well-posedness linearized Schroedinger equation}
Let $T >0$,  $(\omega, \alpha) \in \left(  C \left( 0,T; \textfrak{S}^{1} \left( L^2(\mathbb{R}^3) \right)  \right)  \cap L^{\infty}(0,T; \textfrak{S}^{2,1}(L^2(\mathbb{R}^3)) \big) \right)  \times \break \left( L^{\infty}(0,T ; \mathfrak{h}_{1/2} \cap \dot{\mathfrak{h}}_{-1/2}) \cap W^{1,1}(0,T; \dot{\mathfrak{h}}_{-1/2} ) \right)$ and $\Gamma_0 \in \textfrak{S}^{2,1}\left( L^2(\mathbb{R}^3) \right)$. The linear equation
\begin{align}
\label{eq: Well-posedness MS linearized Schroedinger equation}
i \varepsilon \partial_t \Gamma_t &= \left[ \left( - i \varepsilon \nabla - \kappa * \vA_{\alpha_t} \right)^2   + K * \rho_{\omega_t} - X_{\omega_t} , \Gamma_t \right]
\end{align}
with initial datum $\Gamma_0$ has a unique $C^1 \left( 0,T; \textfrak{S}^{1} \left( L^2(\mathbb{R}^3) \right) \right) \cap C \left( 0,T; \textfrak{S}^{2,1} \left( L^2(\mathbb{R}^3) \right) \right)$ solution. 
\end{lemma}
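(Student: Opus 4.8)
The plan is to reduce the problem to the construction of a two-parameter propagator for the full one-particle Hamiltonian
\begin{align}
h(t) := \left( - i \varepsilon \nabla - \kappa * \vA_{\alpha_t} \right)^2 + K * \rho_{\omega_t} - X_{\omega_t},
\end{align}
and then to represent the solution in conjugated form $\Gamma_t = \mathcal{U}(t,0)\,\Gamma_0\,\mathcal{U}(t,0)^*$, where $\mathcal{U}(t,s)$ solves $i\varepsilon\partial_t\mathcal{U}(t,s) = h(t)\mathcal{U}(t,s)$ with $\mathcal{U}(s,s)=1$. The key structural observation is that \eqref{eq: Well-posedness MS linearized Schroedinger equation} is \emph{linear}: since $\omega_t$ and $\alpha_t$ are prescribed data, the potential part $V(t) := K*\rho_{\omega_t} - X_{\omega_t}$ is for each $t$ a bounded self-adjoint operator. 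Boundedness on $L^2$ and on $H^2$ follows from Lemma~\ref{lemma:general estimates for the interaction} (from \eqref{eq:W-0-2-infty estimate for the direct interaction potential} and \eqref{eq:estimate exchange term in operator norm}, together with the $H^2$-version of the exchange estimate established inside its proof), while self-adjointness comes from the reality and evenness of $K$ and the self-adjointness of $\omega_t$. Moreover $t\mapsto V(t)$ is norm-continuous, because $\omega\in C(0,T;\textfrak{S}^1)$ and the linear maps $\omega\mapsto K*\rho_\omega$ and $\omega\mapsto X_\omega$ are bounded into $L^\infty(\mathbb{R}^3)$ and $\textfrak{S}^\infty(L^2(\mathbb{R}^3))$, respectively.

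First I would build $\mathcal{U}(t,s)$ as a bounded perturbation of the magnetic propagator $U_\alpha(t,s)$ supplied by Lemma~\ref{lemma:evolution operator for magnetic laplacian}, through the Duhamel relation
\begin{align}
\mathcal{U}(t,s) = U_\alpha(t,s) - \frac{i}{\varepsilon}\int_s^t U_\alpha(t,r)\,V(r)\,\mathcal{U}(r,s)\,dr,
\end{align}
whose Neumann--Dyson series converges in the operator norms on both $L^2(\mathbb{R}^3)$ and $H^2(\mathbb{R}^3)$. On $L^2$ every $U_\alpha$ factor is unitary; on $H^2$ each such factor contributes the finite constant $K_{\varepsilon,\alpha,T}$ of \eqref{eq:evolution operator for magnetic laplacian estimate} and each $V(r)$ the uniform bound $\sup_{r\in[0,T]}\norm{V(r)}_{\textfrak{S}^\infty(H^2)}<\infty$, so that the $n$-th term is dominated by $K_{\varepsilon,\alpha,T}^{n+1}\,(\varepsilon^{-1}T\sup_r\norm{V(r)}_{\textfrak{S}^\infty(H^2)})^n/n!$ and the series sums to a finite $H^2$-bound. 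Since $V$ is bounded, norm-continuous and self-adjoint and $U_\alpha$ is unitary, $\mathcal{U}(t,s)$ is a strongly continuous unitary family on $L^2$ with $\mathcal{U}(s,t)=\mathcal{U}(t,s)^*$, uniformly bounded and strongly continuous on $H^2$ for $t,s\in[0,T]$, and differentiable with $i\varepsilon\partial_t\mathcal{U}(t,s)=h(t)\mathcal{U}(t,s)$.

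Next I would set $\Gamma_t := \mathcal{U}(t,0)\,\Gamma_0\,\mathcal{U}(t,0)^*$ and verify the regularity. Writing $\Gamma_0 = (1-\Delta)^{-1}A(1-\Delta)^{-1}$ with $A=(1-\Delta)\Gamma_0(1-\Delta)\in\textfrak{S}^1$ and inserting factors $(1-\Delta)^{\pm1}$ gives $(1-\Delta)\Gamma_t(1-\Delta)=V_t\,A\,V_t^*$ with $V_t:=(1-\Delta)\mathcal{U}(t,0)(1-\Delta)^{-1}$; the $H^2$-boundedness of $\mathcal{U}$ yields $\norm{\Gamma_t}_{\textfrak{S}^{2,1}}\le C\,\norm{\Gamma_0}_{\textfrak{S}^{2,1}}$, so $\Gamma_t\in\textfrak{S}^{2,1}$ for all $t$. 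Differentiating the conjugation and using $i\varepsilon\partial_t\mathcal{U}(t,0)=h(t)\mathcal{U}(t,0)$ produces $i\varepsilon\partial_t\Gamma_t = [h(t),\Gamma_t]$; since $\Gamma_t\in\textfrak{S}^{2,1}$ implies $(1-\Delta)\Gamma_t\in\textfrak{S}^1$ and $h(t)(1-\Delta)^{-1}$ is bounded, both $h(t)\Gamma_t$ and $\Gamma_t h(t)$ lie in $\textfrak{S}^1$, so the right-hand side is a norm-continuous $\textfrak{S}^1$-valued function and $\Gamma\in C^1(0,T;\textfrak{S}^1)$. Uniqueness follows by the standard argument: for any $\textfrak{S}^1$-solution one computes $\partial_s\big(\mathcal{U}(0,s)\,\Gamma_s\,\mathcal{U}(s,0)\big)=0$, forcing $\Gamma_s=\mathcal{U}(s,0)\Gamma_0\mathcal{U}(s,0)^*$.

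I expect the main obstacle to be establishing \emph{norm} continuity $\Gamma\in C(0,T;\textfrak{S}^{2,1})$, since strong continuity of $\mathcal{U}$ on $H^2$ does not automatically transfer to continuity of the conjugation in the $\textfrak{S}^{2,1}$ trace norm. I would handle this by a density argument: the map $t\mapsto V_t A V_t^*$ with $V_t$ strongly continuous and uniformly bounded on $L^2$ is trace-norm continuous for finite-rank $A$ (where the claim is transparent), and the general case $A\in\textfrak{S}^1$ follows by approximating $A$ in $\textfrak{S}^1$ and using the uniform $H^2$-bound on $\mathcal{U}$. A secondary point requiring care is the verification that $V(t)$, and in particular the exchange operator $X_{\omega_t}$, maps $H^2$ into itself with a bound uniform on $[0,T]$, which rests on the $H^2$-estimate for $X_\omega$ obtained within the proof of Lemma~\ref{lemma:general estimates for the interaction} and on the hypothesis $\omega\in L^\infty(0,T;\textfrak{S}^{2,1})$.
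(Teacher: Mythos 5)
Your argument is correct, but it takes a genuinely different route from the paper. The paper never constructs the full propagator for $h(t)$ in this lemma: it runs a contraction argument directly at the density-matrix level, defining $\Phi(\Gamma)(t) = U_{\alpha}(t,0)\big( \Gamma_0 - i\varepsilon^{-1}\int_0^t U_{\alpha}(0,s)\left[ K*\rho_{\omega_s} - X_{\omega_s}, \Gamma_s \right] U_{\alpha}(s,0)\,ds \big) U_{\alpha}(0,t)$ on the Banach space of trajectories $\mathcal{Z}_T \subset C\left(0,T; \textfrak{S}^{2,1}\right)$, and obtaining a fixed point for small $T^*$ via the commutator bound \eqref{eq:estimate commutator with exchange term in L-s,1 norm} together with \eqref{eq:W-0-2-infty estimate for the direct interaction potential}; since $T^*$ does not depend on $\Gamma$ (the equation is linear), the solution is then patched to all of $[0,T]$. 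You instead run the Picard iteration at the wave-function level: a Dyson series around $U_{\alpha}$ builds the perturbed propagator $\mathcal{U}(t,s)$ on $L^2$ and $H^2$, and the solution is the conjugation $\mathcal{U}(t,0)\Gamma_0\,\mathcal{U}(t,0)^*$. Both approaches rest on the same two inputs — Lemma~\ref{lemma:evolution operator for magnetic laplacian} and the boundedness of $K*\rho_{\omega}-X_{\omega}$ from Lemma~\ref{lemma:general estimates for the interaction}, where you correctly note that the $H^2$-operator bound for $X_{\omega}$ is exactly the estimate $\norm{(1-\Delta)^{s/2} X_{\omega_{\pm}} (1-\Delta)^{-s/2}}_{\textfrak{S}^{\infty}} \leq C N^{-1}\norm{\omega_{\pm}}_{\textfrak{S}^{s,1}}$ proved inside that lemma. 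Your route buys more structure: unitarity, $\mathcal{U}(t,s)^* = \mathcal{U}(s,t)$, a global-in-$[0,T]$ construction without a small-time restriction, and a conjugation formula that would give positivity preservation for free — structure the paper only obtains later, in the proof of Proposition~\ref{proposition:well-posedness of Maxwell-Schroedinger}, via Kato's theorem for the nonlinear flow. The price is precisely the two delicate points you identified: strong $H^2$-continuity of $\mathcal{U}$ and the finite-rank density argument upgrading it to $C\left(0,T;\textfrak{S}^{2,1}\right)$ continuity of the conjugation, both of which the paper's fixed-point space encodes automatically. Your rank-one reduction is sound (note it only needs strong continuity of $V_t = (1-\Delta)\mathcal{U}(t,0)(1-\Delta)^{-1}$, not of $V_t^*$, since $V_t \ketbra{\phi}{\psi} V_t^* = \ketbra{V_t\phi}{V_t\psi}$). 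One point to make explicit: since $\omega$ is only $L^{\infty}$ in time with values in $\textfrak{S}^{2,1}$, the map $r \mapsto X_{\omega_r}$ is norm-continuous on $L^2$ but merely bounded on $H^2$, so the Duhamel integrals in your Dyson series should be defined in $L^2$ and upgraded to $H^2$ by the uniform bound and weak measurability (Pettis); this is routine and matches the level of care the paper itself applies when it defines its integral as a Bochner integral on $\textfrak{S}^{2,1}$.
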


\begin{proof}
Let $(\mathcal{Z}_T, d_T)$ be the Banach space $\mathcal{Z}_T = \left\{ \Gamma \in C \left( 0,T; \textfrak{S}^{2,1} \left( L^2(\mathbb{R}^3) \right) \right) : \Gamma(0) = \Gamma_0 \right\}$ with metric $d_T(\Gamma, \Gamma') = \sup_{t \in [0,T]} \norm{\Gamma(t) - \Gamma'(t)}_{\textfrak{S}^{2,1}}$. We define the mapping $\Phi: \mathcal{Z}_T \rightarrow \mathcal{Z}_T$ by
\begin{align}
\label{eq:linear Schroedinger equation fixed point map}
\Phi(\Gamma)(t) 
&= U_{\alpha}(t,0) \bigg( \Gamma_0
- i \varepsilon^{-1} \int_0^t U_{\alpha}(0,s)
\left[ K * \rho_{\omega_s} - X_{\omega_s} , \Gamma_s \right] U_{\alpha}(s,0)\,ds \bigg) U_{\alpha}(0,t),
\end{align}
where $U_{\alpha}$ is defined as in Lemma \ref{lemma:evolution operator for magnetic laplacian}.
By means of \eqref{eq:W-0-2-infty estimate for the direct interaction potential}, \eqref{eq:estimate commutator with exchange term in L-s,1 norm} and the properties of $U_{\alpha}$ it is straightforward to see that the integrand on the right hand side defines a strongly continuous function w.r.t. $s$ on $\textfrak{S}^{1}$ and a summable function w.r.t. $s$ on $\textfrak{S}^{2,1}$, enabling us to define its integral on $\textfrak{S}^{1}$ as a Riemann integral and on $\textfrak{S}^{2,1}$ as a Bochner integral. Together with the strong differentiability of $U_{\alpha}$ this proves that the right hand side of \eqref{eq:linear Schroedinger equation fixed point map} is an element of $C^1 \left( 0,T; \textfrak{S}^{1} \left( L^2(\mathbb{R}^3) \right) \right) \cap C \left( 0,T; \textfrak{S}^{2,1} \left( L^2(\mathbb{R}^3) \right) \right)$, ensuring that the mapping $\Phi$ is well defined. By means of \eqref{eq:W-0-2-infty estimate for the direct interaction potential},  \eqref{eq:estimate commutator with exchange term in L-s,1 norm} and Lemma \ref{lemma:evolution operator for magnetic laplacian}  we obtain
\begin{align}
\label{eq:MS well-posedness linearized equation contraction argument}
d_T \left( \Phi(\Gamma),  \Phi(\Gamma') \right) 
&\leq
\varepsilon^{-1} \sup_{t \in [0,T]} \left\{  \int_0^t  \norm{U_{\alpha}(t,s)
\left[ K * \rho_{\omega_s} - X_{\omega_s}  , \Gamma_s - \Gamma_s' \right] U_{\alpha}(s,t) }_{\textfrak{S}^{2,1}}ds \right\}
\nonumber \\
&\leq  d_T (\Gamma, \Gamma') T  C_{\varepsilon}  N^{-1}  K_{\varepsilon, \alpha, T} \norm{\omega}_{L_T^{\infty} \textfrak{S}^{2,1}} 
\end{align}
for  $\Gamma, \Gamma' \in \mathcal{Z}_T$ where $C_{\varepsilon}$ is a generic constant depending on $\varepsilon$. Thus if we choose $T^* \in (0,T]$ sufficiently small we obtain that  $\Phi$ is a contraction on $(\mathcal{Z}_{T^*}, d_{T^*})$. This proves the existence of a unique fixed point $\Gamma \in \mathcal{Z}_{T^*}$ satisfying $d_{T^*}(\Phi(\Gamma), \Gamma) = 0$. By differentiation of \eqref{eq:linear Schroedinger equation fixed point map} we obtain the existence of a unique solution to \eqref{eq: Well-posedness MS linearized Schroedinger equation} until time $T^*$. The solution can be extended to the interval $[0,T]$ because
$T^*$ can be chosen independently of $\Gamma$. Its regularity can be inferred from the regularity of the right hand side of \eqref{eq:linear Schroedinger equation fixed point map} as discussed above.
\end{proof}

\begin{lemma}
\label{lemma:MS well-posedness linearized field equation}
Let $T >0$, $(\omega, \alpha) \in  C \left( 0,T; \textfrak{S}^{1,1} \left( L^2(\mathbb{R}^3 \right) \right) \times C(0,T; \mathfrak{h}_{1/2} \cap \dot{\mathfrak{h}}_{-1/2})$, $\xi_0 \in \mathfrak{h}_{1/2} \cap \dot{\mathfrak{h}}_{-1/2}$ and $\vJ_{\omega_t,  \alpha_t}$ be defined as in \eqref{eq:definition charge current}. The linear equation
\begin{align}
\label{eq:MS well-posedness linearized field equation}
i \partial_t \xi_t(k,\lambda) &=  \abs{k}  \xi_t(k,\lambda) -  \sqrt{ \frac{4 \pi^3}{\abs{k}}}  \mathcal{F}[\kappa](k) \vep_{\lambda}(k) \cdot \mathcal{F}[\vJ_{\omega_t, \alpha_t}](k)  
\end{align}
with initial datum $\xi_0$ has a unique $ C \big( 0,T; \mathfrak{h}_{1/2} \cap \dot{\mathfrak{h}}_{-1/2} \big) \cap C^1 \big( 0,T;  \dot{\mathfrak{h}}_{-1/2} \big)$ solution.
\end{lemma}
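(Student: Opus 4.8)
The plan is to solve \eqref{eq:MS well-posedness linearized field equation} explicitly by a variation-of-constants (Duhamel) argument. The linear part $\xi \mapsto |k|\xi$ is multiplication by the real, $\lambda$-independent function $|k|$, so the homogeneous flow is multiplication by the unimodular factor $e^{-i|k|t}$. This is a strongly continuous group acting isometrically on every weighted space $\dot{\mathfrak{h}}_a$ and $\mathfrak{h}_a$ (isometry because $|e^{-i|k|t}|=1$, strong continuity by dominated convergence with majorant four times the weight times $|\xi_0|^2$). Abbreviating the source by
\begin{align}
g_s(k,\lambda) &= \sqrt{\tfrac{4\pi^3}{\abs{k}}}\,\mathcal{F}[\kappa](k)\,\vep_\lambda(k)\cdot\mathcal{F}[\vJ_{\omega_s,\alpha_s}](k),
\end{align}
I would then define the candidate solution
\begin{align}
\xi_t(k,\lambda) &= e^{-i\abs{k}t}\,\xi_0(k,\lambda) + i\int_0^t e^{-i\abs{k}(t-s)}\,g_s(k,\lambda)\,ds
\end{align}
and verify that it lies in the asserted spaces and solves the equation.

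The key quantitative input is a bound on $g_s$ in $\mathfrak{h}_{1/2}\cap\dot{\mathfrak{h}}_{-1/2}$, uniform in $s\in[0,T]$. Applying Lemma~\ref{lemma:Charge current L-infty estimate of the difference} with $(\omega',\alpha')=(0,0)$ (so that $\vJ_{0,0}=0$) yields $\norm{\mathcal{F}[\vJ_{\omega_s,\alpha_s}]}_{L^\infty}\leq C\big(1+\min\{\norm{\alpha_s}_{\dot{\mathfrak{h}}_{-1/2}},\norm{\alpha_s}_{\mathfrak{h}},\norm{\alpha_s}_{\dot{\mathfrak{h}}_{1/2}}\}\big)N^{-1}\norm{\omega_s}_{\textfrak{S}_\varepsilon^{1,1}}$, which is finite and continuous in $s$ by the continuity of $(\omega,\alpha)$ (the fixed-$\varepsilon$ norms $\textfrak{S}^{1,1}$ and $\textfrak{S}_\varepsilon^{1,1}$ being equivalent). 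Pulling $\mathcal{F}[\vJ]$ out in $L^\infty$ reduces matters to the finiteness of $\int w(k)\,\abs{k}^{-1}\abs{\mathcal{F}[\kappa](k)}^2\,dk$ with $w\in\{\abs{k}^{-1},(1+\abs{k}^2)^{1/2}\}$; both integrals are controlled by Assumption~\ref{assumption:cutoff function} through \eqref{eq:assumption on the cutoff parameter}, using $\mathcal{F}[\kappa]\in L^\infty$ together with the integrability of $\abs{k}^{-1}$ near the origin in $\mathbb{R}^3$ for the low frequencies and $\abs{k}^{1/2}\mathcal{F}[\kappa]\in L^2$ for the high ones. This gives $\norm{g_s}_{\mathfrak{h}_{1/2}\cap\dot{\mathfrak{h}}_{-1/2}}\leq C\norm{\mathcal{F}[\vJ_{\omega_s,\alpha_s}]}_{L^\infty}$, and the same difference estimate makes $s\mapsto g_s$ continuous into $\mathfrak{h}_{1/2}\cap\dot{\mathfrak{h}}_{-1/2}$.

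Granting this, the regularity claims follow. The homogeneous term $e^{-i\abs{k}t}\xi_0$ is continuous into $\mathfrak{h}_{1/2}\cap\dot{\mathfrak{h}}_{-1/2}$ by isometry and strong continuity, and the Duhamel integral of a continuous, bounded $\mathfrak{h}_{1/2}\cap\dot{\mathfrak{h}}_{-1/2}$-valued integrand is continuous in $t$; hence $\xi\in C(0,T;\mathfrak{h}_{1/2}\cap\dot{\mathfrak{h}}_{-1/2})$. Differentiating the Duhamel formula gives $\partial_t\xi_t = -i\abs{k}\xi_t + i\,g_t$: here $g_t\in\dot{\mathfrak{h}}_{-1/2}$ by the above, while $\norm{\abs{k}\xi_t}_{\dot{\mathfrak{h}}_{-1/2}}=\norm{\xi_t}_{\dot{\mathfrak{h}}_{1/2}}\leq\norm{\xi_t}_{\mathfrak{h}_{1/2}}$ thanks to the embedding $\mathfrak{h}_{1/2}\hookrightarrow\dot{\mathfrak{h}}_{1/2}$ (since $\abs{k}\leq(1+\abs{k}^2)^{1/2}$), and both terms are continuous in $t$, so $\xi\in C^1(0,T;\dot{\mathfrak{h}}_{-1/2})$. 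The loss of one power of $\abs{k}$ upon differentiation is precisely why the $C^1$ statement holds only in the weaker space $\dot{\mathfrak{h}}_{-1/2}$.

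For uniqueness I would argue at the level of the homogeneous equation: the difference $\eta=\xi-\xi'$ of two solutions with the same datum satisfies $\partial_t\eta_t=-i\abs{k}\eta_t$ in $\dot{\mathfrak{h}}_{-1/2}$ with $\eta_0=0$, so
\begin{align}
\frac{d}{dt}\norm{\eta_t}_{\dot{\mathfrak{h}}_{-1/2}}^2 &= 2\,\mathrm{Re}\,\scp{\eta_t}{-i\abs{k}\eta_t}_{\dot{\mathfrak{h}}_{-1/2}} = 2\,\mathrm{Re}\big(-i\norm{\eta_t}_{\mathfrak{h}}^2\big)=0,
\end{align}
forcing $\eta\equiv0$ on $[0,T]$. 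The main obstacle is really only the source estimate of the second paragraph, namely matching the $\abs{k}^{-1/2}$ singularity from $\sqrt{1/\abs{k}}$ and the decay of $\mathcal{F}[\kappa]$ simultaneously against the two opposite weights of the intersection space $\mathfrak{h}_{1/2}\cap\dot{\mathfrak{h}}_{-1/2}$; everything else is the standard linear-ODE-in-a-Banach-space machinery, made rigorous by the explicit isometric flow $e^{-i\abs{k}t}$.
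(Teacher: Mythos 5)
Your proposal is correct and takes essentially the same route as the paper: the explicit Duhamel representation with the isometric flow $e^{-i\abs{k}t}$, continuity of the source in $\mathfrak{h}_{1/2} \cap \dot{\mathfrak{h}}_{-1/2}$ obtained from \eqref{eq:assumption on the cutoff parameter} together with \eqref{eq:Charge current L-infty estimate of the difference} (applied exactly as you do, with the second argument set to zero), and differentiation of the formula to get the $C^1\big(0,T;\dot{\mathfrak{h}}_{-1/2}\big)$ regularity. The only cosmetic difference is in uniqueness, where the paper observes $i \partial_t e^{i \abs{k} t}\left( \xi_t - \xi_t' \right) = 0$ and concludes via Duhamel, while you use the equivalent energy identity $\frac{d}{dt}\norm{\eta_t}_{\dot{\mathfrak{h}}_{-1/2}}^2 = 0$.
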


\begin{proof}
We consider
\begin{align}
\label{eq:MS wellposedness linearized field equation expanded solution}
\xi_t(k,\lambda)
&= e^{- i \abs{k} t} \xi_0 (k,\lambda)
+ i e^{-i \abs{k} t} \int_0^t  e^{i \abs{k} s} \sqrt{ \frac{4 \pi^3}{\abs{k}}} \mathcal{F}[\kappa](k) \vep_{\lambda}(k) \mathcal{F}[\vJ_{\omega_s, \alpha_s}](k)\,ds .
\end{align}
Note that the integrand on the right hand side is a $ C \big( 0,T; \mathfrak{h}_{1/2} \cap \dot{\mathfrak{h}}_{-1/2} \big)$ function because of \eqref{eq:assumption on the cutoff parameter}, \eqref{eq:Charge current L-infty estimate of the difference} and the regularity properties of $(\omega,\alpha)$. If we define the integral in the Riemann sense, we obtain the strong differentiability of $\xi_t$ with respect to time in $\dot{\mathfrak{h}}_{-1/2}$. Because \eqref{eq:MS wellposedness linearized field equation expanded solution} satisfies \eqref{eq:MS well-posedness linearized field equation} this shows the existence of at least one solution of \eqref{eq:MS well-posedness linearized field equation}.
In order to prove uniqueness let $\xi$ and $\xi'$ be two solutions of \eqref{eq:MS well-posedness linearized field equation} with initial datum $\xi_0$. Since
$i \partial_t e^{i \abs{k} t} \left( \xi_t - \xi_t' \right) = 0$
we get $\xi = \xi'$ by Duhamel's formula, showing the claim.
\end{proof}

\subsection{Local solutions}

\begin{lemma}
\label{lemma:MS system local well-posedness result stated by use of negative weighted Lp-spaces}
For all $(\omega_0, \alpha_0) \in \textfrak{S}^{2,1} (L^2(\mathbb{R}^3)) \times  \mathfrak{h}_{1/2} \cap \dot{\mathfrak{h}}_{-1/2}$ there exists $T >0$ and a unique
$C \big( 0,T; \textfrak{S}^{2,1} \big( L^2(\mathbb{R}^3) \big) \big) \cap C^1 \big( 0,T; \textfrak{S}^{1} \big( L^2(\mathbb{R}^3) \big) \big)   \times  C \big( 0,T; \mathfrak{h}_{1/2} \cap \dot{\mathfrak{h}}_{-1/2} \big) \cap C^1 \big( 0,T; \dot{\mathfrak{h}}_{-1/2} \big) $-valued function which satisfies \eqref{eq:Maxwell-Schroedinger equations}  in $\textfrak{S}^1(L^2(\mathbb{R}^3)) \oplus  \dot{\mathfrak{h}}_{-1/2}$ with initial datum $(\omega_0, \alpha_0)$.
\end{lemma}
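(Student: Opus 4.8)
The plan is to construct the solution by a Banach fixed point argument built on the two linear results, Lemma~\ref{lemma:MS well-posedness linearized Schroedinger equation} and Lemma~\ref{lemma:MS well-posedness linearized field equation}, decoupling the quadratic self-interactions by freezing the coefficients. Concretely, given $R_\alpha, R_2 > 0$ and $T>0$, I would work on the set $\mathcal{X}_T$ of pairs $(\omega,\alpha)$ with $\omega \in C(0,T;\textfrak{S}^{1,1}) \cap L^\infty(0,T;\textfrak{S}^{2,1})$ and $\alpha \in C(0,T;\mathfrak{h}_{1/2}\cap\dot{\mathfrak{h}}_{-1/2}) \cap W^{1,1}(0,T;\dot{\mathfrak{h}}_{-1/2})$ satisfying $(\omega(0),\alpha(0)) = (\omega_0,\alpha_0)$, $\norm{\omega}_{L_T^\infty \textfrak{S}^{2,1}} \leq R_2$, $\norm{\alpha}_{L_T^\infty(\mathfrak{h}_{1/2}\cap\dot{\mathfrak{h}}_{-1/2})} \leq R_\alpha$ and $\norm{\partial_t \alpha}_{L_T^1 \dot{\mathfrak{h}}_{-1/2}} \leq 1$, equipped with the weaker metric $d_T((\omega,\alpha),(\omega',\alpha')) = \norm{\omega-\omega'}_{L_T^\infty \textfrak{S}^{1,1}} + \norm{\alpha-\alpha'}_{L_T^\infty(\dot{\mathfrak{h}}_{-1/2}\cap\dot{\mathfrak{h}}_{1/2})}$. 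Completeness of $(\mathcal{X}_T, d_T)$ follows from weak-$*$ lower semicontinuity of the strong norms along $d_T$-Cauchy sequences, the uniform bound on $\partial_t\alpha$, and preservation of the initial data under the uniform-in-time limit. On $\mathcal{X}_T$ define $\Phi(\omega,\alpha) = (\Gamma,\xi)$, where $\Gamma$ solves the linear Schr\"odinger equation with frozen potentials $\kappa*\vA_\alpha$, $K*\rho_\omega$ and $X_\omega$ and datum $\omega_0$ (Lemma~\ref{lemma:MS well-posedness linearized Schroedinger equation}), and $\xi$ solves the linear field equation driven by $\vJ_{\omega,\alpha}$ of \eqref{eq:definition charge current} with datum $\alpha_0$ (Lemma~\ref{lemma:MS well-posedness linearized field equation}). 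A fixed point of $\Phi$ is precisely a solution of \eqref{eq:Maxwell-Schroedinger equations} on $[0,T]$, and the claimed $C^1$-regularity in $\textfrak{S}^1 \oplus \dot{\mathfrak{h}}_{-1/2}$ -- hence the validity of the equations in that space -- is inherited from the two linear lemmas.

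Next I would verify the self-mapping property, choosing the radii in a deliberate order to avoid circularity. For the field component, the representation \eqref{eq:MS wellposedness linearized field equation expanded solution}, the single-current bound from \eqref{eq:Charge current L-infty estimate of the difference} (taking $\omega'=\alpha'=0$) and \eqref{eq:assumption on the cutoff parameter} give $\norm{\xi}_{L_T^\infty(\mathfrak{h}_{1/2}\cap\dot{\mathfrak{h}}_{-1/2})} \leq \norm{\alpha_0}_{\mathfrak{h}_{1/2}\cap\dot{\mathfrak{h}}_{-1/2}} + C_\varepsilon T(1+R_\alpha)R_2$, and reading off $\partial_t\xi$ from the field equation yields $\norm{\partial_t\xi}_{L_T^1\dot{\mathfrak{h}}_{-1/2}} \leq T(\norm{\xi}_{L_T^\infty\mathfrak{h}_{1/2}} + C_\varepsilon(1+R_\alpha)R_2)$. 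For the density matrix, conjugation by the group $U_\alpha$ of Lemma~\ref{lemma:evolution operator for magnetic laplacian} preserves the $\textfrak{S}^1$-norm but on $\textfrak{S}^{2,1}$ amplifies by the factor $K_{\varepsilon,\alpha,T}$ of \eqref{eq:evolution operator for magnetic laplacian estimate}, which -- crucially -- depends only on $R_\alpha$ and the imposed bound $1$ on $\norm{\partial_t\alpha}_{L_T^1\dot{\mathfrak{h}}_{-1/2}}$, hence is bounded by a constant $\bar K(\varepsilon,R_\alpha)$ independent of $R_2$. Combining this with the interaction estimates \eqref{eq:W-0-2-infty estimate for the direct interaction potential} and \eqref{eq:estimate commutator with exchange term in L-s,1 norm} of Lemma~\ref{lemma:general estimates for the interaction}, applied as in \eqref{eq:linear Schroedinger equation fixed point map}, gives $\norm{\Gamma}_{L_T^\infty\textfrak{S}^{2,1}} \leq \bar K(\varepsilon,R_\alpha)\norm{\omega_0}_{\textfrak{S}^{2,1}} + C(\varepsilon,R_\alpha,R_2)\,T$. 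One therefore fixes first $R_\alpha = 2(\norm{\alpha_0}_{\mathfrak{h}_{1/2}\cap\dot{\mathfrak{h}}_{-1/2}}+1)$, then $R_2 = 2\bar K(\varepsilon,R_\alpha)\norm{\omega_0}_{\textfrak{S}^{2,1}}$, and finally $T$ small enough that all the additive $O(T)$ terms are absorbed, which yields $\Phi(\mathcal{X}_T) \subseteq \mathcal{X}_T$.

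The contraction estimate is the heart of the matter. For the field part, \eqref{eq:MS wellposedness linearized field equation expanded solution}, \eqref{eq:Charge current L-infty estimate of the difference} and $\norm{\cdot}_{\textfrak{S}_\varepsilon^{1,1}} \leq C_\varepsilon \norm{\cdot}_{\textfrak{S}^{1,1}}$ directly give $\norm{\xi-\xi'}_{L_T^\infty(\dot{\mathfrak{h}}_{-1/2}\cap\dot{\mathfrak{h}}_{1/2})} \leq C(\varepsilon,R_\alpha,R_2)\,T\,d_T$. For the density matrices I would write $\Gamma,\Gamma'$ via the Duhamel form \eqref{eq:linear Schroedinger equation fixed point map} and split the difference into (i) terms where the frozen potentials $K*\rho_\omega - X_\omega$ and $K*\rho_{\omega'} - X_{\omega'}$ differ, controlled by Lemma~\ref{lemma:general estimates for the interaction} and carrying an explicit $\int_0^t$, and (ii) terms where the two magnetic groups $U_\alpha$ and $U_{\alpha'}$ differ. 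The latter is the main obstacle: expanding $U_\alpha(t,s) - U_{\alpha'}(t,s)$ by Duhamel produces the \emph{first-order} operator $2\,\kappa*\vA_{\alpha-\alpha'}\cdot(-i\varepsilon\nabla)$ sandwiched between the two groups, so measuring the result in $\textfrak{S}^{1,1}$ costs one derivative. This is exactly why boundedness is kept in $\textfrak{S}^{2,1}$ while the contraction is run in $\textfrak{S}^{1,1}$: the lost derivative falls on the uniformly $\textfrak{S}^{2,1}$-bounded factors ($\omega_0$, $\Gamma'$), and $\alpha-\alpha'$ enters only through $\norm{\kappa*\vA_{\alpha-\alpha'}}_{L^\infty}$, estimated by $\norm{\alpha-\alpha'}_{\dot{\mathfrak{h}}_{1/2}}$ via \eqref{eq:vector potential L-infty estimate}. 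As every such contribution carries a factor $\int_0^t$, the total estimate has the form $d_T(\Phi(\omega,\alpha),\Phi(\omega',\alpha')) \leq C(\varepsilon,R_\alpha,R_2)\,T\,d_T((\omega,\alpha),(\omega',\alpha'))$, so shrinking $T$ makes $\Phi$ a contraction. The Banach fixed point theorem then gives a unique fixed point in $\mathcal{X}_T$, hence a local solution with the asserted regularity; uniqueness in the full solution class follows from the same difference estimates by a short Gr\"onwall argument. The principal difficulty throughout is the derivative loss in comparing the two magnetic evolution groups, which forces the two-tier $\textfrak{S}^{2,1}/\textfrak{S}^{1,1}$ functional setting.
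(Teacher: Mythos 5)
Your proposal reproduces the paper's own proof in all essentials: the same two-tier functional setting (uniform bound in $\textfrak{S}^{2,1}$, contraction in the weaker $\textfrak{S}^{1,1}$-based metric, which is exactly the paper's space \eqref{eq:local solutions MS equations definition Banach space 1}), the same map $\Phi$ assembled from the two linear lemmas, the same order of fixing the radii (field radius first, density radius through the constant of Lemma \ref{lemma:evolution operator for magnetic laplacian}, which depends on the field radius and the derivative bound but not on the density radius, then $T$), and the same identification of the derivative loss from $2\,\kappa * \vA_{\alpha-\alpha'}\cdot i\varepsilon\nabla$ as the crux. The only organizational difference is in the contraction step: you expand $U_{\alpha}-U_{\alpha'}$ by Duhamel, whereas the paper differentiates $U_{\alpha}^*(t;0)\left( \Gamma_t - \Gamma_t' \right) U_{\alpha}(t;0)$, so that only a single group appears and the difference of the magnetic Hamiltonians enters as the commutator $\big[ (\kappa * \vA_{\alpha_s})^2 - (\kappa * \vA_{\alpha_s'})^2 + 2\,\kappa * \vA_{\alpha_s - \alpha_s'}\cdot i\varepsilon\nabla , \Gamma_s' \big]$; the two computations are algebraically equivalent and rest on the same ingredients ($H^s$-operator bounds for $U_\alpha$, $0\leq s\leq 2$, and the uniform $\textfrak{S}^{2,1}$ bound on $\Gamma'$). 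Your metric on the field component is also equivalent to the paper's, since $\norm{\alpha}_{\mathfrak{h}}^2 \leq \norm{\alpha}_{\dot{\mathfrak{h}}_{-1/2}} \norm{\alpha}_{\dot{\mathfrak{h}}_{1/2}}$ gives $\dot{\mathfrak{h}}_{-1/2}\cap\dot{\mathfrak{h}}_{1/2} = \mathfrak{h}_{1/2}\cap\dot{\mathfrak{h}}_{-1/2}$ with comparable norms.

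There is, however, one genuine defect in your setup: completeness of $(\mathcal{X}_T, d_T)$. You impose $\norm{\partial_t\alpha}_{L_T^1\dot{\mathfrak{h}}_{-1/2}}\leq 1$ and appeal to lower semicontinuity, but an $L^1$-in-time bound on derivatives does not survive uniform limits: $L^1(0,T;\dot{\mathfrak{h}}_{-1/2})$ is not a dual space, and a $d_T$-Cauchy sequence with uniformly bounded $L^1$ derivative can converge uniformly to a continuous limit whose time derivative is only a singular measure — the Cantor function tensored with a fixed $\phi\in\mathfrak{h}_{1/2}\cap\dot{\mathfrak{h}}_{-1/2}$ is the standard counterexample. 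Your ball is therefore not closed, and the Banach fixed point theorem does not apply as stated; note that the $W^{1,1}$ constraint cannot simply be dropped, since Lemma \ref{lemma:evolution operator for magnetic laplacian} needs it to define $U_\alpha$ on the whole domain of $\Phi$. This is precisely why the paper's space carries the constraint $\norm{\partial_t\alpha}_{L^2(0,T;\dot{\mathfrak{h}}_{-1/2})}\leq R_2$: closed balls of $L^2$ in time are weakly closed, so the bound passes to the limit, while $\norm{\partial_t\alpha}_{L_T^1\dot{\mathfrak{h}}_{-1/2}}\leq T^{1/2}R_2$ still keeps the propagator constant \eqref{eq:evolution operator for magnetic laplacian estimate} under control ($K_{\varepsilon,\alpha,T}\leq 3K_\varepsilon R_2^4$ for small $T$, cf. \eqref{eq:estimate for the constant of the magnetic laplacian}), preserving your key point that the $\textfrak{S}^{2,1}$-amplification factor is independent of the density radius. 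Since your own self-map estimate in fact controls $\partial_t\xi$ in $L_T^\infty\dot{\mathfrak{h}}_{-1/2}$ by an $R$-dependent constant, replacing your $L^1$ constraint by the $L^2$ one (with the $T^{1/2}$ gain) repairs the argument with no other changes.
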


\begin{proof}
The local existence of the solution will be shown by the means of the contraction mapping principle. To this end, we define for $(\omega_0, \alpha) \in \textfrak{S}^{2,1}(L^2(\mathbb{R}^3)) \times \mathfrak{h}_{1/2} \cap \dot{\mathfrak{h}}_{-1/2}$ and $T, R_1, R_2 > 0$
satisfying $R_2 \geq 1+ 2 \norm{\alpha_0}_{\mathfrak{h}_{1/2} \, \cap \, \dot{\mathfrak{h}}_{-1/2}}$ and $R_1 \geq 1 + 20 K_{\varepsilon}^2  R_2^{8} \norm{\omega_0}_{\textfrak{S}^{2,1}}$ (with $K_\varepsilon$ being the constant of Lemma \ref{lemma:evolution operator for magnetic laplacian}) the $(T,R_1,R_2)$-dependent space
\begin{align}
\label{eq:local solutions MS equations definition Banach space 1}
\mathcal{Z}_{T, R_1, R_2} 
&= \Big\{
(\omega, \alpha) \in \big( C(0,T; \textfrak{S}^{1,1}(L^2(\mathbb{R}^3)) \cap L^{\infty}(0,T; \textfrak{S}^{2,1}(L^2(\mathbb{R}^3)) \big) 
\nonumber \\
&\qquad
\times \big( C(0,T; \mathfrak{h}_{1/2}  \cap  \dot{\mathfrak{h}}_{-1/2} ) \cap W^{1,2}(0,T; \dot{\mathfrak{h}}_{-1/2} ) \big)  : 
(\omega(t),\alpha(t)) \big|_{t= 0} = (\omega_0, \alpha) ,
\nonumber \\
&\qquad
\norm{\omega}_{L_T^{\infty}\textfrak{S}^{2,1}} \leq R_1 ,
\max \{ \norm{\alpha}_{L_T^{\infty} \mathfrak{h}_{1/2} \, \cap \, L_T^{\infty} \dot{\mathfrak{h}}_{-1/2}}, \norm{\partial_t \alpha}_{L^2(0,T; \dot{\mathfrak{h}}_{-1/2} )} \} \leq R_2 
\Big\} .
\end{align}
Equipping $\mathcal{Z}_{T, R_1, R_2}$ with the metric
\begin{align}
d_T \big( (\omega, \alpha) , (\omega', \alpha')  \big)
&= \max \left\{
\norm{\omega - \omega'}_{L_T^{\infty}\textfrak{S}^{1,1}} ,
\norm{\alpha - \alpha'}_{L_T^{\infty} \mathfrak{h}_{1/2} \, \cap \, L_T^{\infty} \dot{\mathfrak{h}}_{-1/2}} \right\},
\end{align}
where $(\omega, \alpha)$, $(\omega', \alpha) \in \mathcal{Z}_{T, R_1, R_2}$, leads to the Banach space $(\mathcal{Z}_{T, R_1, R_2}, d_T)$.

Next, we consider the solutions of the linearized equations from Lemma~\ref{lemma:MS well-posedness linearized Schroedinger equation} and Lemma~\ref{lemma:MS well-posedness linearized field equation}, satisfying
\begin{align}
\label{eq: MS well-posedness linearized MS equations}
\begin{cases}
i \varepsilon \partial_t \Gamma_t &= \left[ \left( - i \varepsilon \nabla - \kappa * \vA_{\alpha_t} \right)^2   + K * \rho_{\omega_t} - X_{\omega_t} , \Gamma_t \right]
\\
i \partial_t \xi_t(k,\lambda) &=  \abs{k}  \xi_t(k,\lambda) -  \sqrt{ \frac{4 \pi^3}{\abs{k}}}  \mathcal{F}[\kappa](k) \vep_{\lambda}(k) \cdot \mathcal{F}[\vJ_{\omega_t, \alpha_t}](k)  
\end{cases} ,
\end{align}
with initial datum $(\omega_0, \alpha_0)$, and define the mapping $\Phi: \mathcal{Z}_{T, R_1, R_2} \rightarrow C^1 \left( 0,T;\textfrak{S}^{1} \left( (L^2(\mathbb{R}^3) \right) \right) \cap C \left( 0,T; \textfrak{S}^{2,1} \left( L^2(\mathbb{R}^3) \right) \right) \times C \big( 0,T; \mathfrak{h}_{1/2} \cap \dot{\mathfrak{h}}_{-1/2} \big) \cap C^1 \big( 0,T;  \dot{\mathfrak{h}}_{-1/2} \big)$ by
\begin{align}
\label{eq:contraction map}
\Phi \begin{pmatrix}
\omega \\ \alpha
\end{pmatrix}
&= 
\begin{pmatrix}
\Gamma \\ \xi
\end{pmatrix} .
\end{align}
With this regard note that the assumptions on $(\omega, \alpha)$ from Lemma~\ref{lemma:MS well-posedness linearized Schroedinger equation} and Lemma \ref{lemma:MS well-posedness linearized field equation} are satisfied because $W^{1,2}(0,T; \dot{\mathfrak{h}}_{-1/2} ) \subset W^{1,1}(0,T; \dot{\mathfrak{h}}_{-1/2} )$. Below, we will show that there exists a $\varepsilon$-dependent constant $\widetilde{C}_{\varepsilon} \geq 0$  such that for all times $T \leq \frac{1}{\widetilde{C}_{\varepsilon} R_1^2 R_2^8}$ the mapping $\Phi: \mathcal{Z}_{T, R_1, R_2}  \rightarrow \mathcal{Z}_{T, R_1, R_2}$ is well defined and is a contraction  on $(\mathcal{Z}_{T, R_1, R_2} , d_{T})$. By the Banach fixed-point theorem we obtain the existence of a unique fixed point $(\omega, \alpha) \in \mathcal{Z}_{T, R_1, R_2}$ such that $d \big( (\omega, \alpha) , \Phi (\omega, \alpha) \big) = 0$. Together with the regularity properties of the linearized equations from Lemma~\ref{lemma:MS well-posedness linearized Schroedinger equation} and Lemma~\ref{lemma:MS well-posedness linearized field equation} this proves Lemma \ref{lemma:MS system local well-posedness result stated by use of negative weighted Lp-spaces}.

\paragraph{Well-definedness of $\Phi$:}
Next, we prove that, for $T \leq \frac{1}{\widetilde{C}_{\varepsilon} R_1^2 R_2^8}$ and $\widetilde{C}_{\varepsilon} >0$ chosen sufficiently large, $\Phi$ maps $\mathcal{Z}_{T, R_1, R_2}$ into itself. To this end we use the expansion \eqref{eq:MS wellposedness linearized field equation expanded solution} with $\xi_0 = \alpha_0$, \eqref{eq:assumption on the cutoff parameter} and \eqref{eq:Charge current L-infty estimate of the difference} to estimate
\begin{align}
\norm{\xi_t}_{\mathfrak{h}_{1/2} \, \cap \, \dot{\mathfrak{h}}_{-1/2}}
&\leq  \norm{\alpha_0}_{\mathfrak{h}_{1/2} \, \cap \, \dot{\mathfrak{h}}_{-1/2}}
+ C \int_0^t \norm{\left< \cdot \right> \abs{\cdot}^{-1}  \mathcal{F}[\kappa]}_{L^2(\mathbb{R}^3)}
\norm{\mathcal{F}[\vJ_{\omega_s, \alpha_s}}_{L^{\infty}(\mathbb{R}^3, \mathbb{C}^3)}ds
\nonumber \\
&\leq  \norm{\alpha_0}_{\mathfrak{h}_{1/2} \, \cap \, \dot{\mathfrak{h}}_{-1/2}}
+ C_{\varepsilon} \int_0^t \left( 1+ \norm{\alpha_s}_{\dot{\mathfrak{h}}_{-1/2}} \right)  \norm{\omega_s}_{\textfrak{S}^{1,1}} ds
\nonumber \\
&\leq  \norm{\alpha_0}_{\mathfrak{h}_{1/2} \, \cap \, \dot{\mathfrak{h}}_{-1/2}}
+ C_{\varepsilon} \left( 1 + \norm{\alpha}_{L_T^{\infty} \dot{\mathfrak{h}}_{-1/2}} \right)  \norm{\omega}_{L_T^{\infty} \textfrak{S}^{1,1}}  t .
\end{align}
Using $R_2 \geq 1+ 2 \norm{\alpha_0}_{\mathfrak{h}_{1/2} \, \cap \, \dot{\mathfrak{h}}_{-1/2}}$ and $(\omega, \alpha) \in \mathcal{Z}_{T, R_1, R_2}$ we obtain
\begin{align}
\label{eq:well-definedness of contraction mapping in fixed-point argument estimate for xi}
\norm{\xi_t}_{ L_T^{\infty} \mathfrak{h}_{1/2} \, \cap \, L_T^{\infty} \dot{\mathfrak{h}}_{-1/2}}
&\leq  \frac{R_2}{2}
+ C_{\varepsilon}  R_1  R_2  T
\leq R_2
\end{align}
if $\widetilde{C}_{\varepsilon}$ in the definition of $T$ is chosen large enough.
By similar means we get
\begin{align}
\norm{\partial_t \xi_t}_{L_T^2 \dot{\mathfrak{h}}_{-1/2}}
&\leq T^{1/2} \norm{\partial_t \xi_t}_{L_T^{\infty} \dot{\mathfrak{h}}_{-1/2}}
\nonumber \\
&\leq  T^{1/2} \left(
\norm{\xi_t}_{L_T^{\infty} \mathfrak{h}_{1/2}}
+ C \norm{\abs{\cdot}^{-1} \mathcal{F}[\kappa]}_{L^2(\mathbb{R}^3)}
\norm{\mathcal{F}[\vJ_{\omega_t, \alpha_t}]}_{L_T^{\infty} L^{\infty}(\mathbb{R}^3, \mathbb{C}^3)} \right)
\nonumber \\
&\leq   T^{1/2} \left(
\norm{\xi_t}_{L_T^{\infty} \mathfrak{h}_{1/2}}
+ C_{\varepsilon}  R_1 R_2 \right) 
\nonumber \\
&\leq R_2 .
\end{align}
To obtain the ultimate inequality we have used \eqref{eq:well-definedness of contraction mapping in fixed-point argument estimate for xi}. 
Using the Duhamel expansion
\begin{align}
\Gamma_t &= U_{\alpha}(t;0) \omega_0 U^*_{\alpha}(t;0)
- i \varepsilon^{-1} \int_0^t  U_{\alpha}(t;s)
\left[ K * \rho_{\omega_s} - X_{\omega_s}  , \Gamma_s \right] U_{\alpha}^*(t;s)ds
\end{align}
with $U_{\alpha}$ being defined as in Lemma~\ref{lemma:evolution operator for magnetic laplacian}, \eqref{eq:evolution operator for magnetic laplacian estimate} and Lemma~\ref{lemma:general estimates for the interaction} we estimate 
\begin{align}
\norm{\Gamma}_{L_t^{\infty} \textfrak{S}^{2,1}}
&\leq K_{ \varepsilon, \alpha, t}^2
\left( \norm{\omega_0}_{\textfrak{S}^{2,1}}
 + C_{\varepsilon}
 \int_0^t  
\norm{\omega_s}_{\textfrak{S}^{2,1}}  \norm{\Gamma_s}_{\textfrak{S}^{2,1}} ds\right)
\nonumber \\
&\leq  K_{\varepsilon, \alpha, T}^2 
\left( \norm{\omega_0}_{\textfrak{S}^{2,1}}
+ C_{\varepsilon}  R_1 \int_0^t   \norm{\Gamma}_{L_s^{\infty} \textfrak{S}^{2,1}}ds \right) 
\quad \text{for all} \; t \in [0,T] .
\end{align}
Gr\"{o}nwall's lemma then leads to
\begin{align}
\norm{\Gamma}_{L_T^{\infty} \textfrak{S}^{2,1}} &\leq K_{\varepsilon, \alpha, T}^2 \norm{\omega_0}_{\textfrak{S}^{2,1}}
\left( 1 + e^{C_{\varepsilon} R_1 K_{\alpha, \varepsilon, T}^2 T } \right) .
\end{align}
Note that
\begin{align}
\label{eq:estimate for the constant of the magnetic laplacian}
K_{\varepsilon, \alpha, T} &\leq 2 K_{\varepsilon} R_2^4 e^{K_{\varepsilon} R_2 T^{1/2} } \leq 3  K_{\varepsilon} R_2^4
\end{align}
and 
\begin{align}
\norm{\Gamma}_{L_T^{\infty} \textfrak{S}^{2,1}} &\leq \frac{R_1}{2} 
\left( 1 + e^{C_{\varepsilon} R_1 K_{\alpha, \varepsilon, T}^2 T } \right) 
\leq R_1
\end{align}
if we choose $\widetilde{C}_{\varepsilon}$ in the definition of $T$ sufficiently large
because $R_1 \geq 1 + 20 K_{\varepsilon}^2  R_2^{8} \norm{\omega_0}_{\textfrak{S}^{2,1}}$.
In total, this shows that  $\Phi$ maps $\mathcal{Z}_{T,R_1, R_2}$ into itself.

\paragraph{Contraction property of $\Phi$:}

Next, we are going to prove that the mapping $\Phi$ is a contraction on $\mathcal{Z}_{T,R_1, R_2}$. Let us consider $(\omega,\alpha)$, $(\omega',\alpha') \in \mathcal{Z}_{T,R_1,R_2}$ and denote their images under the mapping $\Phi$ by $(\Gamma, \xi)$ and $(\Gamma', \xi')$ respectively.
Using
\begin{align}
&i \varepsilon \partial_t U_{\alpha}^*(t;0) \left( \Gamma_t - \Gamma_t' \right) U_{\alpha}(t;0) 
\nonumber \\
&\quad = U_{\alpha}^*(t;0)
\bigg(
\left[ K * \rho_{\omega_t} - X_{\omega_t}   , \Gamma_t \right]
- \left[ K * \rho_{\omega_t'} - X_{\omega_t'}  , \Gamma_t' \right]
\nonumber \\
&\qquad
- \left[ \left( - i \varepsilon \nabla - \kappa * \vA_{\alpha'_t} \right)^2 - \left( - i \varepsilon \nabla - \kappa * \vA_{\alpha_t} \right)^2 , \Gamma_t' \right]
\bigg) U_{\alpha}(t;0) 
\end{align}
%\tdn{It would be good to carefully check here if the Duhamel expansion is executed rigorously.}
and the fact that both solutions have the same initial condition, we get for $t \in [0,T]$
\begin{subequations}
\begin{align}
\label{eq:MS well-posedness contraction agrument reduced density 1}
\norm{\Gamma_t - \Gamma_t'}_{\textfrak{S}^{1,1}}
&\leq \varepsilon^{-1}
\int_0^t 
\norm{ U_{\alpha}(t;s) \left[ K * \rho_{\omega_s - \omega_s'}  - X_{\omega_s - \omega_s'}  , \Gamma_s \right] U_{\alpha}^*(t;s)}_{\textfrak{S}^{1,1}}ds
\\
\label{eq:MS well-posedness contraction agrument reduced density 2}
&\quad +  \varepsilon^{-1}
\int_0^t   \norm{ U_{\alpha}(t;s) \left[ K * \rho_{\omega_s'} - X_{\omega_s'}  , \left( \Gamma_s - \Gamma_s' \right) \right] U_{\alpha}^*(t;s) }_{\textfrak{S}^{1,1}}ds
\\
\label{eq:MS well-posedness contraction argument reduced density 3}
&\quad +  \varepsilon^{-1}
\int_0^t 
\norm{ U_{\alpha}(t;s) \left[ \big( (\kappa * \vA_{\alpha_s} )^2 - (\kappa * \vA_{\alpha_s'} )^2
\big) , \Gamma_s' \right] U_{\alpha}^*(t;s)}_{\textfrak{S}^{1,1}} ds
\\
\label{eq:MS well-posedness contraction agrument reduced density 4}
&\quad + 2 
\int_0^t 
\norm{ U_{\alpha}(t;s) \left[ \kappa * \vA_{\alpha_s - \alpha_s'} \cdot \nabla , \Gamma_s' \right] U_{\alpha}^*(t;s) }_{\textfrak{S}^{1,1}}ds .
\end{align}
\end{subequations}
Next, we estimate each line separately. By means of Lemma \ref{lemma:evolution operator for magnetic laplacian} and Lemma~\ref{lemma:general estimates for the interaction} we obtain
\begin{align}
\eqref{eq:MS well-posedness contraction agrument reduced density 1}
&\leq \varepsilon^{-1} \int_0^t 
\norm{ U_{\alpha}(t;s)}_{\textfrak{S}^{\infty}(H^1(\mathbb{R}^3))}
\norm{ U_{\alpha}^*(t;s)}_{\textfrak{S}^{\infty}(H^1(\mathbb{R}^3))}
\norm{\Gamma_s}_{\textfrak{S}^{1,1}}
\nonumber \\
&\qquad \qquad \times \left(
\norm{ K * \rho_{\omega_s - \omega_s'}}_{W_0^{1, \infty}(\mathbb{R}^3)}
+ C N^{-1} \norm{\omega_s - \omega_s'}_{\textfrak{S}^{1,1}} 
\right)ds
\nonumber \\
&\leq C_{\varepsilon} K_{\varepsilon,\alpha,t} \, t
\norm{\Gamma}_{L_t^{\infty} \textfrak{S}^{1,1}}
\norm{\omega - \omega}_{L_t^{\infty} \textfrak{S}^{1,1}} 
\end{align}
and 
\begin{align}
\eqref{eq:MS well-posedness contraction agrument reduced density 2}
&\leq C_{\varepsilon} K_{\varepsilon,\alpha,t} \, t
\norm{\omega'}_{L_t^{\infty} \textfrak{S}^{1,1}}
\norm{\Gamma - \Gamma'}_{L_t^{\infty} \textfrak{S}^{1,1}}  . 
\end{align}
Using $(\kappa * \vA_{\alpha_s} )^2 - (\kappa * \vA_{\alpha_s'} )^2 = \kappa * \vA_{\alpha_s - \alpha_s'}  \, \kappa * \vA_{\alpha_s + \alpha_s'}$, Lemma \ref{lemma:evolution operator for magnetic laplacian} and \eqref{eq:estimates for the vector potential and interaction 1} we get
\begin{align}
\eqref{eq:MS well-posedness contraction argument reduced density 3}
&\leq  2 \varepsilon^{-1} K_{\varepsilon,\alpha,t} 
\int_0^t 
\norm{\kappa * \vA_{\alpha_s - \alpha_s'}}_{W_0^{1, \infty}(\mathbb{R}^3)}
\norm{\kappa * \vA_{\alpha_s + \alpha_s'}}_{W_0^{1, \infty}(\mathbb{R}^3)}
\norm{\Gamma_s' }_{\textfrak{S}^{1,1}}ds
\nonumber \\
&\leq C_{\varepsilon}  K_{\varepsilon,\alpha,t}  \, t \left( \norm{\alpha}_{L_t^{\infty} \mathfrak{h}}
+ \norm{\alpha'}_{L_t^{\infty} \mathfrak{h}} \right) \norm{\Gamma' }_{L_t^{\infty} \textfrak{S}^{1,1}} 
\norm{\alpha - \alpha'}_{L_t^{\infty} \mathfrak{h}}  
\end{align}
and
\begin{align}
\eqref{eq:MS well-posedness contraction agrument reduced density 4}
&\leq 2 K_{\varepsilon,\alpha,t} 
\int_0^t 
\norm{\kappa * \vA_{\alpha - \alpha'}}_{W_0^{1, \infty}(\mathbb{R}^3)}
\left( 
\norm{\nabla \Gamma_s' }_{\textfrak{S}^{1,1}} 
+  \norm{\Gamma_s' \nabla}_{\textfrak{S}^{1,1}} \right)ds
\nonumber \\
&\leq  C K_{\varepsilon,\alpha,t}  \, t \norm{\Gamma }_{L_t^{\infty}  \textfrak{S}^{2,1}}  \norm{\alpha - \alpha'}_{L_t^{\infty} \mathfrak{h}}   .
\end{align}
Summing all contributions together, taking the supremum in $t$ over the interval $[0,T]$ and using that $(\omega, \alpha)$, $(\omega', \alpha)$, $(\Gamma, \xi)$, $(\Gamma', \xi') \in \mathcal{Z}_{T,R_1,R_2}$ lead to
\begin{align}
\norm{\Gamma - \Gamma'}_{L_T^{\infty} \textfrak{S}^{1,1}}
&\leq C K_{\varepsilon, \alpha, T} R_1 \, T
\left( \norm{\Gamma - \Gamma'}_{L_T^{\infty} \textfrak{S}^{1,1}}
+ R_2 \, d_T \left( (\omega, \alpha) , (\omega', \alpha') \right) \right) .
\end{align}
By means of \eqref{eq:estimate for the constant of the magnetic laplacian}
we have $C K_{\varepsilon, \alpha, T} R_1 \, T \leq 1/2$ and $2 C K_{\varepsilon, \alpha, T} R_1 R_2 \, T < 1$ for $\widetilde{C}_\varepsilon$ in the definition of $T$ chosen sufficiently large. This leads to
\begin{align}
\norm{\Gamma - \Gamma'}_{L_T^{\infty} \textfrak{S}^{1,1}}
&< d_T \left( (\omega, \alpha) , (\omega', \alpha') \right) .
\end{align}
Using Duhamel's formula, \eqref{eq:assumption on the cutoff parameter} and
\eqref{eq:Charge current L-infty estimate of the difference} we get
\begin{align}
\label{eq:MS well-posedness contraction estimate for the field}
\norm{\xi_t - \xi_t'}_{L_T^{\infty} \mathfrak{h}_{1/2} \, \cap \, L_T^{\infty} \dot{\mathfrak{h}}_{-1/2}}
&\leq C \sup_{t \in [0,T]}
 \int_0^t  \norm{\left< \cdot \right> \abs{\cdot}^{-1}  \mathcal{F}[\kappa]}_{L^2(\mathbb{R}^3)} \norm{\mathcal{F}[\vJ_{\omega_s, \alpha_s} - \vJ_{\omega_s', \alpha_s'}]}_{L^{\infty}(\mathbb{R}^3, \mathbb{C}^3)}ds
\nonumber \\
&\leq C_{\varepsilon}  \left( 1 + \norm{\omega'}_{L_T^{\infty} \textfrak{S}^1} + \norm{\alpha}_{L_T^{\infty} \mathfrak{h}} \right) T  d_T \left( (\omega, \alpha) , (\omega', \alpha') \right) 
\nonumber \\
&\leq C_{\varepsilon}  \left( 1 + R_1 + R_2 \right) T  d_T \left( (\omega, \alpha) , (\omega', \alpha') \right) 
\nonumber \\
&<   d_T \left( (\omega, \alpha) , (\omega', \alpha') \right) .
\end{align}
For $\widetilde{C}_{\varepsilon} >0$ chosen sufficiently large and $T \leq \frac{1}{\widetilde{C}_{\varepsilon} R_1^2 R_2^8}$ the estimates above imply
\begin{align}
d_{T}  \big( \Phi (\omega, \alpha), \Phi (\omega' , \alpha' )  \big)
< d_{T} \big( (\omega, \alpha) , (\omega' , \alpha' )  \big)  ,
\end{align}
showing that $\Phi$ is a contraction on $\mathcal{Z}_{T,R_1,R_2}$.

\end{proof}

\subsection{Global solutions}

Here below we conclude the proof of Proposition~\ref{proposition:well-posedness of Maxwell-Schroedinger} concerning the well-posedness and regularity theory for the Vlasov--Maxwell system.

\begin{proof}[Proof of Proposition~\ref{proposition:well-posedness of Maxwell-Schroedinger}]

From Lemma~\ref{lemma:MS system local well-posedness result stated by use of negative weighted Lp-spaces} we infer the existence of a time $T$ and a unique local $C \big( 0,T; \textfrak{S}^{2,1} \big( L^2(\mathbb{R}^3) \big) \big) \cap C^1 \big( 0,T; \textfrak{S}^{1} \big( L^2(\mathbb{R}^3) \big) \big)   \times  C \big( 0,T; \mathfrak{h}_{1/2} \cap \dot{\mathfrak{h}}_{-1/2} \big) \cap C^1 \big( 0,T; \dot{\mathfrak{h}}_{-1/2} \big) $-valued solution of \eqref{eq:Maxwell-Schroedinger equations} with initial datum $(\omega_0, \alpha_0)$, which will be denoted by $(\omega, \alpha)$ in the following. Let $t \in [0,T]$, $H_{\omega,\alpha}(t) = \left( - i \varepsilon \nabla - \kappa * \vA_{\alpha_t} \right)^2  + K * \rho_{\omega_t} - X_{\omega_t}$ and $\psi \in H^2(\mathbb{R}^3)$. Using \eqref{eq:vector potential L-infty estimate}, \eqref{eq:W-0-2-infty estimate for the direct interaction potential}, \eqref{eq:estimate exchange term in operator norm} and the regularity properties of $(\omega,\alpha)$ it is straightforward to show the estimate
\begin{align}
\label{eq:estimate for the Hamiltonian from Hatree-Fock against Laplacian}
\norm{\left( H_{\omega, \alpha}(t) + \varepsilon^2 \Delta \right) \psi}_{L^2(\mathbb{R}^3)}
&\leq \delta \norm{- \varepsilon^2 \Delta \psi}_{L^2(\mathbb{R}^3)}
+ C \Big[ \left( 1 + \delta^{-1} \right)
\norm{\alpha_t}_{\dot{\mathfrak{h}}_{-1/2}}^2 
\nonumber \\
&\quad 
+ N^{-1} \norm{\omega_t}_{\textfrak{S}^1}
\Big] \norm{\psi}_{L^2(\mathbb{R}^3)}  
\end{align} 
and the continuous differentiability of the mapping $t\in [0,T] \mapsto H_{\omega, \alpha}(t) \psi\in L^2(\mathbb{R}^3)$. Inequality \eqref{eq:estimate for the Hamiltonian from Hatree-Fock against Laplacian} and the Kato-Rellich theorem imply that $H_{\omega,\alpha}(t)$ is a self-adjoint operator with domain $\mathcal{D}( H_{\omega, \alpha}(t)) = H^2(\mathbb{R}^3)$. Together with the strong differentiability this (see \cite[Theorem X.70, proof of Theorem X.71 ]{RS1975} and \cite[Theorem 2.2]{GS2014})
gives rise to a two-parameter family $\{ U_{\omega,\alpha}(t,s) \}_{(t,s) \in [0,T]^2}$ of unitary operators on $L^2(\mathbb{R}^3)$ such  that $ U_{\omega,\alpha}(t;s) H^2(\mathbb{R}^3) \subset H^2(\mathbb{R}^3)$ and  $\psi_s(t) = U_{\omega;\alpha}(t,s) \psi$ with $\psi \in H^2(\mathbb{R}^3)$  is strongly continuous differentiable and satisfies $\frac{d}{dt}  \psi_s(t) = - i \varepsilon^{-1} H_{\omega, \alpha}(t) \psi_s(t)$, $\psi_s(s) = \psi$. The operator $U_{\omega, \alpha}(t;0) \omega_0 U_{\omega, \alpha}(t;0)^* \in C \big( 0,T; \textfrak{S}^{2,1} \big( L^2(\mathbb{R}^3) \big) \big) \cap C^1 \big( 0,T; \textfrak{S}^{1} \big( L^2(\mathbb{R}^3) \big) \big)$ satisfies the first equation of \eqref{eq:Maxwell-Schroedinger equations} with initial value $\omega_0$. Uniqueness then implies that $\omega_t = U_{\omega, \alpha}(t;0) \omega_0 U_{\omega, \alpha}(t;0)^*$, hence $\omega_t \in \textfrak{S}_{+}^{1}$ for all $t \in [0,T]$ because $\omega_0 \geq 0$ by assumption. The conservation of mass and energy, i.e. 
\begin{align}
\tr  \left( \omega_t \right) = \tr  \left( \omega_0 \right)
\quad \text{and} \quad 
\mathcal{E}^{\rm{MS}}[\omega_t, \alpha_t] = \mathcal{E}^{\rm{MS}}[\omega_0, \alpha_0]
\quad \text{for all} \; t \in [0,T] ,
\end{align}
are obtained by direct inspection. In the following, we will prove 
\begin{align}
\label{eq:estimate for global existence 1}
\norm{\alpha_t}_{\mathfrak{h}_{1/2} \, \cap \, \dot{\mathfrak{h}}_{-1/2}} 
&\leq \norm{\alpha_0}_{\mathfrak{h}_{1/2} \, \cap \, \dot{\mathfrak{h}}_{-1/2}}  +  t \,  C \big( \varepsilon, N \norm{\omega_0}_{\textfrak{S}^{1,1}} , \norm{\alpha_0}_{\dot{\mathfrak{h}}_{1/2}} \big)  
\end{align}
and
\begin{align}
\label{eq:estimate for global existence 2}
\norm{\omega_t}_{\textfrak{S}^{2,1}}
&\leq \exp \left[ \exp \left[ \left< t \right>^2 \,  C \big( \varepsilon, N \norm{\omega_0}_{\textfrak{S}^{2,1}} , \norm{\alpha_0}_{\mathfrak{h}_{1/2} \, \cap \, \dot{\mathfrak{h}}_{-1/2}} \big) \right] \right]
\end{align}
for all $t \in [0,T]$.
These estimates lead to the global existence of solutions because, by standard methods (see e.g. \cite[Theorem 4.3.4]{CH1998}),
 one can derive from the contraction mapping principle a blow up alternative which states that the maximal time of existence $T_{\text{max}}$ is either infinite or $\lim_{t \nearrow T_{\text{max}}} ( \norm{\omega_t}_{\textfrak{S}^{2,1}} + \norm{\alpha_t}_{\mathfrak{h}_{1/2} \, \cap \, \dot{\mathfrak{h}}_{-1/2}} ) = \infty$.

\paragraph{Inequality \eqref{eq:estimate for global existence 1}.}

Using \eqref{eq:estimate for the MS energy functional 1}--\eqref{eq:estimate for the MS energy functional 3} and the conservation of mass and energy, we get
\begin{align}
\label{eq:propagation of the weighted norms for alpha during Maxwell-Schroedinger}
\norm{\omega_t}_{\textfrak{S}^{1,1}} + \norm{\alpha_t}_{\dot{\mathfrak{h}}_{1/2}}^2  &\leq 
C \big( \varepsilon, N \norm{\omega_0}_{\textfrak{S}^{1,1}} , \norm{\alpha_0}_{\dot{\mathfrak{h}}_{1/2}} \big)  
\quad \text{for all} \; t \in [0,T] .
\end{align}
Together with Duhamel's formula, \eqref{eq:assumption on the cutoff parameter} and \eqref{eq:Charge current L-infty estimate of the difference} this leads to 
\begin{align}
\norm{\alpha_t}_{\mathfrak{h}_{1/2} \, \cap \, \dot{\mathfrak{h}}_{-1/2}}  
&\leq \norm{\alpha_0}_{\mathfrak{h}_{1/2} \, \cap \, \dot{\mathfrak{h}}_{-1/2}} 
+ C \int_0^t 
\norm{\left< \cdot \right> \abs{\cdot}^{-1} \mathcal{F}[\kappa] }_{L^2(\mathbb{R}^3)}
\norm{ \mathcal{F}[\vJ_{\omega_s, \alpha_s}]}_{L^{\infty}(\mathbb{R}^3, \mathbb{C}^3)}ds
\nonumber \\
&\leq \norm{\alpha_0}_{\mathfrak{h}_{1/2} \, \cap \, \dot{\mathfrak{h}}_{-1/2}} 
+ C_{\varepsilon} \int_0^t  \left( 1 + \norm{\alpha_s}_{\dot{\mathfrak{h}}_{1/2}} \right)  \norm{\omega_s}_{\textfrak{S}^{1,1}}ds
\nonumber \\
&\leq \norm{\alpha_0}_{\mathfrak{h}_{1/2} \, \cap \, \dot{\mathfrak{h}}_{-1/2}} + t \,  C \big( \varepsilon, N \norm{\omega_0}_{\textfrak{S}^{1,1}} , \norm{\alpha_0}_{\dot{\mathfrak{h}}_{1/2}} \big)   
\quad \text{for all} \; t \in [0,T] .
\end{align}

\paragraph{Inequality \eqref{eq:estimate for global existence 2}.}

By means of the Duhamel expansion
\begin{align}
\omega_t &= U_{\alpha}(t;0) \omega_0 U_{\alpha}^*(t;0)
- i \varepsilon^{-1} \int_0^t U_{\alpha}(t;s)
\left[ K * \rho_{\omega_s} - X_{\omega_s}  , \omega_s \right] U_{\alpha}^*(t;s)\,ds,
\end{align}
with $U_{\alpha}$ being defined as in Lemma \ref{lemma:evolution operator for magnetic laplacian}, Lemma \ref{lemma:evolution operator for magnetic laplacian} and \eqref{eq:W-0-2-infty estimate for the direct interaction potential},  we get
\begin{align}
\label{eq:intermediate estimate for propagation of L-2-1 norm of omega}
\norm{\omega_t}_{\textfrak{S}^{2,1}}
&\leq 
K_{\varepsilon, \alpha, t}^2
\left( \norm{\omega_0}_{\textfrak{S}^{2,1}}
+ 
C_{\varepsilon} \int_0^t 
\left(  \norm{\omega_s}_{\textfrak{S}^{1}}  \norm{\omega_s}_{\textfrak{S}^{2,1}}
+ \norm{\left[ X_{\omega_s}, \omega_s \right]}_{\textfrak{S}^{2,1}} \right)\,ds
\right) .
\end{align}
Since $\omega_s \in \textfrak{S}_{+}^{2,1}(L^2(\mathbb{R}^3))$, there exists a spectral set $\{ \lambda_j, \psi_j \}_{j \in \mathbb{N}}$ with $\lambda_j \geq 0$ for all $j \in \mathbb{N}$ such that $\omega_s = \sum_{j \in \mathbb{N}} \lambda_j \ket{\psi_j} \bra{\psi_j}$,  $\norm{\omega_s}_{\textfrak{S}^{1}} = \sum_{j \in \mathbb{N}} \lambda_j \norm{\psi_j}_{L^2(\mathbb{R}^3)}$ and
$\norm{\omega_s}_{\textfrak{S}^{2,1}} = \sum_{j \in \mathbb{N}} \lambda_j \norm{\psi_j}_{H^2(\mathbb{R}^3)}$.
This allows us to write out the expression involving the exchange term quite explicitly. More concretely, we bound
\begin{align}
\norm{\left[ X_{\omega_s}, \omega_s \right]}_{\textfrak{S}^{2,1}}
&\leq \norm{X_{\omega_s} \omega_s }_{\textfrak{S}^{2,1}}
+ \norm{\omega_s X_{\omega_s}}_{\textfrak{S}^{2,1}}
\end{align}
and then use the projection property of $\ket{\psi_j} \bra{\psi_j}$  to estimate
\begin{align}
\norm{X_{\omega_s} \omega_s}_{\textfrak{S}^{2,1}}
&\leq \sum_{j \in \mathbb{N}} \lambda_j
\norm{\left( 1 - \Delta \right) X_{\omega_s} \ket{\psi_j} \bra{\psi_j} \left( 1 - \Delta \right)}_{\textfrak{S}^1}
\nonumber \\
&\leq \sum_{j \in \mathbb{N}} \lambda_j
\norm{\left( 1 - \Delta \right) X_{\omega_s} \ket{\psi_j} \bra{\psi_j} }_{\textfrak{S}^2} \,
\norm{\ket{\psi_j} \bra{\psi_j}\left( 1 - \Delta \right)}_{\textfrak{S}^2}
\nonumber \\
&= 
\sum_{j \in \mathbb{N}} \lambda_j \norm{\psi_j}_{H^2}
\norm{(1 - \Delta) X_{\omega_s} \psi_j}_{L^2(\mathbb{R}^3)} \nonumber \\
&\leq N^{-1} \sum_{j,k \in \mathbb{N}} \lambda_j \lambda_k
\norm{\psi_j}_{H^2(\mathbb{R}^3)}
\norm{\psi_k K * \{ \overline{\psi_k} \psi_j \}}_{H^2(\mathbb{R}^3)}
\nonumber \\
&\leq N^{-1} \sum_{j,k \in \mathbb{N}} \lambda_j \lambda_k
\norm{\psi_j}_{H^2(\mathbb{R}^3)}
\norm{\psi_k}_{H^2(\mathbb{R}^3)}
\norm{K * \{ \overline{\psi_k} \psi_j \}}_{W_0^{2,\infty}(\mathbb{R}^3)} .
\end{align}
Using that  $\norm{\left< \cdot \right>^2 \mathcal{F}[K]}_{L^1(\mathbb{R}^3)} \leq C$ because of \eqref{eq:Fourier transform of the potential} and \eqref{eq:assumption on the cutoff parameter},  Young's inequality and the Cauchy--Schwarz inequality, we obtain
\begin{align}
\norm{X_{\omega_s} \omega_s}_{\textfrak{S}^{2,1}}
&\leq N^{-1} \norm{K}_{W_0^{2,\infty}(\mathbb{R}^3)}  \sum_{j,k \in \mathbb{N}} \lambda_j \lambda_k
\norm{\psi_j}_{H^2(\mathbb{R}^3)}
\norm{\psi_k}_{H^2(\mathbb{R}^3)}
\norm{\overline{\psi_k} \psi_j}_{L^1(\mathbb{R}^3)}
\nonumber \\
&\leq N^{-1}
\norm{\left< \cdot \right>^2 \mathcal{F}[K]}_{L^1(\mathbb{R}^3)} 
\sum_{j,k \in \mathbb{N}} \lambda_j \lambda_k
\norm{\psi_j}_{H^2(\mathbb{R}^3)} \norm{\psi_j}_{L^2(\mathbb{R}^3)}
\norm{\psi_k}_{H^2(\mathbb{R}^3)}
\norm{\psi_k}_{L^2(\mathbb{R}^3)}
\nonumber \\
&\leq 
C N^{-1}
\left( \sum_{j \in \mathbb{N}} \lambda_j \norm{\psi_j}_{H^2(\mathbb{R}^3)}^2 \right)
\left( \sum_{k \in \mathbb{N}} \lambda_k \norm{\psi_k}_{L^2(\mathbb{R}^3)}^2 \right)
\nonumber \\
&\leq 
C N^{-1} \norm{\omega_s}_{\textfrak{S}^{1}} \norm{\omega_s}_{\textfrak{S}^{2,1}} .
\end{align}
Plugging this into \eqref{eq:intermediate estimate for propagation of L-2-1 norm of omega}
and  the conservation of mass lead to
\begin{align}
\norm{\omega_t}_{\textfrak{S}^{2,1}}
&\leq K_{\varepsilon, \alpha, t}^2 \left(
\norm{\omega_0}_{\textfrak{S}^{2,1}}
+ C_{\varepsilon}  \norm{\omega_0}_{\textfrak{S}^{1}}  \int_0^t 
\norm{\omega_s}_{\textfrak{S}^{2,1}}ds \right) .
\end{align}
Combining similar estimates as in the proof of \eqref{eq:estimate for global existence 1} with \eqref{eq:estimate for global existence 1} gives
\begin{align}
\label{eq:propagation of h -1-2 norm of derivative of alpha during Maxwell-Schroedinger}
\norm{\partial_t \alpha_t}_{\dot{\mathfrak{h}}_{- 1/2}}
&\leq 
\norm{ \alpha_t}_{\dot{\mathfrak{h}}_{1/2}} 
+ C \int_0^t 
\norm{ \abs{\cdot}^{-1} \mathcal{F}[\kappa] }_{L^2(\mathbb{R}^3)}
\norm{ \mathcal{F}[\vJ_{\omega_s, \alpha_s}]}_{L^{\infty}(\mathbb{R}^3, \mathbb{C}^3)}ds
\nonumber \\
&\leq  \left< t \right> \,  C \big( \varepsilon, N \norm{\omega_0}_{\textfrak{S}^{1,1}} ,  \norm{\alpha_0}_{\mathfrak{h}_{1/2} \, \cap \, \dot{\mathfrak{h}}_{-1/2}}  \big) 
\quad \text{for all} \; t \in [0,T] .
\end{align}
Together with \eqref{eq:estimate for global existence 1} this allows us to estimate
\begin{align}
K_{\varepsilon, \alpha, t} &\leq e^{\left< t \right>^2 \,  C \big( \varepsilon, N \norm{\omega_0}_{\textfrak{S}^{1,1}} , \norm{\alpha_0}_{\mathfrak{h}_{1/2} \, \cap \, \dot{\mathfrak{h}}_{-1/2}} \big)}  ,
\end{align}
leading to
\begin{align}
\norm{\omega_t}_{\textfrak{S}^{2,1}}
&\leq e^{\left< t \right>^2 \,  C \big( \varepsilon, N \norm{\omega_0}_{\textfrak{S}^{2,1}} , \norm{\alpha_0}_{\mathfrak{h}_{1/2} \, \cap \, \dot{\mathfrak{h}}_{-1/2}} \big)}  \left( 1 + \int_0^t ds \,
\norm{\omega_s}_{\textfrak{S}^{2,1}} \right) .
\end{align}
Application of Gr\"{o}nwall's Lemma then gives \eqref{eq:estimate for global existence 2}.
%\textcolor{blue}{Since the Gr\"{o}nwall Lemma with an constant only depending on $t$ and not $s$ is not completely standard, we give its proof in Appendix \ref{section:additional details} for completeness. This part part should be deleted in the final version.}

\end{proof}

\section{Solution theory of the regularized Vlasov--Maxwell system}\label{section:Properties of the solutions of the Vlasov-Maxwell equations}

In this section, we prove the existence of  unique global solutions for \eqref{eq: Vlasov-Maxwell different style of writing}. Similarly as in \cite{D1986} we restrict our consideration to initial data with compact support in the velocity variable. 

\subsection{Propagation estimates for the characteristics}

\begin{lemma}
Let  $T>0$, $(x,v) \in \mathbb{R}^6$, $f \in C(0,T ; W_4^{0,2}(\mathbb{R}^6))$ and $\alpha \in C(0,T; \mathfrak{h}_1)$.
Then, 
\begin{align}
\label{eq:characteristics}
\begin{cases}
\dot{X}_t(x,v) &=  2 V_t(x,v) - 2 \kappa * \vA_{\alpha_t}(X_t(x,v))  \\
\dot{V}_t(x,v) &=  - \vF_{f_t, \alpha_t} ( X_t(x,v), V_t(x,v)) .
\end{cases}  
\end{align}
with initial datum $(X_t(x,v), V_t(x,v)) \big|_{t=0} = (x,v)$ has a unique $C^1 \big( 0,T, \mathbb{R}^6 \big)$ solution satisfying
\begin{subequations}
\begin{align}
\label{eq:propagation estimate on V}
\left< V_t(x,v) \right>
&\leq C \left< v \right> e^{C  \int_0^t  \left( \norm{f_s}_{W_4^{0,2}(\mathbb{R}^6)} + \norm{\alpha_s}_{\mathfrak{h}}^2 \right)ds} ,
\\
\label{eq:propagation estimate on X}
\left<X_t(x,v) \right>
&\leq C  \left<(x,v) \right> \left< t \right>
e^{C  \int_0^t  \left( \norm{f_s}_{W_4^{0,2}(\mathbb{R}^6)} + \norm{\alpha_s}_{\mathfrak{h}}^2 \right)ds} .
\end{align}
\end{subequations}
\end{lemma}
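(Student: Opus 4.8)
The plan is to read \eqref{eq:characteristics} as a non-autonomous ODE on $\mathbb{R}^6$ driven by the vector field
\begin{align}
G(t,X,V) = \big( 2V - 2 \kappa * \vA_{\alpha_t}(X) , \, - \vF_{f_t, \alpha_t}(X,V) \big),
\end{align}
to produce a unique local $C^1$ solution by the Cauchy--Lipschitz theorem, and then to extend it to all of $[0,T]$ by the standard continuation criterion, using the a priori bounds \eqref{eq:propagation estimate on V}--\eqref{eq:propagation estimate on X} to rule out finite-time blow-up. First I would verify the hypotheses of Cauchy--Lipschitz. Continuity of $t \mapsto G(t,X,V)$ follows from $f \in C(0,T;W_4^{0,2})$, $\alpha \in C(0,T;\mathfrak{h}_1)$ together with the boundedness of the linear maps $\alpha \mapsto \kappa * \vA_{\alpha}$ into $W_0^{2,\infty}(\mathbb{R}^3)$ (Lemma~\ref{lemma:estimates for the vector potential and interaction}, \eqref{eq:estimates for the vector potential and interaction 1} with $\sigma=2$) and $f \mapsto K * \widetilde{\rho}_{f}$ into $W_0^{2,\infty}(\mathbb{R}^3)$ (Lemma~\ref{lemma:general estimates for the interaction}, \eqref{eq:estimates for the vector potential and interaction 2}). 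Since $\vF_{f_t,\alpha_t}$ is, by \eqref{eq:F-field of Vlasov equation}, affine in $V$ with coefficient $\nabla(\kappa * \vA_{\alpha_t})$, and all potentials entering $G$ lie in $W_0^{2,\infty}(\mathbb{R}^3)$, the field $G(t,\cdot,\cdot)$ is globally Lipschitz in $V$ and locally Lipschitz in $X$, uniformly on bounded $V$-sets and on $[0,T]$. This gives a unique maximal $C^1$ solution.

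For the velocity bound \eqref{eq:propagation estimate on V} I would differentiate $\langle V_t \rangle^2 = 1 + |V_t|^2$ and use the second equation of \eqref{eq:characteristics},
\begin{align}
\frac{d}{dt} \langle V_t \rangle^2 = 2 V_t \cdot \dot V_t = - 2 V_t \cdot \vF_{f_t, \alpha_t}(X_t, V_t).
\end{align}
The key input is the pointwise force bound obtained exactly as in the derivation of \eqref{eq:estimate F function supremum in x}: decomposing $\vF_{f,\alpha}$ as in \eqref{eq:F-field of Vlasov equation} and estimating its three summands by \eqref{eq:estimates for the vector potential and interaction 2} ($\sigma=1$), \eqref{eq:vector potential L-infty estimate}, and \eqref{eq:estimates for the vector potential and interaction 1} ($\sigma=1$) yields $\abs{\vF_{f,\alpha}(x,v)} \leq C \langle v \rangle \big( 1 + \norm{f}_{W_4^{0,2}} + \norm{\alpha}_{\mathfrak{h}}^2 \big)$, where the $v$-linear summand $\abs{v}\,\norm{\alpha}_{\mathfrak{h}}$ is handled by Young's inequality $\norm{\alpha}_{\mathfrak{h}} \leq \tfrac12(1 + \norm{\alpha}_{\mathfrak{h}}^2)$. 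Combined with $2\abs{V_t}\langle V_t \rangle \leq 2 \langle V_t \rangle^2$ this gives the differential inequality $\frac{d}{dt}\langle V_t \rangle^2 \leq C\big( 1 + \norm{f_t}_{W_4^{0,2}} + \norm{\alpha_t}_{\mathfrak{h}}^2 \big) \langle V_t \rangle^2$, and Grönwall's lemma yields \eqref{eq:propagation estimate on V} (the constant-in-time contribution being absorbed into the overall constant on $[0,T]$).

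The position bound \eqref{eq:propagation estimate on X} follows by integration. From the first equation of \eqref{eq:characteristics} and \eqref{eq:vector potential L-infty estimate} one has $\abs{\dot X_t} \leq 2\langle V_t \rangle + C\norm{\alpha_t}_{\mathfrak{h}}$, hence
\begin{align}
\langle X_t \rangle \leq \langle x \rangle + \int_0^t \big( 2 \langle V_s \rangle + C \norm{\alpha_s}_{\mathfrak{h}} \big)\,ds.
\end{align}
Inserting \eqref{eq:propagation estimate on V}, using that the exponent is nondecreasing in its upper limit so that $e^{C\int_0^s(\cdots)} \leq e^{C\int_0^t(\cdots)}$, and then bounding $\langle x \rangle, \langle v \rangle \leq \langle (x,v) \rangle$, $t \leq \langle t \rangle$ and $\int_0^t \norm{\alpha_s}_{\mathfrak{h}}\,ds \leq \tfrac12 \langle t \rangle + \tfrac12 \int_0^t \norm{\alpha_s}_{\mathfrak{h}}^2\,ds$ collects all terms into the form $C \langle (x,v) \rangle \langle t \rangle \, e^{C\int_0^t(\cdots)}$, which is \eqref{eq:propagation estimate on X}. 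Finally, since \eqref{eq:propagation estimate on V}--\eqref{eq:propagation estimate on X} show that $\abs{(X_t,V_t)}$ remains finite on any bounded time interval, the continuation criterion upgrades the maximal solution to a global $C^1$ solution on $[0,T]$.

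The step I expect to be the main obstacle is matching the regularity assumed on $(f,\alpha)$ to the Lipschitz and continuity requirements on $G$: one must check that $\alpha \in \mathfrak{h}_1$ and $f \in W_4^{0,2}$ are precisely enough to place $\kappa * \vA_{\alpha_t}$, $(\kappa * \vA_{\alpha_t})^2$ and $K * \widetilde{\rho}_{f_t}$ in $W_0^{2,\infty}(\mathbb{R}^3)$, since this is what makes $\vF_{f_t,\alpha_t}$ locally Lipschitz and the solution genuinely $C^1$. A secondary, more bookkeeping-type difficulty is the weight accounting in the force estimate, namely ensuring that the $v$-linear part of $\vF$ is absorbed into the factor $\langle v \rangle$ so that only $\norm{f_s}_{W_4^{0,2}}$ and $\norm{\alpha_s}_{\mathfrak{h}}^2$ enter the Grönwall exponent.
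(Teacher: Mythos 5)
Your proposal is correct and takes essentially the same route as the paper's proof: local existence by a standard fixed-point (Cauchy--Lipschitz) argument, the pointwise force bound $\sup_{x}\abs{\vF_{f_s,\alpha_s}(x,v)} \leq C \left<v\right>\big(\norm{f_s}_{W_4^{0,2}(\mathbb{R}^6)} + \norm{\alpha_s}_{\mathfrak{h}}^2\big)$ obtained from \eqref{eq:estimates for the vector potential and interaction 1} and \eqref{eq:estimates for the vector potential and interaction 2}, Gr\"onwall's lemma for \eqref{eq:propagation estimate on V}, integration of the first characteristic equation for \eqref{eq:propagation estimate on X}, and extension to all of $[0,T]$ via these a priori bounds. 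Your additional bookkeeping (the Young-inequality absorption of the $v$-linear term and the explicit $W_0^{2,\infty}$ regularity check underlying the Lipschitz property) merely spells out steps the paper leaves implicit.
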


\begin{proof}
The existence of a unique local solution can be shown by a standard fixed point argument.  Using  \eqref{eq:estimates for the vector potential and interaction 1}   and \eqref{eq:estimates for the vector potential and interaction 2}  we obtain
\begin{align}
\sup_{x \in \mathbb{R}^3}  \abs{\vF_{f_s, \alpha_s}(x,v)}
&\leq C \left< v \right> \left( \norm{f_s}_{W_4^{0,2}(\mathbb{R}^6)} + \norm{\alpha_s}_{\mathfrak{h}}^2 \right)  .
\end{align}
Together with  \eqref{eq:characteristics} this shows
\begin{align}
\abs{V_t(x,v)} &\leq v + C  \int_0^t  \left( \norm{f_s}_{W_4^{0,2}(\mathbb{R}^6)} + \norm{\alpha_s}_{\mathfrak{h}}^2 \right) \left< V_s(x,v) \right> ds
\end{align}
and \eqref{eq:propagation estimate on V} by Gr\"{o}nwall's lemma. Inequality \eqref{eq:propagation estimate on X} is a consequence of \eqref{eq:characteristics}, \eqref{eq:estimates for the vector potential and interaction 1}  and \eqref{eq:propagation estimate on V}. By means of \eqref{eq:propagation estimate on V} and \eqref{eq:propagation estimate on X} the solution can then be extended to the whole interval $[0,T]$.
\end{proof}

\subsection{Linear equations}
\label{subsection:VM existence theory linearized equations}

\begin{lemma} 
\label{lemma:VM existence theory linear equation for the density}
Let $R>0$, $T>0$, $a, b \in \mathbb{N}$ such that $a \geq 4$ and $b \geq 3$. Moreover, let $(f, \alpha) \in C \big( 0,T; W_a^{b-1,2}(\mathbb{R}^6) \big) \times  C \big( 0,T ; \mathfrak{h}_{b} \cap \dot{\mathfrak{h}}_{-1/2} \big)$ and $g_0 \in W_{a}^{b,2}(\mathbb{R}^6)$ such that $\supp g_0 \subset A_R$ with $A_R = \{ (x,v) \in \mathbb{R}^6 , \abs{v} \leq R \}$. Then, 
\begin{align}
\label{eq:VM existence theory linear equation for the density}
\partial_t g_t  &= - 2 \left( v - \kappa * \vA_{\alpha_t} \right) \cdot \nabla_x g_t + \vF_{f_t, \alpha_t} \cdot \nabla_v g_t
\end{align}
with initial datum $g_0$ and $\vF_{f_t, \alpha_t} $ being defined as in \eqref{eq:F-field of Vlasov equation} has a unique$L^{\infty} \big( 0,T; W_{a}^{b,2}(\mathbb{R}^6) \big) \cap C \big( 0,T; W_{a}^{b-1,2}(\mathbb{R}^6) \big)  \cap C^1 (0,T; W_{a}^{b-2,2}(\mathbb{R}^6))$
solution. In addition, $\supp g_t \subset A_{R(t)}$ with $R(t) 
\leq C \left< R \right> \exp \left( C  \int_0^t \big( \norm{f_s}_{W_4^{0,2}(\mathbb{R}^6)} + \norm{\alpha_s}_{ \mathfrak{h}}^2 \big)ds \right)$ and 
\begin{align}
\label{eq:VM fixed point estimate W-a-b-2 for g-t}
\norm{g_t}_{W_a^{b,2}(\mathbb{R}^6)} &\leq 
C \norm{g_0}_{W_a^{b,2}}
\exp \left( C \int_0^t   R(s) \big( 1 + \norm{f_s}_{W_4^{b-2,2}(\mathbb{R}^6)} + \norm{\alpha_s}_{\mathfrak{h}_b}^2 \big)ds \right) .
\end{align}
\end{lemma}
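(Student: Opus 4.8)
The equation \eqref{eq:VM existence theory linear equation for the density} is a \emph{linear} transport equation whose coefficients $2(v-\kappa*\vA_{\alpha_t})$ and $\vF_{f_t,\alpha_t}$ are fixed by the given data $(f,\alpha)$. The plan is therefore to solve it by the method of characteristics and to obtain the weighted Sobolev bound by a direct energy estimate. The characteristics associated with \eqref{eq:VM existence theory linear equation for the density} are exactly those of the previous lemma, see \eqref{eq:characteristics}; they generate a $C^1$ flow $\Phi_t:(x,v)\mapsto(X_t(x,v),V_t(x,v))$ on $\mathbb{R}^6$ whose dependence on the initial point is as regular as the coefficients. Since \eqref{eq:VM existence theory linear equation for the density} asserts precisely that $g$ is constant along characteristics, the unique solution is $g_t=g_0\circ\Phi_t^{-1}$, uniqueness being immediate as any solution must be constant along characteristics. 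The support statement then follows from \eqref{eq:propagation estimate on V}: if $(x,v)\in\supp g_0\subset A_R$ then $\abs{v}\le R$, hence $\abs{V_t(x,v)}\le R(t)$ with $R(t)$ as claimed, so that $\supp g_t=\Phi_t(\supp g_0)\subset A_{R(t)}$.

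The core of the proof is the a priori estimate \eqref{eq:VM fixed point estimate W-a-b-2 for g-t}. For a multi-index $\alpha$ with $\abs{\alpha}\le b$, I apply $D_z^{\alpha}$ to \eqref{eq:VM existence theory linear equation for the density}, multiply by $\left< z \right>^{2a}D_z^{\alpha}g_t$, and integrate over $\mathbb{R}^6$. Writing the transport part as $\mathfrak{V}_t\cdot\nabla_z$ with $\mathfrak{V}_t=\big(-2(v-\kappa*\vA_{\alpha_t}),\,\vF_{f_t,\alpha_t}\big)$, the key structural observation is that $\mathfrak{V}_t$ is divergence free: $\nabla_x\cdot(\kappa*\vA_{\alpha_t})=0$ by the Coulomb gauge and $\nabla_v\cdot\vF_{f_t,\alpha_t}=-2\nabla_x\cdot(\kappa*\vA_{\alpha_t})=0$ by \eqref{eq:F-field of Vlasov equation}. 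Hence, integrating the principal term $\int\left< z \right>^{2a}(D_z^{\alpha}g_t)\,\mathfrak{V}_t\cdot\nabla_z(D_z^{\alpha}g_t)\,dz$ by parts produces only the weight contribution $-\tfrac12\int(\nabla_z\left< z \right>^{2a})\cdot\mathfrak{V}_t\,(D_z^{\alpha}g_t)^2\,dz$, which, using $\abs{\nabla_z\left< z \right>^{2a}}\le C\left< z \right>^{2a-1}$, the bound \eqref{eq:vector potential L-infty estimate} on $\kappa*\vA_{\alpha_t}$, the $L^\infty$ bound $\abs{\vF_{f_t,\alpha_t}(x,v)}\le C\left< v \right>(\norm{f_t}_{W_4^{0,2}(\mathbb{R}^6)}+\norm{\alpha_t}_{\mathfrak{h}}^2)$ coming from \eqref{eq:estimates for the vector potential and interaction 1} and \eqref{eq:estimates for the vector potential and interaction 2}, and the velocity cut-off $\abs{v}\le R(t)$ on $\supp g_t$, is controlled by $C R(t)(1+\norm{\alpha_t}_{\mathfrak{h}_b}^2)\norm{g_t}_{W_a^{b,2}(\mathbb{R}^6)}^2$.

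It remains to treat the commutator $[D_z^{\alpha},\mathfrak{V}_t\cdot\nabla_z]g_t=\sum_{0<\beta\le\alpha}\binom{\alpha}{\beta}(D_z^{\beta}\mathfrak{V}_t)\cdot\nabla_z D_z^{\alpha-\beta}g_t$, which I bound in weighted $L^2$ by Cauchy--Schwarz. Each factor $D_z^{\beta}\mathfrak{V}_t$ costs at most $b+1$ derivatives of $\kappa*\vA_{\alpha_t}$ and of $K*\widetilde{\rho}_{f_t}$; these are supplied in $W_0^{b+1,\infty}(\mathbb{R}^3)$ by \eqref{eq:estimates for the vector potential and interaction 1} (giving $\norm{\alpha_t}_{\mathfrak{h}_b}$, and $\norm{\alpha_t}_{\mathfrak{h}_b}^2$ from the quadratic term $(\kappa*\vA_{\alpha_t})^2$) and by \eqref{eq:estimates for the vector potential and interaction 2} with $\sigma=b+1>3$ (giving $\norm{f_t}_{W_4^{b-2,2}(\mathbb{R}^6)}$), while the linear velocity factor stemming from the $v\cdot\kappa*\vA_{\alpha_t}$ part of $\vF$ is again absorbed into $R(t)$ on the support. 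Since $\abs{\alpha-\beta}+1\le b$, the factor $\nabla_z D_z^{\alpha-\beta}g_t$ is controlled by $\norm{g_t}_{W_a^{b,2}(\mathbb{R}^6)}$. Summing over $\abs{\alpha}\le b$ yields
\begin{align}
\frac{d}{dt}\norm{g_t}_{W_a^{b,2}(\mathbb{R}^6)}^2
&\le C R(t)\big(1+\norm{f_t}_{W_4^{b-2,2}(\mathbb{R}^6)}+\norm{\alpha_t}_{\mathfrak{h}_b}^2\big)\norm{g_t}_{W_a^{b,2}(\mathbb{R}^6)}^2 ,
\end{align}
and Gr\"onwall's lemma gives \eqref{eq:VM fixed point estimate W-a-b-2 for g-t}. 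To render these manipulations licit I would run them for mollified data and pass to the limit, the bound just obtained being uniform in the regularization.

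Finally, the stated time regularity is read off from the equation: the bound above gives $g\in L^\infty(0,T;W_a^{b,2}(\mathbb{R}^6))$, the right-hand side of \eqref{eq:VM existence theory linear equation for the density} loses one derivative and its coefficients are smooth in $x$ and, on the support, bounded in $v$, whence $\partial_t g\in C(0,T;W_a^{b-2,2}(\mathbb{R}^6))$, i.e.\ $g\in C^1(0,T;W_a^{b-2,2}(\mathbb{R}^6))$; the intermediate statement $g\in C(0,T;W_a^{b-1,2}(\mathbb{R}^6))$ then follows by interpolation together with the uniform top-order bound and weak continuity in the top space. I expect the main obstacle to be the weighted commutator estimate of the third paragraph: one must simultaneously keep the polynomial weight at the fixed order $a$ (which is possible exactly because the compact velocity support turns every stray power of $\abs{v}$ into a factor $R(t)$), exploit the divergence-free structure to eliminate the dangerous zeroth-order term, and match the derivative count ($b+1$ derivatives on the potentials) so as to reproduce the precise norms $\norm{f_t}_{W_4^{b-2,2}(\mathbb{R}^6)}$ and $\norm{\alpha_t}_{\mathfrak{h}_b}^2$ entering \eqref{eq:VM fixed point estimate W-a-b-2 for g-t}.
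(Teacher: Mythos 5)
Your proposal is correct in substance, and its analytic core --- the weighted energy estimate exploiting that the vector field $\mathfrak{V}_t=\big(-2(v-\kappa*\vA_{\alpha_t}),\vF_{f_t,\alpha_t}\big)$ is divergence free thanks to the Coulomb gauge, so that the principal term vanishes under integration by parts and only the weight and Leibniz commutators survive, with the velocity support converting stray powers of $\abs{v}$ into factors $R(t)$ and the derivative count $b+1$ on $\kappa*\vA_{\alpha_t}$ and $K*\widetilde{\rho}_{f_t}$ producing exactly $\norm{\alpha_t}_{\mathfrak{h}_b}$ and $\norm{f_t}_{W_4^{b-2,2}}$ --- is precisely the computation in the paper (there phrased via the transport operator $T_{f,\alpha}$ and the identities \eqref{eq:VM fixed point argument integration by parts of the transport operator} and \eqref{eq:commutator of transport operator with weight function}). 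Where you genuinely diverge is in the construction of the solution: the paper runs a Banach fixed-point argument for the Picard map $\Phi_t(g)=g_0+\int_0^t(\dots)ds$ on a space of functions bounded in $W_a^{b,2}$ with prescribed velocity support, contracting in the \emph{lower} norm $W_a^{b-1,2}$, and makes the energy estimate rigorous with the regularized Laplacian $(1-\Delta_z)_{\leq n}^{b/2}$ followed by monotone convergence; the characteristics representation $g_t=g_0\circ\Phi_t^{-1}$ is only derived afterwards, from uniqueness, to obtain the support propagation. You instead build the solution directly from the characteristics and make the estimate rigorous by mollifying the data. Your route is more classical and makes the support statement immediate, but it shifts two burdens that the fixed point handles for free: (i) you must show that $g_0\circ\Phi_t^{-1}$ actually lies in $L^\infty W_a^{b,2}\cap C W_a^{b-1,2}\cap C^1 W_a^{b-2,2}$, which your mollification-plus-uniform-bound scheme can deliver but which deserves the limiting argument to be spelled out; and (ii) your one-line uniqueness claim ``any solution must be constant along characteristics'' is not directly meaningful at the bottom of the admissible range, since for $b=3$ a solution is only $C\big(0,T;W_a^{2,2}(\mathbb{R}^6)\big)$ and $H^2(\mathbb{R}^6)$ does not embed into continuous functions; this is easily repaired by applying your own zeroth-order energy identity to the difference of two solutions (the divergence-free structure gives $\frac{d}{dt}\norm{g_t-g_t'}_{L^2}^2=0$, the integration by parts being licit thanks to the weight $a\geq 4$ against the linearly growing coefficients), which is in effect the same mechanism the paper's contraction-in-$W_a^{b-1,2}$ step encodes.
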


\begin{proof}
For $T,R_1, R_2 > 0$   we consider the Banach space $(\mathcal{Z}_{T,R_1,R_2}, d_T)$ with
\begin{align}
\mathcal{Z}_{T,R_1,R_2}
&= \Big\{ g \in C(0,T; W_a^{b-1,2}(\mathbb{R}^6)) \cap L^{\infty} (0,T; W_a^{b,2}(\mathbb{R}^6))
: g_t \Big|_{t=0} = g_0,  
\nonumber \\
&\qquad
\norm{\id_{\abs{v} \geq R_1} g}_{L_T^{\infty} L^2(\mathbb{R}^6)} = 0,  
\norm{\left( 1 - \Delta \right)^{b/2}  \left< \cdot \right>^a g}_{L_T^{\infty} L^2(\mathbb{R}^6)} \leq R_2
\Big\}
\end{align}
and metric
$d_T(g,g') = \norm{g - g'}_{L_T^{\infty} W_a^{b-1,2}(\mathbb{R}^6)}$.
We, moreover, define the mapping $\Phi: \mathcal{Z}_{T,R_1,R_2} \rightarrow \mathcal{Z}_{T,R_1,R_2}$ by
\begin{align}
\Phi_t(g) = g_0 + \int_0^t  
\left( - 2 \left( v - \kappa * \vA_{\alpha_s} \right) \cdot \nabla_x g_s + \vF_{f_s, \alpha_s} \cdot \nabla_v g_s \right) ds.
\end{align}
For all 
\begin{align}
\label{eq:VM fixed point estimate linear equation f condition on the time parameter}
R_2 \geq C \norm{g_0}_{W_a^{b,2}(\mathbb{R}^6)}
\quad \text{and} \quad
T^* \leq  \frac{1}{C \left< R_1 \right> \left( 1 + \norm{f}_{L_T^{\infty} W_4^{b-2,2}(\mathbb{R}^6)} + \norm{\alpha}_{L_T^{\infty} \mathfrak{h}_b}^2 \right) } 
\end{align}
with $C$ chosen sufficiently large it is shown below that the mapping $\Phi$ is well-defined and a contraction on $(\mathcal{Z}_{T^*,R_1,R_2},d_{T^*})$.
By the Banach fixed point theorem this proves the existence of a unique $g \in C (0,T^*; W_{a}^{b-1,2}(\mathbb{R}^6)) \cap L^{\infty}(0,T^*; W_a^{b,2}(\mathbb{R}^6))$ satisfying
\begin{align}
g_t = g_0 + \int_0^t  
\left( - 2 \left( v - \kappa * \vA_{\alpha_s} \right) \cdot \nabla_x g_s + \vF_{f_s, \alpha_s} \cdot \nabla_v g_s \right) ds
\end{align}
in $W_{a}^{b-1,2}(\mathbb{R}^6))$. Note that $\kappa * \vA_{\alpha} \in C(0,T; W_0^{b+1,\infty}(\mathbb{R}^3))$ and $K  * \widetilde{\rho}_{f} \in C(0,T; W_0^{b+2,\infty}(\mathbb{R}^3))$ 
because of \eqref{eq:estimates for the vector potential and interaction 1}  and  \eqref{eq:estimates for the vector potential and interaction 2} and the assumed regularity of $(f,\alpha)$. This allows us to conclude that the integrand in the equation above is an element of $C (0,T^*; W_{a}^{b-2,2}(\mathbb{R}^6))$. Defining the integral in the Riemann sense then proves $\Phi(g) \in C^1 (0,T^*; W_{a}^{b-2,2}(\mathbb{R}^6))$. Let $\big( X^{-1}_{t}(x,v) , V^{-1}_{t}(x,v)) \big)$ be the backward flow of the characteristics \eqref{eq:characteristics} with initial datum $(X_t^{-1}(x,v), V_t^{-1}(x,v)) \big|_{t=0} = (x,v)$. Since
\begin{align}
&\partial_t g_0 \big( X_t^{-1}(x,v) , V_t^{-1}(x,v)) \big)
\nonumber \\
&\quad = - \left( 2 V_t^{-1}(x,v) - 2 \kappa * \vA_{\alpha_t}(X_t^{-1}(x,v)) \right) \cdot \left( \nabla_1  g_0 \right) \big( X_t^{-1}(x,v) , V_t^{-1}(x,v) \big)
\nonumber \\
&\qquad +   \vF_{f_t, \alpha_t} ( X_t^{-1}(x,v), V_t^{-1}(x,v)) \cdot \left( \nabla_2  g_0 \right) \big( X_t^{-1}(x,v)  , V_t^{-1}(x,v) \big)
\end{align}
we obtain $g_t(x,v) = g_0 \big( X_t^{-1}(x,v) , V_t^{-1}(x,v)) \big)$ for all $t \in [0, T^*]$ by the uniqueness of \eqref{eq:VM existence theory linear equation for the density}. Using the support assumption on $g_0$, the fact that $(x,v) = \big( X_t \big( X_t^{-1}(x,v) \big) , V_t \big( V_t^{-1}(x,v) \big) \big)$ and the propagation estimate \eqref{eq:propagation estimate on V} we get that $\supp g_t \subset A_{R(t)}$ with
\begin{align}
R(t) 
&\leq C \left< R \right> e^{ C  \int_0^t  \big( \norm{f_s}_{W_4^{0,2}(\mathbb{R}^6)} + \norm{\alpha_s}_{ \mathfrak{h}}^2 \big)ds } .
\end{align}
This allows us to choose $R_1 = 2 C \left< R \right> \exp \big( C T  \big( \norm{f_s}_{L_T^{\infty} W_4^{0,2}(\mathbb{R}^6)} + \norm{\alpha}_{L_T^{\infty} \mathfrak{h}}^2 \big) \big)$ in the contraction argument below. In this case $T>0$ is fixed, $R_1$ is independent of $t$ and $\supp{g_t} \subset A_{R_1/2}$ holds for all $t$ in the time interval of existence. We can therefore patch local solutions together until we obtain a solution on the whole interval $[0,T]$.

\paragraph{Proof of well-definedness.}
Since $g \in  L^{\infty} (0,T; W_a^{b,2}(\mathbb{R}^6))$ we can define the integral in the definition of $\Phi$ on $W_{a-1}^{b-1,2}(\mathbb{R}^6)$ as a Bochner integral, implying that $\Phi(g) \in C (0,T; W_{a-1}^{b-1,2}(\mathbb{R}^6))$.
Using $L^{\infty} (0,T; W_a^{b,2}(\mathbb{R}^6))$ with $b \geq 3$ we have that $g_t \in C^1(\mathbb{R}^6)$ for all $t \in [0,T]$ and therefore that $\id_{\abs{v} \geq R_1} g_t(x,v) =0$ holds pointwise for all $t \in [0,T]$. This implies that $\id_{\abs{v} \geq R_1} \Phi_t(g) =0$ for all $t \in [0,T]$ and therefore $\norm{\id_{\abs{v} \geq R_1} \Phi(g)}_{L_T^{\infty} L^2(\mathbb{R}^6)} = 0$. Using $g \in  L^{\infty} (0,T; W_a^{b,2}(\mathbb{R}^6))$ again it then follows that  $\Phi(g) \in C (0,T; W_{a}^{b-1,2}(\mathbb{R}^6))$. 
Next, we derive an estimate for $\norm{\left( 1 - \Delta_z \right)^{b/2} \left< \cdot \right>^a \Phi_t(g)}_{L^2(\mathbb{R}^6)}$. To this end let us define the regularized Laplacian 
\begin{align}
\left( 1 - \Delta_z \right)_{\leq n} = \frac{\left( 1 - \Delta_z \right)}{\left( 1 - \Delta_z / n^2 \right)} 
\quad \text{with} \quad 
n \in \mathbb{N}
\end{align}
and the transport operator
\begin{align}
\label{eq:definition transport operator}
T_{f, \alpha} =  2 \left( v - \kappa * \vA_{\alpha} \right) \cdot \nabla_x  - \vF_{f, \alpha} \cdot \nabla_v .
\end{align}
By means of
\begin{align}
\Phi_t(g) &= g_0 - \int_0^t  T_{f_s, \alpha_s}[g_s]\,ds
\end{align}
we compute
\begin{align}
&\norm{\left( 1 - \Delta_z \right)_{\leq n}^{b/2}  \left< \cdot \right>^a  \Phi_t(g)}_{L^2(\mathbb{R}^6)}^2
- \norm{ \left( 1 - \Delta_z \right)_{\leq n}^{b/2} \left< \cdot \right>^a g_0}_{L^2(\mathbb{R}^6)}^2
\nonumber \\
&\quad = - 2 \int_0^t 
\scp{\left( 1 - \Delta_z \right)_{\leq n}^{b/2}  \left< \cdot \right>^a  g_s}{\left( 1 - \Delta_z \right)_{\leq n}^{b/2} \left< \cdot \right>^a T_{f_s, \alpha_s} [g_s]} ds
\nonumber \\
&\quad = - 2 \int_0^t  
\scp{ \left( 1 - \Delta_z \right)_{\leq n}^{b/2}  \left< \cdot \right>^a  g_s}{ T_{f_s, \alpha_s} \Big[ \left( 1 - \Delta_z \right)_{\leq n}^{b/2}   \left< \cdot \right>^a g_s  \Big]} ds
\nonumber \\
&\qquad - 2 \int_0^t  
\scp{\left( 1 - \Delta_z \right)_{\leq n}^{b/2}  \left< \cdot \right>^a g_s}{ \Big[ \left( 1 - \Delta_z \right)_{\leq n}^{b/2} \left< \cdot \right>^a , T_{f_s, \alpha_s} \Big] [g_s]} ds.
\end{align}
Note that $\left[ F_{f, \alpha} , \cdot \nabla_v \right] = 0$ because we are working in the Coulomb gauge  and integration by parts leads to
\begin{align}
\label{eq:VM fixed point argument integration by parts of the transport operator}
2 \int_{\mathbb{R}^6}  m(x,v) n(x,v) T_{f , \alpha}[n](x,v) dx\,dv 
&= -  \int_{\mathbb{R}^6}  T_{f , \alpha}[m](x,v) n^2(x,v) dx\,dv
\end{align}
for sufficiently regular functions $m, n: \mathbb{R}^6 \rightarrow \mathbb{R}$. This let us conclude
\begin{align}
\scp{\left( 1 - \Delta_z \right)_{\leq n}^{b/2} \left< \cdot \right>^a  g_s}{ T_{f_s, \alpha_s} \Big[ \left( 1 - \Delta_z \right)_{\leq n}^{b/2}  \left< \cdot \right>^a g_s  \Big]} = 0 ,
\end{align}
leading to
\begin{align}
&\norm{\left( 1 - \Delta_z \right)_{\leq n}^{b/2}  \left< \cdot \right>^a  \Phi_t(g)}_{L^2(\mathbb{R}^6)}^2
- \norm{\left( 1 - \Delta_z \right)_{\leq n}^{b/2}  \left< \cdot \right>^a  g_0}_{L^2(\mathbb{R}^6)}^2 
\nonumber \\
\label{eq:VM fixed point estimate linear equation W-a-b-2 estimate 1}
&\quad = - 2 \int_0^t  
\scp{\left( 1 - \Delta_z \right)_{\leq n}^{b/2}  \left< \cdot \right>^a g_s}{\left( 1 - \Delta_z \right)_{\leq n}^{b/2}  \Big[  \left< \cdot \right>^a , T_{f_s, \alpha_s} \Big] [g_s]}ds
\\
\label{eq:VM fixed point estimate linear equation W-a-b-2 estimate 2}
&\qquad - 2 \int_0^t  
\scp{\left( 1 - \Delta_z \right)_{\leq n}^{b/2} \left< \cdot \right>^a g_s}{  \Big[ \left( 1 - \Delta_z \right)_{\leq n}^{b/2} , T_{f_s, \alpha_s} \Big] [ \left< \cdot \right>^a g_s ]} ds.
\end{align}
Note that
\begin{align}
\label{eq:commutator of transport operator with weight function}
\left[ T_{f_s, \alpha_s} , \left< z \right>^a \right]
&= a \left< z \right>^{a -2 } \left(  2 \left( v - \kappa * \vA_{\alpha_t} \right) \cdot x  - \vF_{f_t, \alpha_t} \cdot v \right) 
\end{align}
because
\begin{align}
\nabla_x \left< z \right>^{a} 
&= a  \left< z \right>^{a -2}  x
\quad \text{and} \quad 
\nabla_v \left< z \right>^{a} = a  \left< z \right>^{a -2}  v.
\end{align}
If we use the Cauchy-Schwarz inequality and 
$\left( 1 - \Delta_z \right)_{\leq n}^{b/2} \leq \left( 1 - \Delta_z \right)^{b/2}$ we get
\begin{align}
\abs{\eqref{eq:VM fixed point estimate linear equation W-a-b-2 estimate 1}}
&\leq C \int_0^t 
\norm{\left( 1 - \Delta_z \right)^{b/2}  \left< \cdot \right>^a g_s}_{L^2(\mathbb{R}^6)}
\nonumber \\
&\quad \times
\norm{\left( 1 - \Delta_z \right)^{b/2}  \left< z \right>^{a -2 } \left(  2 \left( v - \kappa * \vA_{\alpha_t} \right) \cdot x  - \vF_{f_t, \alpha_t} \cdot v \right) g_s}_{L^2(\mathbb{R}^6)}ds
\nonumber \\
&\leq C \int_0^t 
\norm{g_s}_{W_a^{b,2}(\mathbb{R}^6)}
\norm{\left< z \right>^{a -2 } \left(  2 \left( v - \kappa * \vA_{\alpha_t} \right) \cdot x  - \vF_{f_t, \alpha_t} \cdot v \right) g_s}_{W_0^{b,2}(\mathbb{R}^6)}ds
\end{align}
By means of  \eqref{eq:estimates for the vector potential and interaction 1}  and \eqref{eq:estimates for the vector potential and interaction 2} we obtain
\begin{align}
\abs{\eqref{eq:VM fixed point estimate linear equation W-a-b-2 estimate 1}}
&\leq C \int_0^t 
\left( \norm{f_s}_{W_4^{b-2,2}(\mathbb{R}^6)} + \norm{\alpha_s}_{\mathfrak{h}_b}^2 \right)
\norm{ g_s}_{W_a^{b,2}(\mathbb{R}^6)}^2 ds.
\end{align}
Next, we consider 
\begin{align}
\Big[  \left( 1 - \Delta_z \right)_{\leq n}^{b/2}  , T_{f_s, \alpha_s} \Big] 
&= 2 \left[ \left( 1 - \Delta_z \right)_{\leq n}^{b/2} , v \right] \cdot \nabla_x
- 2 \left[ \left( 1 - \Delta_z \right)_{\leq n}^{b/2} , \kappa * \vA_{\alpha_t}  \right] \cdot \nabla_x
\nonumber \\
&\quad +  2 \sum_{i=1}^3 \nabla \kappa * \vA_{\alpha_s}^i  \left[   \left( 1 - \Delta_z \right)_{\leq n}^{b/2} , v^i  \right] \cdot \nabla_v
\nonumber \\
&\quad + 2 \sum_{i=1}^3 \left[   \left( 1 - \Delta_z \right)_{\leq n}^{b/2} , \nabla \kappa * \vA_{\alpha_s}^i \right]  v^i  \cdot \nabla_v
\nonumber \\
&\quad 
- \left[  \left( 1 - \Delta_z \right)_{\leq n}^{b/2} , \left( \nabla  K * \widetilde{\rho}_{f_s} + \nabla (\kappa * \vA_{\alpha_s})^2 \right) \right] \cdot \nabla_v .
\end{align}
Note that
\begin{align}
\norm{\left[ v^i , \left( 1 - \Delta_z \right)_{\leq n}^{b/2} \right] f }_{L^2(\mathbb{R}^6)}
&\leq \norm{ \left( 1 - \Delta_z \right)_{\leq n}^{b-1/2}  f }_{L^2(\mathbb{R}^6)}
\end{align}
because
$\abs{\partial_{u_j} \left( \frac{1 + u^2}{1 + u^2/ n^2} \right)^{b/2}}
\leq b \left( \frac{1 + u^2}{1 + u^2/ n^2} \right)^{(b-1)/2}$ .
%\textcolor{blue}{The following calculation is only present to make the proofreading easier and should be deleted in the final version:
%Note that 
%\begin{align}
%\norm{\left[ v^i , \left( 1 - \Delta_z \right)_{\leq n}^{b/2} \right] f }_{L^2(\mathbb{R}^6)}
%&= \norm{\left[ \nabla_u^{3 + i} , \frac{(1 + u^2)^{b/2}}{(1 + u^2/n^2)^{b/2}} \right] \mathcal{FT}[f] }_{L^2(\mathbb{R}^6)}
%\nonumber \\
%&\leq \norm{ \left( 1 - \Delta_z \right)_{\leq n}^{b-1/2}  f }_{L^2(\mathbb{R}^6)}
%\end{align}
%and
%\begin{align}
%\partial_{u_j} \left( \frac{1 + u^2}{1 + u^2/ n^2} \right)^{b/2}
%&= \frac{b}{2} \left( \frac{1 + u^2}{1 + u^2/ n^2} \right)^{b/2 - 1}  \partial_{u_j} \frac{1 + u^2}{1 + u^2/ n^2}
%\nonumber \\
%&= \frac{b}{2} \left( \frac{1 + u^2}{1 + u^2/ n^2} \right)^{b/2 - 1}  \bigg[
%\frac{2 u_j}{1 + u^2/ n^2} - \frac{(1 + u^2) \frac{2 u_j}{n^2}}{(1 + u^2/ n^2)^2}
%\bigg]
%\nonumber \\
%&= b \left( \frac{1 + u^2}{1 + u^2/ n^2} \right)^{b/2 - 1} 
%\frac{u_j \big[ \big(1 + \frac{u^2}{n^2} \big) -\frac{1}{n^2} ( 1 + u^2) \big]}{(1 + u^2/ n^2)^2}
%\nonumber \\
%&=  b \left( \frac{1 + u^2}{1 + u^2/ n^2} \right)^{b/2 - 1} 
%\frac{u_j \big[ 1 - \frac{1}{n^2} \big]}{(1 + u^2/ n^2)^2} .
%\end{align}
%}
For $V: \mathbb{R}^3 \rightarrow \mathbb{R}^3$ we, moreover, have
\begin{align}
&\left[ \left( 1 - \Delta_z \right)_{\leq n}^{b/2}  , V(x) \right]
\nonumber \\
&\quad = \int_{\mathbb{R}^3} dk \, \mathcal{F}[V](k)  e^{-ikx}
\left( \frac{\left( 1 +  (i \nabla_x + k)^2 - \Delta_v \right)^{b/2}}{\left( 1 + [ (i \nabla_x + k)^2 - \Delta_v ] / n^2 \right)^{b/2}}
- \frac{\left( 1 - \Delta_z \right)^{b/2}}{\left( 1 -  \Delta_z / n^2 \right)^{b/2} } \right) .
\end{align}
Using the estimate
\begin{align}
&\pm \left( \frac{\left( 1 +  (i \nabla_x + k)^2 - \Delta_v \right)^{b/2}}{\left( 1 + [ (i \nabla_x +  k)^2 - \Delta_v ] / n^2 \right)^{b/2}}
- \frac{\left( 1 - \Delta_z \right)^{b/2}}{\left( 1 -  \Delta_z / n^2 \right)^{b/2} } \right)
\nonumber \\
&\quad = \pm b \int_0^1 dr \,
\left( \frac{\left( 1 +  (i \nabla_x + r k)^2 - \Delta_v \right)}{\left( 1 + [ (i \nabla_x + r k)^2 - \Delta_v ] / n^2 \right)} \right)^{b/2 - 1} 
\frac{2 k \cdot (i \nabla_x + r k)  (1 - \frac{1}{n^2})}{\left( 1 + [ (i \nabla_x + r k)^2 - \Delta_v ] / n^2 \right)^2}
\nonumber \\
&\leq C  \abs{k} \left(  \left< k \right>^{b-1} + \left( 1 - \Delta_z \right)^{(b-1/2)} \right) 
\end{align}
%\textcolor{blue}{which holds because of 
%\begin{align}
%&\frac{d}{dr} \left( \frac{\left( 1 +  (i \nabla_x + r k)^2 - \Delta_v \right)}{\left( 1 + [ (i \nabla_x + r k)^2 - \Delta_v ] / n^2 \right)} \right)^{b/2} 
%\nonumber \\
%&\quad =
% \left( \frac{\left( 1 +  (i \nabla_x + r k)^2 - \Delta_v \right)}{\left( 1 + [ (i \nabla_x + r k)^2 - \Delta_v ] / n^2 \right)} \right)^{b/2 -1 } 
%\nonumber \\
%&\qquad \times
%\bigg[ \frac{2 k \cdot (i \nabla_x + r k) }{\left( 1 + [ (i \nabla_x + r k)^2 - \Delta_v ] / n^2 \right)}
%- \frac{\left( 1 +  (i \nabla_x + r k)^2 - \Delta_v \right) \frac{2 k \cdot (i \nabla_x + r k)}{n^2}
%}{\left( 1 + [ (i \nabla_x + r k)^2 - \Delta_v ] / n^2 \right)^2}
%\bigg]
%\nonumber \\
%&\quad =
% \left( \frac{\left( 1 +  (i \nabla_x + r k)^2 - \Delta_v \right)}{\left( 1 + [ (i \nabla_x + r k)^2 - \Delta_v ] / n^2 \right)} \right)^{b/2 -1 } 
%\nonumber \\
%&\qquad \times
%\frac{2 k \cdot (i \nabla_x + r k)
%\big(1 + \frac{(i \nabla_x + r k)^2 - \Delta_v}{n^2} \big)
%- \left( 1 +  (i \nabla_x + r k)^2 - \Delta_v \right) \frac{2 k (i \nabla_x + r k)}{n^2}
%}{\left( 1 + [ (i \nabla_x + r k)^2 - \Delta_v ] / n^2 \right)^2}
%\nonumber \\
%&\quad =
% \left( \frac{\left( 1 +  (i \nabla_x + r k)^2 - \Delta_v \right)}{\left( 1 + [ (i \nabla_x + r k)^2 - \Delta_v ] / n^2 \right)} \right)^{b/2 -1 } 
%\frac{2 k \cdot (i \nabla_x + r k) \left( 1 - \frac{1}{n^2} \right)}{\left( 1 + [ (i \nabla_x + r k)^2 - \Delta_v ] / n^2 \right)^2}
%\end{align}
%}
we get
\begin{align}
\norm{\left[ \left( 1 - \Delta_z \right)_{\leq n}^{b/2}  , V(x) \right] f }_{L^2(\mathbb{R}^6)}
&\leq  C \int_{\mathbb{R}^3}  \left< k \right>^{b} \abs{\mathcal{FT}[V](k)}
\norm{\left( 1 - \Delta_z \right)^{(b-1)/2} f}_{L^2(\mathbb{R}^6)} dk.
\end{align}
Since
\begin{align}
\int_{\mathbb{R}^3}  \left< k \right>^{b} \abs{\mathcal{FT}[\nabla \kappa * \vA_{\alpha}](k)}dk
&\leq C \norm{\alpha}_{\mathfrak{h}_b}
\end{align}
and
\begin{align}
\int_{\mathbb{R}^3}  \left< k \right>^b\abs{\mathcal{FT}[\nabla K * \widetilde{\rho}_{f}](k)}
)dk
&\leq C \norm{f}_{W_4^{b-2,2}(\mathbb{R}^6)}
\end{align}
can obtained by similar bounds as in the proofs of \eqref{eq:estimates for the vector potential and interaction 1} and \eqref{eq:estimates for the vector potential and interaction 2}  and $\left( 1 - \Delta_z \right)_{\leq n}^{b/2} \leq \left( 1 - \Delta_z \right)^{b/2}$ this leads to
\begin{align}
\norm{\Big[  \left( 1 - \Delta_z \right)_{\leq n}^{b/2}   , T_{f_s, \alpha_s} \Big] f}_{L^2(\mathbb{R}^6)}
&\leq C \left( 1 + \norm{f_s}_{W_4^{b-2,2}(\mathbb{R}^6)} + \norm{\alpha_s}_{\mathfrak{h}_b}^2 \right) \norm{\left< v \right> \left( \nabla_x + \nabla_v \right) f}_{W_0^{b-1,2}(\mathbb{R}^6)} .
\end{align}
Using the  Cauchy--Schwarz inequality we obtain
\begin{align}
\abs{\eqref{eq:VM fixed point estimate linear equation W-a-b-2 estimate 2}}
&\leq C \int_0^t   \left( 1 + \norm{f_s}_{W_4^{b-2,2}(\mathbb{R}^6)} + \norm{\alpha_s}_{\mathfrak{h}_b}^2 \right)
\norm{ g_s}_{W_a^{b,2}(\mathbb{R}^6)}
\nonumber \\
&\quad \times
\norm{\left< v \right> \left( \nabla_x + \nabla_v \right)  \left< \cdot \right>^a g_s}_{W_0^{b-1,2}(\mathbb{R}^6)}ds
\end{align}
Because $\id_{\abs{v} \geq R_1} g_t(x,v) =0$ holds pointwise for all $t \in [0,T]$ and differentiation is a local property we get
\begin{align}
\abs{\eqref{eq:VM fixed point estimate linear equation W-a-b-2 estimate 2}}
&\leq C \left< R_1 \right> \int_0^t   \left( 1 + \norm{f_s}_{W_4^{b-2,2}(\mathbb{R}^6)} + \norm{\alpha_s}_{\mathfrak{h}_b}^2 \right)
\norm{ g_s}_{W_a^{b,2}(\mathbb{R}^6)}
\nonumber \\
&\quad \times
\norm{\left( \nabla_x + \nabla_v \right)  \left< \cdot \right>^a g_s}_{W_0^{b-1,2}(\mathbb{R}^6)}ds
\nonumber \\
&\leq C \left< R_1 \right> \int_0^t  \left( 1 + \norm{f_s}_{W_4^{b-2,2}(\mathbb{R}^6)} + \norm{\alpha_s}_{\mathfrak{h}_b}^2 \right)
\norm{g_s}^2_{W_a^{b,2}(\mathbb{R}^6)} ds.
\end{align}
Collecting the estimates and $\left( 1 - \Delta_z \right)_{\leq n}^{b/2} \leq \left( 1 - \Delta_z \right)^{b/2}$ lead to
\begin{align}
&\norm{\left( 1 - \Delta_z \right)_{\leq n}^{b/2} \left< \cdot \right>^a  \Phi_t(g)}_{L^2(\mathbb{R}^6)}^2
\nonumber \\
&\quad \leq C \norm{g_0}_{W_a^{b,2}(\mathbb{R}^6)}^2
 +   C \left< R_1 \right> \int_0^t  
\left( 1 + \norm{f_s}_{W_4^{b-2,2}(\mathbb{R}^6)} + \norm{\alpha_s}_{\mathfrak{h}_b}^2 \right)
\norm{g_s}_{W_a^{b,2}(\mathbb{R}^6)}^2  ds.
\end{align}
Note that
\begin{align}
\norm{\left( 1 - \Delta_z \right)_{\leq n}^{b/2} f}_{L^2(\mathbb{R}^6)}^2 = \int_0^{\infty}  \frac{(1 + \lambda^2)^b}{(1 + \lambda^2/n^2)^b}d \mu_{f}(\lambda)
\end{align} 
by the spectral calculus for $- \Delta_z$. Using monotone convergence let us obtain
\begin{align}
\lim_{n \rightarrow \infty} \norm{\left( 1 - \Delta_z \right)_{\leq n}^{b/2} f}_{L^2(\mathbb{R}^6)}^2 = \norm{\left( 1 - \Delta_z \right)^{b/2} f}_{L^2(\mathbb{R}^6)}^2 
\end{align}
and
\begin{align}
&\norm{(1 - \Delta_z)^{b/2} \left< \cdot \right>^a  \Phi_t(g)}_{L^2(\mathbb{R}^6)}^2
\nonumber \\
&\quad \leq 
C \norm{g_0}_{W_a^{b,2}(\mathbb{R}^6)}^2
+  C \left< R_1 \right> \int_0^t 
\left( 1 + \norm{f_s}_{W_4^{b-2,2}(\mathbb{R}^6)} + \norm{\alpha_s}_{\mathfrak{h}_b}^2 \right)
\norm{g_s}_{W_a^{b,2}(\mathbb{R}^6)}^2 ds.
\end{align}
If we choose $C$ in \eqref{eq:VM fixed point estimate linear equation f condition on the time parameter} sufficiently large we get
\begin{align}
&\norm{(1 - \Delta)^{b/2}  \left< \cdot \right>^a  \Phi_t(g)}_{L^2(\mathbb{R}^6)}^2
\nonumber \\
&\quad \leq
C \norm{g_0}_{W_a^{b,2}(\mathbb{R}^6)}^2
+  C \left< R_1 \right> \int_0^t 
\left( 1 + \norm{f_s}_{W_4^{b-2,2}(\mathbb{R}^6)} + \norm{\alpha_s}_{\mathfrak{h}_b}^2 \right)
\norm{g_s}_{W_a^{b,2}(\mathbb{R}^6)}^2 ds
\nonumber \\
&\quad \leq
\frac{R_2}{2}
+ C t \left< R_1 \right> \left( 1 + \norm{f}_{L_t^{\infty} W_4^{b-2,2}(\mathbb{R}^6)} + \norm{\alpha}_{L_t^{\infty} \mathfrak{h}_b}^2 \right)  \norm{g}_{L_t^{\infty} W_a^{b,2}(\mathbb{R}^6)}^2
\end{align}
and 
\begin{align}
\label{eq:VM fixed point estimate linear equation final estimate for W-a-b-2}
\norm{\Phi_t(g)}_{W_a^{b,2}(\mathbb{R}^6)}^2
&\leq   C \left( \norm{g_0}_{W_a^{b,2}(\mathbb{R}^6)}^2
+ \left< R_1 \right> \int_0^t  
\left( 1 + \norm{f_s}_{W_4^{b-2,2}(\mathbb{R}^6)} + \norm{\alpha_s}_{\mathfrak{h}_b}^2 \right)
\norm{g_s}_{W_a^{b,2}(\mathbb{R}^6)}^2  ds
\right).
\end{align}
Together with $\norm{g}_{L_T^{\infty} W_a^{b,2}(\mathbb{R}^6)}^2 \leq C R_2$ this shows $\Phi(g) \in L^{\infty} (0,T; W_a^{b,2}(\mathbb{R}^6))$ and
\begin{align}
\norm{(1 - \Delta)^{b/2}  \left< \cdot \right>^a  \Phi_t(g)}_{L_T^{\infty} L^2(\mathbb{R}^6)}^2 \leq R_2
\end{align}
for all $(R_2, T^*)$ satisfying \eqref{eq:VM fixed point estimate linear equation f condition on the time parameter} with $C$ chosen sufficiently large.

\paragraph{Contraction property and proof of \eqref{eq:VM fixed point estimate W-a-b-2 for g-t}:}
Next, we will show that for all $T^*$ satisfying \eqref{eq:VM fixed point estimate linear equation f condition on the time parameter} $\Phi$
is a contraction on $(\mathcal{Z}_{T^*,R_1,R_2}, d_{T^*})$. To this end, consider $g, g' \in \mathcal{Z}_{T^*,R_1,R_2}$ with $g_0 - g'_{0} = g_0 - g_0 = 0$ and note that $\Phi$ is a linear mapping. Performing the substitutions $b \mapsto b-1$ and $g \mapsto (g - g')$ in \eqref{eq:VM fixed point estimate linear equation final estimate for W-a-b-2} then leads to
\begin{align}
\norm{\Phi_t(g) - \Phi_t(g')}_{W_a^{b-1,2}(\mathbb{R}^6)}^2
&\leq   C \left< R_1 \right> t 
\left( 1 + \norm{f}_{L_t^{\infty} W_4^{b-3,2}(\mathbb{R}^6)} + \norm{\alpha}_{L_t^{\infty} \mathfrak{h}_{b-1}}^2 \right)
d_t(g,g')^2  .
\end{align}
For all $T^*$ satisfying \eqref{eq:VM fixed point estimate linear equation f condition on the time parameter} we consequently get
\begin{align}
d_{T^*} \big(\Phi(g), \Phi(g') \big) < d_{T^*} (g,g') ,
\end{align} 
proving the contraction property of $\Phi$.
If we apply the same estimates that lead to \eqref{eq:VM fixed point estimate linear equation final estimate for W-a-b-2} to the integral equations of  \eqref{eq:VM existence theory linear equation for the density} and replace $R_1$ in the estimates by $R(s)$ as defined in Lemma \eqref{lemma:VM existence theory linear equation for the density} we get
\begin{align}
\norm{g_t}_{W_a^{b,2}(\mathbb{R}^6)}^2
&\leq  C \norm{g_0}_{W_a^{b,2}(\mathbb{R}^6)}^2
+ C  \int_0^t  R(s)
\left( 1 + \norm{f_s}_{W_4^{b-2,2}(\mathbb{R}^6)} + \norm{\alpha_s}_{\mathfrak{h}_b}^2 \right)
\norm{g_s}_{W_a^{b,2}(\mathbb{R}^6)}^2 ds.
\end{align}
Inequality \eqref{eq:VM fixed point estimate W-a-b-2 for g-t} the follows by Gr\"{o}nwall's lemma.
\end{proof}

Next, we are looking at the linearized equation for the electromagnetic field.
\begin{lemma} 
\label{lemma:MS well-posedness linearized field equation}
Let $T>0$, $a, b \in \mathbb{N}$ such that $a \geq 5$, $(f, \alpha) \in C \big( 0,T; W_a^{b-1,2}(\mathbb{R}^6) \big) \times  C \big( 0,T ; \mathfrak{h}_{b} \cap \dot{\mathfrak{h}}_{-1/2} \big)$ and $\widetilde{\vJ}_{f_t,  \alpha_t}$ be defined as in \eqref{eq:current density of Vlasov equation}. The linear equation
\begin{align}
\label{eq:MS well-posedness linearized field equation}
i \partial_t \xi_t(k,\lambda) &=  \abs{k}  \xi_t(k,\lambda) -  \sqrt{ \frac{4 \pi^3}{\abs{k}}}  \mathcal{F}[\kappa](k) \vep_{\lambda}(k) \cdot \mathcal{F}[\widetilde{\vJ}_{f_t,  \alpha_t} ](k)  
\end{align}
with initial datum $\xi_0 \in \mathfrak{h}_{b} \cap \dot{\mathfrak{h}}_{-1/2}$ has a unique $C \big( 0,T ; \mathfrak{h}_{b} \cap \dot{\mathfrak{h}}_{-1/2} \big) \cap C^1 \big( 0,T ; \mathfrak{h}_{b-1} \cap \dot{\mathfrak{h}}_{-1/2} \big)$ solution. 
\end{lemma}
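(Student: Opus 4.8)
The plan is to follow the same strategy as in the proof of the corresponding Maxwell--Schr\"odinger field equation (the lemma containing \eqref{eq:MS wellposedness linearized field equation expanded solution}): since the source term of the equation in the statement does not depend on $\xi$, the equation can be solved explicitly by a Duhamel-type formula, from which existence, regularity and uniqueness are read off directly. Concretely, I would set
\begin{align}
\xi_t(k,\lambda) &= e^{-i\abs{k}t}\xi_0(k,\lambda) + i e^{-i\abs{k}t}\int_0^t e^{i\abs{k}s}\sqrt{\frac{4\pi^3}{\abs{k}}}\,\mathcal{F}[\kappa](k)\,\vep_\lambda(k)\cdot\mathcal{F}[\widetilde{\vJ}_{f_s,\alpha_s}](k)\,ds ,
\end{align}
and check that this candidate solves the equation and lies in the asserted space. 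Multiplication by the unimodular phase $e^{-i\abs{k}t}$ is an isometry of both $\mathfrak{h}_b$ and $\dot{\mathfrak{h}}_{-1/2}$, so everything reduces to controlling the source term in $\mathfrak{h}_b\cap\dot{\mathfrak{h}}_{-1/2}$, uniformly on $[0,T]$.

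The key step is the estimate of the integrand. For the $\dot{\mathfrak{h}}_{-1/2}$ part the weight $\abs{k}^{-1/2}$ combines with the $\abs{k}^{-1/2}$ from the equation to produce $\norm{\abs{\cdot}^{-1}\mathcal{F}[\kappa]}_{L^2}\norm{\mathcal{F}[\widetilde{\vJ}_{f_s,\alpha_s}]}_{L^\infty}$, which is finite by \eqref{eq:assumption on the cutoff parameter} and \eqref{eq:estimate current density Vlasov Maxwell L-1 norm} (using $\norm{\mathcal{F}[g]}_{L^\infty}\leq C\norm{g}_{L^1}$). For the $\mathfrak{h}_b$ part I would distribute the weight $\langle k\rangle^b=\langle k\rangle\cdot\langle k\rangle^{b-1}$, assigning one power to the charge distribution and the remaining $b-1$ powers to the current:
\begin{align}
\langle k\rangle^b\abs{k}^{-1/2}\abs{\mathcal{F}[\kappa](k)}\,\abs{\mathcal{F}[\widetilde{\vJ}_{f_s,\alpha_s}](k)} &\leq \Big(\abs{k}^{-1/2}\langle k\rangle\abs{\mathcal{F}[\kappa](k)}\Big)\Big(\langle k\rangle^{b-1}\abs{\mathcal{F}[\widetilde{\vJ}_{f_s,\alpha_s}](k)}\Big).
\end{align}
The first factor lies in $L^2$ because $\abs{\cdot}^{-1/2}\langle\cdot\rangle\mathcal{F}[\kappa]$ is dominated by $C(\abs{\cdot}^{-1}+\abs{\cdot}^{1/2})\mathcal{F}[\kappa]\in L^2$ via \eqref{eq:assumption on the cutoff parameter}, while the second is bounded in $L^\infty$ by $C\norm{\widetilde{\vJ}_{f_s,\alpha_s}}_{W_0^{b-1,1}(\mathbb{R}^3)}$. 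Applying \eqref{eq:estimate current density Vlasov Maxwell W-0-sigma-1 norm} with $\sigma=b-1$ then bounds this by $C(1+\norm{\alpha_s}_{\mathfrak{h}_{b-2}})\norm{f_s}_{W_5^{b-1,2}(\mathbb{R}^6)}$, which is exactly the regularity furnished by the hypotheses $f\in C(0,T;W_a^{b-1,2})$ with $a\geq5$ and $\alpha\in C(0,T;\mathfrak{h}_b\cap\dot{\mathfrak{h}}_{-1/2})$. It is precisely this bookkeeping trick --- moving one power of $\langle k\rangle$ onto $\kappa$ so that only $b-1$ derivatives of $f$ are required --- that I expect to be the main (though mild) obstacle, since a naive split would demand $f\in W_5^{b,2}$, one derivative more than is available.

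With the source established to be a $C(0,T;\mathfrak{h}_b\cap\dot{\mathfrak{h}}_{-1/2})$ function, I would define the time integral in the Riemann sense, yielding $\xi\in C(0,T;\mathfrak{h}_b\cap\dot{\mathfrak{h}}_{-1/2})$ together with strong differentiability in time. Differentiating the formula reproduces the equation and gives $\partial_t\xi_t=-i\abs{k}\xi_t+(\text{source})$; since $\norm{\abs{k}\xi_t}_{\mathfrak{h}_{b-1}}\leq\norm{\xi_t}_{\mathfrak{h}_b}$ and the elementary bound $\abs{k}\leq\abs{k}^{-1}+\langle k\rangle^{2b}$ controls $\norm{\abs{k}\xi_t}_{\dot{\mathfrak{h}}_{-1/2}}$ by $\norm{\xi_t}_{\dot{\mathfrak{h}}_{-1/2}}+\norm{\xi_t}_{\mathfrak{h}_b}$, the derivative lies in $\mathfrak{h}_{b-1}\cap\dot{\mathfrak{h}}_{-1/2}$ and depends continuously on $t$, so $\xi\in C^1(0,T;\mathfrak{h}_{b-1}\cap\dot{\mathfrak{h}}_{-1/2})$. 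Finally, uniqueness is immediate: the equation is affine in $\xi$ with $\xi$-independent inhomogeneity, so the difference $w_t$ of two solutions with the same datum satisfies $i\partial_t(e^{i\abs{k}t}w_t)=0$ pointwise in $k$, whence $w_t=e^{-i\abs{k}t}w_0=0$.
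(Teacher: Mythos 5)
Your proposal is correct and follows essentially the same route as the paper's proof: the explicit Duhamel representation, the weight-splitting $\left< k \right>^b \abs{k}^{-1/2} \abs{\mathcal{F}[\kappa]} \abs{\mathcal{F}[\widetilde{\vJ}]} \leq \big( \abs{k}^{-1/2} \left< k \right> \abs{\mathcal{F}[\kappa]} \big) \big( \left< k \right>^{b-1} \abs{\mathcal{F}[\widetilde{\vJ}]} \big)$ combined with \eqref{eq:assumption on the cutoff parameter} and \eqref{eq:estimate current density Vlasov Maxwell W-0-sigma-1 norm} at $\sigma = b-1$, Riemann integration of the continuous source, and uniqueness via the unimodular phase. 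The one point the paper spells out that you only assert is the time-continuity of the source in $\mathfrak{h}_b \cap \dot{\mathfrak{h}}_{-1/2}$, obtained there from the difference bound \eqref{eq:VM fixed point estimate difference of the currents in L-1 norm} on $\widetilde{\vJ}_{f_t,\alpha_t} - \widetilde{\vJ}_{f_s,\alpha_s}$; this follows from exactly the estimates you already invoke, since $\widetilde{\vJ}$ depends affinely on $f$ and on $\kappa * \vA_{\alpha}$, so the gap is cosmetic.
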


\begin{proof}
Note that
\begin{align}
\label{eq:VM fixed point estimate difference of the currents in L-1 norm}
&\norm{\widetilde{\vJ}_{f_t,  \alpha_t} - \widetilde{\vJ}_{f_s,  \alpha_s}}_{W_0^{b-1, 1}(\mathbb{R}^3, \mathbb{C}^3)}
\nonumber \\
&\quad \leq C \left( 1 + \norm{\alpha_t}_{\mathfrak{h}_{b-2}} \right) \norm{f_t - f_s}_{W_5^{b-1,2}(\mathbb{R}^6)}
+ C \norm{f_s}_{W_4^{b-1,2}(\mathbb{R}^6)} \norm{\alpha_t - \alpha_s}_{\mathfrak{h}_{b-2}}
\nonumber \\
&\quad \leq C \left( 1 + \norm{\alpha_t}_{\mathfrak{h}_{b-2}} + \norm{f_s}_{W_4^{b-1,2}(\mathbb{R}^6)} \right) 
\left( \norm{f_t - f_s}_{W_5^{b-1,2}(\mathbb{R}^6)} 
+ \norm{\alpha_t - \alpha_s}_{\mathfrak{h}_{b-2}} \right)
\end{align}
because of \eqref{eq:estimate current density Vlasov Maxwell W-0-sigma-1 norm}, \eqref{eq:estimates for the vector potential and interaction 1} and $\norm{f}_{W_0^{b-1,1}(\mathbb{R}^6)} \leq C \norm{f}_{W_4^{b-1,2}(\mathbb{R}^6)}$.
By means of \eqref{eq:assumption on the cutoff parameter} we have
\begin{align}
\norm{\abs{\cdot}^{-1/2} \mathcal{F}[\kappa] \, \vep \, \mathcal{F}[\widetilde{\vJ}_{f_t,  \alpha_t} - \widetilde{\vJ}_{f_s,  \alpha_s} ]}_{\mathfrak{h}_b \cap \, \dot{\mathfrak{h}}_{-1/2}}
&\leq C
\norm{ \left< \cdot \right>^{b-1} \mathcal{F}[\widetilde{\vJ}_{f_t,  \alpha_t} - \widetilde{\vJ}_{f_s,  \alpha_s} ]}_{L^{\infty}(\mathbb{R}^3)}
\nonumber \\
&\leq C \norm{\widetilde{\vJ}_{f_t,  \alpha_t} - \widetilde{\vJ}_{f_s,  \alpha_s}}_{W_0^{b-1, 1}(\mathbb{R}^3, \mathbb{C}^3)} .
\end{align}
This and the continuity of $(f, \alpha)$ let us conclude that $e^{i \abs{k} s} \sqrt{ \frac{4 \pi^3}{\abs{k}}} \mathcal{F}[\kappa](k) \vep_{\lambda}(k) \mathcal{F}[\widetilde{\vJ}_{f_s, \alpha_s}](k)$ is a $C \big( 0,T ; \mathfrak{h}_{b} \cap \dot{\mathfrak{h}}_{-1/2} \big)$ function. The Lemma can then be proven similarly as Lemma \ref{lemma:MS well-posedness linearized field equation}.
\end{proof}

\subsection{Local solutions}

\begin{lemma}
\label{lemma:VM local well-posedness}
Let $R>0$ and $a, b \in \mathbb{N}$ such that $a \geq 5$ and $b \geq 3$. For all $(f_0, \alpha_0) \in W_a^{b,2}(\mathbb{R}^6) \times \mathfrak{h}_b \cap \dot{\mathfrak{h}}_{-1/2}$ such that $\supp f_0 \subset A_R$ with $A_R = \{ (x,v) \in \mathbb{R}^6 , \abs{v} \leq R \}$ there exists $T>0$ and a unique
$L^{\infty} \big( 0,T; W_{a}^{b,2}(\mathbb{R}^6) \big) \cap C \big( 0,T; W_{a}^{b-1,2}(\mathbb{R}^6) \big)  \cap C^1 (0,T; W_{a}^{b-2,2}(\mathbb{R}^6)) \times  C \big( 0,T ; \mathfrak{h}_{b} \cap \dot{\mathfrak{h}}_{-1/2} \big) \cap C^1 \big( 0,T ; \mathfrak{h}_{b-1} \cap \dot{\mathfrak{h}}_{-1/2} \big)$--valued function which satisfies \eqref{eq: Vlasov-Maxwell different style of writing} in $W_a^{b-2,2}(\mathbb{R}^6) \oplus \mathfrak{h}_{b-1} \cap \dot{\mathfrak{h}}_{-1/2}$ with initial datum $(f_0, \alpha_0)$.
\end{lemma}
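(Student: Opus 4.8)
The plan is to prove Lemma~\ref{lemma:VM local well-posedness} by a Banach fixed-point argument, in the same spirit as the local well-posedness for the Maxwell--Schr\"odinger system (Lemma~\ref{lemma:MS system local well-posedness result stated by use of negative weighted Lp-spaces}), decoupling the two equations of \eqref{eq: Vlasov-Maxwell different style of writing} through the linear theory established in Lemma~\ref{lemma:VM existence theory linear equation for the density} and in the preceding linearized field lemma. Concretely, for parameters $T, R_1, R_2 > 0$ I would introduce the set $\mathcal{Z}_{T,R_1,R_2}$ of pairs $(f,\alpha)$ with $f \in C(0,T; W_a^{b-1,2}(\mathbb{R}^6)) \cap L^{\infty}(0,T; W_a^{b,2}(\mathbb{R}^6))$ and $\alpha \in C(0,T; \mathfrak{h}_b \cap \dot{\mathfrak{h}}_{-1/2})$, prescribed initial data $(f_0,\alpha_0)$, velocity support $\id_{\abs{v} \geq R_1} f = 0$, and the bounds $\norm{f}_{L_T^{\infty} W_a^{b,2}} \leq R_2$, $\norm{\alpha}_{L_T^{\infty} \mathfrak{h}_b \cap \dot{\mathfrak{h}}_{-1/2}} \leq R_2$, equipped with the weaker metric $d_T((f,\alpha),(f',\alpha')) = \norm{f - f'}_{L_T^{\infty} W_a^{b-1,2}} + \norm{\alpha - \alpha'}_{L_T^{\infty} \mathfrak{h}_{b-1} \cap \dot{\mathfrak{h}}_{-1/2}}$, which turns it into a complete metric space. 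The map $\Phi$ sends $(f,\alpha)$ to $(g,\xi)$, where $g$ solves the linear transport equation \eqref{eq:VM existence theory linear equation for the density} with coefficients $(f,\alpha)$ and datum $f_0$, and $\xi$ solves the linear field equation \eqref{eq:MS well-posedness linearized field equation} sourced by $\vJt_{f,\alpha}$ with datum $\alpha_0$; a fixed point of $\Phi$ is exactly a local solution of \eqref{eq: Vlasov-Maxwell different style of writing}.

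For the self-mapping property I would first fix $R_1$ by means of the characteristic flow: the velocity propagation estimate \eqref{eq:propagation estimate on V} keeps the transported support inside $A_{R_1/2}$ provided $R_1 = 2C\langle R\rangle \exp(CT(\ldots))$, so that $\id_{\abs{v}\geq R_1} g = 0$ is preserved along $\Phi$. The top-order bound $\norm{g}_{L_T^{\infty} W_a^{b,2}} \leq R_2$ then follows from the propagation estimate \eqref{eq:VM fixed point estimate W-a-b-2 for g-t} of Lemma~\ref{lemma:VM existence theory linear equation for the density} after choosing $R_2 \gtrsim \norm{f_0}_{W_a^{b,2}}$ and $T$ small. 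For the field component I would use the Duhamel representation of \eqref{eq:MS well-posedness linearized field equation}, together with \eqref{eq:assumption on the cutoff parameter} and the current bound \eqref{eq:estimate current density Vlasov Maxwell W-0-sigma-1 norm} (which is precisely where $a\geq 5$ enters, since $\norm{\vJt_{f,\alpha}}_{W_0^{\sigma,1}} \lesssim (1+\norm{\alpha}_{\mathfrak{h}_{\sigma-1}})\norm{f}_{W_5^{\sigma,2}}$), to control $\norm{\xi}_{L_T^{\infty} \mathfrak{h}_b \cap \dot{\mathfrak{h}}_{-1/2}} \leq R_2$ and, at the same time, $\partial_t \xi$ in $\mathfrak{h}_{b-1} \cap \dot{\mathfrak{h}}_{-1/2}$; smallness of $T$ absorbs the time integral.

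For the contraction I would work one derivative below the top level. Given two inputs $(f,\alpha)$, $(f',\alpha')$ with images $(g,\xi)$, $(g',\xi')$, the difference $g - g'$ solves an inhomogeneous transport equation whose right-hand side is the difference of the coefficients $(v - \kappa * \vA_{\alpha})$ and $\vF_{f,\alpha}$ acting on derivatives of $g'$; estimating it by the $b \mapsto b-1$ version of the energy inequality behind \eqref{eq:VM fixed point estimate W-a-b-2 for g-t} (using \eqref{eq:estimates for the vector potential and interaction 1} and the Vlasov force/current bounds) yields $\norm{g - g'}_{L_T^{\infty} W_a^{b-1,2}} \leq C\langle R_1\rangle T (\ldots)\, d_T$. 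For the fields, the difference-of-currents estimate \eqref{eq:VM fixed point estimate difference of the currents in L-1 norm} combined with Duhamel gives $\norm{\xi - \xi'}_{L_T^{\infty} \mathfrak{h}_{b-1} \cap \dot{\mathfrak{h}}_{-1/2}} \leq C T (\ldots)\, d_T$. Choosing $T$ small makes $\Phi$ a strict contraction on $(\mathcal{Z}_{T,R_1,R_2}, d_T)$; the Banach fixed-point theorem then produces the unique local solution, whose stated regularity (including $C^1$ in time at orders $b-2$ for the density and $b-1$ for the field) is inherited from the regularity assertions of the two linear lemmas.

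The main obstacle I anticipate is making the contraction close at the lower regularity level while retaining the top-order bound only in $L^{\infty}_T$: since $W_a^{b,2}$-continuity in time is not available for the fixed point, the argument must run in the two-tier (Bona--Smith type) scheme above, with the higher norm merely bounded and the difference measured one derivative down. This is what forces the constraints $a \geq 5$ and $b \geq 3$, so that $b-1 \geq 2$ still suffices for the current and commutator estimates and $W_5$ is available for \eqref{eq:estimate current density Vlasov Maxwell W-0-sigma-1 norm}; and it requires careful bookkeeping of how the velocity-unbounded transport coefficient $v\cdot\nabla_x$ is tamed by the compactly supported velocity profile, which is exactly the role of the fixed radius $R_1$ coming from \eqref{eq:propagation estimate on V}.
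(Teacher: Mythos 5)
Your proposal is correct and follows essentially the same route as the paper: decouple \eqref{eq: Vlasov-Maxwell different style of writing} via the linear transport and field lemmas of Section~\ref{subsection:VM existence theory linearized equations}, run a Banach fixed-point argument in a two-tier space where the contraction is measured one derivative below the top order, control the loss-of-derivative term (the difference of transport operators acting on $g$) by the $L^\infty_T W_a^{b,2}$ bound, tame the unbounded coefficient $v\cdot\nabla_x$ through the compact velocity support propagated by the characteristics, and close the field part with the current-difference estimate \eqref{eq:VM fixed point estimate difference of the currents in L-1 norm}. The only (harmless) deviations are organizational: you impose the velocity-support constraint and the top-order bound inside the fixed-point space and measure $\alpha$ in $\mathfrak{h}_{b-1}$, whereas the paper's nonlinear space $\mathcal{Z}_{T,R^*}$ lives at the $W_a^{b-1,2}$ level with $\alpha$ at top order $\mathfrak{h}_b$, recovering the support property and the $W_a^{b,2}$ bound of the image directly from Lemma~\ref{lemma:VM existence theory linear equation for the density} (estimate \eqref{eq:VM fixed point estimate W-a-b-2 for g-t}), which spares the completeness and invariance checks your extra constraints would require.
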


\begin{proof}
Note that
\begin{align}
\mathcal{Z}_{T, R^*} &= \Big\{ (f, \alpha) \in C \big( 0,T; W_a^{b-1,2}(\mathbb{R}^6) \big)  \times  C \big( 0,T ; \mathfrak{h}_{b} \cap \dot{\mathfrak{h}}_{-1/2} \big) : (f(t),\alpha(t)) \big|_{t=0} = (f_0, \alpha_0) ,
\nonumber \\
&\qquad  
\max \Big\{ \norm{f}_{L_T^{\infty} W_a^{b-1,2}(\mathbb{R}^6)} , \norm{\alpha}_{L_T^{\infty} \mathfrak{h}_b \, \cap \, L_T^{\infty} \dot{\mathfrak{h}}_{-1/2}}
\Big\} \leq R^*
\Big\}
\end{align}
with metric
\begin{align}
d_T ((f, \alpha), (f', \alpha')) 
&= \max \left\{ \norm{f - f'}_{L_T^{\infty} W_a^{b-1,2}(\mathbb{R}^6)} , \norm{\alpha - \alpha'}_{L_T^{\infty} \mathfrak{h}_b \, \cap \, L_T^{\infty} \dot{\mathfrak{h}}_{-1/2}}
\right\}
\end{align}
is a Banach space. Next, we consider the solutions of Section \ref{subsection:VM existence theory linearized equations} satisfying
\begin{align}
\label{eq:VM fixed point estimates linearized equations for contraction mapping principle}
\begin{cases}
\partial_t g_t  &= - 2 \left( v - \kappa * \vA_{\alpha_t} \right) \cdot \nabla_x g_t + \vF_{f_t, \alpha_t} \cdot \nabla_v g_t , \\
i \partial_t \xi_t(k,\lambda) &=  \abs{k}  \xi_t(k,\lambda) -  \sqrt{ \frac{4 \pi^3}{\abs{k}}}  \mathcal{F}[\kappa](k) \vep_{\lambda}(k) \cdot \mathcal{F}[\widetilde{\vJ}_{f_t,  \alpha_t} ](k)  
\end{cases}
\end{align}
with initial datum $(f_0, \alpha_0)$ and define the mapping $\Phi: \mathcal{Z}_{R^*,T} \rightarrow L^{\infty} \big( 0,T; W_{a}^{b,2}(\mathbb{R}^6) \big) \cap C \big( 0,T; W_{a}^{b-1,2}(\mathbb{R}^6) \big)  \cap C^1 (0,T; W_{a}^{b-2,2}(\mathbb{R}^6)) \times  C \big( 0,T ; \mathfrak{h}_{b} \cap \dot{\mathfrak{h}}_{-1/2} \big) \cap C^1 \big( 0,T ; \mathfrak{h}_{b-1} \cap \dot{\mathfrak{h}}_{-1/2} \big)$ by

\begin{align}
\label{eq:VM contraction map}
\Phi \begin{pmatrix}
f \\ \alpha
\end{pmatrix}
&= 
\begin{pmatrix}
g \\ \xi
\end{pmatrix} .
\end{align}
Below, we will show that for 
\begin{align}
\label{eq:VM fixed point estimate restriction on the time interval}
R^* \geq C \left( \norm{f_0}_{W_a^{b-1,2}} + \norm{\alpha_0}_{\mathfrak{h}_b \, \cap \,\dot{\mathfrak{h}}_{-1/2} } \right)
\quad \text{and} \quad
T \leq \exp \left( - C \left( \left< R \right> + \left< R^* \right>^2 + \norm{f_0}_{W_a^{b,2}(\mathbb{R}^6)} \right) \right)
\end{align}
with $C>0$ chosen sufficiently large the mapping $\Phi$ is a contraction on $\mathcal{Z}_{T, R^*}$. This leads to a unique fixed point satisfying $d_T ((f, \alpha), (g, \xi))  = 0$. Replacing $(g, \xi)$ by $(f, \alpha)$ in the integral version of \eqref{eq:VM fixed point estimates linearized equations for contraction mapping principle} and then proves that $(f, \alpha)$ satisfies \eqref{eq: Vlasov-Maxwell different style of writing}.

\paragraph{Well-definedness of $\Phi$:}
By means of \eqref{eq:VM fixed point estimate W-a-b-2 for g-t} and $(f,\alpha) \in \mathcal{Z}_{T,R^*}$ we get
\begin{align}
\norm{g_t}_{W_a^{b-1,2}(\mathbb{R}^6)}
&\leq C  \norm{f_0}_{W_a^{b-1,2}(\mathbb{R}^6)}
e^{C t \left< R^* \right>^2 \left< R \right> e^{C t \left< R^* \right>^2}}.
\end{align}
Using Duhamel's formula, \eqref{eq:assumption on the cutoff parameter}  and \eqref{eq:estimate current density Vlasov Maxwell W-0-sigma-1 norm} we get by similar estimates as in the proof of Lemma \eqref{lemma:MS well-posedness linearized field equation}
\begin{align}
\norm{\xi_t}_{\mathfrak{h}_{b} \, \cap \, \dot{\mathfrak{h}}_{-1/2}}
&\leq C
\norm{\alpha_0}_{\mathfrak{h}_{b} \, \cap \, \dot{\mathfrak{h}}_{-1/2}}
+ C  \int_0^t   \norm{\widetilde{\vJ}_{f_s, \alpha_s} }_{W_0^{b-1, 1}(\mathbb{R}^3, \mathbb{C}^3)}ds
\nonumber \\
&\leq C
\norm{\alpha_0}_{\mathfrak{h}_{b} \, \cap \, \dot{\mathfrak{h}}_{-1/2}}
+ C t  \left( 1 + \norm{\alpha}_{L_t^{\infty} \mathfrak{h}_{b -2}} \right) \norm{f}_{L_t^{\infty} W_5^{b-1,2}(\mathbb{R}^6)}
\nonumber \\
&\leq C
\norm{\alpha_0}_{\mathfrak{h}_{b} \, \cap \, \dot{\mathfrak{h}}_{-1/2}}
+ C t  \left< R^* \right>^2 .
\end{align}
For $C$ in \eqref{eq:VM fixed point estimate restriction on the time interval} chosen sufficiently large this shows $\norm{g}_{L_T^{\infty} W_a^{b-1,2}(\mathbb{R}^6)} \leq R^*$ and
$\norm{\xi}_{L_T^{\infty} \mathfrak{h}_b \, \cap \, L_T^{\infty} \dot{\mathfrak{h}}_{-1/2}}
\leq R^*$, proving that $\Phi$ maps $\mathcal{Z}_{T,R^*}$ into itself.

\paragraph{Contraction property of $\Phi$:}

Let $(f, \alpha)$, $(f', \alpha') \in \mathcal{Z}_{T, R^*}$ and denote their images under the mapping $\Phi$ by $(g, \xi)$ and $(g', \xi')$. In the following, we prove $d_T ((g, \xi), (g', \xi'))  < d_T ((f, \alpha), (f', \alpha'))$ for all $T \geq 0$ satisfying \eqref{eq:VM fixed point estimate restriction on the time interval} with sufficiently large constant.
Using Duhamel's formula, \eqref{eq:assumption on the cutoff parameter} and \eqref{eq:VM fixed point estimate difference of the currents in L-1 norm}  we get
\begin{align}
\label{eq:VM well-posedness contraction estimate for the field}
&\norm{\xi - \xi'}_{L_T^{\infty} \mathfrak{h}_{b} \, \cap \, L_T^{\infty} \dot{\mathfrak{h}}_{-1/2}}
\nonumber \\
&\quad \leq C \sup_{t \in [0,T]}
 \int_0^t  \norm{ \abs{\cdot}^{-1} \left< \cdot \right>^{3/2}   \mathcal{F}[\kappa]}_{L^2(\mathbb{R}^3)} \norm{\left< \cdot \right>^{b-1} \mathcal{F}[\widetilde{\vJ}_{f_s, \alpha_s} - \widetilde{\vJ}_{f_s', \alpha_s'}]}_{L^{\infty}(\mathbb{R}^3, \mathbb{C}^3)}ds
\nonumber \\
&\quad \leq C \sup_{t \in [0,T]}
 \int_0^t   \norm{\widetilde{\vJ}_{f_s, \alpha_s} - \widetilde{\vJ}_{f_s', \alpha_s'}}_{W_0^{b-1, 1}(\mathbb{R}^3, \mathbb{C}^3)}ds
\nonumber \\
&\leq  C T \left( 1 + \norm{\alpha}_{L_T^{\infty} \mathfrak{h}_{b-2}} + \norm{f'}_{L_T^{\infty} W_4^{b-1,2}(\mathbb{R}^6)} \right) 
d_T ((f, \alpha), (f', \alpha'))
\nonumber \\
&\leq C T  \left< R^* \right>
\left( \norm{f - f'}_{L_T^{\infty} W_5^{b-1,2}(\mathbb{R}^6)} 
+ \norm{\alpha - \alpha'}_{L_T^{\infty} \mathfrak{h}_{b-2}} \right)
\nonumber \\
&< d_T ((f, \alpha), (f', \alpha'))  .
\end{align}
By means of 
\begin{align}
 D_z^{\sigma} \left( g_t - g'_t \right)
&= 
- \int_0^t   T_{f_s' , \alpha_s'}[D_z^{\sigma} (g_s - g_s' )]\,ds
- \int_0^t  \big[ D_z^{\sigma} , T_{f_s' , \alpha_s'} \big] [g_s - g_s']\,ds
\nonumber \\
&\quad - \int_0^t  D_z^{\sigma} \big(  T_{f_s , \alpha_s} - T_{f_s' , \alpha_s'} \big) [g_s]\,ds
\end{align}
with $T_{f,\alpha}$ being defined as in \eqref{eq:definition transport operator} we get for $0 \leq \abs{\sigma} \leq b-1$ and $t \in [0,T]$
\begin{subequations}
\begin{align}
&\norm{ D_z^{\sigma} (g_t - g'_t)}_{W_a^{0,2}}^2
\nonumber \\
\label{eq:VM fixed point estimate W-a-(b-1)-2 estimate difference g and g' 1}
&\quad = - 2 \int_0^t  \int_{\mathbb{R}^6}  \left< z \right>^{2a} (D_z^{\sigma} (g_s - g'_s))(x,v)  \big( T_{f_s' , \alpha_s'}[D_z^{\sigma} (g_s - g_s' )] \big)(x,v)\,dx\,dv\,ds
\\
\label{eq:VM fixed point estimate W-a-(b-1)-2 estimate difference g and g' 2}
&\qquad - 2 \int_0^t  \int_{\mathbb{R}^6}  \left< z \right>^{2a} (D_z^{\sigma} (g_s - g'_s))(x,v)  \big( \big[ D_z^{\sigma} , T_{f_s' , \alpha_s'} \big] [g_s - g_s'] \big)(x,v)\,dx\,dv\,ds
\\
\label{eq:VM fixed point estimate W-a-(b-1)-2 estimate difference g and g' 3}
&\qquad - 2 \int_0^t  \int_{\mathbb{R}^6}  \left< z \right>^{2a} (D_z^{\sigma} (g_s - g'_s))(x,v)  \big(  D_z^{\sigma} \big(  T_{f_s , \alpha_s} - T_{f_s' , \alpha_s'} \big) [g_s] \big)(x,v)\,dx\,dv\,ds.
\end{align}
\end{subequations}
Using \eqref{eq:VM fixed point argument integration by parts of the transport operator}, \eqref{eq:commutator of transport operator with weight function}, \eqref{eq:estimates for the vector potential and interaction 1}  and \eqref{eq:estimates for the vector potential and interaction 2}  we estimate 
\begin{align}
\abs{\eqref{eq:VM fixed point estimate W-a-(b-1)-2 estimate difference g and g' 1}}
&= \abs{  \int_0^t  \int_{\mathbb{R}^6}  T_{f'_s, \alpha'_s}\big[ \left< z \right>^{2a} \big] \abs{(D_z^{\sigma} (g_s - g'_s))(x,v)}^2 }dx\,dv\,ds
\nonumber \\
&\leq   C  \int_0^t   \left( 1 + \norm{f'_s}_{W_4^{0,2}(\mathbb{R}^6)} + \norm{\alpha'_s}_{\mathfrak{h}}^2 \right)
\int_{\mathbb{R}^6}   \left< z \right>^{2a} \abs{(D_z^{\sigma} (g_s - g'_s))(x,v)}^2 dx\,dv\,ds
\nonumber \\
&\leq   C t \left( 1 + \norm{f'}_{L_t^{\infty} W_4^{0,2}(\mathbb{R}^6)} + \norm{\alpha'}_{L_t^{\infty} \mathfrak{h}}^2 \right)
\norm{g - g'}_{ L_t^{\infty} W_a^{b-1,2}(\mathbb{R}^6)}^2 
\nonumber \\
&\leq C t \left< R^* \right>^2  \norm{g - g'}_{ L_t^{\infty} W_a^{b-1,2}(\mathbb{R}^6)}^2 .
\end{align}
Due to \eqref{eq:estimates for the vector potential and interaction 1}  and \eqref{eq:estimates for the vector potential and interaction 2} and the chain rule of differentiation we have
\begin{align}
&\norm{ \left< v \right> ^{-1} \left< z \right> ^a \big[ D_z^{\sigma} , T_{f_s' , \alpha_s'} \big] [g_s - g_s'] }_{L^2(\mathbb{R}^6)}
\nonumber \\
&\quad \leq C \left( 1 + \norm{f'_s}_{W_4^{b-3,2}(\mathbb{R}^6)} + \norm{\alpha'_s}_{\mathfrak{h}_{b-1}}^2 \right) \norm{g_s - g'_s}_{ W_a^{b-1,2}(\mathbb{R}^6)}  
\nonumber \\
&\quad \leq C \left< R^* \right>^2  \norm{g_s - g'_s}_{ W_a^{b-1,2}(\mathbb{R}^6)}  .
\end{align}
Note that
\begin{align}
\supp (g_s - g'_s) \subset A_{R(s)}
\quad \text{with} \quad
R(s) &\leq C \left< R \right> e^{C s \left< R^* \right>^2}
\end{align}
because of Lemma \ref{lemma:VM existence theory linear equation for the density} and $(f,\alpha), (f',\alpha') \in \mathcal{Z}_{T,R^*}$.
Since differentiation with respect to $D_z^{\sigma}$ is a local operation we get
\begin{align}
\label{eq:VM fixed point estimate W-a-(b-1)-2 estimate auxilliary estimate for compact support}
\norm{ \left< v \right> \left< z \right> ^a  D_z^{\sigma} (g_s - g_s')}_{L^2(\mathbb{R}^6)}
&\leq R(s)
\norm{\left< z \right> ^a  D_z^{\sigma} (g_s - g_s')}_{L^2(\mathbb{R}^6)} .
\end{align}
Together with the Cauchy--Schwarz inequality this leads to 
\begin{align}
\abs{\eqref{eq:VM fixed point estimate W-a-(b-1)-2 estimate difference g and g' 2}}
&\leq C t  \left< R \right> \left< R^* \right>^2
e^{C t \left< R^* \right>^2}
\norm{g - g'}_{ L_t^{\infty} W_a^{b-1,2}(\mathbb{R}^6)}^2 .
\end{align}
Recall that
\begin{align}
T_{f, \alpha} - T_{f', \alpha'}
&= - 2 \kappa * \vA_{\alpha - \alpha'} \cdot \nabla_x - \big( \vF_{f, \alpha} - \vF_{f' , \alpha'} \big) \cdot \nabla_v
\end{align}
and
\begin{align}
\vF_{f, \alpha} - \vF_{f' , \alpha'}
&= 
\nabla K  * \widetilde{\rho}_{f - f'} (x)
 - 2 \sum_{i=1}^3 ( \nabla  \kappa * \vA^i_{\alpha - \alpha'} ) (x)
\left( v^i - \kappa * \vA^i_{\alpha}(x) \right)
\nonumber \\
&\quad +
 2 \sum_{i=1}^3 ( \nabla  \kappa * \vA^i_{\alpha'} ) (x) \kappa * \vA^i_{\alpha - \alpha'}(x) .
\end{align}
Using \eqref{eq:estimates for the vector potential and interaction 1}  and \eqref{eq:estimates for the vector potential and interaction 2} and the chain rule we get
\begin{align}
\abs{\big( D_z^{\sigma} \big(  T_{f_s , \alpha_s} - T_{f_s' , \alpha_s'} \big) [g_s] \big) (x,v)}
&\leq C  \left< v \right>
\left( \norm{\alpha_s - \alpha'_s}_{\mathfrak{h}_{b-1}}
+ \norm{f_s - f'_s}_{W_a^{b-3,2}}
\right)
\nonumber \\
&\quad \times 
\left( 1 + \norm{\alpha_s}_{\mathfrak{h}_{b-1}} + \norm{\alpha'_s}_{\mathfrak{h}_{b-1}} \right)
\sum_{\abs{\nu} \leq b}
\abs{D_z^{\nu} g_s(x,v)} 
\end{align}
and
\begin{align}
&\norm{\left< v \right>^{-1} \left< z \right>^a D_z^{\sigma} \big(  T_{f_s , \alpha_s} - T_{f_s' , \alpha_s'} \big) [ g_s]}_{L^2(\mathbb{R}^6)} 
\nonumber \\
&\quad \leq  C 
\left( 1 + \norm{\alpha_s}_{\mathfrak{h}_{b-1}} + \norm{\alpha'_s}_{\mathfrak{h}_{b-1}} \right) \norm{g_t}_{W_a^{b,2}(\mathbb{R}^6)} 
\left( \norm{\alpha_s - \alpha'_s}_{\mathfrak{h}_{b-1}}
+ \norm{f_s - f'_s}_{W_a^{b-3,2}(\mathbb{R}^6)}
\right) .
\end{align}
By means of the Cauchy--Schwarz inequality and \eqref{eq:VM fixed point estimate W-a-(b-1)-2 estimate auxilliary estimate for compact support}
we obtain
\begin{align}
\abs{\eqref{eq:VM fixed point estimate W-a-(b-1)-2 estimate difference g and g' 3}}
&\leq 2 \int_0^t  \norm{ \left< v \right> \left< z \right> ^a  D_z^{\sigma} (g_s - g_s')}_{L^2(\mathbb{R}^6)}
\norm{\left< v \right>^{-1} \left< z \right>^a D_z^{\sigma} \big(  T_{f_s , \alpha_s} - T_{f_s' , \alpha_s'} \big) [g_s]}_{L^2(\mathbb{R}^6)} ds
\nonumber \\
&\leq C  t \, R(t)
\left( 1 + \norm{\alpha}^2_{L_t^{\infty} \mathfrak{h}_{b-1}} + \norm{\alpha'}^2_{L_t^{\infty} \mathfrak{h}_{b-1}} \right) \norm{g}_{L_t^{\infty} W_a^{b,2}(\mathbb{R}^6)} 
\nonumber \\
&\quad \times 
\norm{g - g'}_{ L_t^{\infty} W_a^{b-1,2}(\mathbb{R}^6)}
\left( \norm{\alpha - \alpha'}_{L_t^{\infty} \mathfrak{h}_{b-1}}
+ \norm{f - f'}_{L_t^{\infty} W_a^{b-3,2}(\mathbb{R}^6)}
\right) .
\end{align}
In combination with \eqref{eq:VM fixed point estimate W-a-b-2 for g-t} this leads to
\begin{align}
\abs{\eqref{eq:VM fixed point estimate W-a-(b-1)-2 estimate difference g and g' 3}}
&\leq C  t \, R(t) 
\left( 1 + \norm{\alpha}^2_{L_t^{\infty} \mathfrak{h}_{b-1}} + \norm{\alpha'}^2_{L_t^{\infty} \mathfrak{h}_{b-1}} \right) 
\norm{g_0}_{W_a^{b,2}}
\nonumber \\
&\quad \times 
e^{C \int_0^t    R(s) \left( 1 + \norm{f_s}_{W_4^{b-2,2}(\mathbb{R}^6)} + \norm{\alpha_s}_{\mathfrak{h}_b}^2 \right)ds }
\norm{g - g'}_{ L_t^{\infty} W_a^{b-1,2}(\mathbb{R}^6)}
d_t((f,\alpha), (f', \alpha')) 
\nonumber \\
&\leq C t \left< R^* \right>^2 e^{2 \left< R \right> e^{C t \left< R^* \right>^2}}
\norm{g_0}_{W_a^{b,2}}
\norm{g - g'}_{ L_t^{\infty} W_a^{b-1,2}(\mathbb{R}^6)}
d_t((f,\alpha), (f', \alpha')) .
\end{align}
Collecting the estimates, taking the sum $\sum_{\abs{\sigma} \leq b-1}$ and dividing by $\norm{g - g'}_{ L_t^{\infty} W_a^{b-1,2}(\mathbb{R}^6)}$ gives
\begin{align}
\norm{g - g'}_{ L_t^{\infty} W_a^{b-1,2}(\mathbb{R}^6)}
&\leq C t  \left< R \right> \left< R^* \right>^2
e^{C t \left< R^* \right>^2}
\norm{g - g'}_{ L_t^{\infty} W_a^{b-1,2}(\mathbb{R}^6)} 
\nonumber \\
&\quad + C t \left< R^* \right>^2 e^{2 \left< R \right> e^{C t \left< R^* \right>^2}}
\norm{g_0}_{W_a^{b,2}}
d_t((f,\alpha), (f', \alpha'))
\nonumber \\
&\leq  t e^{C \left( \left< R \right> + \left< R^* \right>^2 \right)  e^{C t \left< R^* \right>^2}}
\left( 1 +  \norm{g_0}_{W_a^{b,2}}  \right)
\nonumber \\
&\quad \times
\left( \norm{g - g'}_{ L_t^{\infty} W_a^{b-1,2}(\mathbb{R}^6)} + d_t((f,\alpha), (f', \alpha')) \right) .
\end{align}
For $T \geq 0$ satisfying
\eqref{eq:VM fixed point estimate restriction on the time interval} with $C> 0$ chosen large enough
we then obtain
\begin{align}
\norm{g - g'}_{L_T^{\infty} W_a^{b-1,2}}
&\leq  \frac{1}{2}  \norm{g - g'}_{L_T^{\infty} W_a^{b-1,2}}
+ \frac{1}{4} d_T((f, \alpha), (f', \alpha'))
\end{align}
and therefore
$\norm{g - g'}_{L_T^{\infty} W_a^{b-1,2}}
< d_T((f, \alpha), (f', \alpha'))$.

\end{proof}

\subsection{Global solutions}

%\begin{proposition}
%\label{lemma:global solutions VM}
%Let $R>0$ and $a, b \in \mathbb{N}$ satisfying $a \geq 5$ and $b \geq 3$. For all $(f_0, \alpha_0) \in W_a^{b,2}(\mathbb{R}^6) \times \mathfrak{h}_b \cap \dot{\mathfrak{h}}_{-1/2}$ such that $\supp f_0 \subset A_R$ with $A_R = \{ (x,v) \in \mathbb{R}^6 , \abs{v} \leq R \}$ system
%\eqref{eq: Vlasov-Maxwell different style of writing} has a unique
%$L^{\infty} \big( \mathbb{R}_{+}; W_{a}^{b,2}(\mathbb{R}^6) \big) \cap C \big( \mathbb{R}_{+}; W_{a}^{b-1,2}(\mathbb{R}^6) \big)  \cap C^1 (\mathbb{R}_{+};  W_{a}^{b-2,2}(\mathbb{R}^6)) \times  C \big( \mathbb{R}_{+} ; \mathfrak{h}_{b} \cap \dot{\mathfrak{h}}_{-1/2} \big) \cap C^1 \big( \mathbb{R}_{+} ; \mathfrak{h}_{b-1} \cap \dot{\mathfrak{h}}_{-1/2} \big)$ solution.
%The $L^p$--norms of the particle distribution (with $p \geq 1$) and the energy
%\begin{align}
%\label{eq:Vlasov-Maxwell energy definition}
%\mathcal{E}^{\rm{VM}}[f, \alpha] &=  
%\int_{\mathbb{R}^6} dx \, dv \, f(x,v) \left( v - \kappa * \vA_{\alpha}(x) \right)^2 
%+  \frac{1}{2} \int_{\mathbb{R}^6} dx \, dv \,
%f(x,v) K * \widetilde{\rho}_{f}(x) 
%+  \norm{\alpha}_{\dot{\mathfrak{h}}_{1/2}}^2 
%\end{align}
%are conserved, i.e. $\norm{f_t}_{L^p(\mathbb{R}^6)} = \norm{f_0}_{L^p(\mathbb{R}^6)}$ and $\mathcal{E}^{\rm{VM}}[f_t, \alpha_t] =  \mathcal{E}^{\rm{VM}}[f_0, \alpha_0]$ for all $t \in \mathbb{R}_{+}$.
%\end{proposition}

\begin{proof}[Proof of Proposition~\ref{lemma:global solutions VM}]
The existence of a unique local solution is ensured by Lemma \ref{lemma:VM local well-posedness}. By similar means as in the proof of \cite[Theorem 4.3.4]{CH1998} it is straightforward to prove that $T_{\text{max}}$ is either infinite or $\lim_{t \nearrow T_{\text{max}}} \left( \norm{f_t}_{W_a^{b,2}} + \norm{\alpha_t}_{\mathfrak{h}_b \, \cap \,\dot{\mathfrak{h}}_{-1/2} } \right) = \infty$.
In the following we assume that the local solution of Lemma \ref{lemma:VM local well-posedness} exists until time $T$ and show $\norm{f_t}_{W_a^{b,2}} + \norm{\alpha_t}_{\mathfrak{h}_b \, \cap \,\dot{\mathfrak{h}}_{-1/2} } < + \infty$ for all $t \in [0,T]$. On the interval of existence the conservation of the energy follows from a straightforward calculation. If we define the characteristics $(X_t(x,v), V_t(x,v))$ by \eqref{eq:characteristics} with $(f_t, \alpha_t)$ satisfying \eqref{eq: Vlasov-Maxwell different style of writing} and initial datum $(X_t(x,v), V_t(x,v)) \big|_{t=0} = (x,v)$ we can write the particle distribution in terms of the corresponding backward characteristics as $f_t(x,v) = f_0 \big( X_t^{-1}(x,v) , V_t^{-1}(x,v)) \big)$. The conservation of the $L^p$--norms then follows from a change of coordinates and the fact that the flow is measure preserving. Next, we will show the finiteness of the $W_a^{b,2}(\mathbb{R}^6)$ and $\mathfrak{h}_b \, \cap \,\dot{\mathfrak{h}}_{-1/2}$ norms. By means of \eqref{eq:characteristics} and \eqref{eq:estimate F function supremum in x} and the conservation of mass and energy we get
\begin{align}
\left< V_t(x,v) \right> &\leq C \left< v \right> 
+ C \left( 1 + \mathcal{E}^{\rm{VM}}[f_0, \alpha_0] 
+  C \norm{f}_{L^1(\mathbb{R}^6)}^2  \right)
 \int_0^{t}  \left< V_s(x,v) \right> ds,
\end{align}
leading to 
\begin{align}
\label{eq:characteristics L-infinity bound for the velocity}
\left< V_t(x,v) \right>
&\leq C \left<v \right> e^{C \left( 1 + \mathcal{E}^{\rm{VM}}[f_0 , \alpha_0] 
+ C \norm{f_0}_{L^1(\mathbb{R}^6)}^2  \right) t}
\end{align}
by Gr\"{o}nwall's lemma. The first equation of \eqref{eq:characteristics}, \eqref{eq:estimates for the vector potential and interaction 0}, \eqref{eq:estimate for the VM energy functional 1}, the conservation of mass and energy as well as \eqref{eq:characteristics L-infinity bound for the velocity} let us obtain
\begin{align}
\abs{X_t(x,v)}
&\leq \abs{x} + C \left( 1 + \mathcal{E}^{\rm{VM}}[f_0, \alpha_0] 
+ C \norm{f_0}_{L^1(\mathbb{R}^6)}^2  \right)
\left( \abs{t} +  \int_0^{t}  \abs{V_s(x,v)} ds\right)
\nonumber \\
&\leq \left< (x,v) \right> 
e^{C \left( 1 + \mathcal{E}^{\rm{VM}}[f_0, \alpha_0] 
+ C \norm{f_0}_{L^1(\mathbb{R}^6)}^2  \right) \left< t \right>} .
\end{align}
If we use the previous estimates and the fact that the flow is measure preserving we get 
\begin{align}
\norm{f_t}_{W_a^{0,2}(\mathbb{R}^6)}^2
&= \int_{\mathbb{R}^6}  \left< (x,v) \right>^{2a} \abs{f_0 \big( X_t^{-1}(x,v) , V_t^{-1}(x,v)) \big)}^2 dx\,dv
\nonumber \\
&= 
\int_{\mathbb{R}^6} \left< \big( X_t(x,v) , V_t(x,v) \big) \right>^{2a} \abs{f_0 (x,v)}^2 dx\,dv
\nonumber \\
&\leq e^{C \left( 1 + \mathcal{E}^{\rm{VM}}[f_0, \alpha_0] 
+ C \norm{f_0}_{L^1(\mathbb{R}^6)}^2  \right) \left< t \right>}
\norm{f_0}_{W_a^{0,2}(\mathbb{R}^6)}^2
\end{align}
by a change of coordinates. Due to \eqref{eq: Vlasov-Maxwell different style of writing}, Duhamel's formula and \eqref{eq:assumption on the cutoff parameter} we have
\begin{align}
\norm{\alpha_t}_{\mathfrak{h}_{b} \, \cap \,  \dot{\mathfrak{h}}_{-1/2}}
&\leq \norm{\alpha_0}_{\mathfrak{h}_{b} \, \cap \,  \dot{\mathfrak{h}}_{-1/2}}
+ C \int_0^t \norm{\widetilde{\vJ}_{f_s, \alpha_s} }_{W_0^{b-1, 1}(\mathbb{R}^3, \mathbb{C}^3)} ds.
\end{align}
By \eqref{eq:estimate current density Vlasov Maxwell L-1 norm}, 
\eqref{eq:estimate current density Vlasov Maxwell W-0-sigma-1 norm} and the conservation of the $L^p$--norms and energy we obtain
\begin{align}
\label{eq:propagation of gradient moments estimate for alpha-t in h-1}
\norm{\alpha_t}_{\mathfrak{h}_1 \, \cap \,  \dot{\mathfrak{h}}_{-1/2}}
&\leq 
\norm{\alpha_0}_{\mathfrak{h}_1 \, \cap \,  \dot{\mathfrak{h}}_{-1/2}} + C t \left( 1 + \mathcal{E}^{\rm{VM}}[f_0, \alpha_0] 
+  C \norm{f_0}_{L^1(\mathbb{R}^6)}^2  \right)
\quad \text{for} \;  b = 1
\end{align}
and 
\begin{align}
\label{eq:propagation of gradient moments estimate for alpha-t in h-b}
\norm{\alpha_t}_{\mathfrak{h}_{b} \, \cap \,  \dot{\mathfrak{h}}_{-1/2}}
&\leq \norm{\alpha_0}_{\mathfrak{h}_{b} \, \cap \,  \dot{\mathfrak{h}}_{-1/2}}
+ C t  \left( 1 + \norm{\alpha}_{L_t^{\infty}\mathfrak{h}_{b-2}} \right) \norm{f}_{L_t^{\infty} W_5^{b-1,2}(\mathbb{R}^6)}
\quad \text{for} \;  b \geq 2.
\end{align}
We consequently have 
\begin{align}
\norm{f}_{L_T^{\infty} W_a^{0,2}(\mathbb{R}^6)}
+ \norm{\alpha}_{L_T^{\infty} \mathfrak{h}_1 \, \cap \, L_T^{\infty} \dot{\mathfrak{h}}_{-1/2}}
< + \infty .
\end{align}
The second inequality of \eqref{eq:propagation of gradient moments estimate for alpha-t in h-b} and (see \eqref{eq:VM fixed point estimate W-a-b-2 for g-t})
\begin{align}
\norm{f_t}_{W_a^{b,2}(\mathbb{R}^6)} &\leq 
C \norm{f_0}_{W_a^{b,2}}
e^{C \int_0^t    R(s) \left( 1 + \norm{f_s}_{W_4^{b-2,2}(\mathbb{R}^6)} + \norm{\alpha_s}_{\mathfrak{h}_b}^2 \right) ds}  
\end{align}
with $R(t)  \leq C \left< R \right> e^{C  \int_0^t  \left( \norm{f_s}_{W_4^{0,2}(\mathbb{R}^6)} + \norm{\alpha_s}_{ \mathfrak{h}}^2 \right)ds}$ let us then iteratively increase the values of $\beta$ until we get $\sup_{t \in [0,T]} \left\{ \norm{f_t}_{W_a^{b,2}} + \norm{\alpha_t}_{\mathfrak{h}_b \, \cap \,\dot{\mathfrak{h}}_{-1/2} } \right\} < + \infty$.

\end{proof}

\appendix

\section{Equivalent forms of the Vlasov--Maxwell equations}\label{section:comparison of the different Vlasov-Maxwell equations}

%\textcolor{blue}{Note that we use the definition $K = \kappa * \kappa * | \cdot |^{-1} / 8 \pi$ in the following.}

Let $a \in \mathbb{R}$ and $(f, \alpha)$ satisfy \eqref{eq: Vlasov-Maxwell different style of writing} with $\kappa (x) = a \delta(x)$. We define 
\begin{align}
\vE^{\perp}_{\alpha_t}(x) 
&=  \frac{i}{(2 \pi)^{3/2}} \sum_{\lambda = 1,2} \int  \sqrt{\frac{\abs{k}}{2}} \vep_{\lambda}(k) \left( e^{i k x} \alpha(k,\lambda) - e^{- i k x} \overline{\alpha(k, \lambda)} \right)dk
\end{align}
and $g_t(x,v) = f_t(x, v + a \vA_{\alpha_t}(x) )$ satisfying $\widetilde{\rho}_{f_t} =  \widetilde{\rho}_{g_t}$ and $\widetilde{\vJ}_{f_t, \alpha_t}(x) = 2 \int_{\mathbb{R}^3}  v g_t(x,v)dv$. Using
\begin{align}
&\vF_{f_t, \alpha_t}(x,  v + a \vA_{\alpha_t}(x)) + 2 \sum_{j=1}^3 v^j  \nabla^j \vA_{\alpha_t}(x)
\nonumber \\
&\quad = c \nabla | \cdot |^{-1} * \widetilde{\rho}_{f_t}(x) + 2 a \sum_{j=1}^3 \left( v^j \nabla^j \vA_{\alpha_t}(x) - v^j \nabla \vA^i_{\alpha_t}(x) \right)
\nonumber \\
&\quad =  c \nabla | \cdot |^{-1} * \widetilde{\rho}_{f_t}(x) - 2 a v \times \left( \nabla \times \vA_{\alpha_t} \right)(x)
\end{align}
and \eqref{eq: Vlasov-Maxwell different style of writing} we obtain
\begin{align}
\begin{cases}
\partial_t g_t
&= - 2 v \cdot \nabla_x g_t 
+ \left(  - a   \vE^{\perp}_{\alpha_t} + \frac{a^2}{8 \pi}   \nabla | \cdot |^{-1} * \widetilde{\rho}_{g_t} - 2 a v \times \left( \nabla \times \vA_{\alpha_t} \right) (x) \right) \cdot \nabla_v g_t
\\
\partial_t \vA_{\alpha_t} 
&= - \vE^{\perp}_{\alpha_t}
\\
\partial_t \vE^{\perp}_{\alpha_t} 
&=  - \Delta \vA_{\alpha_t} - 2 a \left( 1 - \nabla \text{div} \Delta^{-1} \right)  \int_{\mathbb{R}^3}  v g_t(\cdot,v) dv
\end{cases}
\end{align}
Choosing $a = - \frac{1}{4}$, defining $h_t(x,v) = g_t(x, v/2) / (256 \pi)$  leads to
\begin{align}
\begin{cases}
\partial_t h_t
&= -  v \cdot \nabla_x h_t 
+ \frac{1}{2} \left(  \vE^{\perp}_{\alpha_t} + \nabla | \cdot |^{-1} * \widetilde{\rho}_{h_t}  + v \times \left( \nabla \times \vA_{\alpha_t} \right) (x) \right) \cdot \nabla_v h_t
\\
\partial_t \vA_{\alpha_t} 
&= - \vE^{\perp}_{\alpha_t}
\\
\partial_t \vE^{\perp}_{\alpha_t} 
&=  - \Delta \vA_{\alpha_t} +  \left( 1 - \nabla \text{div}  \Delta^{-1} \right) 
4 \pi  \int_{\mathbb{R}^3}  v h_t(\cdot,v)dv
\end{cases}
\end{align}
If we define the electric and magnetic field in the usual way by 
\begin{align}
\label{eq:definition electromagnetic potentials}
\vE_t =  \vE^{\perp}_{\alpha_t} + \nabla   | \cdot |^{-1} * \widetilde{\rho}_{h_t}
\quad \text{and} \quad
\vB_t = \nabla \times \vA_t 
\end{align}
we have that the equations
\begin{align}
\nabla \cdot \vB_t = 0 
\quad \text{and} \quad
\partial_t \vB_t + \nabla \times \vE_t = 0  
\end{align}
are automatically satisfied. Using that we are working in the Coulomb gauge and $- \Delta \frac{1}{4 \pi \abs{x- y}} = \delta (x-y)$ let us obtain
\begin{align}
\nabla \cdot \vE_t = - 4 \pi \widetilde{\rho}_{h_t} .
\end{align}
Since $\nabla \times \vB = \nabla \text{div} \vA - \Delta \vA = - \Delta \vA $ and $\partial_t \nabla   | \cdot |^{-1} * \widetilde{\rho}_{h_t} =    \nabla \text{div} \Delta^{-1}  4 \pi  \int  v h_t(\cdot,v)dv$ we, moreover, get
\begin{align}
\partial_t \vE_{t} - \nabla \times \vB_t &=  4 \pi \int  v h_t(\cdot,v)dv.
\end{align}
In total, this shows that $(h_t, \vE, \vB)$ satisfies \eqref{eq: Vlasov-Maxwell literature form} with $c=1$, $e=1$ and $m=2$.

%%%%%%%%%%%%%%%%%%%%%%%%%%%%%%%%%%%%%%%%%%%%%%%%%%%%%%%%%%%%%%%%%%%%%%%%%%%%%%%%%%%%%%%%%%%%%%%%

\section{Auxiliary estimates}\label{section:auxiliary estimates}

%\tdn{It seems to me that the estimates (at least the first line) is not optimal. Maybe one could get better estimate if one estimates the trace norm by the Hilbert Schmidt norm by means of a spectral decomposition.}

\begin{lemma}
\label{lemma:trace norm estimates with gradients and commutators}
Let $\beta \in \mathbb{N}_0^3$,  $\omega_N$ regular enough and $W_N$ denote its Wigner transform. Then,
\begin{align}
\sup \left\{ \norm{D \omega_N D}_{\textfrak{S}^1} ,  \norm{D \omega_N i \varepsilon \nabla}_{\textfrak{S}^1}, \norm{D \omega_N}_{\textfrak{S}^1} \right\} 
&\leq C N \sum_{j=0}^6 \varepsilon^j \norm{W_N}_{H_6^j} ,
\\
\norm{\left[ \hat{x} , D \omega_N D \right]}_{\textfrak{S}^1}
&\leq C N \sum_{j=1}^7 \varepsilon^j \norm{W_N}_{H_6^{j}}  , 
\\
\norm{\left[ i \nabla , D \omega_N D \right]}_{\textfrak{S}^1}
&\leq C N \sum_{j=0}^6 \varepsilon^j \norm{W_N}_{H_6^{j+1}} ,
\\
\norm{\left[ \hat{x} , D \omega_N D i \nabla \right]}_{\textfrak{S}^1}
&\leq C N \sum_{j=0}^7 \varepsilon^j \norm{W_N}_{H_7^{j+1}} ,
\\
\varepsilon^{\abs{\beta}}
\norm{ \left( 1 + x^2 \right) \nabla^{\beta} \left[ \hat{x} , \left[ \hat{x} , \omega_N \right] \right]}_{\textfrak{S}^{2}}
&\leq C N^{1/2} \sum_{j=2}^{4 + \abs{\beta}} \varepsilon^j \norm{W_N}_{H_{2 + \abs{\beta}}^{j}} ,
\\
\varepsilon^{\abs{\beta}}
\norm{ \left( 1 + x^2 \right) \nabla^{\beta} \left[ \hat{x} , \left[ \hat{x} , \omega_N \right] \right] i \varepsilon \nabla}_{\textfrak{S}^{2}}
&\leq C N^{1/2} \sum_{j=2}^{5 + \abs{\beta}} \varepsilon^j \norm{W_N}_{H_{3 + \abs{\beta}}^{j}} ,
\\
\varepsilon^{\abs{\beta}}
\norm{ \left( 1 + x^2 \right) \nabla^{\beta} \left[ \hat{x} , \left[ \hat{x} , \left\{ i \varepsilon \nabla , \omega_N \right\} \right] \right]}_{\textfrak{S}^{2}}
&\leq C N^{1/2} \sum_{j=2}^{4 + \abs{\beta}} \varepsilon^j \norm{W_N}_{H_{3 + \abs{\beta}}^{j}} ,
\\
\varepsilon^{\abs{\beta}}
\norm{ \left( 1 + x^2 \right) \nabla^{\beta} \left[ \hat{x} , \left[ \hat{x} , \left\{ i \varepsilon \nabla , \omega_N \right\} \right] \right] i \varepsilon \nabla}_{\textfrak{S}^{2}}
&\leq C N^{1/2} \sum_{j=2}^{5 + \abs{\beta}} \varepsilon^j \norm{W_N}_{H_{4 + \abs{\beta}}^{j}} ,
\\
\varepsilon^{\abs{\beta}}
\norm{ \left( 1 + x^2 \right) \nabla^{\beta} \left[ \hat{x} , \left[ i \varepsilon \nabla , \omega_N \right] \right]}_{\textfrak{S}^{2}}
&\leq C N^{1/2} \sum_{j=2}^{4 + \abs{\beta}} \varepsilon^j \norm{W_N}_{H_{2 + \abs{\beta}}^{j}} ,
\\
\varepsilon^{\abs{\beta}}
\norm{ \left( 1 + x^2 \right) \nabla^{\beta} \left[ \hat{x} , \left[ i \varepsilon \nabla , \omega_N \right] \right] i \varepsilon \nabla}_{\textfrak{S}^{2}}
&\leq C N^{1/2} \sum_{j=2}^{5 + \abs{\beta}} \varepsilon^j \norm{W_N}_{H_{3 + \abs{\beta}}^{j}} .
\end{align}
\end{lemma}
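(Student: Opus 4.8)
The plan is to reduce every estimate to the Weyl--Wigner correspondence \eqref{eq:definition Weyl quantization} together with an explicit Hilbert--Schmidt computation. First I would record the \emph{dictionary} that translates the operator manipulations on the left-hand sides into elementary operations on the symbol $W_N$. Starting from $\omega_N(x;y)=N\int W_N(\tfrac{x+y}{2},v)\,e^{iv\cdot(x-y)/\varepsilon}\,dv$ and integrating by parts in $v$, exactly as in the derivation of \eqref{eq:Vlasov-Maxwell quantum version}, one finds that $[\hat x_j,\omega_N]$ is the Weyl quantization of $i\varepsilon\,\partial_{v_j}W_N$, that $[i\partial_{x_j},\omega_N]$ is the quantization of $i\,\partial_{x_j}W_N$, that $\{i\varepsilon\partial_{x_j},\omega_N\}$ is the quantization of $-2v_jW_N$, and that symmetrized multiplication by $\hat x_j$ produces the quantization of $x_jW_N$. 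Iterating, every nested commutator/anticommutator in the statement is the quantization of a symbol built from $W_N$ by a bounded number of $\nabla_x$, $\nabla_v$ and multiplications by $v$, each $\nabla_v$ carrying a factor of $\varepsilon$; this fixes the Sobolev orders $j$ and the $\varepsilon$-powers that appear in the sums.

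Second, I would deal with the fractional operator $D=(1-\varepsilon^2\Delta)^{1/2}$ and the weight $(1+x^2)$. The point is to never manipulate a single $D$, but to let $D$ enter only through squared Hilbert--Schmidt norms, where $D^2=1-\varepsilon^2\Delta$ is a genuine differential operator. Concretely I would use the identities
\[
\|AD\|_{\textfrak{S}^2}^2=\|A\|_{\textfrak{S}^2}^2+\|A\,i\varepsilon\nabla\|_{\textfrak{S}^2}^2,\qquad
\|DA\|_{\textfrak{S}^2}^2=\|A\|_{\textfrak{S}^2}^2+\|i\varepsilon\nabla A\|_{\textfrak{S}^2}^2
\]
(already employed in the estimate of \eqref{eq: Sobolev norm estimate omega 6}) to peel off the $D$'s one at a time, commuting the polynomial weight $(1+x^2)$ through the resulting derivatives at the cost of strictly lower-order terms. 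The $\textfrak{S}^1$-bounds (those with a prefactor $N$) are reduced to the $\textfrak{S}^2$-bounds (those with $N^{1/2}$) by the Hölder-type splitting $\|A\|_{\textfrak{S}^1}\le\|G^{-1}\|_{\textfrak{S}^2}\,\|GA\|_{\textfrak{S}^2}$ with $G=(1+x^2)(1-\varepsilon^2\Delta)$ and the elementary bound $\|G^{-1}\|_{\textfrak{S}^2}=\|(1-\varepsilon^2\Delta)^{-1}(1+x^2)^{-1}\|_{\textfrak{S}^2}\le C\sqrt N$ (valid in dimension three and used in \eqref{eq: Sobolev norm estimate omega 6}). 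This single extra $\sqrt N$ accounts precisely for the difference between the $N$- and $N^{1/2}$-prefactors.

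Third comes the model $\textfrak{S}^2$-computation. If $O$ is the Weyl quantization of a symbol $P(X,v)$ assembled from $W_N$ by the dictionary of the first step, then by $\|O\|_{\textfrak{S}^2}^2=\iint|O(x;y)|^2\,dx\,dy$, the change of variables $X=(x+y)/2$, $w=(x-y)/\varepsilon$, and Plancherel in the $v\leftrightarrow w$ duality one gets
\[
\|O\|_{\textfrak{S}^2}^2=(2\pi)^3N^2\varepsilon^3\,\|P\|_{L^2(\mathbb{R}^6)}^2=(2\pi)^3N\,\|P\|_{L^2(\mathbb{R}^6)}^2,
\]
using $\varepsilon^3=N^{-1}$, so that $\|O\|_{\textfrak{S}^2}\le CN^{1/2}\|P\|_{L^2}$. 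Here the weight $(1+x^2)=(1+|X+\tfrac{\varepsilon}{2}w|^2)$ is dominated by $C\langle X\rangle^2\langle\varepsilon w\rangle^2$, and $\langle\varepsilon w\rangle^2 e^{iv\cdot w}$ is converted, after integration by parts in $v$, into at most two $\varepsilon$-weighted $v$-derivatives falling on $P$. This is the mechanism by which a \emph{position} weight is traded for a combination of a position weight and \emph{velocity} regularity of $W_N$, and it explains why the right-hand sides carry both a polynomial weight $\langle(x,v)\rangle^k$ and a range of Sobolev orders. Collecting the contributions of the first step then assigns to each line the claimed weight index ($6$, $7$, or $2+|\beta|,\,3+|\beta|,\,4+|\beta|$) and the stated range of $j$.

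The main obstacle is purely organizational: carefully tracking, simultaneously, the powers of $\varepsilon$, the polynomial weight order, and the Sobolev order as the operations compose, while ensuring that the fractional $D$'s are always absorbed into squared Hilbert--Schmidt norms and that the commutators $[(1+x^2),D]$, $[(1+x^2),\nabla]$ generated when moving weights past derivatives are kept lower order and summed into the final $\sum_j\varepsilon^j\|W_N\|_{H_k^{j+\cdot}}$. Once the dictionary of the first step and the model estimate of the third step are in place, each of the ten inequalities follows by the same routine bookkeeping, with only the number of $\nabla_x,\nabla_v,v$ and weight factors varying from line to line.
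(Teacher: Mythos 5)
Your proposal is correct and follows essentially the same route as the paper's Appendix B proof: the H\"older splitting $\norm{A}_{\textfrak{S}^1} \leq \norm{(1-\varepsilon^2\Delta)^{-1}(1+x^2)^{-1}}_{\textfrak{S}^2} \norm{(1+x^2)(1-\varepsilon^2\Delta)A}_{\textfrak{S}^2} \leq C\sqrt{N}\,\norm{(1+x^2)(1-\varepsilon^2\Delta)A}_{\textfrak{S}^2}$, the identity $\norm{\omega_N}_{\textfrak{S}^2} = (2\pi/\varepsilon)^{3/2}\norm{\mathcal{W}[\omega_N]}_{L^2(\mathbb{R}^6)}$ giving the $N^{1/2}$ prefactor, and the same operator-to-symbol dictionary for $[\hat{x},\cdot]$, $[i\nabla,\cdot]$, $\{i\varepsilon\nabla,\cdot\}$ and the weights. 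The only (inessential) divergence is your treatment of $D$: the paper computes the Wigner transform of $D\omega_N D$ exactly as the multiplier expression $\left(1+(-i\varepsilon\nabla_x+2v)^2\right)^{1/2}\left(1+(-i\varepsilon\nabla_x-2v)^2\right)^{1/2}W_N$ and estimates it by Plancherel, whereas you peel the $D$'s through squared Hilbert--Schmidt identities plus weight commutators such as $\norm{[D,(1+x^2)]f}_{L^2} \leq C\norm{(1+x^2)f}_{L^2}$ (which the paper also records in Section IV); both mechanisms close, the paper's saving some commutator bookkeeping.
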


\begin{proof}
In the following, we use $\mathcal{W}[\omega_N]$ to denote the Wigner transform of an operator $\omega_N$.
Note that (see e.g. \cite[Chapter 3]{BPSS2016})
\begin{align}
\norm{\left( 1 - \varepsilon^2 \Delta \right)^{-1} \left( 1 + x^2 \right)^{-1}}_{\textfrak{S}^2}
&\leq C \sqrt{N} 
\end{align}
and 
\begin{align}
\norm{\omega_{N,s}}_{\textfrak{S}^2} &=  \left( \frac{2 \pi}{\varepsilon} \right)^{3/2} \norm{\mathcal{W}[\omega_{N,s}]}_{L^2(\mathbb{R}^6)} .
\end{align}
To obtain the inequalities of the Lemma it consequently suffices to estimate the $L^2(\mathbb{R}^6)$-norm of the respective Wigner transforms. Concerning the first inequality also note that
\begin{align}
&\norm{\left( 1 + x^2 \right) \left( 1 - \varepsilon^2 \Delta \right) D \omega_N i \varepsilon \nabla}_{\textfrak{S}^2}^2
\nonumber \\
&\quad = \tr \left( \left( 1 + x^2 \right) \left( 1 - \varepsilon^2 \Delta \right) D \omega_N \left( - \varepsilon^2 \Delta \right) \omega_N D \left( 1 - \varepsilon^2 \Delta \right)  \left( 1 + x^2 \right) 
\right)
\nonumber \\
&\quad \leq  \norm{\left( 1 + x^2 \right) \left( 1 - \varepsilon^2 \Delta \right) D \omega_N D}_{\textfrak{S}^2}^2
\end{align}
because $- \varepsilon^2 \Delta \leq D^2$ and
\begin{align}
\norm{\left( 1 + x^2 \right) \left( 1 - \varepsilon^2 \Delta \right) D \omega_N}_{\textfrak{S}^2}^2
&\leq  \norm{\left( 1 + x^2 \right) \left( 1 - \varepsilon^2 \Delta \right) D \omega_N D}_{\textfrak{S}^2}^2
\end{align}
by similar means.
Since 
%\tdn{I should check this relations again. It seems that there is a factor $1/4$ be missing.}
\begin{align}
\mathcal{W} \left[ \left( 1 + \hat{x}^2 \right) \left( 1 - \varepsilon^2 \Delta \right)  \widetilde{\omega}_{N,s}\right]
&= \left( 1 + \frac{1}{4} \left( i \varepsilon \nabla_v + 2 x \right)^2 \right)
\left( 1 + \frac{1}{4} \left( i \varepsilon \nabla_x + 2 v \right)^2 \right) \widetilde{W}_{N,s}(x,v)
\nonumber \\
\mathcal{W}[D \widetilde{\omega}_{N,s} D](x,v)
&= \left( 1 + \left( - i \varepsilon \nabla_x + 2 v \right)^2 \right)^{1/2}
\left( 1 + \left( - i \varepsilon \nabla_x - 2 v \right)^2 \right)^{1/2} \widetilde{W}_{N,s}(x,v) ,
\nonumber \\
\mathcal{W} \left[ \left[  \nabla, \widetilde{\omega}_{N,s} \right] \right](x,v)
&=  \nabla_x \widetilde{W}_{N,s}(x,v) ,
\nonumber \\
\mathcal{W} \left[ \left[ \hat{x} , \widetilde{\omega}_{N,s} \right] \right](x,v)
&=  i \varepsilon \nabla_v \widetilde{W}_{N,s}(x,v) ,
\nonumber \\
\mathcal{W} \left[ \widetilde{\omega}_{N,s} i \nabla \right](x,v)
&=  - \frac{1}{2 \varepsilon} \left( 2 v + i \varepsilon \nabla_x \right) \widetilde{W}_{N,s}(x,v) 
\end{align}
we have
\begin{align}
&\mathcal{W} \left[\left( 1 + x^2 \right) \left( 1 - \varepsilon^2 \Delta \right) D \widetilde{\omega}_{N,s}  D \right](x,v)
\nonumber \\
&\quad =
\left( 1 + \frac{1}{4} \left( i \varepsilon \nabla_v + 2 x \right)^2 \right)
\left( 1 + \frac{1}{4} \left( i \varepsilon \nabla_x + 2 v \right)^2 \right)
\left( 1 + \left( - i \varepsilon \nabla_x + 2 v \right)^2 \right)^{1/2}
\nonumber \\
&\qquad \qquad \qquad \times
\left( 1 + \left( - i \varepsilon \nabla_x - 2 v \right)^2 \right)^{1/2} \widetilde{W}_{N,s}(x,v) ,
\end{align}
\begin{align}
&\mathcal{W} \left[ \left( 1 + x^2 \right) \left( 1 - \varepsilon^2 \Delta \right) \left[ \hat{x} , D \widetilde{\omega}_{N,s} D  \right] \right](x,v)
\nonumber \\
&\quad =
\left( 1 + \frac{1}{4} \left( i \varepsilon \nabla_v + 2 x \right)^2 \right)
\left( 1 + \frac{1}{4} \left( i \varepsilon \nabla_x + 2 v \right)^2 \right)
i \varepsilon \nabla_v 
\left( 1 + \left( - i \varepsilon \nabla_x + 2 v \right)^2 \right)^{1/2}
\nonumber \\
&\qquad \qquad \qquad \times
\left( 1 + \left( - i \varepsilon \nabla_x - 2 v \right)^2 \right)^{1/2} \widetilde{W}_{N,s}(x,v) ,
\end{align}
\begin{align}
&\mathcal{W} \left[ \left( 1 + x^2 \right) \left( 1 - \varepsilon^2 \Delta \right) \left[  \nabla  , D \widetilde{\omega}_{N,s} D \right]  \right](x,v)
\nonumber \\
&\quad =
\left( 1 + \frac{1}{4} \left( i \varepsilon \nabla_v + 2 x \right)^2 \right)
\left( 1 + \frac{1}{4} \left( i \varepsilon \nabla_x + 2 v \right)^2 \right) \nabla_x
\left( 1 + \left( - i \varepsilon \nabla_x + 2 v \right)^2 \right)^{1/2}
\nonumber \\
&\qquad \qquad \qquad \times
\left( 1 + \left( - i \varepsilon \nabla_x - 2 v \right)^2 \right)^{1/2} \widetilde{W}_{N,s}(x,v)
\end{align}
and
\begin{align}
&\mathcal{W} \left[ \left( 1 + x^2 \right) \left( 1 - \varepsilon^2 \Delta \right) \left[ \hat{x} , D \widetilde{\omega}_{N,s} D  i \nabla \right] \right](x,v)
\nonumber \\
&\quad = - \frac{1}{2}
\left( 1 + \frac{1}{4} \left( i \varepsilon \nabla_v + 2 x \right)^2 \right)
\left( 1 + \frac{1}{4} \left( i \varepsilon \nabla_x + 2 v \right)^2 \right)
i \nabla_v \left( 2 v + i \varepsilon \nabla_x \right)
\left( 1 + \left( - i \varepsilon \nabla_x + 2 v \right)^2 \right)^{1/2}
\nonumber \\
&\qquad \qquad \qquad \times
\left( 1 + \left( - i \varepsilon \nabla_x - 2 v \right)^2 \right)^{1/2} \widetilde{W}_{N,s}(x,v) .
\end{align}
This leads to 
\begin{align}
\norm{\mathcal{W} \left[\left( 1 + x^2 \right) \left( 1 - \varepsilon^2 \Delta \right) D \widetilde{\omega}_{N,s}  D \right]}_{L^2(\mathbb{R}^6)}
&\leq C \sum_{j=0}^6 \varepsilon^{j} \norm{\widetilde{W}_{N,s}}_{H_6^j} ,
\end{align}
\begin{align}
\norm{\mathcal{W} \left[ \left( 1 + x^2 \right) \left( 1 - \varepsilon^2 \Delta \right) \left[ \hat{x} , D \widetilde{\omega}_{N,s} D   \right] \right]}_{L^2(\mathbb{R}^6)}
&\leq C \sum_{j=1}^7  \varepsilon^{j} \norm{\widetilde{W}_{N,s}}_{H_6^{j}}  ,
\end{align}
\begin{align}
\norm{\mathcal{W} \left[ \left( 1 + x^2 \right) \left( 1 - \varepsilon^2 \Delta \right) \left[ i \nabla  , D \widetilde{\omega}_{N,s} D \right]  \right]}_{L^2(\mathbb{R}^6)}
&\leq C \sum_{j=0}^6 \varepsilon^{j} \norm{\widetilde{W}_{N,s}}_{H_6^{j+1}} ,
\end{align}
\begin{align}
\norm{\mathcal{W} \left[ \left( 1 + x^2 \right) \left( 1 - \varepsilon^2 \Delta \right) \left[ \hat{x} , D \widetilde{\omega}_{N,s} D   \nabla \right] \right]}_{L^2(\mathbb{R}^6)}
&\leq C \sum_{j=0}^7 \varepsilon^{j} \norm{\widetilde{W}_{N,s}}_{H_7^{j+1}} .
\end{align}
The remaining relations follow by similar means.
\end{proof}

\noindent{\it Acknowledgments.} 
The authors would like to thank Fran\c{c}ois Golse for interesting discussions about the Vlasov--Maxwell equations. N.L., moreover, would like to thank Marco Falconi for fruitful discussions about the properties of the charge distribution within project \cite{FL2022}. Support from the Swiss National Science Foundation through the NCCR SwissMAP (N.L. and C.S.) and funding from the European Union's Horizon 2020 research and innovation programme (N.L) through the Marie Sk{\l}odowska-Curie Action EFFECT (grant agreement No. 101024712) as well as the Swiss National Science Foundation (C.S.) through the SNSF Eccellenza project PCEFP2 181153 is gratefully acknowledged.

%%%%%%%%%%%%%%%%%%%%%%%%%%%%%%%%%%%%%%%%%%%%%%%%%%%%%%%%%%%%%%%%%%%%%%%%%%%%%%%%%%%%%%%%%%%%%%%%

{}

\end{document}